\newtheorem{theorem}{Theorem}[section]
\newtheorem{definition}[theorem]{Definition}
\newtheorem{proposition}[theorem]{Proposition}
\newtheorem{corollary}[theorem]{Corollary}
\newtheorem{lemma}[theorem]{Lemma}
\theoremstyle{definition}
\newtheorem{remark}[theorem]{Remark}
\newtheorem{hypothesis}[theorem]{Hypothesis}
\def\R{\mathbb{R}}
 \def\Op{\mathfrak{Op}} 
\def\X{\mathcal X}
\def\h{\mathcal{h}}
\def\z{\mathfrak{z}}
\def\bb1{{\rm{1}\hspace{-3pt}\mathbf{l}}}
\def\Ie0{[-\epsilon_0,\epsilon_0]}
\def\supp{\text{\sf supp}}
\def\dist{{\rm dist}}
\def\Tr{\mathbb{T}{\rm r}}
\def\Int{\mathfrak{I}\mathit{nt}}
\def\dist{\text{\sf dist}}
\def\supp{\mathop{\rm supp} \nolimits} % Support
\def\BC2{\mathbb{B}\big(\mathbb{C}^2\big)}
\def\Hes{\mathcal{Hess}}
\def\beq{\begin{equation}}
\def\eeq{\end{equation}}
\numberwithin{equation}{section}
\begin{document}

%opening
\title[Dirac crossings and magnetic fields]{Spectral analysis near a Dirac type crossing in a weak non-constant magnetic field}

%    Only \author and \address are required; other information is
%    optional.  Remove any unused author tags.

%    author one information
% \author[short version for running head]{name for top of paper}
\author{Horia D. Cornean}
\address{Department of Mathematical Sciences, Aalborg University, DK-9220 Aalborg, Denmark}
\email{cornean@math.aau.dk}

%    author two information
\author{Bernard Helffer}
\address{Laboratoire de Math{\'e}matiques Jean Leray, Universit{\'e} de Nantes and CNRS, Nantex, France; 
Laboratoire de Math{\'e}matiques d’Orsay, L'Universit{\'e} Paris-Sud, Universit{\'e} Paris-Saclay, Orsay, France}
\email{Bernard.Helffer@univ-nantes.fr}

\author{Radu Purice}
\address{\enquote{Simion Stoilow} Institute of Mathematics of the Romanian Academy, P.O. Box 1-764, 014700 Bucharest, Romania; 
Centre Francophone en Math{\'e}matiques de Bucarest, 21 Calea Grivitei Street, 010702 Bucharest, Romania}
\email{Radu.Purice@imar.ro}
\thanks{All three authors acknowledge support from Grant 8021-00084B of the Independent Research Fund Denmark $|$ Natural Sciences. H.C. was also supported by a Bitdefender Invited Professor Scholarship with IMAR, Bucharest. B.H. and R.P. acknowledge support from the International Research Network (GDRE) ECO-Math financed by CNRS and the Romanian Academy.}

%    \subjclass is required.
\subjclass[2020]{Primary: 81Q10, 81Q15. Secondary: 35S05}

\date{January 4th, 2021}

\dedicatory{We dedicate this paper to Viorel Iftimie for his 80-th anniversary.}

%    Abstract is required.
\begin{abstract}
This is the last paper in a series of three in which we have studied the Peierls substitution in the case of a weak magnetic field. Here we deal with two $2d$ Bloch eigenvalues which have a conical crossing. It turns out that in the presence of an almost constant weak magnetic field, the spectrum near the crossing develops gaps which remind of the Landau levels of an effective mass-less magnetic Dirac operator. 
\end{abstract}

\maketitle

\section{Introduction.}

This paper concludes our work concerned with the rigorous mathematical theory of the  Peierls-Onsager effective Hamiltonian for some magnetic spectral problems in $2$ dimensions (\cite{CHP-1}, \cite{CHP-2}).  This time we focus on the conical crossing case, which is typical for graphene-like systems. When subjected to external magnetic fields, such systems have been playing an important r\^ole in making the Quantum Hall effect possible at room temperatures \cite{Novo}. 
\subsection{On Peierls-Onsager substitution}
Let us briefly recall that the \textit{Peierls-Onsager substitution} is used by physicists (\cite{Pe}, \cite{Lu}) in the study of non-interacting electrons in a periodic potential (describing the lattice of atoms in the solid) and subjected to  a magnetic field.

In the absence of  a long range magnetic field, the periodic Hamiltonian is described in the Floquet representation as the sum of a countable family of multiplication operators living in some finite dimensional sub-spaces and given by some real functions  $\big\{\lambda_n:\mathcal{B}\rightarrow\mathbb{R}\big\}_{n\in\mathbb{N}}$ defined on the Brillouin domain $\mathcal B$;  
%and living in some finite dimensional sub-spaces;
these are the Bloch functions. We recall that the Brillouin domain may be considered, modulo some topological subtleties, as the unit cell in the momentum space with respect to the dual of the lattice defined by the periodic potential. The Peierls-Onsager substitution consists in replacing  the complete Hamiltonian in a magnetic field $B=dA$ by the effective Hamiltonian obtained by replacing 
the functions $\lambda_n(\theta)$ with $\theta\in\mathcal{B}$ by $\lambda_n\big(\theta-A(x)\big)$. As one can see even after this brief presentation, giving a sound mathematical meaning to these operators  is not quite evident and a rich literature has been devoted to this subject. We indicate here only a very subjective selection: \cite{HS}, \cite{Sj}, \cite{Ne-RMP}, \cite{Ne-LMP}, \cite{Be1}, \cite{Be2}, \cite{PST}, \cite{dNL}, \cite{CIP}, \cite{FT}, \cite{RB}, guided by our further developments and having in view all the references therein. Some important restrictive hypothesis imposed in these studies have been the existence of isolated Bloch bands (i.e. some function $\lambda_{n_0}$ that does not intersect with any other), the existence of Wannier  bases for such isolated Bloch bands and the constancy of the magnetic field. An important difficulty in using the Peierls-Onsager effective Hamiltonians for obtaining a detailed spectral information comes from the presence of the Bloch eigenprojections and the fact that they live on sub-spaces that depend on the magnetic field.

In our previous work \cite{CHP-1, CHP-2} we have  considered a 2-dimensional situation in which we could allow for some slow variation of the intensity of the magnetic field and prove a rather detailed spectral analysis of the effective Hamiltonians. First, in \cite{CHP-1} we studied the bottom of the magnetically perturbed spectrum in a narrow window  around the non-degenerate minimum of an isolated Bloch energy whose corresponding spectral projection had a zero Chern number and admitted an exponentially localized Wannier basis. Second, in \cite{CHP-2} we generalized these results to situations in which the unperturbed bottom of the spectrum comes from a single Bloch eigenvalue which either might cross with others outside the narrow window, or its corresponding spectral subspace has a non-trivial topology. 

Our general strategy is to isolate some simple effective Hamiltonian that on a small neighborhood of some point in the Brillouin domain approximates well the exact one in the absence of magnetic field. We have in view either a minimum of a Bloch eigenvalue (in \cite{CHP-1} and \cite{CHP-2}), where we use the quadratic form given by the Hessian of the given Bloch function, or a conical crossing point (in the present paper), where we use a $2\times 2$-matrix valued Dirac type Hamiltonian defined by the two crossing Bloch functions and their $1$-dimensional eigenprojections. The magnetic field that we considered in our papers \cite{CHP-1} and \cite{CHP-2} is of the form $B_{\epsilon,\kappa}(x)=\epsilon B^\circ+\epsilon\kappa B(\epsilon x)$  where $B^\circ$ is a constant magnetic field producing some spectral gaps controlled by $\epsilon\in[0,\epsilon_0]$ for some $\epsilon_0>0$ small enough and $\epsilon\kappa B(\epsilon x)$ is a slowly varying magnetic field considered as a perturbation controlled by $\kappa\in[0,\kappa_0]$ for some $\kappa_0\in(0,1]$. 

In this paper we will not need the requirement of \enquote{slow variation} and consider a magnetic field of the form $B_{\epsilon,\kappa}(x)=\epsilon B^\circ+\epsilon\kappa B(x)$ with some magnetic field $B$ with smooth components bounded together with all their derivatives. Our aim is to show that in a neighborhood of the \textit{special spectral point} corresponding to a  \enquote{conical crossing} of two Bloch  energy bands, the above magnetic field produces a family of spectral gaps with widths and separation controlled by $\epsilon$ and $\kappa$.

\subsection{The framework.}
\subsubsection{The periodic Hamiltonian.}\label{SS-per-Ham}

We work in \textit{a 2-dimensional configuration space} $\X\cong\R^2$ in which \textit{a regular lattice} $\Gamma$ is given, i.e. a lattice generated by two linearly independent vectors. For any $k\in\mathbb{N}\setminus\{0\}$ we denote by $BC^\infty(\X;\mathbb{R}^k)$ the linear space of vector valued smooth bounded functions on $\X$ having bounded derivatives of all orders. We denote by $\mathscr{S}(\X)$ the space of Schwartz test functions (smooth complex functions having rapid decay together with all their derivatives). We shall also denote by $L^2(\X)$ the Hilbert space of square integrable classes of complex functions with respect to the Lebesgue measure on $\X$.

\begin{definition}\label{D-HGamma}
	Let the $\Gamma$-periodic functions $V^\Gamma\in BC^\infty(\X;\mathbb{R})$ and $  A^\Gamma\in BC^\infty(\X;\mathbb{R}^2)$ and define the
	2-dimensional $\Gamma$-periodic Hamiltonian $H_\Gamma$ as the self-adjoint extension in  $L^2(\X)$ of the symmetric operator
	\beq\label{hc100} -\Delta_{A^\Gamma} \,+\,V^\Gamma:\,\mathscr{S}(\X)\rightarrow\mathscr{S}(\X)\,,
	\eeq
	with
	$$
	-\Delta_{A^\Gamma} :=
	\underset{j=1,2}{\sum}\big(-i\partial_{x_j}-A^\Gamma_j(x)\big)^2\,.
	$$
\end{definition}
In section XIII.16 of \cite{RS-4} it is proven that this extension exists, is unique, lower semi-bounded and has as domain in  $L^2(\X)$ the usual Sobolev space
$\mathscr{H}^2(\X)$ of square integrable functions with square integrable Laplacian (in the sense of distributions).
We recall that the periodic operators on $L^2(\mathbb{R}^d)$ admit a kind of 'partial diagonalization' given by the
Bloch-Floquet unitary map (see also section XIII.16 of \cite{RS-4}). We shall briefly present in Subsection \ref{SSS-BFZ-repr} some basic facts concerning the Bloch-Floquet Transformation. Here we shall recall only those notions that are necessary for the formulation of our main results.

We can define the quotient $\X/\Gamma\cong\mathbb{R}^2/\mathbb{Z}^2$ that will be identified as a topological group with \textit{the 2-dimensional torus} $\mathbb{T}\cong\mathbb{S}^1\times\mathbb{S}^1$. Here $\mathbb{S}^1$ is the unit circle as subset of elements of modulus 1 in $\mathbb{C}$ with the multiplication and topology induced from $\mathbb{C}\setminus\{0\}$.
We associate to the inclusion $\mathbb{Z}\subset\mathbb{R}$ the following decomposition: 
\beq\label{F-ent-dec}
t=[t+1/2]+\{t\}_2; \ [t]:=\max\{k\in\mathbb{Z}\,,\,k\leq t\big\}\in\mathbb{Z},\ \{t\}_2:=t-[t+1/2]\in  [-1/2,1/2),\quad\forall t\in\mathbb{R},
\eeq
and we define the elementary cell of $\X$ associated to it $\mathcal{E}:=\big\{x\in\X\,,\,-1/2\leq x_j<1/2,\,j=1,2\big\}\subset\X$.
We denote by $<\cdot,\cdot>:\X\times\X^*\rightarrow\mathbb{R}$ the duality map on $\X\times\X^*$ and define the following dual objects: the dual lattice
\beq\label{DF-Gamma-star}
\Gamma_*:=\big\{\gamma^*\in\X^*,\ <\gamma^*,\gamma>\in 2\pi\mathbb{Z}\big\},
\eeq
the dual elementary cell associated with the same decomposition \eqref{F-ent-dec} denoted by $\mathcal{B}$ and called \textit{the Brillouin zone} and the quotient of the duals $\mathbb T_{\star}:= \X^*/\Gamma_*$. Let us emphasize that although $\X$ and $\X^*$ will be just two distinct copies of $\mathbb{R}^2$ we shall keep this notation that allows us to have a clear distinction between the configuration and the momentum spaces.

We define the Bloch-Floquet-Zak transform of a test function $\phi\in\mathscr{S}(\X)$ by the formula
\beq\label{b-f-z}
\big(\check{\mathcal{U}}_{\Gamma}\phi\big)
(x,\theta):=(2\pi)^{-1}\sum\limits_{\gamma\in\Gamma}e^{- i[(x_1-\gamma_1)\theta_1+(x_2-\gamma_2)\theta_2]}\phi(x-\gamma),\qquad\forall x\in\X,\ \forall\theta\in\mathbb{R}^2.
\eeq
We notice that for any $\phi\in\mathscr{S}(\X)$ we have the following behavior:
\begin{align}\label{F-defF-1}
\forall\alpha\in\Gamma:&\quad\big(\check{\mathcal{U}}_{\Gamma}\phi\big)
(x+\alpha,\theta)=\big(\check{\mathcal{U}}_{\Gamma}\phi\big)
(x,\theta),\quad\forall(x,\theta)\in\X\times\mathbb{R}^2, \\ \label{F-defF-2}
\forall\nu\in\mathbb{Z}^2:&\quad\big(\check{\mathcal{U}}_{\Gamma}\phi\big)
(x,\theta+2\pi \nu)=e^{-2\pi i(x_1\nu_1+x_2\nu_2)}\big(\check{\mathcal{U}}_{\Gamma}\phi\big)
(x,\theta),\quad\forall(x,\theta)\in\X\times\mathbb{R}^2.
\end{align}
Due to the periodicity in the $x$-variable \eqref{F-defF-1}, we can project this variable on the 2-dimensional torus, that we denote by $\mathbb{T}$ and consider functions defined on $\mathbb{T}\times\mathbb{R}^2$ satisfying condition \eqref{F-defF-2}. Property \eqref{F-defF-2} suggests to restrict the variable $\theta\in\mathbb{R}^2$ to the square $(-\pi,\pi )^2$. One can prove that the transformation $\check{\mathcal{U}}_\Gamma$ defines a unitary operator $L^2(\X)\overset{\sim}{\rightarrow}L^2(\mathbb{T})\otimes L^2\big ((-\pi,\pi)^2)\big )$.
In this representation the periodic Hamiltonian $H_\Gamma$ defined in Definition \ref{D-HGamma} becomes the operator of multiplication with an operator-valued function of $\theta\in\mathbb{R}^2$ taking values self-adjoint operators $\check{H}(\theta)$ 
acting in $L^2(\mathbb{T})$. The following result is well known:

\begin{proposition}\label{P-BF}
The operators $\check{H}(\theta)$ with $\theta\in\mathbb{R}^2$ are self-adjoint, lower semi-bounded with compact resolvent in $L^2(\mathbb{T})$ and we shall choose their eigenvalues $\{\lambda_k(\theta)\}_{k\in\mathbb{N}}$ (called \textit{Bloch eigenvalues}) in increasing order  taking into account their multiplicity. The spectrum of $\check{H}(\theta)$ is $(2\pi\mathbb {Z})^2$ periodic.
\end{proposition}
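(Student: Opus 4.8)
The plan is to establish Proposition \ref{P-BF} by exploiting the structure of $\check H(\theta)$ obtained from the Bloch–Floquet–Zak conjugation of the elliptic operator $H_\Gamma$ introduced in Definition \ref{D-HGamma}. First I would make the operator $\check H(\theta)$ explicit: conjugating the magnetic Laplacian $-\Delta_{A^\Gamma}+V^\Gamma$ by $\check{\mathcal U}_\Gamma$ and using the covariance relations \eqref{F-defF-1}--\eqref{F-defF-2}, one finds that on the fibre over $\theta$ the operator acts on functions on $\mathbb T$ as
\beq\label{hc200}
\check H(\theta)=\sum_{j=1,2}\big(-i\partial_{x_j}+\theta_j-A^\Gamma_j(x)\big)^2+V^\Gamma(x),
\eeq
with domain the Sobolev space $\mathscr H^2(\mathbb T)$. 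The key observation is that this is, for each fixed $\theta\in\R^2$, a second-order elliptic operator with smooth coefficients on the \emph{compact} manifold $\mathbb T$; lower semi-boundedness and self-adjointness on $\mathscr H^2(\mathbb T)$ then follow from standard elliptic theory (e.g. the Kato–Rellich argument: $V^\Gamma$ is bounded, and the magnetic first-order terms are relatively bounded with relative bound zero with respect to $-\Delta$ on $\mathbb T$), and the compactness of the resolvent follows from the compact (Rellich) embedding $\mathscr H^2(\mathbb T)\hookrightarrow L^2(\mathbb T)$.

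Once the resolvent is compact and the operator is self-adjoint and bounded below, the spectral theorem for operators with compact resolvent gives a discrete spectrum accumulating only at $+\infty$, consisting of real eigenvalues of finite multiplicity; listing them in increasing order with multiplicity produces the functions $\{\lambda_k(\theta)\}_{k\in\mathbb N}$ and fixes the convention in the statement. It remains to prove the periodicity assertion: that $\sigma\big(\check H(\theta)\big)$ is invariant under translations of $\theta$ by $(2\pi\mathbb Z)^2$. For this I would exhibit, for each $\nu\in\mathbb Z^2$, a unitary $U_\nu$ on $L^2(\mathbb T)$ intertwining $\check H(\theta+2\pi\nu)$ and $\check H(\theta)$; the natural candidate, suggested directly by the quasi-periodicity relation \eqref{F-defF-2}, is the multiplication operator $U_\nu=e^{-2\pi i(x_1\nu_1+x_2\nu_2)}$, which is indeed well defined on $\mathbb T$ since $\nu\in\mathbb Z^2$, and which satisfies $U_\nu^{-1}(-i\partial_{x_j})U_\nu=-i\partial_{x_j}-2\pi\nu_j$, hence $U_\nu^{-1}\check H(\theta+2\pi\nu)U_\nu=\check H(\theta)$. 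Unitary equivalence preserves the spectrum, giving the claimed $(2\pi\mathbb Z)^2$-periodicity.

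The only genuinely delicate point is the identification of the domain of $\check H(\theta)$ as exactly $\mathscr H^2(\mathbb T)$ together with essential self-adjointness starting from $C^\infty(\mathbb T)$ (or $\check{\mathcal U}_\Gamma\mathscr S(\X)$); this is the fibre version of the domain statement for $H_\Gamma$ recalled after Definition \ref{D-HGamma} and can be quoted from section XIII.16 of \cite{RS-4}, since the fibre operator is again of the same magnetic Schrödinger type but on the compact torus, where the boundedness of $V^\Gamma$ and $A^\Gamma$ and their derivatives makes the elliptic estimates uniform. Everything else — discreteness, ordering, finite multiplicity, and periodicity of the spectrum — is then a formal consequence, so this ``well known'' proposition requires only assembling these standard ingredients rather than any new argument.
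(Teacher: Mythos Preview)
Your proposal is correct and follows the standard route; the paper itself does not prove this proposition at all, merely introducing it with ``The following result is well known'' and implicitly deferring to \cite{RS-4}, section XIII.16. Your argument --- writing out the fibre operator explicitly, invoking elliptic theory and the Rellich embedding on the compact torus for self-adjointness, lower semi-boundedness and compact resolvent, and using the gauge unitary $U_\nu=e^{-2\pi i\langle\nu,x\rangle}$ suggested by \eqref{F-defF-2} for the $(2\pi\mathbb Z)^2$-periodicity --- is exactly the standard justification one would supply if asked to fill in this omitted proof.
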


\subsubsection{The magnetic field.}\label{SSS-MField}

We are interested in exhibiting  a structure of gaps created in the band spectrum of $H_\Gamma$ given by Definition \ref{D-HGamma} and Proposition \ref{P-BF} by a weak constant magnetic field and in studying their stability when perturbing 
the magnetic field by a smaller bounded smooth magnetic field that is not supposed to be constant or slowly varying.  We note that even proving continuity of the spectrum as a set when long range magnetic perturbations are involved has been a challenging problem \cite{Bec-Bel, BBdN, BC, Corn, CN, CNP, CP-1, CP-2}. 

Given $(\epsilon,\kappa)\in[0,1]\times[0,1]$, we shall consider a magnetic field of the form
\begin{equation}\label{Bek}
B_{\epsilon,\kappa}(x)=\epsilon B^\circ+\epsilon\kappa B(x)\,,
\end{equation}
where $B^\circ$ is a constant magnetic field that we shall take to be positive 
and
$\kappa B(x)$  is a weak magnetic field considered as a perturbation of $B^\circ$.  
$B(x)$ is of class $BC^\infty(\X;\mathbb{R})$.
Let us choose some smooth  {\it vector potentials} $A^\circ:\X\rightarrow\mathbb{R}^2$ and
$A:\X\rightarrow\mathbb{R}^2$ such that:
\begin{equation}\label{defAepsilon0}
B^\circ=\partial_1A^\circ_2-\partial_2A^\circ_1\, ,\quad B=\partial_1A_2-\partial_2A_1\,,
\end{equation}
and 
\begin{equation}\label{defAepsilon}
  A^{\epsilon,\kappa}(x):=\epsilon A^\circ(x)+\kappa\epsilon  A(x)\,, \quad B_{\epsilon,\kappa}=\partial_1A^{\epsilon,\kappa}_2-\partial_2A^{\epsilon,
	\kappa}_1\,.
\end{equation}
The vector potential $A^\circ$ is considered in the {\it transverse gauge}, i.e.
\beq \label{defA0}
A^\circ(x)\,=\,(1/2)\big(-B^\circ x_2,B^\circ x_1\big).
\eeq
We consider the following magnetic Schr\"{o}dinger operator, that is essentially self-adjoint on $\mathscr{S}(\X)$:
\begin{equation}\label{mainH}
H^{\epsilon,\kappa}_\Gamma:=\Big(-i\partial_{x_1} - A^\Gamma_1(x)-A^{\epsilon,\kappa}_1(x)\Big)^2 + \Big(-i\partial_{x_2} - A^\Gamma_2(x)-A^{\epsilon,\kappa}_2(x)\Big)^2 + V^\Gamma (x)
\end{equation} 
and treat it as a perturbation (controlled by the small parameter $\kappa\in[0,1]$) of the following operator
\beq\label{FD-Hepsilon}
H^{\epsilon}_\Gamma:=\Big(-i\partial_{x_1} - A^\Gamma_1(x) +\epsilon B^\circ x_2/2\Big)^2 + \Big(-i\partial_{x_2} - A^\Gamma_2(x)-\epsilon B^\circ x_1/2\Big)^2 + V^\Gamma (x)
\eeq
that is also essentially self-adjoint on $\mathscr{S}(\X)$. 

\subsection{Formulation of the main result.}

\begin{figure}[h]
\centering
\includegraphics[width=16 cm, height=11 cm]{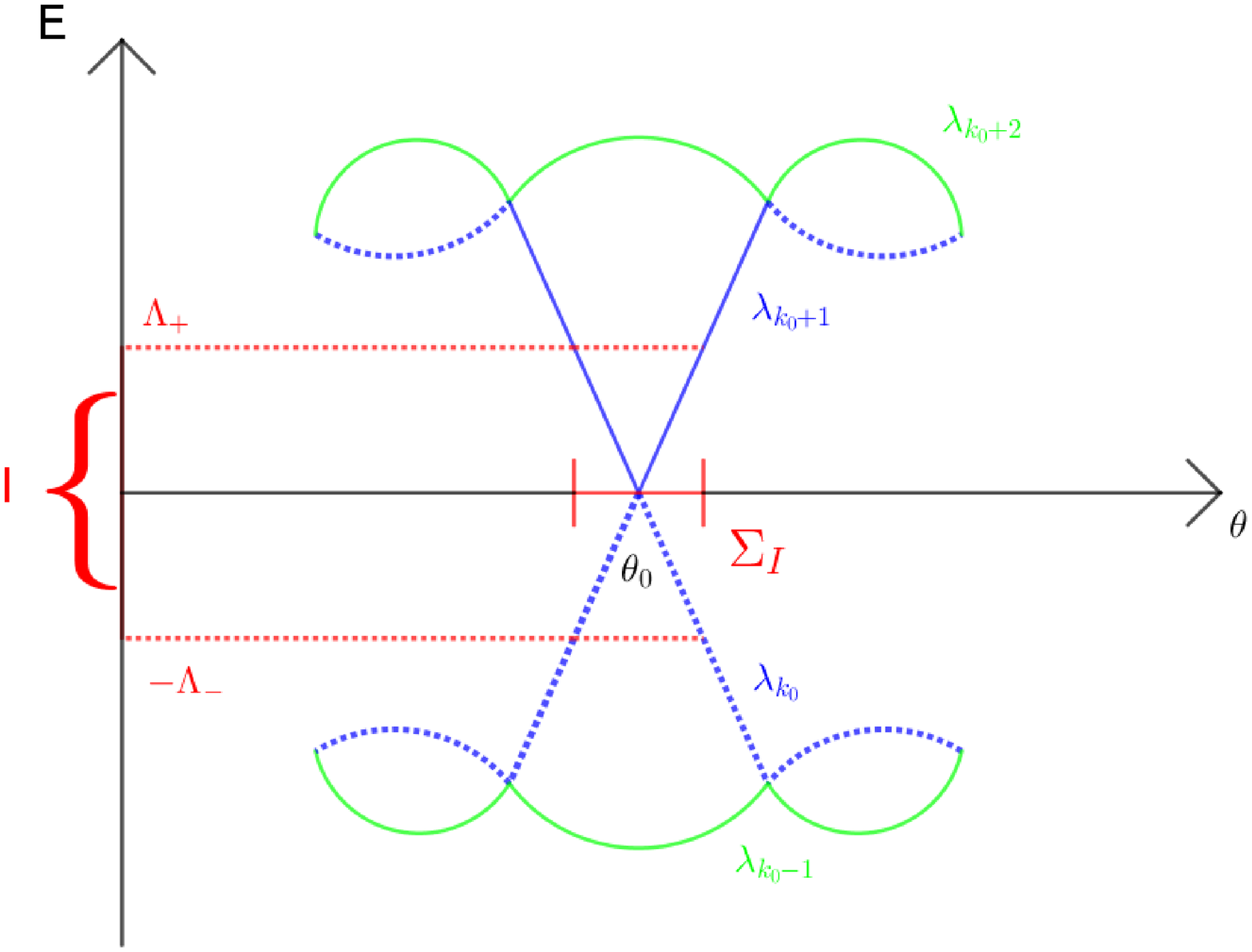}
\caption{The unperturbed spectrum}
\label{Fig:1}
\end{figure}

We denote by $d:\mathbb{T}_*\times\mathbb{T}_*\rightarrow\mathbb{R}_+$ the geodesic distance on the space $\mathbb{T}_*$.
For any $\theta_0 \in\mathbb{T}_*$ and  $r>0$ we define the closed ball
$$
B_r(\theta_0 ):=\big\{\theta\in\mathbb{T}_*,\,d(\theta,\theta_0) \leq r\big\}\,,
$$
\begin{hypothesis}\label{H-1} 
	There exists a compact interval $I:=[-\Lambda_-,\Lambda_+]\subset\mathbb{R}$ containing $0$ in its interior, an index $ k_0\in\mathbb{N}\setminus\{0\}$, a point $\theta_0\in  (-\pi,\pi)^2$ and a compact neighborhood $\Sigma_I\subset  (-\pi,\pi)^2$ of $\theta_0$, diffeomorphic to the unit disk, such that:
	\begin{align}
	&I\cap\lambda_k(\mathbb{T}_*)\neq\emptyset\Rightarrow k\in\{k_0,k_0+1\}, \nonumber \\
	& [-\Lambda_-,0]=\lambda_{k_0}\big(\Sigma_I\big),\quad [0,\Lambda_+]=\lambda_{k_0+1}\big(\Sigma_I\big), \nonumber \\ \label{F-unique-int}
	& \lambda_{k_0}(\theta)=\lambda_{k_0+1}(\theta)\Rightarrow\theta=\theta_0.
	\end{align}
\end{hypothesis}

For $\theta\in\Sigma_I$ we shall denote by
\beq\label{DF-lambda-pm}
\lambda_-(\theta):=\lambda_{k_0}(\theta),\qquad\lambda_+(\theta):=\lambda_{k_0+1}(\theta).
\eeq
We now express the nature of the touching of  $\lambda_-$ and $\lambda_+$ at $\theta_0\,$, the so-called conical crossing type.
\begin{hypothesis}\label{H-2}~\\
The map $\Sigma_I\ni \theta \mapsto\mathcal{d}(\theta):= \lambda_-(\theta) \lambda_+(\theta)$ has a non-degenerate maximum value equal to zero at $\theta_0$.
\end{hypothesis}

We need one more notation before stating the main result. For any two subsets $M_1,M_2$ in a metric space $\big(\mathcal{M},d\big)$ we denote by
\begin{equation}\label{apr-1}
d_{\text{\tt H}}\big(M_1,M_2\big):=\max\big\{\underset{x\in M_1}{\sup}\underset{y\in M_2}{\inf}d(x,y)\,,\,\underset{x\in M_2}{\sup}\underset{y\in M_1}{\inf}d(x,y)\big\}
\end{equation}
their \textit{Hausdorff distance}.	

\begin{theorem}\label{T-Main-2}
	Let us 
	assume that Hypotheses \ref{H-1}  and~\ref{H-2} hold true.
	Let $H_\Gamma^{\epsilon,\kappa}$ be  the magnetic Hamiltonian	in \eqref{mainH}
	with a magnetic field $B^{\epsilon,\kappa}$ satisfying  \eqref{Bek}. Then there exists a self-adjoint operator $\mathfrak{L}$ acting on $L^2(\mathbb{R})$ with discrete spectrum $\sigma(\mathfrak{L})$ symmetric with respect to the origin, containing $0$ and with all the eigenvalues of multiplicity $1$, such that for any $L>0$ situated in the middle of a gap of $\sqrt{B^\circ}\sigma(\mathfrak{L})$, there exist   positive  $\epsilon_L$, $\kappa_L$, and  $C_L$ such that for  $0 < \epsilon \leq \epsilon_L$ and $\kappa \in [0,\kappa_L]$, we have  
	\begin{equation*}
	d_{\text{\tt H}}\left( \sigma\big(H^{\epsilon,\kappa}_\Gamma\big)  \cap \big(- L\epsilon^\frac 12, L \epsilon^\frac 12\big) \,,\, (\epsilon B^\circ)^\frac 12 \sigma (\mathfrak{L}) \cap  \big(- L\epsilon^\frac 12, L \epsilon^\frac 12\big) \right)\ \leq\  C_L\, (\sqrt{\kappa\epsilon} + \epsilon) .
	\end{equation*}
\end{theorem}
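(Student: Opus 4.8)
\emph{Overall strategy.} The plan is to follow the pattern of \cite{CHP-1,CHP-2}: first reduce, via the Bloch--Floquet--Zak transform of Proposition~\ref{P-BF} and a magnetic pseudodifferential (Peierls--Onsager) reduction, to an effective $2\times 2$ Dirac-type operator in a weak magnetic field, and then compare the two spectra on the window $\big(-L\epsilon^{1/2},L\epsilon^{1/2}\big)$ by a Grushin/Feshbach--Schur argument. By Hypothesis~\ref{H-1} the pair $\{\lambda_-,\lambda_+\}$ is, over $\Sigma_I$, separated by a gap bounded below uniformly from the rest of $\sigma\big(\check{H}(\theta)\big)$, and $|\lambda_\pm(\theta)|\geq\delta_0>0$ whenever $\lambda_\pm(\theta)\in I$ but $\theta$ stays away from $\theta_0$. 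Let $\pi_0(\theta)$ be the rank-two Riesz projection of $\check{H}(\theta)$ onto $\{\lambda_-(\theta),\lambda_+(\theta)\}$; since $\Sigma_I$ is diffeomorphic to a disk the associated $\Co^2$-bundle is trivial, so one fixes a smooth orthonormal frame in which $\check{H}(\theta)\big|_{\mathrm{Ran}\,\pi_0(\theta)}$ becomes a Hermitian $2\times 2$ matrix $h(\theta)$ with $h(\theta_0)=0$. Writing $h(\theta)=h_{\mathrm D}(\theta)+O\big(|\theta-\theta_0|^2\big)$ with $h_{\mathrm D}$ the linear part, one uses $\lambda_-\lambda_+=\det h$ and the fact that the quadratic part of $\det h$ at $\theta_0$ equals $\det h_{\mathrm D}$: Hypothesis~\ref{H-2} says precisely that $\theta\mapsto\det h_{\mathrm D}(\theta)$ is negative definite, i.e.\ that $h_{\mathrm D}$ is a genuine, massless, non-degenerate two-dimensional Dirac symbol. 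After linear changes of the momentum variable and of the frame (including a hyperbolic rotation removing the scalar, $1$-valued drift, at the cost of constants that will be absorbed into the definition of $\mathfrak{L}$), $h_{\mathrm D}$ takes the canonical form $(\theta_1-\theta_{0,1})\sigma_1+(\theta_2-\theta_{0,2})\sigma_2$ up to a positive scalar, $\sigma_1,\sigma_2$ being Pauli matrices.

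\emph{The effective magnetic Dirac operator.} For $\epsilon$ small the weak field $B_{\epsilon,\kappa}$ permits a magnetic Bloch--Floquet reduction through the magnetic pseudodifferential calculus of \cite{CHP-1,CHP-2} and the references therein: up to an $\epsilon$-controlled unitary conjugation, the part of $H_\Gamma^{\epsilon,\kappa}$ relevant near $I$ is a magnetic $\Psi$DO with operator-valued symbol $\check{H}(\theta)$, the two-band subspace survives as a magnetic pseudodifferential projection $P_0^{\epsilon,\kappa}$ (well defined because the gap persists), and its compression has $2\times 2$ matrix symbol $h(\theta)+O(\epsilon)$. Performing the Peierls substitution $\theta\rightsquigarrow\Pi^{\epsilon,\kappa}$, with $\Pi^{\epsilon,\kappa}_j$ the magnetic momenta restricted to the band and $[\Pi^{\epsilon}_1,\Pi^{\epsilon}_2]=i\epsilon B^\circ$ (plus the $\kappa$-correction $i\epsilon\kappa B$), and cutting off to a small neighbourhood of $\theta_0$, the compressed operator becomes $h_{\mathrm D}(\Pi^{\epsilon,\kappa})$ plus lower-order terms. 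On the states of energy $O(\epsilon^{1/2})$ that matter in the window one has $\Pi^{\epsilon,\kappa}_j=O(\epsilon^{1/2})$, so the quadratic corrections to $h$ and the errors of the magnetic composition are $O(\epsilon)$, while the non-periodic field $\kappa B(x)$ perturbs the curvature and the magnetic projection by an amount estimated by $O(\sqrt{\kappa\epsilon})$ on the natural Dirac scale. Rescaling $\Pi^{\epsilon}_j=(\epsilon B^\circ)^{1/2}\widetilde\Pi_j$ with $[\widetilde\Pi_1,\widetilde\Pi_2]=i$ reduces the two-dimensional magnetic problem to one dimension and identifies the leading effective operator with $(\epsilon B^\circ)^{1/2}\mathfrak{L}$, where $\mathfrak{L}$, acting on $L^2(\R)$, is unitarily equivalent to a positive multiple of the canonical massless magnetic Dirac operator. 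Its off-diagonal (supersymmetric) structure $\mathfrak{L}\cong\bigl(\begin{smallmatrix}0&a^*\\ a&0\end{smallmatrix}\bigr)$, with $a$ of annihilation type, then yields the spectral properties in the statement: $\sigma(\mathfrak{L})$ is discrete, symmetric about $0$, contains $0$, and consists of simple eigenvalues (since $\mathfrak{L}^2=\mathrm{diag}(a^*a,aa^*)$).

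\emph{Comparison of the spectra.} Fix $L$ in the middle of a gap of $\sqrt{B^\circ}\,\sigma(\mathfrak{L})$, so that $(\epsilon B^\circ)^{1/2}\sigma(\mathfrak{L})$ stays at distance $\gtrsim\epsilon^{1/2}$ from $\pm L\epsilon^{1/2}$. For $z\in\Co$ with $|z|\leq L\epsilon^{1/2}$, on $\mathrm{Ran}\big(1-P_0^{\epsilon,\kappa}\big)$ the operator $H_\Gamma^{\epsilon,\kappa}-z$ is boundedly invertible (its distance to the spectrum there is $\gtrsim\delta_0\gg\epsilon^{1/2}$), hence the Feshbach--Schur complement $\mathcal{F}^{\epsilon,\kappa}(z)$ is well defined and analytic in $z$, and $z\in\sigma\big(H_\Gamma^{\epsilon,\kappa}\big)$ iff $0\in\sigma\big(\mathcal{F}^{\epsilon,\kappa}(z)\big)$. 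By the two previous steps, $\mathcal{F}^{\epsilon,\kappa}(z)=(\epsilon B^\circ)^{1/2}\mathfrak{L}-z+\mathcal{R}^{\epsilon,\kappa}(z)$ with $\big\|\mathcal{R}^{\epsilon,\kappa}(z)\big\|\leq C_L\big(\sqrt{\kappa\epsilon}+\epsilon\big)$ uniformly in the window; the ``tail'' contributions to $\mathcal{F}^{\epsilon,\kappa}(z)$, coming from momenta away from $\theta_0$ and from the other bands, are harmless there because the band gap keeps them invertible. Consequently any $z$ with $|z|\leq L\epsilon^{1/2}$ and $\mathrm{dist}\big(z,(\epsilon B^\circ)^{1/2}\sigma(\mathfrak{L})\big)>C_L\big(\sqrt{\kappa\epsilon}+\epsilon\big)$ lies in the resolvent set of $H_\Gamma^{\epsilon,\kappa}$; this is one of the two inclusions. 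For the reverse one, each eigenfunction of $(\epsilon B^\circ)^{1/2}\mathfrak{L}$ with eigenvalue $\mu$ in the window, after rescaling, magnetic quantisation, localisation near $\theta_0$ and Bloch--Floquet pull-back, is a quasimode for $H_\Gamma^{\epsilon,\kappa}$ with error $O\big(\sqrt{\kappa\epsilon}+\epsilon\big)$, so $\mathrm{dist}\big(\mu,\sigma(H_\Gamma^{\epsilon,\kappa})\big)\leq C_L\big(\sqrt{\kappa\epsilon}+\epsilon\big)$. Combining the two inclusions gives the asserted Hausdorff bound; no spectrum leaks across the endpoints $\pm L\epsilon^{1/2}$ precisely because $L$ sits in the middle of a gap, so the buffer $\sim\epsilon^{1/2}$ dominates the error once $\epsilon,\kappa$ are small.

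\emph{Main obstacle.} Compared with \cite{CHP-1,CHP-2}, the genuinely new difficulty is that the effective operator $\mathfrak{L}$ is a Dirac operator, hence \emph{unbounded below}: the variational, bottom-of-the-spectrum arguments of those papers are unavailable, and the whole analysis must be anchored at the energy $0$, which forces the genuinely two-sided resolvent-plus-quasimode scheme above and the ``middle of a gap'' hypothesis on $L$ (otherwise eigenvalues could escape across the endpoints of the window). A second delicate point is that $\kappa B(x)$ is not $\Gamma$-periodic, hence does not fit the Bloch--Floquet picture: one must track the dependence of the magnetic two-band projection $P_0^{\epsilon,\kappa}$ and of the Peierls substitution on this non-periodic perturbation and show it is controlled, on the Dirac scale $\epsilon^{1/2}$, by $O(\sqrt{\kappa\epsilon})$ — this is the origin of the $\sqrt{\kappa\epsilon}$ term and the most technical part of the error analysis. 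Everything else — the magnetic pseudodifferential calculus, the construction and smoothness of $\pi_0$ and of $P_0^{\epsilon,\kappa}$, and the rescaling — is by now standard and can be imported from the two previous papers and the literature quoted there.
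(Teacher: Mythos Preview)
Your overall architecture—Bloch--Floquet reduction to a rank-two subspace, Peierls substitution, Feshbach--Schur comparison with a linearised magnetic Dirac model—is the paper's, but three of your steps contain real gaps.

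\textbf{Handling of $\kappa$.} You propose to carry the non-periodic perturbation $\epsilon\kappa B(x)$ through the entire magnetic Peierls construction and extract $O(\sqrt{\kappa\epsilon})$ from a scaling argument ``on the Dirac scale''. The paper does not do this: it invokes Theorem~3.1 of \cite{CP-1} (Remark~\ref{P-CP1}), a general spectral-stability result based on geometric perturbation theory and gauge covariance, which gives $d_{\text{\tt H}}\big(\sigma(H^{\epsilon,\kappa}_\Gamma),\sigma(H^{\epsilon}_\Gamma)\big)\leq C\sqrt{\kappa\epsilon}$ on the window directly and reduces the whole proof to $\kappa=0$. Your bound is asserted, not derived; the machinery of \cite{CHP-1,CHP-2} is built for slowly varying fields $B(\epsilon x)$, not for $B(x)$, and there is no reason a naive symbol estimate through the projection $P_0^{\epsilon,\kappa}$ should produce $\sqrt{\kappa\epsilon}$ rather than, say, $\kappa\sqrt{\epsilon}$.

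\textbf{The model operator $\mathfrak{L}$.} Your reduction of the linearised symbol to the canonical form $\theta_1\sigma_1+\theta_2\sigma_2$ via ``a hyperbolic rotation removing the scalar drift'' does not work: the scalar part $F_0(\theta)\,\bb1_2=\tfrac12\Tr\,h(\theta)\,\bb1_2$ commutes with every fibre rotation and survives any unitary change of frame, while a non-unitary conjugation destroys self-adjointness. The paper keeps the general linear symbol $\mathfrak{l}(t,\tau)=(f_1\bb1_2+\sigma_{v^{(1)}})t+(f_2\bb1_2+\sigma_{v^{(2)}})\tau$ and instead shows (Proposition~\ref{P-cons-Hyp} and \eqref{F-est-det-k}) that Hypothesis~\ref{H-2} forces $\det\mathfrak{l}$ to be a negative-definite quadratic form, so $\mathfrak{l}$ is \emph{globally elliptic} in the sense of \cite{He-84}; discreteness, simplicity (via 1D Cauchy uniqueness) and the symmetry $\mathfrak{l}(-T)=-\mathfrak{l}(T)$ then come from that theory (Theorem~\ref{T-HR}), without any supersymmetric $\bigl(\begin{smallmatrix}0&a^*\\a&0\end{smallmatrix}\bigr)$ structure.

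\textbf{Two missing constructions.} First, the individual eigenprojections $\widehat{\pi}_\pm(\theta)$ are \emph{singular} at $\theta_0$; only their sum $\widehat{\Pi}_I(\theta)$ is smooth. Your local frame on $\Sigma_I$ therefore does not extend to a global smooth frame on $\mathbb{T}_*$ for free: the paper introduces an artificial $\delta$-gap (Section~\ref{SSS-eigval-sep}) to regularise, then performs a nontrivial extension (Appendix~\ref{A-BBdl}) that genuinely uses rank $\geq 2$ of the ambient spectral projections. Without this there is no quasi-Wannier basis and hence no magnetic quasi-band projection to run Feshbach--Schur against. Second, after the metaplectic reduction to one dimension the effective symbol is $\Gamma_*$-periodic and carries an entire lattice of conical crossings; the paper has to decouple these into one-well problems (Sections~5.2--5.4, following \cite{HS1,HS2}) before the linear model $\mathfrak{L}$ becomes relevant. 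Your sketch never leaves the single-well picture.
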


\begin{figure}[h]
\centering
\includegraphics[width=14 cm, height=11 cm]{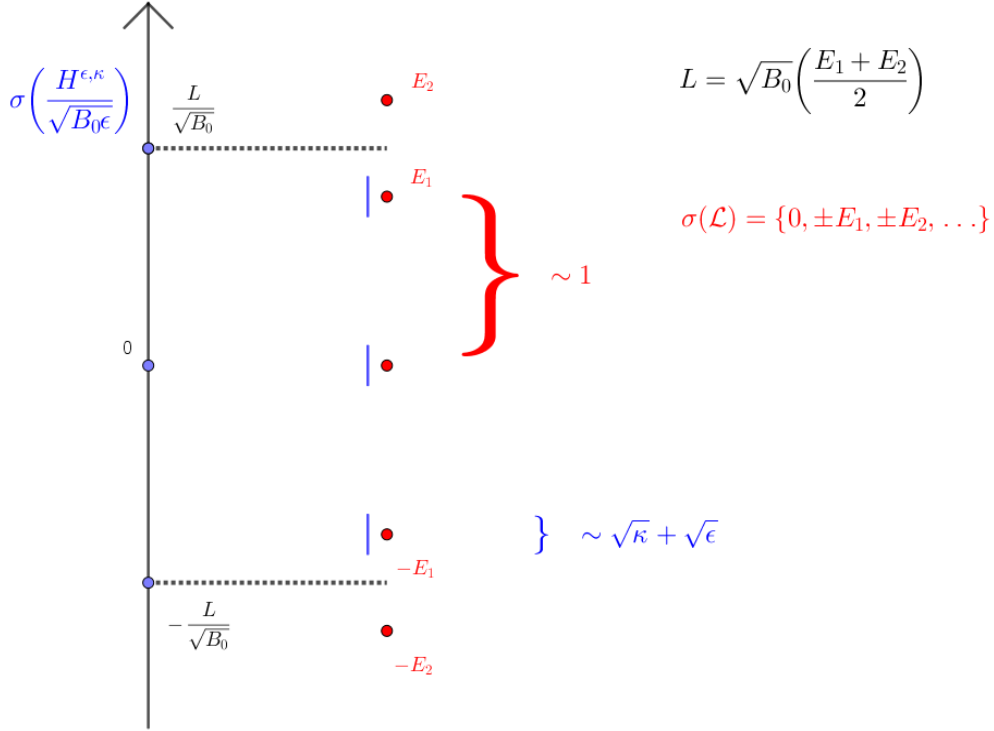}
\caption{The perturbed spectrum}
\label{Fig:2}
\end{figure}

\begin{remark}
 The set $(\epsilon B^\circ)^\frac 12 \sigma (\mathfrak{L})\cap  \big(- L\epsilon^\frac 12, L \epsilon^\frac 12\big)$ consists of finitely many  isolated points (of the order of the integer part of $L$) situated at a distance of order $\sqrt{\epsilon}$ from each other. Thus when both $\epsilon_L$ and $\kappa_L$ are small enough, the set $\sigma\big(H^{\epsilon,\kappa}_\Gamma\big)  \cap \big(- L\epsilon^\frac 12, L \epsilon^\frac 12\big)$ develops gaps of order $\sqrt{\epsilon}$, uniformly in $\kappa$. 
\end{remark}

\begin{remark}\label{P-CP1} It turns out that we only have to prove Theorem \ref{T-Main-2} for $\kappa=0$. This is because of the following statement which is a direct corollary of Theorem 3.1 in \cite{CP-1}:
\begin{quote}
	\textit{Suppose that $L$ is chosen as in Theorem \ref{T-Main-2}. Then there exists a constant $C>0$ such that}
	$$
	d_{\text{\tt H}}\Big(\sigma\big(H^{\epsilon,\kappa}_\Gamma\big)\cap (-L\sqrt{\epsilon},L\sqrt{\epsilon})\,,\,\sigma\big(H^{\epsilon}_\Gamma\big)\cap (-L\sqrt{\epsilon},L\sqrt{\epsilon})\Big)\,\leq\,C\sqrt{\kappa\epsilon}.
	$$
\end{quote}

 The result in \cite{CP-1} is very robust and in some sense optimal: it states that a small magnetic perturbation of order $\kappa\epsilon $ could create (large) gaps which are of order $\sqrt{\kappa \epsilon}$, but not larger. Its proof is based on geometric perturbation theory which uses a $\kappa\epsilon$ dependent partition of unity, combined with a use of magnetic gauge covariance. 
\end{remark}

An important difficulty in the proof of Theorem \ref{T-Main-2} comes from the fact that while in \cite{CHP-2} we were working near the bottom of the spectrum, using positivity conditions for invertibility,  we are now working somewhere in the \enquote{bulk} of the spectrum, in the interval $I\subset\sigma\big(H_\Gamma\big)$. A whole procedure for creating a spectral gap in the studied region inside the interval $I$ with some stability with respect to the magnetic field perturbation has to be elaborated. Moreover we have to replace the $1$-dimensional smooth unit norm global section defining the \enquote{quasi-band} associated with $\lambda_0$ near its minimum in \cite{CHP-2} with a smooth global orthonormal pair of sections having some \enquote{good behavior} with respect to the induced spectral gap, in order to define a kind of quasi-band associated with the two Bloch levels near their crossing point. 

\subsection{Some preliminaries and notations.}

\subsubsection{Notations.}

Given any finite dimensional real vector space $\mathcal{V}$ we denote by $C^\infty(\mathcal{V})$ the  space of complex valued smooth functions $\mathcal{V}\rightarrow\mathbb{C}$ and consider its sub-spaces $C_0^\infty(\mathcal{V})$ of functions with compact support, $BC^\infty(\mathcal{V})$ of bounded functions with bounded derivatives of all orders, $C^\infty_{\text{\sf pol}}(\mathcal{V})$ (resp. $C^\infty_{\text{\sf pol,u}}(\mathcal{V})$) of polynomially bounded (resp. uniformly polynomially bounded) functions.
When restricting to real functions we shall use the notation $C^\infty\big(\mathcal{V};\mathbb{R}\big)$ and similar ones for the above specified sub-spaces. 
We shall also consider $\mathscr{S}(\mathcal{V})$ and $\mathscr{S}^\prime(\mathcal{V})$ the usual dual pair of Schwartz test functions and tempered distributions.
We shall also use test functions with values in some finite dimensional vector space $\mathbb{V}$ and denote them by $\mathscr{S}\big(\mathcal{V};\mathbb{V}\big)$. We denote by $\tau_v$ the translation by $-v\in\mathcal{V}$ acting on various classes of functions (and distributions) on $\mathcal{V}$. We use the notation $<v>:=\sqrt{1+|v|^2}$ for any $v\in\mathcal{V}$.

For any Banach space $\mathcal{B}$ we denote by $\mathbb{B}(\mathcal{B})$ the algebra of continuous linear operators in $\mathcal{B}$. For any Hilbert space $\mathcal{H}$ we denote by $\mathbb{U}(\mathcal{H})$ its subset of unitary operators, by $\mathbb{P}(\mathcal{H})$ the family of $1$-dimensional orthogonal projections and by $\mathbb{LP}(\mathcal{H})$ the family of orthogonal projections. Given two Hilbert spaces $\mathcal{H}_1$ and $\mathcal{H}_2$ let $\mathbb{U}\big(\mathcal{H}_1;\mathcal{H}_2\big)$ be the group of unitary operators from $\mathcal{H}_1$ to $\mathcal{H}_2$.  We shall denote by $\mathbb{L}\big(\mathscr{V}_1;\mathscr{V}_2\big)$ the space of continuous linear operators from the topological vector space $\mathscr{V}_1$ to the topological vector space $\mathscr{V}_2$, endowed with the topology of uniform convergence on bounded sets.

Given any family of vectors $F$ in a vector space $\mathcal{K}$ over a field $\mathbb{K}$ (either $\mathbb{R}$ or $\mathbb{C}$) we denote by $\mathcal{L}_{\mathbb{K}}[F]$ the linear space they generate over $\mathbb{K}$.

For any subset $M\subset\mathcal{T}$ in a topological space $\mathcal{T}$ we denote by $\mathring{M}$ its \textit{interior}.

Once we are given a $2$-dimensional regular lattice $\Gamma\subset\X$  in a 2-dimensional real affine space $\X$, if we fix an origin for it, we have a precise realization of the configuration space as 2-dimensional real linear space and of $\Gamma$ as a copy of $\mathbb{Z}^2$.
{Once we have fixed the linear structure on $\X\cong\mathbb{R}^2$ induced by the regular lattice we denote by $\X^*\cong\mathbb{R}^2$ its dual}. We use the standard multi-index notation $\partial_x^\alpha:=\partial_{x_1}^{\alpha_1}
\partial_{x_2}^{\alpha_2}$ for any $\alpha=(\alpha_1,\alpha_2)\in\mathbb{N}^2$, with $|\alpha|:=\alpha_1+\alpha_2$ and similar ones for the dual $\X^*\cong\mathbb{R}^2$. The phase space is denoted by $\Xi:=\X\times\X^*$ and is endowed with the canonical symplectic form.

\subsubsection{The pseudo-differential calculus.}

 For pseudo-differential operators on $L^2(\X)$ we shall use the following H\"{o}rmander type classes (see \cite{H-3}): for $s\in\mathbb{R}$ and $\rho\in[0,1]$,
\begin{equation*}
S^s_\rho(\X\times\X^*):=\Big\{F\in\ C^\infty_{\text{\sf pol,u}}(\Xi)\,,\,\underset{(x,\xi)\in\Xi}{\sup}<\xi>^{-s+\rho|\beta|}\big|\big(\partial^\alpha_x\partial^\beta_\xi\,F\big)(x,\xi)\big|\leq C_{\alpha,\beta},\,\forall(\alpha,\beta)\in\mathbb{N}^2\times\mathbb{N}^2\Big\},
\end{equation*}
$$
S^{-\infty}(\X\times\X^*):=\underset{s\in\mathbb{R}}{\bigcap}S^s_0(\X\times\X^*).
$$

All these vector spaces are endowed with  locally convex topologies defined by countable families of semi-norms that we shall denote generically by $\nu: S^s_\rho(\X\times\X^*)\rightarrow\mathbb{R}_+$. We shall also use the notation:
$$
S^\infty_\rho(\X\times\X^*):=\underset{s\in\mathbb{R}}{\bigcup}S^s_\rho(\X\times\X^*).
$$
A specific feature of the arguments which are developed  in this paper is the use of a calculus with $2\times2$-matrix valued pseudodifferential operators. Thus we shall consider H\"{o}rmander type symbols with values in the algebra of $2\times2$ complex matrices endowed with the matrix norm, that we denote by $\mathcal{M}_{2\times2}(\mathbb{C})$. 
We shall consider the following $2\times2$ complex matrix valued symbols:
\begin{align}\label{N-matrix-symbols}
	&S^s_\rho(\X\times\X^*)_{2\times2}:=\\
	&\hspace*{0.2cm}\Big\{F\in C^\infty_{\text{\sf pol,u}}\big(\Xi;\mathcal{M}_{2\times2}(\mathbb{C})\big)\,,\,\underset{(x,\xi)\in\Xi}{\sup}<\xi>^{-s+\rho|\beta|}\big\|\big(\partial^\alpha_x\partial^\beta_\xi\,F\big)(x,\xi)\big\|_{\mathcal{M}_{2\times2}(\mathbb{C})}\leq C_{\alpha,\beta},\,\forall(\alpha,\beta)\in\mathbb{N}^2\times\mathbb{N}^2\Big\},\nonumber 
\end{align}

$$
S^{-\infty}(\X\times\X^*)_{2\times2}:=
\underset{s\in\mathbb{R}}{\bigcap}S^s_0(\X\times\X^*)_{2\times2}.
$$

The topologies are defined by the same type of semi-norms in which the modulus of complex numbers has been replaced by the matrix norm (we shall work with $\ell^\infty$ type matrix norm $\|M\|_{\mathcal{M}_{2\times2}(\mathbb{C})}=\max\big\{\big|M_{jk}\big|,\,(j,k)\in\{1,2\}^2\big\}$). Due to the specificity of our problem, on $\mathcal{M}_{2\times2}(\mathbb{C})$ we shall  also work with matrix indices $(j,k)$ with $j=\pm$ and $k=\pm$. Formulas \eqref{D-PsiDO} and \eqref{D-MPsiDO} with matrix-valued symbols are well defined as linear operators on $\mathscr{S}\big(\X;\mathbb{C}^2\big)$ for any $\Phi\in S^s_\rho(\X\times\X^*)_{2\times2}$. An inspection of the results on the magnetic pseudodifferential calculus summarized in appendix B of \cite{CHP-1} shows that they remain true in our new setting (of matrix-valued symbols). 

We shall use the Weyl quantization of symbols (see \cite{H-3}):
\beq\label{D-PsiDO}
\big(\Op(\Phi)\phi\big)(x):=(2\pi)^{-2}\int_{\X}dy\int_{\X^*}d\xi\,
e^{i<\xi,(x-y)>}\Phi\big((x+y)/2,\xi\big)\phi(y),\quad\forall(\Phi,\phi)\in\mathscr{S}(\Xi)\times\mathscr{S}(\X).
\eeq
It is well known that the composition of linear operators induces a non-commutative product, \textit{the Moyal product} on test functions on $\Xi$ and can be extended   to a large class of tempered distributions (see \cite{G-BV-1, G-BV-2, 
F, GLS}). We shall denote it by $\Phi\sharp\Psi$, i.e.
$$
\Op(\Phi\,\sharp\, \Psi)\,=\,\Op(\Phi)\circ\Op(\Psi).
$$
With these definitions it is straightforward to notice that $H_\Gamma=\Op(h)$ for the symbol
\beq \label{R-HGammaOp}
h(x,\xi):=\big(\xi_1-A^\Gamma_1(x)\big)^2+\big(\xi_2-A^\Gamma_2(x)\big)^2+V^\Gamma(x)
\eeq
which  is an elliptic symbol of H\"{o}rmander class $S^2_1(\X\times\X^*)$ whose principal symbol is $|\xi|^2$.

Given some magnetic field $B$ with components of class $BC^\infty(\X;\mathbb{R})$ with an associated vector potential $A$ with components of class $C^\infty_{\text{\sf pol}}(\X;\mathbb{R})$,  we can also define a magnetic pseudodifferential calculus (\cite{MP-1}, \cite{IMP-1}, \cite{IMP-2}, \cite{AMP}):
\begin{align}\label{D-MPsiDO}
\big(\Op^A(\Phi)\phi\big)(x)&:=(2\pi)^{-2}\int_{\X}dy\int_{\X^*}d\xi\,
e^{i<\xi,(x-y)>}e^{-i\int_{[y,x]}A}\,\Phi\big((x+y)/2,\xi\big)\phi(y),\\
&\quad \forall(\Phi,\phi)\in\mathscr{S}(\Xi)\times\mathscr{S}(\X),\nonumber 
\end{align}
where the quantity $\int_{[y,x]}A$ is defined and commented on at the beginning of Section \ref{S-MLB}.  
Similarly we can define a \textit{'magnetic' Moyal product} $\Phi \,\sharp^B\, \Psi$ such that
\beq\label{DF-magn-MoyalPr}
\Op^A(\Phi\,\sharp^B\,\Psi)\,=\,\Op^A(\Phi)\circ\Op^A(\Psi).
\eeq
One can prove that it only depends on the magnetic field $B$ and not on the vector potential one has chosen. Its properties, rather similar with those of the usual Moyal product are studied in \cite{MP-1}, \cite{IMP-1} and \cite{IMP-2}.

Given any continuous linear operator $T:\mathscr{S}(\X)\rightarrow\mathscr{S}^\prime(\X)$ we denote by $\mathfrak{S}_T\in\mathscr{S}^\prime(\Xi)$ its Weyl symbol and by $\mathfrak{S}^A_T\in\mathscr{S}^\prime(\Xi)$ its magnetic symbol, i.e. the tempered distributions satisfying 
\beq\label{N-magn-symb}
\Op\big(\mathfrak{S}_T\big)=T,\quad\Op^A\big(\mathfrak{S}^A_T\big)=T;
\eeq
we denote by $\mathfrak{K}_T\in\mathscr{S}^\prime(\X\times\X)$ its distribution kernel given by the Schwartz Kernels Theorem. Moreover given any distribution kernel $\mathfrak{K}\in\mathscr{S}^\prime(\X\times\X)$ we denote by $ \Int\,\mathfrak{K}\in\mathbb{L}\big(\mathscr{S}(\X);\mathscr{S}^\prime(\X^*)\big)$ the integral operator defined by
\beq\label{DF-int-op}
\big(\Int\,\mathfrak{K}(\phi)\big)(\psi)\,:=\,\mathfrak{K}\big(\phi\otimes\psi\big),\quad\forall(\phi,\psi)\in\mathscr{S}(\X)\times\mathscr{S}(\X).
\eeq

\subsubsection{The Bloch-Floquet representation.}
\label{SSS-BFZ-repr}

Let us make a more detailed analysis of the image of the Bloch-Floquet-Zak transformation  (\cite{Z} or \cite{Ku} for a general discussion of the Bloch-Floquet theory). The formulas \eqref{F-defF-1} and \eqref{F-defF-2} imply that we may describe the image $\check{\mathcal{U}}_\Gamma\big[L^2(\X)\big]$ as the space
\beq\label{DF-spG}
\mathscr{G}:=\big\{F\in L^2_{\text{\sf loc}}\big(\X^*;L^2(\mathbb{T})\big)\,,\,F(\xi+\gamma^*)=e^{-i<\gamma^*,x>}F(\xi),\ \forall\gamma^*\in\Gamma_*\big\}.
\eeq

Defining $\mathscr{V}(x,\xi):=e^{i<\xi,x>}$ as a function of class $C^\infty_{\text{\sf pol}}(\Xi)$ and embedding $\mathscr{G}$ in $L^2_{\text{\sf loc}}(\Xi)$ we can consider its image under the operator of multiplication with the function $\mathscr{V}$ that will take us in a space of $\Gamma_*$-periodic functions
$$
\mathscr{V}\mathscr{G}=\big\{F\in L^2_{\text{\sf loc}}\big(\Xi\big)\,,\,F(x,\xi+\gamma^*)=F(x,\xi),\ \forall\gamma^*\in\Gamma_*;\ F(x+\gamma,\xi)=e^{i<\xi,\gamma>}F(x,\xi),\,\forall\gamma\in\Gamma\big\}.
$$

We would like to see this space of $\Gamma_*$-periodic functions as a space of functions defined on $\mathbb{T}_*$. For that let us define for any $\theta\in\mathbb{T}_*$ the space of functions
$$
\mathscr{F}_\theta:=\big\{F\in L^2_{\text{\sf loc}}(\X)\,,\,F(x+\gamma)=e^{i<\theta,\gamma>} F(x),\,\forall\gamma\in\Gamma\big\}
$$
endowed with the quadratic norm
$$
\big\|F\big\| ^2_{\mathscr{F}_\theta}:=\int_{\mathcal{E}}dx\,\big|F(x)\big|^2.
$$
We notice that embedding $L^2(\mathbb{T})$ in $L^2_{\text{\sf loc}}(\X)$ (as periodic functions) we can write that 
$$ \mathscr{F}_\theta=\mathscr{V}(\cdot.\theta)L^2(\mathbb{T})\equiv\mathcal{V}_\theta L^2(\mathbb{T}).
$$
This allows us to use the notion of measurable field of Hilbert spaces as developed in §II.1.5 in \cite{Di} and notice that our space $\mathscr{V}\mathscr{G}$ is in fact \textit{the direct integral of Hilbert spaces} 
\beq\label{DF-BF-repr}
\mathscr{V}\mathscr{G}\cong\mathscr{F}\,:=\,\int^\oplus_{\mathbb{T}_*}d\theta\,\mathscr{F}_\theta
\eeq
defined by the measurable vector fields $ \big\{\mathcal{V}_\theta u\mid\theta\in\mathbb{T}_*,\,u\in L^2(\mathbb{T})\big\}$.  Using the already introduced formula \eqref{b-f-z}, we denote by $\mathcal{U}_{\Gamma}:=\mathscr{V}\check{\mathcal{U}}_\Gamma:L^2(\X)\overset{\sim}{\rightarrow}\mathscr{F}$ and notice that it has the following explicit form:
\begin{equation}\label{UGamma}
\big[\mathcal{U}_{\Gamma}\phi\big]
(x,\theta):= (2\pi)^{-1}\sum\limits_{\gamma\in\Gamma}e^{i<\theta,\gamma>}\phi(x-\gamma),\qquad\forall\phi\in\mathscr{S}(\X).
\end{equation}
Then $\mathcal{U}_{\Gamma}$ defines a unitary
operator 
$L^2\big(\X\big)\rightarrow\mathscr{F}$ that we call \textit{the Bloch-Floquet representation}.

 An essential tool in working with direct integrals of Hilbert spaces are the \textit{measurable fields of operators}, the \textit{decomposable operators} (see Definition 1 and 2 in Ch. II.2.1 in \cite{Di}) and the \textit{measurable fields of von Neumann algebras} (see Ch. II.3.2 in \cite{Di}). 
In fact, in the Bloch-Floquet representation the periodic Hamiltonian $H_\Gamma$ is \textit{decomposable}, being defined by a measurable field of self-adjoint operators $\widehat{H}(\theta)$ acting in each space $\mathscr{F}_\theta$ and we may write:
\beq \label{hc1}
\mathscr{U}_\Gamma H_\Gamma\mathscr{U}_\Gamma^{-1}\,=\,\int_{\mathcal{B}}^\oplus d\theta\,\widehat{H}(\theta), \quad \widehat{H}(\theta)=\underset{k\in\mathbb{N}}{\sum}\lambda_k(\theta)\, \widehat{\pi}_k(\theta).
\eeq

In order to define the eigenprojections $\widehat{\pi}_k$ when the multiplicity is bigger then 1, we apply the following procedure. To each $k\in \mathbb N$ and $\theta \in \mathbb T_*$, we introduce the minimal labelling of $\lambda_k(\theta)$:  $$\nu (k,\theta)=\inf_{\lambda_j(\theta) =\lambda_k(\theta)}j\,.$$
We define the eigenprojections by the following Dunford contour integral (Chapter VII in \cite{DS-1}):
\beq\label{DF-pi-k}
\begin{split}
	\widehat{\pi}_k(\theta):=&\left\{
	\begin{array}{ll}
		=\frac{1}{2\pi i}\oint_{\mathscr{C}_k(\theta)}d\z\,\big(\widehat{H}(\theta)-\z\bb1\big)^{-1}\in  \mathbb{LP}(\mathscr{F}_
		\theta) & \mbox{ if } \nu(k,\theta)=k\,,\\
		=0 & \mbox{ if }  \nu(k,\theta) < k\,,
	\end{array}
	\right.
\end{split}
\eeq 
where $\mathscr{C}_k(\theta)\subset\mathbb{C}$ is a circle surrounding $\lambda_k(\theta)$ and no other point from $\sigma\big(\widehat{H}(\theta)\big)$. They define measurable functions on $\mathbb{T}_*$. Moreover, by elliptic regularity  and noticing that $\mathcal{V}_\theta\big[C^\infty(\mathbb{T})\big]=\mathscr{F}_\theta\cap C^\infty(\mathbb{R}^2) =:\mathscr{F}_\theta^\infty$, we deduce that the finite number of eigenfunctions $\varphi_k(\theta)$ associated to any Bloch eigenvalue $\lambda_k(\theta)$ are smooth functions and thus the projection valued sections $\widehat{\pi}_k(\theta)$ project on the subspace of smooth functions in $\mathscr{F}_\theta$ for any $\theta\in\mathbb{T}_*$, i.e. 
\begin{equation}\label{aprilie6}
\widehat{\pi}_k(\theta)\mathscr{F}_\theta\subset\mathscr{F}_\theta^\infty,\quad\forall\theta\in\mathbb{T}_*.
\end{equation}

Let us notice that if a group of eigenvalues remains isolated from the rest of the spectrum while $\theta$ varies in some open set, then the total Riesz projection associated with this group is locally smooth in $\theta$ on that open set. 

We denote by $P_K(\widehat{H}(\theta))$ the spectral projection of $\widehat{H}(\theta)$ corresponding to a set $K\subset \R$. Due to the unboundedness of the operators $\widehat{H}(\theta)$ we shall also use the direct integral decomposition of the domain of $H_\Gamma$ considered as Hilbert space for the graph-norm:
$$
\mathcal{U}_\Gamma\mathscr{H}^2(\X)=\int^\oplus_{\mathbb{T}_*}d\theta\,\mathscr{F}^2_\theta,\quad\mathscr{F}^2_\theta:=\mathcal{V}_{\theta}\mathscr{H}^2(\mathbb{T})
$$
with $\mathscr{H}^2(\mathbb{T})$ the Sobolev space of order 2 on the 2-dimensional torus.

In dealing with direct integrals of Hilbert spaces over the dual torus $\mathbb{T}_*$ and decomposable operators, we shall usually call \textit{global sections} the measurable fields of vectors: 
$$
f:\mathbb{T}_*\rightarrow\int^\oplus_{\mathbb{T}_*}d\theta\,\mathscr{F}_\theta,\quad\,f(\theta)\in\mathscr{F}_\theta,
$$
or of operators 
$$
T:\mathbb{T}_*\rightarrow\int^\oplus_{\mathbb{T}_*}d\theta\,\big[\mathbb{B}(\mathscr{F}_\theta)\big],\quad\,T(\theta)\in\mathbb{B}(\mathscr{F}_\theta).
$$
One can also consider local sections defined only for some open subsets $\mathcal{O}\subset\mathbb{T}_*$. The topology of the torus $\mathbb{T}_*$ has as consequence the non-triviality of the extension problem of a regular (continuous or smooth) local section to a regular global section and we shall have to deal with this problem in our analysis. Let us briefly mention here that one may approach these problems in the framework of sections in vector bundles but we shall avoid these aspects in our work in order to make a rather self-contained presentation; nevertheless, we shall frequently use the term \textit{fiber} for the Hilbert spaces or the spaces of operators in the direct integral.

\subsection{The main steps of the proof of Theorem \ref{T-Main-2}.}
\label{SS-ProofStr}

Let us present here  the main steps of the proof of Theorem \ref{T-Main-2}. 
We start by mentioning that even though the eigenprojections associated to $\{\lambda_\pm(\theta)\}_{\theta\in\Sigma_I}\}$ by \eqref{DF-pi-k} have a singularity in $\theta_0$, their orthogonal sum (denoted by $\widehat{\Pi}_I(\theta)$) is smooth on $\Sigma_I$ due to Hypothesis \ref{H-1}. In \eqref{eq:1.19} we define the rank 2 \enquote{window} Hamiltonian $\widehat{H}_I(\theta)$ associated to the the spectrum $\{\lambda_\pm(\theta)\}_{\theta\in\Sigma_I}\}$.
\begin{description}
	\item[{\bf Step 1.}] Choosing a local basis in the Bloch-Floquet representation we define the unitary-valued map $\Sigma_I\ni\theta\rightarrow\widehat{\Upsilon}(\theta)\in\mathbb{U}
	\big(\widehat{\Pi}_I(\theta) \mathscr{F}^\infty_\theta;\mathbb{C}^2\big)$ and we decompose the corresponding $2\times2$ Hermitian matrix $M_I(\theta):=\widehat{\Upsilon}(\theta)\widehat{H}_I
	(\theta)\widehat{\Upsilon}
	(\theta)^{-1}$ with respect to the basis of Pauli matrices in $\mathcal{M}_{2\times2}(\mathbb{C})$ \eqref{F-HIcirc}. This is the object of Subsection \ref{SS-LH}.
	\item[{\bf Step 2.}] Using a perturbation of the matrix $M_I$ we introduce a spectral gap around $\theta_0$ and replace the two eigenprojections $\widehat{\pi}_\pm(\theta)$ for $\theta\in\Sigma_I$ by a pair of smooth orthogonal projections in $\widehat{\Pi}_I(\theta)\mathscr{F}_\theta$ for any $\theta\in\Sigma_I$. Then, using a rather explicit construction explained in  Appendix \ref{A-BBdl}, we extend these local projection-valued sections to $\mathbb{~T}_*$ and define a pair of smooth global orthonormal sections $\{\widehat \Phi_\pm(\theta)\}_{\theta\in\mathbb{T}_*}$ and a neighborhood $U\subset\Sigma_I$ of $\theta_0$, such that (see Proposition \ref{P-smooth-glob-frame})
	\begin{itemize}
		\item for $\theta\in U$ we have that
		\beq\label{F-Psi-PiI}
		\mathcal{L}_{\mathbb{C}}\{\widehat{\Phi}_-
		(\theta)\,,\,\widehat{\Phi}_+
		(\theta)\}\,=\,\widehat{\Pi}_I(\theta)\mathscr{F}_\theta,
		\eeq
		\item for $\theta\notin U$ we have that:
		\begin{align} \label{F-Psi-delta-}
		&\widehat{\Phi}_-(\theta)\in P_{(-\infty,0]}(\widehat{H}(\theta))\mathscr{F}_\theta 
		\\
		\label{F-Psi-delta+}
		&\widehat{\Phi}_+(\theta)\in P_{(0,\infty)}(\widehat{H}(\theta))\mathscr{F}^2_\theta.
		\end{align} 
	\end{itemize}
	This smooth global basis allows us to define a \textit{\enquote{local band} projector} $\{\widehat{P}_I(\theta)\}_{\theta\in\mathbb{T}_*}$ (see Definition \ref{D-Pdelta}), which is not a spectral projector for  $\widehat{H}(\theta)$ on the whole Brillouin zone, but coincides with such a spectral projector on the neighborhood $U$ of $\theta_0$. This procedure will  firstly allow us to circumvent the possible non-existence of a localized Wannier basis for a spectral projection of $H_\Gamma$ which includes the spectral window $I$ near the crossing, and secondly to reduce our analysis to a possibly smaller spectral window inside $I\subset\mathbb{R}$. These constructions are presented in Subsection \ref{SS-LB-frame} and they depend on the choice of a parameter $\delta>0$ controlling the width of the spectral gap created around $\theta_0$. In fact we shall fix the value of this parameter depending on the properties of the periodic Hamiltonian $H_\Gamma$ and develop all our arguments keeping this value fixed. Although the conclusion of our Theorem does not depend on this fixed value, the accuracy of the estimations (the constants appearing in the estimations) may in fact depend on the exact value we fixed for $\delta>0$.  We note that  \cite{M-P} contains a detailed analysis of what can happen near a \enquote{closely avoided} conical crossing and its topological consequences. 
	
	\item[{\bf Step 3.}] A general procedure having its roots in (\cite{HS}, \cite{Ne-RMP}) and developed in \cite{CHN} and \cite{CIP} allows us  to define some \textit{magnetic \enquote{local band} projectors} $P_I^\epsilon$ associated with the constant magnetic field $\epsilon B^\circ$. While these operators are not expected to be norm-continuous with respect to $\epsilon\in[0,\epsilon_0]$, their \textit{symbols as magnetic pseudodifferential operators} are much better behaved (see Proposition \ref{F-est-dif-bdproj}). We can define now a \textit{magnetic \enquote{local band} Hamiltonian}: $H^\epsilon_I:=P_I^\epsilon H^\epsilon_\Gamma P_I^\epsilon$. We present the details of this construction in Section \ref{S-MLB}.
		\item[{\bf Step 4.}] In order to compare the magnetic periodic Hamiltonian $H^{\epsilon}_\Gamma$ with the \textit{magnetic \enquote{local band} Hamiltonian} $H^\epsilon_I$ we have to modify our version of the Feshbach-Schur procedure elaborated in \cite{CHP-2} in order to apply it for energies in a spectral gap of the studied operator. The abstract procedure is explained in Appendix \ref{A-FS-arg} and we prove its applicability to our given situation in Proposition \ref{P-Hyp-H-magn}. This allows us to obtain Corollary \ref{C-magn-FS-est} estimating the Hausdorff distance between the spectra of the magnetic periodic Hamiltonian $H^{\epsilon}_\Gamma$ and a \enquote{dressed} modified version $\widetilde{H}^\epsilon_I$ of the magnetic \enquote{quasi band} Hamiltonian $H^\epsilon_I$ in an interval $ (-L\sqrt{\epsilon},L\sqrt{\epsilon})$.
		
		\item[{\bf Step 5.}] For the \textit{\enquote{dressed} modified magnetic \enquote{quasi band} Hamiltonian} $\widetilde{H}^\epsilon_I$ we apply the procedure of \cite{CHN} as developed in \cite{CHP-2} and using \textit{magnetic matrices} we prove its unitary equivalence with an effective Hamiltonian of the form $\Op^{\epsilon}(\mathcal  k^\epsilon_I)$, with $\Gamma_*$-periodic symbol $\mathcal  k^\epsilon_I\in C^\infty_{\text{\sf pol,u}}\big(\X^*;\mathcal{M}_{2\times2}(\mathbb{C})\big)$ (see Proposition \ref{P-magn-matrix-ke}). We also notice that this periodic symbol is close of order $\epsilon$ (with respect to the semi-norms defining the topology on the space of $S^0$ symbols)  to a periodic symbol $\mathcal{k}_I$ that no longer depends on the magnetic field (Proposition \ref{P-kepsilon-k}). This non-magnetic symbol coincides near the crossing point with our initial matrix $M_I$ \eqref{F-HIcirc}. We can now define the \textit{Peierls-Onsager effective Hamiltonian} $\Op^{\epsilon}(\mathcal  k_I)$ for our spectral region $I\subset\mathbb{R}$.
		
		\item[{\bf Step 6.}] Once we obtained this Peierls-Onsager effective Hamiltonian, we could repeat the procedure developed by us in \cite{CHP-1} and use the magnetic pseudodifferential calculus to construct a \enquote{quasi-resolvent}. Nevertheless, as in our situation we can restrict to constant magnetic fields due to Remark \ref{P-CP1}, we prefer to use some existing results obtained for this situation in \cite{HS1} and \cite{HS2}. The idea is that for constant magnetic fields, our 2-dimensional problem is in fact isospectral with an $1$-dimensional operator, whose spectrum can be completely understood in the framework of the semi-classical analysis that has been developed in the cited references. The intensity of the constant magnetic field $\epsilon B^\circ$ plays the role of the \textit{semi-classical \enquote{small parameter}} being controlled by $\epsilon\in[0,\epsilon_0]$. We present these ideas and details in Subsection~\ref{ss7.1new}.

		\item[{\bf Step 7.}] The semi-classical problem involves a $\Gamma_*$-periodic $2\times2$-matrix valued symbol in one dimension, with values Hermitian matrices having two real eigenvalues that remain well separated on the entire 2-dimensional phase space with the exception of some small neighborhoods of the points of the lattice $\Gamma_*$. We are interested in locating the spectrum in a small interval around $0\in\mathbb{R}$ for its Weyl quantization acting on $L^2(\mathbb{R})$. Thus we are interested only in the behavior near the points  of the lattice $\Gamma_*$. Lemma 2.1 in \cite{HS1} allows us to decompose our symbol as a superposition of $2\times2$-matrix valued symbols having only one \enquote{crossing point} for their values at one of the vertices of the lattice $\Gamma_*$. For these symbols we approximate their spectrum close to $0$ by the spectrum of the linear term in their Taylor expansion (see Subsection \ref{SS-1-p-scls}).
		
		\item[{\bf Step 8.}] Finally, in Section \ref{S-final} we put together the results that lead us to the conclusion of Theorem~\ref{T-Main-2} with $\kappa=0$.
\end{description}

\section{The \enquote{quasi-band} associated to the spectral window {\it I}.}\label{S-qB-frame}

This section is devoted to the first two steps of our strategy for proving Theorem \ref{T-Main-2}. We prove the existence of two smooth global sections $ \hat{\Psi}_\pm:\mathbb{T}_*\rightarrow\mathscr{F}$ satisfying (\ref{F-Psi-PiI}) -( \ref{F-Psi-delta+}). This allows us to define the \enquote{quasi-band} associated with the spectral window $I\subset\mathbb{R}$ (Definition \ref{D-Pdelta}). Mimicking the Wannier basis description of the Bloch spectral bands, it is defined as the linear subspace generated by the inverse Floquet transformed sections $\hat{\Psi}_\pm$ translated by vectors in $\Gamma$. 
In the whole section, we assume Hypotheses \ref{H-1} and \ref{H-2}. We shall use the notation: 
\beq\label{FD-par-sp}
E_0:=-\inf\,\sigma(H_\Gamma)>\Lambda_->0.
\eeq

\subsection{The \enquote{local Hamiltonian} associated to the spectral window {\it I}}\label{SS-LH}

In this subsection we realize the first step in our proof strategy (see Subsection \ref{SS-ProofStr})
by defining and analyzing the $2\times2$ Hermitian matrix $M_I(\theta)$ in \eqref{F-HIcirc}. This matrix associated to our periodic Hamiltonian $H_\Gamma$ in the Bloch-Floquet representation, for the spectral interval $I\subset\mathbb{R}$ and for a neighborhood of the crossing-point $\theta_0$ plays a crucial role in the spectral analysis we are doing as we shall see in Section \ref{S-red-ModelH}.

For $\theta\in\Sigma_I\setminus\{\theta_0\}$, we recall the notation in \eqref{DF-lambda-pm} and the definition in \eqref{DF-pi-k} and introduce
\beq\label{DF-pi-pm}
\widehat{\pi}_-(\theta):=\widehat{\pi}_{k_0}(\theta),\quad\widehat{\pi}_+(\theta):=\widehat{\pi}_{k_0+1}(\theta).
\eeq
With these notations we shall consider
the following family of bounded self-adjoint operators acting on $\mathscr{F}_\theta$:
\begin{equation} \label{eq:1.19}
\widehat{H}_I(\theta)\,:=
\left\{\begin{array}{ll}
\lambda_{-}(\theta)\,\widehat{\pi}_{-}(\theta)\,+\,\lambda_{+}(\theta)\, \widehat{\pi}_{+}(\theta) &\mbox{ for }\theta\in\Sigma_I\setminus\{\theta_0\}\,,\\
0  & \mbox{ for }\theta=\theta_0\,.
\end{array}\right.
\end{equation}
We note that the  $\widehat{\pi}_\pm(\theta)$ have rank one for $\theta\in\Sigma_I\setminus\{\theta_0\}$ and that 
the operator 
\beq\label{D-PiI}
\widehat{\Pi}_I(\theta):=
\left\{\begin{array}{l}
	\widehat{\pi}_{-}(\theta)+\widehat{\pi}_{+}(\theta),\quad\forall\theta\in\Sigma_I\setminus\{\theta_0\}\,,\\
	\widehat{\pi}_{k_0}(\theta_0)\quad\text{ for }\theta=\theta_0\,,
\end{array}\right.
\eeq 
defines a family of rank two  orthogonal projections which, unlike the rank one families $\widehat{\pi}_\pm(\theta)$, form a local smooth section $\widehat{\Pi}_I:\Sigma_I\rightarrow\int^\oplus_{\Sigma_I}d\theta\,\big[\mathbb{B}(\mathscr{F}_\theta)\big]$. If we choose the spectral interval $I$ small enough, the smoothness of the local section $\widehat{\Pi}_I(\theta)$ implies that 
\beq\label{F-theta-var}
\big\|\mathcal{V}_\theta^{-1}\widehat{\Pi}_I(\theta)\mathcal{V}_\theta-\mathcal{V}_{\theta_0}^{-1}\widehat{\Pi}_I(\theta_0)\mathcal{V}_{\theta_0}\big\|_{\mathbb{B}(L^2(\mathbb{T}))}\,\leq\,1/2,\quad\forall\theta\in\Sigma_I.
\eeq
We can thus use the Sz.-Nagy  unitary intertwining operator (see I.4.6 in \cite{Ka}) for the pair of 2-dimensional orthogonal projections $\check{\Pi}_I(\theta):=\mathcal{V}_\theta^{-1}\widehat{\Pi}_I(\theta)\mathcal{V}_\theta$ and $\check{\Pi}_I(\theta_0):=\mathcal{V}_{\theta_0}^{-1}\widehat{\Pi}_I(\theta_0)\mathcal{V}_{\theta_0}$ acting in $L^2(\mathbb{T})$:
\beq\label{F-Nagy-unit}
\widetilde{U}_0(\theta):=\big(\bb1-(\check{\Pi}_I(\theta)-
\check{\Pi}_I(\theta_0))^2\big)^{-1/2}\big(\check{\Pi}_I(\theta)
\check{\Pi}_I(\theta_0)\,+\,\check{\Pi}_I(\theta)^\bot\check{\Pi}_I(\theta_0)^\bot\big)
\eeq
which belongs to 
$C^\infty\big(\Sigma_I;\mathbb{U}(L^2(\mathbb{T}))\big)$ and satisfies the intertwining equation:
$$
\widetilde{U}_0(\theta_0)=\bb1,\quad \widetilde{U}_0(\theta)\check{\Pi}_I(\theta_0)=\check{\Pi}_I
(\theta)\widetilde{U}_0(\theta),
\quad\forall\theta\in\Sigma_I.
$$
We conclude that it exists a smooth diffeomorphism $\widehat{\Upsilon}:\int^\oplus_{\Sigma_I}d\theta\,\big[\widehat{\Pi}_I\mathscr{F}_\theta\big]\overset{\sim}{\longrightarrow}\mathbb{C}^2\times\Sigma_I$ that is unitary on each fiber, i.e. 
\beq\label{F-Upsilon}
\widehat{\Upsilon}(\theta):\ \widehat{\Pi}_I(\theta)\mathscr{F}_\theta\,\overset{\sim}{\rightarrow}\, \mathbb{C}^2\times
\{\theta\}
\eeq 
is a smooth family of unitary operators between each of the 2-dimensional Hermitian complex spaces $\widehat{\Pi}_I(\theta)\mathscr{F}_\theta$ and $\mathbb{C}^2$ for $\theta\in\Sigma_I$.
\begin{remark}\label{R-contr-triv}
In fact, using some standard results in topology one may give up the restriction in \eqref{F-theta-var} and just ask for $\Sigma_I$ to be contractible.
\end{remark}

\begin{definition}\label{D-onb-SigmaI}
	Let $\{\varepsilon_-,\varepsilon_+\}$ be the canonical orthonormal basis in $\mathbb{C}^2$ and for $\theta\in\Sigma_I$ and $j=\pm$ let us define 
	$$
	\hat \varphi_j(\theta):=\widehat{\Upsilon}(\theta)^{-1}(\varepsilon_j)\in\widehat{\Pi}_I(\theta)\mathfrak{F}_
	\theta\subset\mathfrak{F}^2_\theta.
	$$
\end{definition}
\begin{remark}\label{R-D-H-local}
The pair $\{\hat \varphi_-(\theta), \hat \varphi_+(\theta)\}$ forms a smooth orthonormal basis of $\widehat{\Pi}_I(\theta)\mathscr{F}_\theta\subset
\mathscr{F}^2_\theta$ over $\Sigma_I\subset\mathbb{T}_*$ and we obtain a smooth family of $2\times 2$ matrices
\beq\label{FD-Psi}
\Sigma_I\ni\theta\mapsto M_I
(\theta)_{jk}:=\left\langle \hat \varphi_j(\theta)\,,\,\widehat{H}_I(\theta) \hat \varphi_k(\theta)\right\rangle_{\mathscr{F}_\theta}=
\left\langle\varepsilon_j\,,\,\widehat{\Upsilon}(\theta)\widehat{H}_I(\theta)\widehat{\Upsilon}(\theta)^{-1}\varepsilon_k\right\rangle_{\mathbb{C}^2}
\in\mathcal{M}_{2\times2}(\mathbb{C})
\eeq
defining \textit{the \enquote{local Hamiltonian}} $\mathring{H}_I:=\widehat{\Upsilon}(\theta)\widehat{H}_I(\theta)\widehat{\Upsilon}(\theta)^{-1}$ associated to the spectral interval $I$, in the fixed basis.
\end{remark}

Once we have fixed the canonical orthonormal basis $\{\varepsilon_1,\varepsilon_2\}$ on $\mathbb{C}^2$ we shall work with the orthogonal basis of the real algebra of Hermitian matrices on $\mathbb{C}^2$ given by the Pauli matrices $\{\sigma_1,\sigma_2,\sigma_3\}$ and the identity $\bb1_2$. Then our local Hamiltonian $\mathring{H}_I$ is described by four functions $F_0\in C^\infty\big(\Sigma_I;\mathbb{R}\big)$ and $ \mathbf{F}\equiv(F_1,F_2,F_3)\in C^\infty\big(\Sigma_I;\mathbb{R}^3\big)$ such that
\beq\label{F-HIcirc}
M_I(\theta)\,=\,F_0(\theta)\,\bb1\,+\,\underset{\ell =1,2,3}{\sum}F_\ell (\theta)\, \sigma_\ell
\ \in\mathcal{M}_{2\times2}(\mathbb{C})\,.
\eeq
Its eigenvalues are given by: 
\begin{equation}\label{F-lambdapm}
\lambda_\pm(\theta)=F_0(\theta)\pm|\mathbf{F}(\theta)| ,\ \forall  \theta\in \Sigma_I.
\end{equation}
If we introduce the notations
\begin{align*}\begin{split}
&\lambda_\circ (\theta) := (\lambda_+(\theta) + \lambda_-(\theta))/2=(1/2)\Tr\big(M_I(\theta)\big)\,,\quad \delta(\theta) :=( \lambda_+(\theta) -\lambda_-(\theta))/2\,,\\
&\mathcal{d}(\theta):=\lambda_-(\theta)\lambda_+(\theta)=\det\big(M_I(\theta)\big),
\end{split}\end{align*}
we notice that $F_0(\theta)=\lambda_\circ(\theta)$ and $\big|\mathbf{F}(\theta)\big|=\delta(\theta)$ for any $\theta\in\Sigma_I$ and thus are invariant to a change of basis in $\mathcal{M}_{2\times2}(\mathbb{C})$.
Let us recall that $\lambda_-(\theta_0)=\lambda_+(\theta_0)=\lambda_\circ(\theta_0)=\delta(\theta_0)=\mathcal{d}(\theta_0)=0$ and 
let us also define
\beq\label{F-desc-F-v}
\left\{\begin{array}{l}
	v^{(1)}:=\big(\partial_{\theta_1}\mathbf{F}\big)(\theta_0)\,\in\mathbb{R}^3,\quad v^{(2)}:=\big(\partial_{\theta_2}\mathbf{F}\big)(\theta_0)\,\in\mathbb{R}^3, \\
	f_1:=\big(\partial_{\theta_1}F_0\big)(\theta_0)\,\in\mathbb{R}\,,\quad
	f_2:=\big(\partial_{\theta_2}F_0\big)(\theta_0)\,\in\mathbb{R}\,
\end{array}\right.
\eeq
so that we can write the Taylor expansions:
\begin{align} \label{F-v12}
\mathbf{F}(\theta)&=(\theta-\theta_0)_1v^{(1)}+(\theta-\theta_0)_2v^{(2)}\,+\,\mathscr{O}(|\theta-\theta_0|^2)\,,\\
\label{F-descF0}
F_0(\theta)&=(\theta-\theta_0)_1f_1+(\theta-\theta_0)_2f_2+
\mathscr{O}(|\theta-\theta_0|^2)\,.
\end{align}

\begin{remark}\label{R-F-dep}
	Although the exact form of the functions $F_0(\theta)$ and $\big|\mathbf{F}(\theta)\big|$ strongly depends on the family of unitaries $\big\{\widehat{\Upsilon}(\theta)\big\}_{\theta\in\Sigma_{I}}$, the elements $f_1$, $f_2$, $|v^{(1)}|$, $|v^{(2)}|$ and $v^{(1)}\cdot v^{(2)}=\mu\,|v^{(1)}|\,|v^{(2)}|$, (where $\mu\in(-1,1)$ is the cosine of the angle between them in $\mathbb{R}^3$), are completely determined by the behavior of the eigenvalues $\lambda_\pm$ at $\theta_0$.
\end{remark}

Let us analyze the consequences of Hypotheses \ref{H-1} and \ref{H-2} concerning the functions $\{F_0,\mathbf{F}\}$.
\begin{remark}\label{R-d-smooth}
 It turns out that although the eigenvalues $\lambda_\pm(\theta)$ are not separately smooth, their product is smooth due to the identity
$$ 2\lambda_-(\theta)\lambda_+(\theta)=2\det \big(M_I(\theta)\big)=\left ({\rm Tr}(M_I(\theta))\right )^2 -{\rm Tr}(M_I(\theta)^2).$$
\end{remark}

Let us recall that given a function $f\in C^2(\mathring{\Sigma}_I;\mathbb{R})$, its Hessian is defined as the $2\times2$ matrix-valued continuous function
$$
(\Hes\,f)(\theta):=\left(\begin{array}{cc}
	\big(\partial_{\theta_1}\partial_{\theta_1}f\big)(\theta)&\big(\partial_{\theta_1}\partial_{\theta_2}f\big)(\theta)\\\big(\partial_{\theta_2}\partial_{\theta_1}f\big)(\theta)&\big(\partial_{\theta_2}\partial_{\theta_2}f\big)(\theta)
\end{array}\right).
$$

\begin{proposition}\label{P-cons-Hyp}
Hypotheses \ref{H-1} and \ref{H-2} imply that:
\begin{enumerate}
\item The map  $\Sigma_I\ni\theta\mapsto\mathcal{d}(\theta)\in\mathbb{R}$ is smooth and Hypothesis \ref{H-2} is equivalent to 
$$
\mathcal{d}(\theta_0)=0,\quad\big(\nabla\mathcal{d}\big)(\theta_0)=0,\quad\exists a_0>0,\ \left\langle\zeta\,,\,(\Hes\,\mathcal{d})(\theta_0)\zeta\right\rangle_{\mathbb{R}^2}\leq(-a_0)\|\zeta\|_{\mathbb{R}^2}^2,\,\forall\zeta\in\mathbb{R}^2.
$$
\item There exist $C>c>0$ such that 
	\begin{equation}\label{aprilie2} 
	c\;\vert \theta-\theta_0\vert\leq  \lambda_+(\theta)-\lambda_-(\theta)\leq C\;\vert \theta-\theta_0\vert,\quad\forall\theta\in\Sigma_I.
	\end{equation} 
\item The map $\Sigma_I\ni\theta \mapsto \delta(\theta)^2\in\mathbb{R}$ is smooth, has a unique non-degenerate zero at $\theta_0$, and its Hessian is given by
	\begin{align}
	&\big(\Hes\,\delta^2)(\theta_0)\big)=2\left(\begin{array}{cc}
	\big|\big(\partial_{\theta_1}\mathbf{F}\big)(\theta_0)\big|^2&\big(\partial_{\theta_1}\mathbf{F}\big)(\theta_0)\cdot\big(\partial_{\theta_2}\mathbf{F}\big)(\theta_0)\\ \label{F-non-deg}
	\big(\partial_{\theta_1}\mathbf{F}\big)(\theta_0)\cdot\big(\partial_{\theta_2}\mathbf{F}\big)(\theta_0)&\big|\big(\partial_{\theta_2}\mathbf{F}\big)(\theta_0)\big|^2
	\end{array}\right).
	\end{align}
	It uniquely determines $|v^{(1)}|,\,|v^{(2)}|$ and $v^{(1)}\cdot v^{(2)}$, (thus also the cosine of the angle between them in $\mathbb{R}^3$) and being non-degenerate one has
	\begin{equation}\label{hypv1v2}
	v^{(1)}\wedge v^{(2)}\neq 0\,.
	\end{equation}
Moreover there exists $\rho >0$ such that
\begin{equation}\label{F-est-Hess-delta2}
	\langle (\Hes\,\delta^2)(\theta_0) \zeta \,,\, \zeta\rangle \geq \rho |\zeta|^2\,,\, \forall \zeta \in \mathbb R^2\,.
\end{equation}
	\item There exists $0\leq  a<1$ such that 
    \begin{equation} \label{F-lambdacirc-delta}
	\big|\big\langle \zeta ,   \nabla \lambda_\circ (\theta_0)\big\rangle_{\mathbb{R}^2}\big|^2  \leq a^2 \langle (\Hes\,\delta^2) (\theta_0) \zeta\,,\, \zeta \rangle\,,\, \forall \zeta \in \mathbb R^2\,.
	\end{equation}
	\item There exists an open neighborhood $V_0$ of $\theta_0$, included in $\Sigma_I$ and some $\tilde{a}\in(a,1)$ such that 
	$$
	\big|F_0(\theta)\big|\leq \tilde a \,\big|\mathbf{F}(\theta)\big|
	\,,\,  \forall\theta\in V_0.
	$$
\end{enumerate}
\end{proposition}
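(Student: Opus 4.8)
The plan is to reduce the whole proposition to the single algebraic identity
\[
\delta(\theta)^2 \;=\; \lambda_\circ(\theta)^2 - \mathcal{d}(\theta),\qquad \theta\in\Sigma_I,
\]
which follows from $\mathcal{d}=\lambda_-\lambda_+=(F_0-|\mathbf{F}|)(F_0+|\mathbf{F}|)=F_0^2-|\mathbf{F}|^2$ together with $F_0=\lambda_\circ$, $|\mathbf{F}|=\delta$, combined with the fact that all three functions $\mathcal{d}=\det M_I$, $\delta^2=F_1^2+F_2^2+F_3^2$ and $\lambda_\circ=F_0$ are genuinely $C^\infty$ on $\Sigma_I$ (for $\mathcal{d}$ this is Remark \ref{R-d-smooth}, the other two being obvious from the smoothness of the $F_\ell$). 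After recording this, item (1) is just the unwinding of Hypothesis \ref{H-2}: to say that the smooth function $\mathcal{d}$ has a non-degenerate maximum equal to $0$ at $\theta_0$ is exactly to say $\mathcal{d}(\theta_0)=0$, $\nabla\mathcal{d}(\theta_0)=0$ and $(\Hes\,\mathcal{d})(\theta_0)\leq-a_0\bb1$ for some $a_0>0$.

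The technical core is the second-order Taylor data at $\theta_0$. Differentiating $\delta^2=\lambda_\circ^2-\mathcal{d}$ twice and using $\lambda_\circ(\theta_0)=0$, one gets, as quadratic forms,
\[
\langle(\Hes\,\delta^2)(\theta_0)\zeta,\zeta\rangle \;=\; 2\,\langle\nabla\lambda_\circ(\theta_0),\zeta\rangle^2-\langle(\Hes\,\mathcal{d})(\theta_0)\zeta,\zeta\rangle \;\geq\; 2\,\langle\nabla\lambda_\circ(\theta_0),\zeta\rangle^2+a_0|\zeta|^2,\qquad\forall\zeta\in\mathbb{R}^2.
\]
So $(\Hes\,\delta^2)(\theta_0)$ is positive definite, which gives \eqref{F-est-Hess-delta2} with any $\rho\leq a_0$. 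Differentiating instead $\delta^2=\sum_\ell F_\ell^2$ directly and using $F_\ell(\theta_0)=0$ yields the Gram-matrix formula \eqref{F-non-deg}; a positive definite Gram matrix of $v^{(1)},v^{(2)}\in\mathbb{R}^3$ means these vectors are linearly independent, i.e. \eqref{hypv1v2}, and its entries obviously determine $|v^{(1)}|,|v^{(2)}|,v^{(1)}\cdot v^{(2)}$. Since $\delta^2=|\mathbf{F}|^2$ vanishes precisely where $\lambda_-=\lambda_+$, i.e. only at $\theta_0$ by \eqref{F-unique-int}, item (3) is complete. Item (4) is read off the displayed inequality: it gives \eqref{F-lambdacirc-delta} with $a^2=1/2$, but the strictly positive correction $a_0|\zeta|^2$ forces the optimal such $a$ to satisfy $2a^2<1$; I would take $a$ to be that optimal constant and keep $2a^2<1$ in reserve.

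For item (2), the upper bound $\lambda_+(\theta)-\lambda_-(\theta)=2|\mathbf{F}(\theta)|\leq C|\theta-\theta_0|$ is the mean-value inequality for the smooth $\mathbf{F}$ vanishing at $\theta_0$, $C$ being a Lipschitz constant of $\mathbf{F}$ on the compact $\Sigma_I$. For the lower bound I would split $\Sigma_I$: on a small ball $B_r(\theta_0)$, the second-order expansion of $\delta^2$ (which vanishes to order two at $\theta_0$) together with \eqref{F-est-Hess-delta2} gives $|\mathbf{F}(\theta)|^2=\delta^2(\theta)\geq(\rho/4)|\theta-\theta_0|^2$ once $r$ is small; on $\Sigma_I\setminus B_r(\theta_0)$ the continuous function $|\mathbf{F}|$ is strictly positive (the crossing is unique), hence bounded below by a positive constant, while $|\theta-\theta_0|$ is bounded above. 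Combining gives a uniform $c>0$.

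Item (5) is the bookkeeping step. Taylor expansion near $\theta_0$ gives $\lambda_\circ(\theta)^2=\langle\nabla\lambda_\circ(\theta_0),\theta-\theta_0\rangle^2+\mathscr{O}(|\theta-\theta_0|^3)$ and $\langle(\Hes\,\delta^2)(\theta_0)(\theta-\theta_0),\theta-\theta_0\rangle=2\delta^2(\theta)+\mathscr{O}(|\theta-\theta_0|^3)$; plugging these into \eqref{F-lambdacirc-delta} with $\zeta=\theta-\theta_0$ and using $\delta^2(\theta)\geq(\rho/4)|\theta-\theta_0|^2$ to absorb the cubic remainder into $|\theta-\theta_0|\,\delta^2(\theta)$ yields $F_0(\theta)^2=\lambda_\circ(\theta)^2\leq(2a^2+C|\theta-\theta_0|)\,|\mathbf{F}(\theta)|^2$ near $\theta_0$. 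Since $2a^2<1$, fixing $\tilde a\in(\sqrt2\,a,1)\subset(a,1)$ and shrinking to a neighbourhood $V_0\ni\theta_0$ on which $2a^2+C|\theta-\theta_0|\leq\tilde a^2$ gives $|F_0|\leq\tilde a|\mathbf{F}|$ on $V_0$. The one real subtlety — hence the step I expect to need the most care — is precisely the observation used in items (4)--(5): the constant in \eqref{F-lambdacirc-delta} is not merely $<1$ but $<1/\sqrt2$, which is exactly what the strictly positive-definite term $-(\Hes\,\mathcal{d})(\theta_0)$ provides; without it the remainder absorption in item (5) would not close with $\tilde a<1$. A lesser point is to keep track of which bounds are purely local at $\theta_0$ and which (item (2)) must be uniform over $\Sigma_I$, where the uniqueness part of Hypothesis \ref{H-1} plus compactness is used.
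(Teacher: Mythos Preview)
Your proposal is correct and follows essentially the same route as the paper: both pivot on the identity $\delta^2=\lambda_\circ^2-\mathcal{d}$, differentiate it twice at $\theta_0$ to get the key inequality
\[
\langle(\Hes\,\delta^2)(\theta_0)\zeta,\zeta\rangle=2\langle\nabla\lambda_\circ(\theta_0),\zeta\rangle^2-\langle(\Hes\,\mathcal{d})(\theta_0)\zeta,\zeta\rangle\geq 2\langle\nabla\lambda_\circ(\theta_0),\zeta\rangle^2+a_0|\zeta|^2,
\]
and read off items (3) and (4) from it (the paper even defines $a$ with the factor $2$ absorbed, so your observation that the optimal constant satisfies $2a^2<1$ is exactly theirs). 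The only differences are minor and cosmetic: for the lower bound in item (2) the paper uses the one-line observation $(\lambda_+-\lambda_-)^2\geq-4\mathcal{d}$ together with the non-degeneracy of $\mathcal{d}$, rather than your near/far splitting on $\delta^2$; and for item (5) the paper uses the integral (Leibniz--Newton) Taylor formula to compare $\lambda_\circ^2$ and $\delta^2$ via a \emph{pointwise} Hessian comparison $(\Hes\,\lambda_\circ^2)(\theta)\leq\tilde a^2(\Hes\,\delta^2)(\theta)$ valid by continuity on a small neighbourhood of $\theta_0$, whereas you expand to second order and absorb the cubic remainder using the lower bound on $\delta^2$. Both variants are standard and equally short; your identification of the ``real subtlety'' (the factor $2$ slack coming from $-\Hes\,\mathcal{d}>0$) is exactly the point the paper exploits.
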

\begin{proof} We recall that $F_0(\theta_0)=\lambda_\circ(\theta_0)=0=\delta(\theta_0)=|\mathbf{F}(\theta_0)|$ and the fact that the maps $\Sigma_I\ni\theta\mapsto F_0(\theta)\in\mathbb{R}$ and $\Sigma_I\ni\theta\mapsto \mathbf{F}(\theta)\in\mathbb{R}^3$ are of class $C^\infty$ so that the maps $\Sigma_I\ni\theta\mapsto \lambda_\circ(\theta)\in\mathbb{R}$ and $\Sigma_I\ni\theta\mapsto \delta^2(\theta)=\underset{1\leq j\leq3}{\sum}\mathbf{F}_j^2(\theta)\in\mathbb{R}_+$ are also of class $C^\infty$.
\begin{enumerate}
\item The first point follows from the definitions and Remark \ref{R-d-smooth}.   
\item In the second point the left inequality follows from Hypothesis \ref{H-2} noticing that  $\lambda_+(\theta)-\lambda_-(\theta)\geq\sqrt{-2\lambda_+(\theta)\lambda_-(\theta)}=\sqrt{-2\mathcal{d}(\theta)}$, while the right inequality follows from the same hypothesis and the relation
$$
(\lambda_+(\theta)-\lambda_-(\theta))^2=2{\rm Tr}(M_I(\theta)^2)-\left ({\rm Tr}(M_I(\theta))\right )^2=\mathcal{O}(|\theta-\theta_0|^2).
$$
\item For the third conclusion we notice that 
\beq\label{F-d-delta}
\mathcal{d}=\lambda_\circ^2-\delta^2
\eeq 
and $\lambda_\circ=(1/2)\Tr\big(M_I(\theta)\big)$ so that the smoothness of $\Sigma_I\ni\theta\mapsto\delta^2(\theta)\in\mathbb{R}_+$ follows. The formulas \eqref{F-non-deg} and \eqref{hypv1v2} are evident.  Here $\big(\Hes\lambda_\circ^2\big)_{jk}(\theta_0)=2\big(\partial_{\theta_j}\lambda_\circ(\theta_0)\big)\big(\partial_{\theta_k}\lambda_\circ(\theta_0)\big)$ defines a positive matrix so that the first conclusion and the formula \eqref{F-d-delta} imply that
\beq\label{F-Hess-d-delta}\begin{split}
0<2\left|\left\langle\zeta,\nabla\lambda_\circ(\theta_0)\right\rangle_{\mathbb{R}^2}\right|^2&=\left\langle\zeta,\big(\Hes\lambda_\circ^2\big)(\theta_0)\zeta\right\rangle_{\mathbb{R}^2}=\left\langle\zeta,\big(\Hes\; \delta^2\big)(\theta_0)\zeta\right\rangle_{\mathbb{R}^2}+\left\langle\zeta,\big(\Hes\; \mathcal{d}\big)(\theta_0)\zeta\right\rangle_{\mathbb{R}^2} \\
&\leq \left\langle\zeta,\big(\Hes\delta^2\big)(\theta_0)\zeta\right\rangle_{\mathbb{R}^2}-a_0|\zeta|_{\mathbb{R}^2}^2
\end{split}\eeq 
and we obtain \eqref{F-est-Hess-delta2} with some $\rho\geq a_0>0$.
\item For \eqref{F-lambdacirc-delta} we use once again  \eqref{F-Hess-d-delta} and define $a\in[0,1)$ by
	\beq\label{FD-a}
	a^2:=\underset{|\zeta|=1}{\max}\,\frac{2\big|\big\langle\big(\nabla\lambda_\circ\big)(\theta_0),\zeta\big\rangle_{\mathbb{R}^2}\big|^2}{\big\langle(\Hes\,\delta^2)(\theta_0)\zeta,\zeta\big\rangle_{\mathbb{R}^2}}\,\leq\frac{\rho-a_0}{\rho}\,<\,1.
	\eeq
\item We notice that the derivatives of $\lambda_\circ^2$ and $\delta^2$ in $\theta_0$ are 0 and we have the inequalities \eqref{F-Hess-d-delta} and \eqref{FD-a}.
By the Leibniz Newton formula we can write that for any $\tilde{a}\in(a,1)$ and any $\theta$ in some small enough neighborhood $V_0$ of $\theta_0$ (depending on $\tilde{a}$) we have
\begin{align*} \begin{split}
\lambda_\circ(\theta)^2&=\int_0^1dt\int_0^tds\left\langle(\theta-\theta_0),\big(\Hes\; \lambda_\circ^2\big)(\theta(t,s))(\theta-\theta_0)\right\rangle_{\mathbb{R}^2} \\
&\leq\tilde{a}^2\int_0^1dt\int_0^tds\left\langle(\theta-\theta_0),\big(\Hes\; \delta^2\big)(\theta(t,s))(\theta-\theta_0)\right\rangle_{\mathbb{R}^2}\  \\ & \leq\ \tilde{a}^2\delta^2(\theta)
\end{split}\end{align*}
(where we have used the notation $\theta(t,s):=(1-s)\theta_0+s\big((1-t)\theta_0+t\theta\big)$) and Conclusion 5 of the proposition follows.
\end{enumerate}
\end{proof} 

\subsection{The global frame defining the quasi-band.}\label{SS-LB-frame}
Our goal in this subsection is to build up the smooth global sections $ \hat{\Psi}_\pm:\mathbb{T}_*\rightarrow\mathscr{F}$ satisfying \eqref{F-Psi-PiI}- \eqref{F-Psi-delta+} and thus to realize the second step of our proof strategy in Subsection \ref{SS-ProofStr}. The first objective is to create a small gap in the spectrum of $\mathring{H}_I$ contained in $I$ by a small perturbation. The second objective is to smoothly extend its eigenfunctions outside $\Sigma_I$ preserving a spectral gap containing $I$.

\subsubsection{Separation of the crossing Bloch levels.}\label{SSS-eigval-sep}

We shall perturb the local Hamiltonian near $\theta_0$ in order to avoid the eigenvalue crossing.
Let $0\leq g\leq 1$ be a smooth cut-off function such that $g(x)=1$ if $|x|\leq 1/2$ and $g(x)=0$ if $|x|>1$. Let $c$ be the constant from the lower bound of \eqref{aprilie2}. Then let us define 
$$g_\delta(\theta):= g\left (\frac{c(\theta-\theta_0)}{\delta}\right ).$$
We assume that $0<\delta<\tilde{\delta}$ for some fixed value $\tilde{\delta}>0$ for which 
\beq\label{C-delta0-1}
B_{\tilde{\delta}/c}(\theta_0)\subset \Sigma_I.
\eeq

 Let us fix a unit vector $v^{(3)}\in \R^3$ such that
\begin{equation}\label{apr3}
v^{(3)}\perp {\rm Span}(v^{(1)}, v^{(2)}),\quad \vert v^{(3)}\vert =1 
\end{equation}
and define
$$\Sigma_I\ni\theta \mapsto \mathbf{F}_\delta(\theta):= \mathbf{F}(\theta) + v^{(3)}g_\delta (\theta)\, \delta/8\,.$$
If $|\theta-\theta_0|\geq \delta/(2c) $ then from  \eqref{aprilie2} (remember that $2|\mathbf{F}|=\lambda_+-\lambda_-$) we get 
$$|\mathbf{F}_\delta(\theta)|\geq |\mathbf{F}(\theta)|-\delta/8 \geq \delta/8\,.$$
If $|\theta-\theta_0|\leq \delta/(2c)$ then using \eqref{apr3} and \eqref{F-v12} we get: 
\beq\label{apr4}
|\mathbf{F}_\delta(\theta)|=|v^{(1)}(\theta-\theta_0)_1+v^{(2)}(\theta-\theta_0)_2+v^{(3)}\delta/8|+\mathcal{O}(|\theta-\theta_0|^2)\geq \delta/8 +\mathcal{O}(\delta^2)\geq \delta/16\,,
\eeq
so that it exists some $\delta_F>0$ such that the last inequality is true for any $\delta\in(0,\delta_F]$. Thus,
taking into account both upper bounds in \eqref{C-delta0-1} and in \eqref{apr4} for $\delta>0$,
we have proved that there exist positive constants $C$ and $\delta_0$ such that
\beq\label{F-est-gap}
\frac{\delta}{C} \leq |\mathbf{F}_\delta(\theta)| ,\quad \forall \theta\in \Sigma_I,\ \forall\delta\in(0,\delta_0]  \,.
\eeq

For $\theta\in\Sigma_I$ we define the perturbed local Hamiltonian $\mathring{H}_{I,\delta}(\theta)$
\beq\label{hc0}
\begin{split}
&\left\langle \varepsilon_j(\theta)\,,\,\mathring{H}_{I,\delta}(\theta)\varepsilon_k(\theta)\right\rangle_{\mathbb{C}^2}:=M_{I,\delta}(\theta)_{jk} \\
&M_{I,\delta}(\theta)\,:=\,M_I(\theta)\,+\,v^{(3)}\cdot\sigma  g_\delta(\theta) \delta/8\,=\,F_0(\theta)\bb1_2\,+\mathbf{F}_\delta(\theta)\cdot \sigma\,\in\mathcal{M}_{2\times2}(\mathbb{C})
\end{split}
\eeq
and its images through $\widehat{\Upsilon}(\theta)^{-1}$ in the Floquet representation, that act in $\widehat{\Pi}_I(\theta)\mathscr{F}_\theta$ (see \eqref{FD-Psi}):
$$
\widehat{H}_{I,\delta}(\theta)\,:=\,\widehat{\Upsilon}(\theta)^{-1}\mathring{H}_{I,\delta}(\theta)\widehat{\Upsilon}(\theta),\quad\forall\theta\in\Sigma_I.
$$
Due to our choice for $\supp\, g_\delta$,  we obtain that $\widehat{H}_{I,\delta}(\theta)=\widehat{H}_{I}(\theta)$ for $|\theta-\theta_0|\geq \delta/c$.  
The fiber operator $\widehat{H}_{I,\delta}(\theta)$ lives in the 2-dimensional space $\widehat{\Pi}_{I}(\theta)\mathscr{F}_\theta$ that does not depend on $\delta$. It has two distinct non-degenerate eigenvalues $$\lambda_{-,\delta}(\theta)<0<\lambda_{+,\delta}(\theta)\,,\, \forall \theta\in \Sigma_I\,.$$
The minimal distance between them, as function of $\theta\in\Sigma_I$, is bounded from below by the infimum of $2\; |\mathbf{F}_\delta|$ which is proportional to $\delta$, see \eqref{F-est-gap}. Since for $\delta>0$ the eigenvalues are non-degenerate, the regular perturbation theory allows us to deduce that they are smooth functions of $\theta$.  Thus for $0 < \delta\leq \delta_0$, with $\delta_0>0$ given by \eqref{F-est-gap} in agreement with \eqref{C-delta0-1} and \eqref{apr4}, the associated Riesz spectral projections $\widehat{\pi}_{\pm,\delta}(\theta)$ define smooth functions of $\theta\in\Sigma_I$. Their sum is equal to the 2-dimensional orthogonal projection $\widehat{\Pi}_{I}$. Finally we can write
\beq\label{F-PiI}
\widehat{\pi}_{-,\delta}(\theta)\oplus\widehat{\pi}_{+,\delta}(\theta)=\widehat{\Pi}_I(\theta),\quad\forall\theta\in\Sigma_I,\qquad \pi_{\pm, \delta}:=\mathcal{U}_\Gamma^{-1}
\left(\int_{\Sigma_{I}}d\theta\,\widehat{\pi}_{\pm,\delta}(\theta)\right)\mathcal{U}
_\Gamma
\eeq
and 
\begin{align}\label{hc5}
\widehat{H}_{I,\delta}(\theta)=\lambda_{-,\delta}(\theta)\, \widehat{\pi}_{-,\delta}(\theta)+
\lambda_{+,\delta}(\theta)\, \widehat{\pi}_{+,\delta}(\theta), \quad\forall\theta\in\Sigma_I\,.
\end{align}

Recalling \eqref{hc1}, let us also define the \enquote{perturbed fiber Hamiltonian}: 
\beq\label{hc1delta}
\widehat{H}_\delta(\theta):=\widehat{H}(\theta)+(\delta/8) \, 
\widehat{\Upsilon}(\theta)^{-1}
g_\delta(\theta) v^{(3)}\cdot \sigma  \, \widehat{\Upsilon}(\theta),\quad\forall\theta\in\mathbb{T}_*
\eeq
as self-adjoint operator in $\mathscr{F}_\theta=\mathcal{V}_\theta\big(L^2(\mathbb{T})\big)$ and the perturbed Hamiltonian
\beq\label{FD-Hdelta}
H_{\Gamma,\delta}\ :=\ \mathscr{U}_\Gamma^{-1}\left(\int_{\mathbb{T}_*}^\oplus d\theta\;\widehat{H}_\delta(\theta)\right)\mathscr{U}_\Gamma\,\equiv\,H_\Gamma\,+\,\delta \, K_g
\eeq
as self-adjoint operator acting in $L^2(\X)$, with the bounded self-adjoint perturbation
\beq\label{hckg}
K_g:=\mathscr{U}_\Gamma^{-1}\left(\frac{1}{8}\int_{\Sigma_{I}}^\oplus d\theta\;\widehat{\Upsilon}(\theta)^{-1}
g_\delta(\theta) v^{(3)}\cdot \sigma  \, \widehat{\Upsilon}(\theta)\right)\mathscr{U}_\Gamma\,.
\eeq
Then we have the equality $\widehat{H}_{I,\delta}(\theta)\,=\,\widehat{H}_\delta(\theta)\widehat{\Pi}_I(\theta)$.

\begin{remark}\label{R-spgap-Hdelta}  If $\delta\in (0,\delta_0)$ with $\delta_0$ defined in \eqref{F-est-gap},  we obtain  by regular perturbation theory:
	\begin{enumerate}
		\item The operators $H_{\Gamma,\delta}$ and $H_\Gamma$  are self-adjoint on the same domain $\mathscr{H}^2(\X)$.
		\item The self-adjoint operator $H_{\Gamma,\delta}$ has a spectral gap containing zero and having a width of order $\delta$, see \eqref{F-est-gap}. Thus, the spectrum of $H_{\Gamma,\delta}$ has a well separated bounded part  contained in $(-\infty,0)$.
		\item For any $\theta\in\Sigma_I$ we know that $\widehat{\Pi}_I(\theta)\mathscr{F}_\theta\subset\mathscr{F}_\theta^\infty$ so that we can construct some eigenfunctions of $\widehat{H}_{I,\delta}(\theta)$ which are smooth functions of $\X$.
		\item The spectral projection $P_{(-\infty,0]}(\widehat{H}_\delta)(\theta))$ is globally smooth in $\theta$ and has a constant rank $ k_0+1\geq 2$. Its orthogonal  complement $  P_{(0,\infty)}(\widehat H_\delta(\theta))$ is also globally smooth and has infinite rank. 
	\end{enumerate}
\end{remark}

We have already remarked that $H_{\Gamma,\delta}$ ia a self-adjoint operator in $L^2(\X)$ having a domain that contains $\mathscr{S}(\X)$; thus it defines a continuous linear operator from $\mathscr{S}(\X)$ to $\mathscr{S}^\prime(\X)$ and by the Kernels Theorem of Schwartz it has a distribution kernel of class $\mathscr{S}^\prime(\X\times\X)$. Then it also has a Weyl symbol $h_\delta \in \mathscr{S}^\prime(\Xi)$ such that:
\beq\label{F-hdelta}
H_{\Gamma,\delta}=\Op\big(h_\delta\big)\,.
\eeq

\begin{proposition}\label{P-Symb-Hdelta}
	The tempered distribution $h_\delta$ in \eqref{F-hdelta} belongs to the H\"{o}rmander type class $S^2_1(\X\times\X^*)$.
\end{proposition}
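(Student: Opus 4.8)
The plan is to write $h_\delta = h + \delta\, k_g$ where $h$ is the Weyl symbol of $H_\Gamma$, already known from \eqref{R-HGammaOp} to belong to $S^2_1(\X\times\X^*)$, and $k_g$ is the Weyl symbol of the bounded self-adjoint perturbation $K_g$ defined in \eqref{hckg}. Since $S^2_1(\X\times\X^*)$ is a linear space and contains $S^0_1(\X\times\X^*)$, it suffices to show that $k_g\in S^0_1(\X\times\X^*)$; in fact we will show the stronger statement that $k_g$ is the Weyl symbol of a smoothing operator with nice decay, so that in particular it lies in every $S^s_\rho$ with $s\le 0$. Thus the whole content of the proposition is an analysis of the fiberwise-decomposable bounded operator $K_g$.

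First I would unravel the structure of $K_g$. By \eqref{hckg} it is $\mathscr{U}_\Gamma^{-1}$ conjugated to the decomposable operator with fiber $\tfrac18\,\widehat{\Upsilon}(\theta)^{-1} g_\delta(\theta)\, v^{(3)}\!\cdot\!\sigma\,\widehat{\Upsilon}(\theta)$, which is supported (in $\theta$) inside $\supp g_\delta \subset B_{\delta/c}(\theta_0)\subset \mathring\Sigma_I$, hence away from the boundary of $\Sigma_I$; this is why the cut-off $g_\delta$ and the condition \eqref{C-delta0-1} were imposed. On its support the fiber is a finite-rank (rank $\le 2$) operator acting inside $\widehat{\Pi}_I(\theta)\mathscr{F}_\theta \subset \mathscr{F}_\theta^\infty$, and $\theta\mapsto \widehat{\Upsilon}(\theta)$, $\theta\mapsto g_\delta(\theta)$, and $\theta\mapsto\widehat\Pi_I(\theta)$ are all smooth on $\Sigma_I$ (the smoothness of $\widehat\Upsilon$ is Step 1 / \eqref{F-Upsilon}, that of $\widehat\Pi_I$ is \eqref{F-theta-var} and the discussion around it). Consequently the fiber of $K_g$ extends by $0$ to a globally smooth, compactly-$\theta$-supported, finite-rank field of operators on $\mathbb{T}_*$. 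Concretely I would write the fiber as $\sum_{j,k=\pm} c_{jk}(\theta)\,|\hat\varphi_j(\theta)\rangle\langle\hat\varphi_k(\theta)|$ with $c_{jk}\in C^\infty_0(\Sigma_I)$ and $\hat\varphi_j(\theta) = \widehat\Upsilon(\theta)^{-1}\varepsilon_j \in \mathscr{F}_\theta^\infty$ the smooth orthonormal frame of Definition \ref{D-onb-SigmaI}.

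Next I would pass from this direct-integral description to the Weyl symbol. The cleanest route is to note that $\mathscr{U}_\Gamma^{-1}(\int^\oplus d\theta\, |\hat\varphi_j(\theta)\rangle\langle\hat\varphi_k(\theta)|\, c_{jk}(\theta))\mathscr{U}_\Gamma$ is (up to the multiplication by the smooth compactly supported function $c_{jk}(\theta)$ realized as a convolution-type operator on the $\Gamma$-side) a rank-one-per-fiber operator whose distribution kernel on $\X\times\X$ is computed directly from \eqref{UGamma}: it is
\[
\mathfrak{K}_{jk}(x,y) = (2\pi)^{-2}\int_{\mathbb{T}_*} d\theta\; c_{jk}(\theta)\, \hat\varphi_j(\theta)(x)\,\overline{\hat\varphi_k(\theta)(y)}\, ,
\]
where $\hat\varphi_j(\theta)(\cdot)\in\mathscr{F}_\theta^\infty$ satisfies the Bloch covariance $\hat\varphi_j(\theta)(x+\gamma)=e^{i\langle\theta,\gamma\rangle}\hat\varphi_j(\theta)(x)$. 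Because $c_{jk}\in C^\infty_0(\Sigma_I)$ and $\theta\mapsto\hat\varphi_j(\theta)\in\mathscr{F}^\infty_\theta$ is smooth with values in smooth functions, integrating by parts in $\theta$ arbitrarily many times (using the covariance to produce powers of $(x-y-\gamma)$) shows that $\mathfrak{K}_{jk}$ is a smooth function on $\X\times\X$ which, together with all its derivatives, decays faster than any polynomial in $|x-y|$, locally uniformly — i.e. the kernel is of the Schwartz-on-the-diagonal-strip type that characterizes symbols in $S^{-\infty}(\X\times\X^*)$. Then the standard formula relating the Weyl symbol to the kernel, $k_g\big(\tfrac{x+y}{2},\xi\big) = \int d z\, e^{-i\langle\xi,z\rangle}\,\mathfrak{K}\big(x,x-z\big)$ with $z=x-y$, turns these kernel bounds into the symbol estimates $\sup_{(x,\xi)}\langle\xi\rangle^{N}|\partial_x^\alpha\partial_\xi^\beta k_g(x,\xi)|<\infty$ for every $N,\alpha,\beta$; hence $k_g\in S^{-\infty}(\X\times\X^*)\subset S^2_1(\X\times\X^*)$, and $h_\delta = h + \delta k_g \in S^2_1(\X\times\X^*)$.

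The routine parts — the explicit kernel computation, the integration-by-parts decay estimates, and the translation of kernel bounds into symbol seminorm bounds — are standard in pseudodifferential calculus and I would only sketch them. The one place that genuinely requires the earlier structural work, and is therefore the main (mild) obstacle, is the smoothness and support control of the fiber field $\theta\mapsto \tfrac18\widehat\Upsilon(\theta)^{-1}g_\delta(\theta)v^{(3)}\!\cdot\!\sigma\,\widehat\Upsilon(\theta)$ on all of $\mathbb{T}_*$: one must invoke that $g_\delta$ is supported strictly inside $\Sigma_I$ (so the awkward topology of $\mathbb{T}_*$ never enters and extension by zero keeps smoothness), that $\widehat\Upsilon$ is smooth there with values in unitaries onto the smooth subspace $\widehat\Pi_I(\theta)\mathscr{F}^\infty_\theta$, and that elliptic regularity \eqref{aprilie6} places $\hat\varphi_j(\theta)$ in $\mathscr{F}^\infty_\theta$ so that the resulting kernel is smooth in $x,y$ and not merely $L^2$. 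Once this is in hand the conclusion $h_\delta\in S^2_1(\X\times\X^*)$ is immediate from linearity of the symbol classes.
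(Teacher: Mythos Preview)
Your proposal is correct and follows essentially the same route as the paper: decompose $h_\delta = h + \delta\,\mathfrak{S}_{K_g}$, express the fiber of $K_g$ as a rank-two operator $\sum_{j,k} c_{jk}(\theta)\,|\hat\varphi_j(\theta)\rangle\langle\hat\varphi_k(\theta)|$ with smooth coefficients compactly supported in $\Sigma_I$, compute the resulting integral kernel, and use smoothness of the $\hat\varphi_j$ (elliptic regularity) together with smoothness and compact support in $\theta$ to obtain rapid decay in $x-y$ and hence $\mathfrak{S}_{K_g}\in S^{-\infty}$. The paper merely makes the Bloch-phase structure explicit via the decomposition $x=\gamma(x)+\hat x$, whereas you invoke the covariance relation directly; the arguments are otherwise identical.
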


\begin{proof}
	One can generalize the decomposition in \eqref{F-ent-dec} to higher dimensions and to general lattices. We introduce the self-explanatory notation: 
	\beq\label{F-int-dec}
	\X\ni x\mapsto(\gamma(x),\hat{x})\in\Gamma\times\mathcal{E}
	\eeq 
	and we shall simply write $x=\gamma(x)+\hat{x}$.
	
	Using \eqref{FD-Hdelta} we shall write the symbol $h_\delta$ as the sum of our initial periodic symbol $h\in S^2_1(\X\times\X^*)$ and the symbol $\mathfrak{S}_{K_g}$ of the bounded self-adjoint operator $K_g$ from \eqref{hckg}. Let us start by studying the integral kernel $\mathfrak{K}_{K_g}$ of the rank 2 self-adjoint operator $K_g$ introduced in \eqref{hckg}. Let us notice that with $\{\widehat{\Upsilon}(\theta)\}_{\theta\in\Sigma_I}$ the smooth family of unitary operators defined by \eqref{F-Upsilon} we can write
	$$
	\widehat{\Upsilon}(\theta)^{-1}\sigma_\ell\widehat{\Upsilon}(\theta)=\underset{j,k}{\sum}\,(\sigma_\ell)_{jk} \hat{\varphi}_j(\theta,\hat{x})\overline{ \hat{\varphi}_k(\theta,\hat{y})}
	$$
	where $\{ \hat{\varphi}_j\}_{j=\pm}$ are defined in Definition \ref{D-onb-SigmaI}. Going further we obtain
	$$
	\mathfrak{K}_{K_g}(x,y)=\frac{1}{8(2\pi)^2}\underset{1\leq \ell\leq3}{\sum\,}v^{(3)}_\ell\int_{\mathbb{T}_*}\,d\theta\,g_\delta(\theta)\,e^{i<\theta,\gamma(x)-\gamma(y)>}\underset{j,k}{\sum}\,(\sigma_\ell)_{jk} \hat{\varphi}_j(\theta,\hat{x})\overline{ \hat{\varphi}_k(\theta,\hat{y})}
	$$
	
	 The functions $\{\hat{\varphi}_j(\theta,x)\}_{j=1,2}$ are smooth local sections in $\mathscr{F}^\infty$ (see \eqref{aprilie6}), thus the kernel $\mathfrak{K}_{K_g}$ is smooth in both variables $(x,y)\in\X\times\X$. Also, due to the smoothness in $\theta$ and the fact that the support of $g_\delta$ belongs to $\Sigma_I$, the kernel 
	  has rapid decay in the variable $x-y\in\X$. Moreover, we have periodicity in the variable $(x+y)/2$. We can then conclude that its associated distributional symbol (see also Subsection 2.1 in \cite{CIP})
	$$
	\mathfrak{S}_{K_g}(x,\xi)=(2\pi)^{-2}\int_{\X}dv\,e^{-i<\xi,v>}\mathfrak{K}_{K_g}(x+v/2,x-v/2)
	$$
	is of class $S^{-\infty}(\X\times\X^*)$. We conclude that $h_\delta\in S^2_1(\X\times\X^*)$.
\end{proof}

By a straightforward perturbation argument and Proposition 6.5 in \cite{IMP-2}, we obtain the following statement.
\begin{proposition}\label{C-Symb-rez-Hdelta}
	The resolvent of $H_{\Gamma,\delta}$ (where it exists) has a symbol of class $S^{-2}_1(\X\times\X^*)$.
\end{proposition}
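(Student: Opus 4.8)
The plan rests on the observation already used in the proof of Proposition~\ref{P-Symb-Hdelta} that $h_\delta=h+\mathfrak{S}_{K_g}$, where $h$ is the periodic Schr\"odinger symbol \eqref{R-HGammaOp} of class $S^2_1(\X\times\X^*)$ with principal symbol $|\xi|^2$, and $\mathfrak{S}_{K_g}\in S^{-\infty}(\X\times\X^*)$. First I would note that adding an $S^{-\infty}$ term preserves ellipticity: since the subprincipal part of $h$ grows at most like $\langle\xi\rangle$ and $\mathfrak{S}_{K_g}$ decays faster than any power, and since $A^\Gamma,V^\Gamma\in BC^\infty$, there is $R>0$ with $|h_\delta(x,\xi)|\geq\tfrac14\langle\xi\rangle^2$ for $|\xi|\geq R$, uniformly in $x$. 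Hence $H_{\Gamma,\delta}=\Op(h_\delta)$ is a self-adjoint elliptic operator with symbol exactly in $S^2_1$, of the type covered by Proposition~6.5 in \cite{IMP-2} (taken with trivial external magnetic field, $A^\Gamma$ playing the role of the periodic vector potential); that proposition gives directly that $(H_{\Gamma,\delta}-z)^{-1}$, wherever it exists, is a pseudodifferential operator with symbol in $S^{-2}_1(\X\times\X^*)$.

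Alternatively, and to make the reduction to \cite{IMP-2} completely explicit, one can run the perturbation argument by hand. Let $r_z\in S^{-2}_1$ denote the symbol of $(H_\Gamma-z)^{-1}$ for $z\in\rho(H_\Gamma)$, which is the $\delta=0$ case and is standard elliptic calculus. For $z\in\rho(H_\Gamma)\cap\rho(H_{\Gamma,\delta})$ one has the algebraic identity
\[
(H_{\Gamma,\delta}-z)^{-1}=(H_\Gamma-z)^{-1}\big(\bb1+\delta\,K_g(H_\Gamma-z)^{-1}\big)^{-1}.
\]
Since $S^{-\infty}$ is a two-sided ideal for the Moyal product, $K_g(H_\Gamma-z)^{-1}=\Op(\mathfrak{S}_{K_g}\sharp r_z)$ has symbol in $S^{-\infty}$; for $|{\rm Im}\,z|$ large the Neumann series for $(\bb1+\delta K_g(H_\Gamma-z)^{-1})^{-1}$ converges in operator norm and in every $S^{-\infty}$ seminorm, so this inverse equals $\bb1+\Op(\ell_z)$ with $\ell_z\in S^{-\infty}$, and therefore $(H_{\Gamma,\delta}-z)^{-1}=\Op(r_z+r_z\sharp\ell_z)$ with symbol in $S^{-2}_1+S^{-\infty}=S^{-2}_1$. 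This establishes the claim on a non-empty open subset of $\mathbb{C}\setminus\mathbb{R}$; it extends to the whole resolvent set of $H_{\Gamma,\delta}$, which is connected because by Remark~\ref{R-spgap-Hdelta}(2) the spectrum of $H_{\Gamma,\delta}$ has a gap and is a proper subset of $\mathbb{R}$, by propagating along the first resolvent identity
\[
\mathfrak{S}_{(H_{\Gamma,\delta}-z)^{-1}}=\mathfrak{S}_{(H_{\Gamma,\delta}-z_0)^{-1}}+(z-z_0)\,\mathfrak{S}_{(H_{\Gamma,\delta}-z_0)^{-1}}\sharp\,\mathfrak{S}_{(H_{\Gamma,\delta}-z)^{-1}},
\]
which for $|z-z_0|$ small is a fixed-point equation solvable by a Neumann series converging in the Fr\'echet space $S^{-2}_1$ (using that $\sharp$-multiplication by $\mathfrak{S}_{(H_{\Gamma,\delta}-z_0)^{-1}}\in S^{-2}_1$ maps $S^{-2}_1$ continuously into $S^{-4}_1\subset S^{-2}_1$).

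The only point requiring genuine care is that all the iteration series above converge not merely in operator norm but with respect to every seminorm of the relevant H\"ormander classes; this is where one invokes the continuity properties of the Moyal product and of $\Op$ on $S^\bullet_\rho(\X\times\X^*)$ as recorded in \cite{IMP-1,IMP-2} and summarized in appendix~B of \cite{CHP-1}. Everything else is routine elliptic pseudodifferential calculus, which is why the statement follows ``by a straightforward perturbation argument.''
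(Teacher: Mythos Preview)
Your first approach—invoking Proposition~6.5 of \cite{IMP-2} directly for the elliptic symbol $h_\delta\in S^2_1$—is exactly what the paper's preamble signals, so there is no disagreement there. Your second, hands-on route is correct in outcome but differs from the paper's own sketch. The paper does not perturb from $(H_\Gamma-z)^{-1}$ and then propagate along the resolvent set; instead it fixes an arbitrary $z\in\rho(H_{\Gamma,\delta})$, applies the Beals commutator criterion to get $r_{\Gamma,\delta}:=\mathfrak{S}_{(H_{\Gamma,\delta}-z)^{-1}}\in S^0_0$ in one stroke, and then runs a one-line parametrix argument: take $p\in S^{-2}_1$ with $p\sharp(h_\delta-z)=1+R$, $R\in S^{-\infty}$, compose on the right with $r_{\Gamma,\delta}$, and read off $r_{\Gamma,\delta}=p-R\sharp r_{\Gamma,\delta}\in S^{-2}_1+S^{-\infty}=S^{-2}_1$. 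This handles every $z$ at once and avoids both Neumann-series bookkeeping and the continuation step. One caution about your version: the assertion that the Neumann series ``converges in every $S^{-\infty}$ seminorm'' (and likewise in $S^{-2}_1$ for the propagation step) does not follow from mere continuity of Moyal multiplication—in a Fr\'echet space, iterates of a continuous linear map need not be summable, and making $|{\rm Im}\,z|$ large shrinks the operator norm of $K_g(H_\Gamma-z)^{-1}$ but not its symbol seminorms. What actually rescues the argument is spectral invariance (e.g.\ Proposition~6.1 in \cite{IMP-2}): once $(\bb1+T)^{-1}$ exists in $\mathbb{B}(L^2)$, its symbol lies in $S^0_0$, and then the single identity $(\bb1+T)^{-1}=\bb1-T(\bb1+T)^{-1}$ places the correction in $S^{-\infty}$ without any series. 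Your route is thus salvageable, but the paper's Beals-plus-parametrix argument is shorter and sidesteps the issue entirely. (Incidentally, connectedness of $\rho(H_{\Gamma,\delta})$ needs only lower-semiboundedness, not the spectral gap.)
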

\begin{proof}
    For completeness, let us sketch a short proof. Let $z$ be such that $(H_{\Gamma,\delta}-z\bb1)^{-1}$ exists. Applying the Beals commutator criterion \cite{Bea-77, CHP-3, IMP-2}, we obtain that  this resolvent is a pseudodifferential operator with a symbol $r_{\Gamma,\delta}\in S_0^0(\X\times\X^*)$. 
    
    Since $h_{\Gamma,\delta}-z$ is an elliptic symbol of type $S^{2}_1(\X\times \X^*)$, we may find a left parametrix $p(x,\xi)$ in $S^{-2}_1(\X\times \X^*)$ such that 
    $$p\sharp (h_{\Gamma,\delta}-z)= 1+ R,\quad R\in S^{-\infty}(\X\times \X^*).$$
    Composing with $r_{\Gamma,\delta}$ to the right we have
    $r_{\Gamma,\delta} =-p+ R\sharp r_{\Gamma,\delta},$ 
    which apriori holds in $S_0^0$, but due to the composition with a smoothing symbol, $R\sharp r_{\Gamma,\delta}$ is also smoothing, thus the previous identity shows that $r_{\Gamma,\delta}\in S^{-2}_1$.
\end{proof}

\subsubsection{The global smooth sections.}\label{SS-Gl-Sections.}

Using once again the Sz-Nagy intertwining unitary (and reducing if necessary the spectral window $I$ and its associated quasi-momentum domain $\Sigma_I$) as in \eqref{F-theta-var} and \eqref{F-Nagy-unit}, or using Remark \ref{R-contr-triv}, we can find two smooth local sections:
\beq\label{FD-local-basis}
\hat \psi_{\pm,\delta}:\Sigma_I\rightarrow\int^\oplus_{\Sigma_I}d\theta\,\mathscr{F}_\theta,\quad\hat \psi_{\pm,\delta}(\theta)\in\widehat{\pi}_{\pm,\delta}(\theta)\mathscr{F}_\theta,{\quad\|\hat \psi_{\pm,\delta}(\theta)\|_{\mathscr{F}_\theta}=1,}\quad  \forall\theta\in\Sigma_I.
\eeq
\begin{remark}\label{R-smoothness-local-frame}
Due to point (3) in Remark \ref{R-spgap-Hdelta} it follows that $\widehat{\pi}_{\pm,\delta}(\theta)\mathfrak{F}_\theta\subset\mathfrak{F}_\theta^\infty$ and we conclude that the above defined sections are smooth functions that satisfy:
$$
\hat \psi_{-,\delta}(\theta) \in P_{(-\infty,0]}(\widehat{H}_\delta(\theta))\mathscr{F}_\theta\,, \quad \hat \psi_{-,\delta}(\theta)\in P_{(0,\infty)}(\widehat{H}_\delta(\theta))\mathscr{F}_\theta,\quad
\forall\theta\in\Sigma_I.
$$
\end{remark}

Using Lemmas \ref{L-A1} and \ref{L-A2} in Appendix \ref{A-BBdl} we can extend these two local smooth sections to two global smooth sections:
\begin{align}\label{hc6}
\hat \Psi_{-,\delta}(\theta)\in P_{(-\infty,0]}(\widehat{H}_\delta(\theta))\mathscr{F}_{\theta},\quad \hat \Psi_{+,\delta}(\theta)\in P_{(0,\infty)}(\widehat{H}_\delta(\theta))\mathscr{F}_{\theta},\quad\|\hat \Psi_{\pm,\delta}(\theta)\|_{\mathscr{F}_\theta}=1,\quad\forall\theta\in\mathbb{T}_*
\end{align}
that are orthogonal to each other. At this step, the condition $k_0\geq2$ is crucial (in order to obtain $\hat{\Psi}_{-,\delta}$).
Thus we have proven the following statement.
\begin{proposition}\label{P-smooth-glob-frame}
	There exist two smooth global sections
	$$
	\mathbb{T}_*\ni\theta\mapsto\hat{\Psi}_{\pm,\delta}(\theta)\in\{\psi\in\mathscr{F}_\theta\,,\,\|\psi\|_{\mathscr{F}_\theta}=1\}
	$$
	that satisfy the properties:
\begin{align}\label{hc8}
&\hat{\Psi}_{-,\delta}(\theta)\in P_{(-\infty,0]}(\widehat{H}(\theta))\mathscr{F}_\theta,\quad \hat{\Psi}_{+,\delta}(\theta)\in P_{ (0,\infty)}(\widehat{H}(\theta))\mathscr{F}_\theta,\quad\forall\theta\not\in \Sigma_I; \\ \label{hc8-bis}
&\mathcal{L}_{\mathbb{C}}\{\hat{\Psi}_{-,\delta} (\theta),\hat{\Psi}_{+,\delta} (\theta)\}\,=\,\widehat{\Pi}_I(\theta)\mathscr{F}_\theta,\quad\forall\theta\in\Sigma_I.
\end{align}
\end{proposition}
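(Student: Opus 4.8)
The plan is to realize Proposition~\ref{P-smooth-glob-frame} by starting from the two smooth local sections $\hat\psi_{\pm,\delta}$ on $\Sigma_I$ from \eqref{FD-local-basis}, and extending them to all of $\mathbb{T}_*$ while controlling which spectral subspace of $\widehat H_\delta(\theta)$ (equivalently $\widehat H(\theta)$, since the two operators agree outside $B_{\delta/c}(\theta_0)\subset\Sigma_I$) they land in. Recall that by Remark~\ref{R-spgap-Hdelta}(4) the spectral projections $P_{(-\infty,0]}(\widehat H_\delta(\theta))$ and $P_{(0,\infty)}(\widehat H_\delta(\theta))$ are globally smooth in $\theta\in\mathbb{T}_*$, with ranks $k_0+1\geq 2$ and $\infty$ respectively; and by Remark~\ref{R-smoothness-local-frame} the local section $\hat\psi_{-,\delta}(\theta)$ lies in $P_{(-\infty,0]}(\widehat H_\delta(\theta))\mathscr F_\theta$ and $\hat\psi_{+,\delta}(\theta)$ in $P_{(0,\infty)}(\widehat H_\delta(\theta))\mathscr F_\theta$ for all $\theta\in\Sigma_I$. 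So the task is purely one of extending a smooth unit-norm local section of a smooth finite-rank (resp. infinite-rank) Hilbert subbundle over the disk $\Sigma_I$ to a smooth unit-norm global section over the torus $\mathbb{T}_*$, and then making the two extensions mutually orthogonal.

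\smallskip

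\textbf{The plus section.} First I would treat $\hat\Psi_{+,\delta}$. The subbundle $\theta\mapsto P_{(0,\infty)}(\widehat H_\delta(\theta))\mathscr F_\theta$ has infinite-dimensional fibers; over the contractible set $\Sigma_I$ it is trivial, and because the fibers are infinite-dimensional the obstruction to extending a nowhere-vanishing section to the torus vanishes (Kuiper-type arguments; this is exactly what Lemma~\ref{L-A1} in Appendix~\ref{A-BBdl} is set up to give). Concretely: use a smooth partition of unity subordinate to $\{\mathring\Sigma_I,\ \mathbb{T}_*\setminus B_{\delta/(2c)}(\theta_0)\}$ (or a slightly shrunk version), glue $\hat\psi_{+,\delta}$ to an arbitrary smooth global section of the infinite-rank bundle away from $\theta_0$, show the resulting smooth section is nonvanishing by the infinite-rank genericity/homotopy argument of Lemma~\ref{L-A1}, and normalize. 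This produces $\hat\Psi_{+,\delta}(\theta)\in P_{(0,\infty)}(\widehat H_\delta(\theta))\mathscr F_\theta$ for all $\theta$, smooth, unit norm, with $\hat\Psi_{+,\delta}=\hat\psi_{+,\delta}$ on a neighborhood of $\theta_0$; in particular \eqref{hc8} holds for the plus section on $\theta\notin\Sigma_I$ since there $\widehat H_\delta(\theta)=\widehat H(\theta)$ and the section sits in the positive spectral subspace.

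\smallskip

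\textbf{The minus section and orthogonality.} Next, having fixed $\hat\Psi_{+,\delta}$, I would extend $\hat\psi_{-,\delta}$ inside the \emph{orthogonal complement} of $\mathbb C\hat\Psi_{+,\delta}(\theta)$ intersected with $P_{(-\infty,0]}(\widehat H_\delta(\theta))\mathscr F_\theta$. Here the key point, and where I expect the main obstacle, is that this "relative" subbundle has rank $k_0+1-1=k_0\geq 2$ on the part of the torus where $\hat\Psi_{+,\delta}$ happens to lie in the negative subspace, but rank $k_0+1\geq 2$ elsewhere — in any case the rank is $\geq 2$ everywhere precisely because of the hypothesis $k_0\geq 2$, which the paper flags as crucial. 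Over the $2$-torus, extending a nowhere-zero section of a rank-$\geq 2$ complex bundle to a global nowhere-zero section is possible (real rank $\geq 4 > \dim\mathbb{T}_*=2$, so the relevant obstruction/Euler-class argument is unobstructed); this is the content of Lemma~\ref{L-A2}. The care needed is: (i) arrange $\hat\Psi_{+,\delta}$ and the extension to be genuinely orthogonal on all of $\mathbb{T}_*$, which is automatic if the minus section is built inside $(\mathbb C\hat\Psi_{+,\delta}(\theta))^\perp$; (ii) ensure that on $\Sigma_I$ the resulting pair spans exactly $\widehat\Pi_I(\theta)\mathscr F_\theta$ — this is where one must arrange that near $\theta_0$ the extension agrees with $\hat\psi_{-,\delta}$ (which spans, together with $\hat\psi_{+,\delta}$, the rank-two projection $\widehat\Pi_I(\theta)=\widehat\pi_{-,\delta}\oplus\widehat\pi_{+,\delta}$ by \eqref{F-PiI}), and on the rest of $\Sigma_I$ keep the pair inside $\widehat\Pi_I(\theta)\mathscr F_\theta$, which is possible because $\widehat\Pi_I$ is a smooth \emph{rank-two} subbundle over $\Sigma_I$ containing both $\hat\psi_{\pm,\delta}(\theta)$, so the Gram--Schmidt/extension can be carried out fiberwise inside it; and (iii) re-normalize after the orthogonalization, which is harmless since orthogonality makes the Gram matrix diagonal with positive entries. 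Putting (i)--(iii) together with the two extension lemmas yields \eqref{hc8} and \eqref{hc8-bis}, and the proposition follows. The genuinely delicate bookkeeping is the compatibility (ii), i.e. reconciling the "inside $\widehat\Pi_I$ on $\Sigma_I$" requirement with the "global nonvanishing on $\mathbb{T}_*$" requirement through a single smooth partition-of-unity gluing; everything else is routine bundle theory enabled by $k_0\geq 2$ and by the infinite rank of the positive subspace.
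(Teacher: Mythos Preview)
Your overall strategy---extend each local section $\hat\psi_{\pm,\delta}$ inside the globally smooth spectral subbundle $P_{(0,\infty)}(\widehat H_\delta(\theta))$ resp.\ $P_{(-\infty,0]}(\widehat H_\delta(\theta))$ using the appendix lemmas---is exactly what the paper does. But you take an unnecessary detour, and in doing so you misread both the lemmas and the geometry.

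First, orthogonality is free. Once you have built $\hat\Psi_{+,\delta}(\theta)\in P_{(0,\infty)}(\widehat H_\delta(\theta))\mathscr F_\theta$ for \emph{all} $\theta$ and $\hat\Psi_{-,\delta}(\theta)\in P_{(-\infty,0]}(\widehat H_\delta(\theta))\mathscr F_\theta$ for all $\theta$, the two are automatically orthogonal because the spectral subspaces are. There is no need to work in a ``relative'' subbundle $(\mathbb C\hat\Psi_{+,\delta})^\perp\cap P_{(-\infty,0]}(\widehat H_\delta)$; indeed your own construction forces $\hat\Psi_{+,\delta}$ to lie in the positive subspace everywhere, so it never enters $P_{(-\infty,0]}$ and that intersection is just $P_{(-\infty,0]}$ itself. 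Your worry about the rank dropping to $k_0$ ``on the part of the torus where $\hat\Psi_{+,\delta}$ happens to lie in the negative subspace'' is based on a situation that cannot occur. The paper's reason for needing large enough rank is simply that Lemma~\ref{L-A1} requires the ambient projection to have rank $d\geq 2$, and this is applied directly to $P_{(-\infty,0]}(\widehat H_\delta)$.

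Second, you have swapped the roles of the two appendix lemmas. Lemma~\ref{L-A1} is not an ``infinite-rank genericity'' statement and Lemma~\ref{L-A2} is not a finite-rank extension lemma: Lemma~\ref{L-A1} takes a local orthonormal basis of any rank-$d\geq 2$ (possibly infinite) smooth periodic projection over a disk $B$ and produces a \emph{continuous} periodic unit section agreeing with $u_1$ on $B$; Lemma~\ref{L-A2} then smooths that continuous section while keeping agreement on any chosen compact $K\subset B$. The paper applies A.1 followed by A.2 to \emph{each} of the two spectral projections separately. Taking $B$ an open disk slightly larger than the compact $\Sigma_I$ (extending the local frames by parallel transport inside the smooth projections) and then $K=\Sigma_I$ in Lemma~\ref{L-A2} yields global smooth sections that coincide with $\hat\psi_{\pm,\delta}$ on all of $\Sigma_I$. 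This gives \eqref{hc8-bis} immediately from $\widehat\pi_{-,\delta}\oplus\widehat\pi_{+,\delta}=\widehat\Pi_I$, so your ``compatibility (ii)'' bookkeeping with partitions of unity is not needed.
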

This is the global smooth frame satisfying \eqref{F-Psi-PiI} - \eqref{F-Psi-delta+} that we were looking for.
We shall use the above defined global smooth frame in order to define the \enquote{quasi-band} subspace and its associated orthogonal projection.
As in  the construction of a Wannier basis for an isolated Bloch band (see \cite{HS}, \cite{Ne-RMP}), we define the orthonormal system $\{\Psi_{-,\delta},\Psi_{+,\delta}\}$:
$$
\Psi_{\pm,\delta}\,:=\,\mathcal{U}_\Gamma^{-1}\left(\int^\oplus_
{\mathbb{T}_*}d\theta\,\hat{\Psi}_{\pm,\delta}(\theta)\right),
$$
and the family of their translations  by elements from $\Gamma$\,:
\beq\label{DF-Psi-gamma}
\Psi_{\pm,\gamma,\delta}\,:=\,\tau_\gamma\,\Psi_{\pm,\delta}\,,\quad\forall\gamma\in\Gamma.
\eeq
\begin{remark}\label{R-smoothness-global-frame}
Combining Remark \ref{R-smoothness-local-frame} with the global smoothness of the sections we deduce that $\hat{\Psi}_{\pm,\delta}\in\mathscr{S}(\X)$.
\end{remark}

Let us notice here for further use that
$$
\mathcal{U}_\Gamma\Psi_{\pm,\gamma,\delta}=\int^\oplus_{\mathbb{T}_*}d\theta\,  \hat{\Psi}_{\pm,\gamma,\delta}(\theta),\quad
\hat{\Psi}_{\pm,\gamma,\delta}(x,\theta)=
\hat{\Psi}_{\pm,\delta}(x-\gamma,\theta)=e^{-i<\theta,\gamma>}\hat{\Psi}_{\pm,\delta}(x,\theta)\,,
$$
and that for any $(\alpha,\beta)\in\Gamma\times\Gamma$
$$
2\pi\langle\Psi_{\pm,\alpha,\delta},\Psi_{\pm,\beta,\delta}\rangle_{L^2(\X)}=\int\limits_{\mathbb{T}_*}d\theta\,\langle\hat{\Psi}_{\pm,\alpha,\delta}(\theta),\hat{\Psi}_{\pm,\beta,\delta}(\theta)\rangle_{\mathscr{F}_\theta}=\int\limits_{\mathbb{T}_*
}d\theta\,e^{i<\theta,\alpha-\beta>}\|\hat{\Psi}_{\pm,\delta}(\theta)\|^2_{\mathscr{F}_\theta}=2\pi\delta_{\alpha\beta}\,,
$$
and
$$
2\pi\langle\Psi_{-,\alpha,\delta},\Psi_{+,\beta,\delta}\rangle_{L^2(\X)}=\int\limits_{\mathbb{T}_*}d\theta\,\langle
\hat{\Psi}_{-,\alpha,\delta}(\theta),\hat{\Psi}_{+,\beta,\delta}(\theta)\rangle_{\mathscr{F}_\theta}=\int\limits_
{\mathbb{T}_*
}d\theta\,e^{i<\theta,\alpha-\beta>}\langle\hat{\Psi}_{-,\delta}(\theta),\hat{\Psi}_{+,\delta}(\theta)\rangle_{\mathscr{F}_\theta}=0\,.
$$
Thus, for any fixed $\delta\in(0,\delta_0)$, the family $\big\{\Psi_{j,\gamma,\delta}\,,\,j=\pm,\,\gamma\in\Gamma\}$ is an orthonormal system in $L^2(\X)$. 

\subsection{The \enquote{quasi-band} orthogonal projection.}

From here on we shall fix some $\delta\in(0,\delta_0)$. We define two rank $1$ projections in $L^2(\X)$
$$
\pi_{\pm,0,\delta}:=|\Psi_{\pm,\delta}\rangle\langle\Psi_{\pm,\delta}|=\left|\mathcal{U}_\Gamma^{-1}\hat{\Psi}_{\pm,\delta}
\right\rangle\left\langle\mathcal{U}_\Gamma^{-1}\hat{\Psi}_{\pm,\delta}\right|
$$
that are integral operators on $L^2(\X)$. In order to write down their integral kernels let us recall some results and fix some notations. First, we recall the decomposition \eqref{F-int-dec} for the representation $\X\simeq\Gamma\times\mathcal{E}$. 
Going back to Definition \eqref{UGamma} of $\mathcal{U}_\Gamma$, let us recall the explicit form of its inverse:
$$
\big(\mathcal{U}_\Gamma^{-1}\hat{\Phi}\big)(x)=\frac{1}{2\pi}\int_{\mathbb{T}_*}d\theta\,e^{i<\theta,\gamma(x)>}\hat{\Phi}(\theta,\hat{x})
$$
A simple computation shows that the integral kernels of the rank one orthogonal projections $\pi_{\pm,0,\delta}$ are given by:
\beq\label{F-pi-pm-int-kernel}
\mathfrak{K}_{\pm,0,\delta}(x,y)=\frac{1}{(2\pi)^4}\int_{\mathbb{T}_*}d\theta\,e^{i<\theta,\gamma(x)>}\hat{\Psi}_{\pm,\delta}(\theta,\hat{x})\int_{\mathbb{T}_*}d\omega\,e^{-i<\omega,\gamma(y)>}\overline{\hat{\Psi}_{\pm,\delta}(\omega,\hat{y})}.
\eeq

Let us consider the translations of the two rank one orthogonal projections $\pi_{\pm,0,\delta}$ by elements from $\Gamma$:
$$
\pi_{\pm,\alpha,\delta}:=|\tau_\alpha\Psi_{\pm,\delta}\rangle\langle
\tau_\alpha\Psi_{\pm,\delta}|=\tau_\alpha\, \pi_{\pm,0,\delta}\,\tau_\alpha^*,\quad\forall\alpha\in\Gamma.
$$
They have integral kernels
$$
\mathfrak{K}_{\pm,\alpha,\delta}(x,y)=\frac{1}{(2\pi)^4}\int_{\mathbb{T}_*}d\theta\,e^{i<\theta,\gamma(x)>}\hat{\Psi}_{\pm,\delta}(\theta,\hat{x})\int_{\mathbb{T}_*}d\omega\,e^{-i<\omega,\gamma(y)>}e^{i<\theta-\omega,\alpha>}\overline{\hat{\Psi}_{\pm,\delta}(\omega,\hat{y})}
$$
and we have seen before that they are mutually orthogonal. Let us consider their orthogonal sum:
$$
\pi_{\pm, \delta}\,:=\,\underset{\gamma\in\Gamma}{\bigoplus}\pi_{\pm, \gamma,\delta}.
$$
The smoothness of the sections $\hat{\Psi}_{\pm,\delta}:\mathbb{T}_*\rightarrow\mathscr{F}$ implies the rapid decay of the integral kernels in \eqref{F-pi-pm-int-kernel} so that we may define the integral kernels of $\pi_{\pm,\delta}$ as the limit of the following series:
\begin{align}\label{F-int-kernel-pi-pm}
\mathfrak{K}_{\pm,\delta}(x,y)&:=\underset{\alpha\in\Gamma}{\sum}\mathfrak{K}_{\pm,\alpha,\delta}(x,y)\nonumber \\&=\frac{1}{(2\pi)^4}\int_{\mathbb{T}_*}d\theta\,e^{i<\theta,\gamma(x)>}\hat{\Psi}_{\pm,\delta}(\theta,\hat{x})\int_{\mathbb{T}_*}d\omega\,e^{-i<\omega,\gamma(y)>}\left(\underset{\alpha\in\Gamma}{\sum}e^{i<\theta-\omega,\alpha>}\right)\overline{\hat{\Psi}_{\pm,\delta}(\omega,\hat{y})} \\
&=\frac{1}{(2\pi)^2}\int_{\mathbb{T}_*}d\theta\,e^{i<\theta,\gamma(x)-\gamma(y)>}\hat{\Psi}_{\pm,\delta}(\theta,\hat{x})
\overline{\hat{\Psi}_{\pm,\delta}(\theta,\hat{y})}\nonumber
\end{align}
by using the inverse Fourier theorem. It is easy to verify that for any $\gamma\in\Gamma$ we have that $\mathfrak{K}_{\pm,\delta}(x+\gamma,y+\gamma)=\mathfrak{K}_{\pm,\delta}(x,y)$ so that they define $\Gamma$-invariant operators on $L^2(\X)$. By construction they are infinite rank orthogonal projections.

Similarly we define the rank 2 projections in $L^2(\X)$
given by the orthogonal sums:
\beq\label{FD-Pi-delta}
\Pi_{0,\delta}:=\pi_{-,0,\delta}\oplus\pi_{+,0,\delta}\,,\quad\Pi_{\gamma,\delta}:=\pi_{-,\gamma,\delta}\oplus
\pi_{+,\gamma, \delta}\, ,\quad\forall\gamma\in\Gamma.
\eeq

\begin{definition}\label{D-Pdelta}
	We define the \enquote{quasi-band} orthogonal projection
	$$
	P_{I,\delta} \ :=\ \underset{\gamma\in\Gamma}{\sum} \,\Pi_{\gamma,\delta}\ \in\ \mathbb{LP}\big(L^2(\X)\big).
	$$
\end{definition}
For the fixed $\delta\in(0,\delta_0)$ we shall introduce the following orthogonal decomposition of unity in $L^2(\X)$:
\beq \label{defE-delta}
\bb1=E_{-,\delta}\oplus E_{+,\delta},\qquad E_{-,\delta}=P_{(-\infty,0]}\big(H_{\Gamma,\delta}\big),\qquad E_{+,\delta}=P_{(0,\infty)}\big(H_{\Gamma,\delta}\big).
\eeq
with $E_{\pm,\delta}$ having the fibers  $\widehat E_{\pm,\delta}(\theta)$.
\begin{remark}\label{R-E-pm}
	By construction we have that $$\pi_{\pm,\gamma,\delta} E_{\pm,\delta}=\pi_{\pm,\gamma,\delta}$$ for any $\gamma\in\Gamma$ (the projections $E_{\pm,\delta}$ being $\Gamma$-periodic). Thus $$\pi_{\pm,\delta} E_{\pm,\delta}=\pi_{\pm,\delta}\,.$$
\end{remark}

With the second point in Remark \ref{R-spgap-Hdelta} in mind, we can find in the negative half-plane
a simple loop $\mathscr{C}_0$ which  surrounds the negative part of the spectrum of $H_{\Gamma,\delta}$  and  remains at a distance  of order $\delta$ from the spectrum of $H_{\Gamma,\delta}$. Thus we can write that
\beq\label{F-Edelta-}
E_{-,\delta}\ =\ \frac{1}{2\pi i}\oint_{\mathscr{C}_0}d\z\,\big(H_{\Gamma,\delta}-\z\bb1\big)^{-1}.
\eeq
The results in Section 6 of \cite{IMP-2} and some standard arguments imply the following corollary.
\begin{corollary}\label{C-Edelta-pm}
	Denoting by $\mathfrak{S}_T\in\mathscr{S}^\prime(\Xi)$ the distribution symbol of $T\in\mathbb{B}\big(L^2(\X)\big)$ (as in \eqref{N-magn-symb} with the usual  non-magnetic Weyl quantization), and considering $\delta_0>0$ given in \eqref{F-est-gap}, we have that for any $\delta \in (0,\delta_0)$, the symbols  $\mathfrak{S}_{E_{-,\delta}}\in S^{-\infty}(\X\times\X^*)$ and $\mathfrak{S}_{E_{+,\delta}}\in S^0_1(\X\times\X^*)$.
\end{corollary}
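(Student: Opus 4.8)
The plan is to run the standard elliptic--calculus argument on the Dunford representation \eqref{F-Edelta-}, using that the loop $\mathscr{C}_0$ is compact and stays at distance of order $\delta$ from $\sigma(H_{\Gamma,\delta})$, with $\delta$ fixed once and for all. By Proposition \ref{C-Symb-rez-Hdelta} (and the Beals--commutator reasoning behind it, together with Section 6 of \cite{IMP-2}), the resolvent symbol $r_{\Gamma,\delta}(\z):=\mathfrak{S}_{(H_{\Gamma,\delta}-\z\bb1)^{-1}}$ lies in $S^{-2}_1(\X\times\X^*)$ for every $\z$ in the resolvent set, and --- this is the quantitative refinement one really needs --- the seminorms of $r_{\Gamma,\delta}(\z)$ in $S^{-2}_1$ are bounded uniformly as $\z$ runs over the compact set $\mathscr{C}_0$, because there the distance to the spectrum is bounded below by a multiple of $\delta$. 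Since forming the Weyl symbol commutes with the norm-convergent contour integration in \eqref{F-Edelta-}, we get at once $\mathfrak{S}_{E_{-,\delta}}=\frac{1}{2\pi i}\oint_{\mathscr{C}_0}r_{\Gamma,\delta}(\z)\,d\z\in S^{-2}_1(\X\times\X^*)$. This also settles $E_{+,\delta}$: from $E_{+,\delta}=\bb1-E_{-,\delta}$ we get $\mathfrak{S}_{E_{+,\delta}}=1-\mathfrak{S}_{E_{-,\delta}}$, and since the constant symbol $1$ belongs to $S^0_1$ while $S^{-\infty}(\X\times\X^*)\subset S^s_\rho(\X\times\X^*)$ for all $s,\rho$, it suffices to improve $\mathfrak{S}_{E_{-,\delta}}$ to $S^{-\infty}$ in order to conclude $\mathfrak{S}_{E_{+,\delta}}\in S^0_1$.

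To upgrade $\mathfrak{S}_{E_{-,\delta}}$ to $S^{-\infty}$ I would bootstrap by iterating the resolvent identity around a fixed real point $z_0$ below $\inf\sigma(H_{\Gamma,\delta})$, where $(H_{\Gamma,\delta}-z_0\bb1)^{-1}$ exists and, by Proposition \ref{C-Symb-rez-Hdelta}, has symbol $r_{\Gamma,\delta}(z_0)\in S^{-2}_1$. For any $N\in\mathbb{N}$,
\[
(H_{\Gamma,\delta}-\z\bb1)^{-1}=\sum_{k=0}^{N-1}(\z-z_0)^k\,(H_{\Gamma,\delta}-z_0\bb1)^{-(k+1)}+(\z-z_0)^N\,(H_{\Gamma,\delta}-z_0\bb1)^{-N}\,(H_{\Gamma,\delta}-\z\bb1)^{-1};
\]
inserting this into \eqref{F-Edelta-} and noting that each polynomial-in-$\z$ term is entire and hence integrates to zero over the closed loop $\mathscr{C}_0$, one is left with $E_{-,\delta}=(H_{\Gamma,\delta}-z_0\bb1)^{-N} B_N$, where $B_N:=\frac{1}{2\pi i}\oint_{\mathscr{C}_0}(\z-z_0)^N(H_{\Gamma,\delta}-\z\bb1)^{-1}\,d\z$ has, by the same uniform-seminorm argument as above, symbol in $S^{-2}_1$. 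On the other hand $\mathfrak{S}_{(H_{\Gamma,\delta}-z_0\bb1)^{-N}}=r_{\Gamma,\delta}(z_0)^{\sharp N}\in S^{-2N}_1$ by the composition theorem of the (non-magnetic) Hörmander--Weyl calculus (\cite{H-3}), so $\mathfrak{S}_{E_{-,\delta}}\in S^{-2N}_1\sharp S^{-2}_1\subset S^{-2(N+1)}_1$. Since $N$ is arbitrary and $\bigcap_{s}S^s_1(\X\times\X^*)\subset\bigcap_s S^s_0(\X\times\X^*)=S^{-\infty}(\X\times\X^*)$, this gives $\mathfrak{S}_{E_{-,\delta}}\in S^{-\infty}(\X\times\X^*)$, which together with the previous paragraph completes the proof.

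The only genuinely non-routine point is precisely the uniformity of the $S^{-2}_1$-seminorms of $r_{\Gamma,\delta}(\z)$ over the contour $\mathscr{C}_0$, which is where the hypothesis that the gap of $H_{\Gamma,\delta}$ around $0$ has width of order $\delta$ (Remark \ref{R-spgap-Hdelta}) is actually used; this is subsumed in the quantitative Beals-type estimates underlying Proposition \ref{C-Symb-rez-Hdelta}. One could also bypass the contour bookkeeping entirely by writing $E_{-,\delta}=\varphi(H_{\Gamma,\delta})$ with $\varphi\in C_0^\infty(\R)$ equal to $1$ on the bounded negative part of $\sigma(H_{\Gamma,\delta})$ and supported strictly inside the gap, and invoking the smoothing property of the pseudodifferential functional calculus for the elliptic operator $\Op(h_\delta)$ with $h_\delta\in S^2_1(\X\times\X^*)$ (Proposition \ref{P-Symb-Hdelta}); through the Helffer--Sjöstrand formula this makes transparent why compact support of $\varphi$ forces rapid decay of the symbol in $\xi$.
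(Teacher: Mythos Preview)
Your argument is correct, but the paper's own bootstrap is shorter and worth knowing. After obtaining $\mathfrak{S}_{E_{-,\delta}}\in S^{-2}_1$ from the contour integral (exactly as you do), the paper simply invokes the idempotent identity $E_{-,\delta}=E_{-,\delta}^2$, so that $\mathfrak{S}_{E_{-,\delta}}=\mathfrak{S}_{E_{-,\delta}}\sharp\mathfrak{S}_{E_{-,\delta}}\in S^{-4}_1$, and iterates: $\mathfrak{S}_{E_{-,\delta}}=\mathfrak{S}_{E_{-,\delta}}^{\sharp n}\in S^{-2n}_1$ for every $n$, hence $S^{-\infty}$. The step for $E_{+,\delta}$ is identical to yours. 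Your route via the resolvent expansion around a fixed $z_0$ (or, alternatively, via $E_{-,\delta}=\varphi(H_{\Gamma,\delta})$ and Helffer--Sj\"ostrand) is perfectly valid and is the natural choice when no projection structure is available; here, however, the projection property gives the smoothing for free without any contour manipulation or uniformity bookkeeping beyond the initial $S^{-2}_1$ statement. Both arguments rely on the same inputs (Propositions~\ref{P-Symb-Hdelta} and~\ref{C-Symb-rez-Hdelta} and the composition rule for the Moyal product), so the difference is purely one of economy.
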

Note that we do not claim uniform control with respect to $\delta$ and recall that $\delta_0>0$ in \eqref{F-est-gap} is supposed to satisfy the conditions \eqref{C-delta0-1} and \eqref{apr4}.
\begin{proof}
	We have:
	$$
	E_{-,\delta}\ = \frac{1}{2\pi i}\oint_{\mathscr{C}_0}d\z\,\big(H_{\Gamma,\delta}-\z\bb1\big)^{-1}\,.
	$$
	Using that $(E_{-,\delta})^2=E_{-,\delta}$ and recalling that the order of the Moyal product of two H\"{o}rmander type symbols is the sum of their orders, we conclude that $\mathfrak{S}_{E_{-,\delta}}\in S^{-4}_1(\X\times\X^*)$. Repeating this argument as many times as necessary we obtain that $\mathfrak{S}_{E_{-,\delta}}\in S^{-N}_1(\X\times\X^*)$ for any $N\in\mathbb{N}$, i.e. $\mathfrak{S}_{E_{-,\delta}}\in S^{-\infty}(\X\times\X^*)$. For $\mathfrak{S}_{E_{+,\delta}}$ we just notice that by \eqref{defE-delta} $\mathfrak{S}_{E_{+,\delta}}=1-\mathfrak{S}_{E_{-,\delta}}$.
\end{proof}

Noticing that $P_{I,\delta}=\pi_{-,\delta}\oplus\pi_{+,\delta}$ and using \eqref{F-int-kernel-pi-pm} one obtains the integral kernel and the Weyl symbol for the orthogonal projection $P_{I,\delta}$ and the following statement.
\begin{proposition}\label{P-Symb-PI}
	 There exists $\delta_0 >0$ (given in \eqref{F-est-gap}) such that, for $\delta \in (0,\delta_0)$, there exist symbols $p_{I,\delta} \in S^{-\infty}(\X\times\X^*)$, and $p_{\pm,\delta} \in S^{-\infty}(\X\times\X^*)$ that are $\Gamma$-periodic in $\X$, such that 
	$
	P_{I,\delta} :=\Op(p_{I,\delta})$ and $\pi_{\pm,\delta}=\Op(p_{\pm,\delta} ).
	$
\end{proposition}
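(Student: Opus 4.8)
The plan is to read the conclusion off the explicit description of $\pi_{\pm,\delta}$ as a lattice-periodised rank-one projector, using that the generating sections are Schwartz. By Remark~\ref{R-smoothness-global-frame} we have $\Psi_{\pm,\delta}\in\mathscr{S}(\X)$, so $\pi_{\pm,0,\delta}=|\Psi_{\pm,\delta}\rangle\langle\Psi_{\pm,\delta}|$ is an operator with Schwartz integral kernel on $\X\times\X$; hence, inverting the Weyl quantisation exactly as in the proof of Proposition~\ref{P-Symb-Hdelta} (the Weyl transform being a topological isomorphism between $\mathscr{S}(\X\times\X)$ and $\mathscr{S}(\Xi)$), its Weyl symbol $p_{\pm,0,\delta}$ lies in $\mathscr{S}(\Xi)\subset S^{-\infty}(\X\times\X^*)$. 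Now $\pi_{\pm,\delta}=\sum_{\gamma\in\Gamma}\tau_\gamma\,\pi_{\pm,0,\delta}\,\tau_\gamma^{*}$ by construction, and since $\tau_\gamma$ is a pure configuration-space translation the Weyl calculus is covariant, $\tau_\gamma\Op(a)\tau_\gamma^{*}=\Op\big(a(\cdot-\gamma,\cdot)\big)$; consequently
\[
p_{\pm,\delta}(x,\xi)\;=\;\sum_{\gamma\in\Gamma}p_{\pm,0,\delta}(x-\gamma,\xi)\,,
\]
which one may alternatively obtain by directly computing the inverse Weyl transform of the kernel \eqref{F-int-kernel-pi-pm}.

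It then remains to check that this series defines a symbol of class $S^{-\infty}(\X\times\X^*)$ that is $\Gamma$-periodic in $x$. Periodicity in $x$ is immediate from the sum over the lattice. For the symbol estimates, the rapid decay of the Schwartz function $p_{\pm,0,\delta}$ in every variable --- in particular, for all $N$ and all $\alpha,\beta$ one has $<\xi>^{N}\big|\partial_x^{\alpha}\partial_\xi^{\beta}p_{\pm,0,\delta}(x,\xi)\big|\leq C_{N,\alpha,\beta,M}<x>^{-M}$ for all $M$ --- guarantees, via $\sum_{\gamma\in\Gamma}<x-\gamma>^{-M}\leq C_M$ for $M>2$ uniformly in $x$, that the series and all its derivatives converge absolutely and locally uniformly and that $\sup_{(x,\xi)\in\Xi}<\xi>^{N}\big|\partial_x^{\alpha}\partial_\xi^{\beta}p_{\pm,\delta}(x,\xi)\big|<\infty$ for every $\alpha,\beta,N$, i.e. $p_{\pm,\delta}\in S^{-\infty}(\X\times\X^*)$. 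Thus $\pi_{\pm,\delta}=\Op(p_{\pm,\delta})$ with the required properties. Finally, since $P_{I,\delta}=\pi_{-,\delta}+\pi_{+,\delta}$ (the two ranges being orthogonal; see Definition~\ref{D-Pdelta} and \eqref{FD-Pi-delta}), the symbol $p_{I,\delta}:=p_{-,\delta}+p_{+,\delta}$ satisfies $P_{I,\delta}=\Op(p_{I,\delta})$ and belongs to the $\Gamma$-periodic part of $S^{-\infty}(\X\times\X^*)$.

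There is no genuine obstacle at this stage; the argument is routine. The only point that needs a little care is the interchange of the lattice sum with differentiation and with the $S^{-\infty}$ seminorms --- the standard mechanism by which well-localised (here Schwartz; one could even arrange exponential) Wannier-type sections yield a band projection whose magnetic-free Weyl symbol is smoothing and $\Gamma$-periodic. The substantive work --- constructing the smooth global orthonormal pair $\hat\Psi_{\pm,\delta}$ and establishing its Schwartz decay --- has already been carried out in Proposition~\ref{P-smooth-glob-frame} and Remark~\ref{R-smoothness-global-frame}; the present statement is merely its translation into the language of pseudodifferential symbols.
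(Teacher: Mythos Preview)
Your proof is correct and follows essentially the same approach as the paper, which merely indicates that one uses the explicit integral kernel \eqref{F-int-kernel-pi-pm} together with $P_{I,\delta}=\pi_{-,\delta}\oplus\pi_{+,\delta}$ to read off the Weyl symbol; you have simply spelled out the routine details (Schwartz symbol for the rank-one piece, translation covariance of the Weyl calculus, and lattice-sum convergence in $S^{-\infty}$) that the paper leaves implicit.
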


\subsection{The \enquote{quasi-band} Hamiltonian.}

The next step after defining the \enquote{quasi-band} subspace associated to a spectral window is to consider the projected Hamiltonian:
\beq\label{FD-qBandHamil}
P_{I,\delta}H_\Gamma P_{I,\delta}
\eeq
and in order to achieve the third step of our proof strategy, to compare its spectrum in the fixed spectral region with the \enquote{true} spectrum of  $H_\Gamma$. The fact that $P_{I,\delta}$ is \enquote{close} in the Floquet representation to a spectral projection of $H_\Gamma$ only \enquote{locally} on $I$ makes this comparison rather difficult. We present in Appendix \ref{A-FS-arg} an abstract result that allows us to deal with this problem. In this subsection we verify that the  triple $\big(H_\Gamma,P_{I,\delta},I_\delta\big)$ is admissible  for some $I_\delta\subset\mathring{I}$ (see Definition \ref{H-band-red}), as a first step in verifying the same property for the problem with a magnetic field. 

\paragraph{Notation.} 
\begin{itemize}
\item We shall use the notation $P_1\prec P_2$ for two orthogonal projections satisfying the identity $P_1=P_1P_2$, and in this case we shall denote by $P_2\ominus P_1$ the orthogonal projection on the orthogonal subspace of $P-1\mathcal{H}$ in $P_2\mathcal{H}$, i.e. $P_2\mathcal{H}\cap P_1\mathcal{H}^\bot$. Moreover, when we want to emphasize the orthogonality of the terms of a sum of two orthogonal projections $P$ and $Q$ we shall use the notation $P\oplus Q$. 
\item Given an orthogonal projection $P$ in $\mathcal{H}$ we shall denote by $\text{\tt rk}(P)$ the dimension of its image.
\item We define
\begin{align}\label{F-desc-QI}
Q_{I,\delta}:&=\bb1-P_{I,\delta}=  P_{I,\delta}^\bot=\bb1-\big(\pi_{-,\delta}\oplus\pi_{+,\delta}
\big)
\\ \nonumber
&=\mathcal{U}_\Gamma^{-1}\left(
\int^\oplus_{\mathbb{T}_*}d\theta\,\widehat{Q}_{I,\delta}(\theta)\right)\mathcal{U}_\Gamma=\bb1-
\mathcal{U}_\Gamma^{-1}\left(
\int^\oplus_{\mathbb{T}_*}d\theta\,\big(\widehat{\pi}_{-,\delta}(\theta)\oplus
\widehat{\pi}_{+,\delta}(\theta)\big)
\right)\mathcal{U}_\Gamma.
\end{align}
\item We denote by 
$\widehat{E}_{\leq k,\delta}(\theta)$  the orthogonal projection on the first $k+1$ Bloch eigenvalues of $\widehat{H}_\delta(\theta)$  (the first Bloch level having index $k=0$) and $\widehat{E}_{\geq k,\delta}(\theta)$ for the orthogonal projection on the Bloch eigenvalues of $\widehat{H}_\delta(\theta)$ greater or equal to $\lambda_{k,\delta}(\theta)$. 
\end{itemize}

Starting from \eqref{F-desc-QI} we shall  emphasize an orthogonal decomposition of the projection $Q_{I,\delta}$. Remark \ref{R-E-pm} implies that $\widehat{\pi}_{-,\delta}(\theta)\prec\widehat{E}
_{-,\delta}(\theta)$ for any $\theta\in\mathbb{T}_*$ and we can write
$$
\big(\bb1-\widehat{\pi}_{-,\delta}(\theta)\big)=\big(
\widehat{E}_{-,\delta}(\theta)\ominus\widehat{\pi}_{-,\delta}(\theta)\big)\oplus\widehat{E}_{+,\delta}(\theta),\quad\forall\theta\in\mathbb{T}_*.
$$
Moreover, for $\theta\in \Sigma_I$ we have the decomposition
$$
\bb1-\widehat{\pi}_{-,\delta}(\theta)=\widehat{E}_{\leq k_0-1,\delta}(\theta)\oplus\widehat{\pi}_{+,\delta}(\theta)\oplus\widehat{E}_{\geq k_0+2,\delta}(\theta).
$$
For $\theta\in\mathbb{T}_*\setminus \Sigma_I$ we define the following orthogonal projection: 
\beq\label{F-hatF-}
\widehat{F}_{-}
(\theta):=\big(\bb1-\widehat{\pi}_{-,\delta}(\theta)\big)\ominus\widehat{E}_{\geq k_0+1,\delta}(\theta).
\eeq
It has  $\text{\tt rk\,}(\widehat{F}_{-}(\theta))=k_0\geq1$ and verifies $\widehat{F}_{-}
	(\theta)<\underset{k\leq k_0}{\sum}\widehat{\pi}_k(\theta)$.

Remark \ref{R-E-pm} and Remark \ref{R-D-H-local} also imply that $\widehat{\pi}_{+,\delta}(\theta)\prec\widehat{E}_{+,\delta}(\theta)\cap\mathscr{F}^2_\theta$ for any $\theta\in\mathbb{T}_*$ and that we can write $\big(\bb1-\widehat{\pi}_{+,\delta}(\theta)\big)=
\widehat{E}_{-,\delta}(\theta)\oplus\big(
\widehat{E}_{+,\delta}(\theta)\ominus\widehat{\pi}_{+,\delta}(\theta)\big)$. Moreover for $\theta\in \Sigma_I$ we have the decomposition
$$
\bb1-\widehat{\pi}_{+,\delta}(\theta)=\widehat{\pi}_{-,\delta}(\theta)\oplus\widehat{E}_{\leq k_0-1,\delta}(\theta)\oplus\widehat{E}_{\geq k_0+2,\delta}(\theta).
$$
For $\theta\in\mathbb{T}_*\setminus \Sigma_I$ we define the following orthogonal projection: 
\beq\label{F-hatF+}
\widehat{F}_{+}
(\theta):=\big(\bb1-\widehat{\pi}_{+,\delta}(\theta)\big)\ominus\widehat{E}_{\leq k_0,\delta}(\theta)
\eeq
 and  note that  $\text{\tt rk\,}(\widehat{F}_{+}(\theta))=+\infty$ and $\widehat{F}_{+}
 (\theta)<\underset{k\geq k_0+1}{\sum}\widehat{\pi}_k(\theta)$.

\begin{remark}\label{R-desc-QI-Epm}
The previous discussion implies that we have the orthogonal decomposition
\beq\label{F-desc-QI-Epm}
\widehat{Q}_{I,\delta}(\theta)=\big(
\widehat{E}_{-,\delta}(\theta)\ominus\widehat{\pi}_{-,\delta}(\theta)\big)\oplus\big(
\widehat{E}_{+,\delta}(\theta)\ominus\widehat{\pi}_{+,\delta}(\theta)\big)=\big(\widehat{Q}_{I,\delta}(\theta)\widehat{E}_{-,\delta}(\theta)\big)\oplus\big(\widehat{Q}_{I,\delta}(\theta)
\widehat{E}_{+,\delta}(\theta)\big).
\eeq
Moreover, for $\theta\in \Sigma_I$ we can write that  $\widehat{Q}_{I,\delta}(\theta)=\widehat{E}_{\leq k_0-1,\delta}(\theta)\oplus\widehat{E}_{\geq k_0+2,\delta}(\theta)$ while for $\theta\in\mathbb{T}_*\setminus \Sigma_I$ we have that $\widehat{Q}_{I,\delta}(\theta)=\widehat{F}_{-}
(\theta)\oplus\widehat{F}_{+}(\theta)$.
\end{remark}

\begin{definition}\label{D-Qmp}
	Associated to the orthogonal decomposition \eqref{F-desc-QI-Epm} we introduce the notations:
	$$
	\widehat{Q}_{I,-,\delta}(\theta):=\big(\widehat{Q}_I(\theta)\widehat{E}_{-,\delta}(\theta)\big),\qquad
	\widehat{Q}_{I,+,\delta}(\theta):=\big(\widehat{Q}_I(\theta)
	\widehat{E}_{+,\delta}(\theta)\big).
	$$
\end{definition}

\begin{proposition}
	The infinite dimensional orthogonal projections $Q_{I,\pm,\delta}$ in $L^2(\X)$ associated with the  orthogonal projections $\widehat{Q}_{I,\pm,\delta}(\theta)$ defined above 
	$$
	Q_{I,\pm,\delta}\,:= \,\mathcal{U}_\Gamma^{-1}\left(\int_{\mathbb{T}_*}^
	\oplus d\theta\,\widehat{Q}_{I,\pm,\delta}(\theta)\right)\mathcal{U}_\Gamma
	$$
	are $\Gamma$-periodic orthogonal projections giving an orthogonal decomposition of  $Q_{I,\delta}L^2(\X)$ and we have the equalities:
	\begin{align}\label{F-incl-QI-}
	&\widehat{Q}_{I,-,\delta}(\theta)=\left\{\begin{array}{l}
	\widehat{E}_{\leq k_0-1,\delta}(\theta),\quad\forall\theta\in \Sigma_I\\
	\widehat{F}_-(\theta),\quad\forall\theta\in \mathbb{T}_*\setminus \Sigma_I
	\end{array}\right. \\\label{F-incl-QI+}
	&\widehat{Q}_{I,+,\delta}(\theta)=\left\{\begin{array}{l}
	\widehat{E}_{\geq k_0+2,\delta}(\theta),\quad\forall\theta\in \Sigma_I\\
	\widehat{F}_+(\theta),\quad\forall\theta\in \mathbb{T}_*\setminus \Sigma_I
	\end{array}\right.
	\end{align}
	defining smooth global projection-valued sections.
\end{proposition}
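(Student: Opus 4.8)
The plan is to reduce the whole proposition to the single identity
$$
\widehat{Q}_{I,\pm,\delta}(\theta)\;=\;\widehat{E}_{\pm,\delta}(\theta)\,-\,\widehat{\pi}_{\pm,\delta}(\theta)\,,\qquad\widehat{\pi}_{\pm,\delta}(\theta)=|\hat{\Psi}_{\pm,\delta}(\theta)\rangle\langle\hat{\Psi}_{\pm,\delta}(\theta)|\,,\qquad\forall\,\theta\in\mathbb{T}_*\,.
$$
This is just Definition~\ref{D-Qmp} read off from the orthogonal decomposition of $\widehat{Q}_{I,\delta}(\theta)$ in Remark~\ref{R-desc-QI-Epm}, using $\widehat{\pi}_{\pm,\delta}(\theta)\prec\widehat{E}_{\pm,\delta}(\theta)$ (Remark~\ref{R-E-pm}) to turn $\widehat{E}_{\pm,\delta}\ominus\widehat{\pi}_{\pm,\delta}$ into $\widehat{E}_{\pm,\delta}-\widehat{\pi}_{\pm,\delta}$. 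From the displayed identity everything abstract is immediate: each $\widehat{E}_{\pm,\delta}(\theta)-\widehat{\pi}_{\pm,\delta}(\theta)$ is a difference of nested orthogonal projections, hence an orthogonal projection; by Remark~\ref{R-desc-QI-Epm} the two are mutually orthogonal and add up to $\widehat{Q}_{I,\delta}(\theta)$; and they are globally smooth fields, since $\widehat{E}_{\pm,\delta}(\theta)$ is globally smooth by point~(4) of Remark~\ref{R-spgap-Hdelta} and $\widehat{\pi}_{\pm,\delta}(\theta)$ is smooth because $\hat{\Psi}_{\pm,\delta}$ is a smooth global unit-norm section by Proposition~\ref{P-smooth-glob-frame}. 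Hence $Q_{I,\pm,\delta}=\mathcal{U}_\Gamma^{-1}\big(\int^\oplus_{\mathbb{T}_*}d\theta\,\widehat{Q}_{I,\pm,\delta}(\theta)\big)\mathcal{U}_\Gamma$ is a well-defined decomposable, hence $\Gamma$-periodic, orthogonal projection, and $Q_{I,-,\delta}\oplus Q_{I,+,\delta}=Q_{I,\delta}$. Only the explicit fibres \eqref{F-incl-QI-}--\eqref{F-incl-QI+} remain.

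For those I would argue separately on $\mathbb{T}_*\setminus\Sigma_I$ and on $\Sigma_I$. Outside $\Sigma_I$ one has $g_\delta(\theta)=0$, so $\widehat{H}_\delta(\theta)=\widehat{H}(\theta)$; by Remark~\ref{R-spgap-Hdelta}(4) the projection $\widehat{E}_{-,\delta}(\theta)=P_{(-\infty,0]}(\widehat{H}(\theta))$ has rank $k_0+1$, i.e.\ $\widehat{E}_{-,\delta}(\theta)=\widehat{E}_{\leq k_0,\delta}(\theta)$ and $\widehat{E}_{+,\delta}(\theta)=\widehat{E}_{\geq k_0+1,\delta}(\theta)$, while \eqref{hc8} places $\hat{\Psi}_{-,\delta}(\theta)$ in $\widehat{E}_{-,\delta}(\theta)\mathscr{F}_\theta$ and $\hat{\Psi}_{+,\delta}(\theta)$ in $\widehat{E}_{+,\delta}(\theta)\mathscr{F}_\theta$. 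Feeding the splitting $\bb1=\widehat{E}_{\leq k_0,\delta}(\theta)\oplus\widehat{E}_{\geq k_0+1,\delta}(\theta)$ together with $\widehat{\pi}_{-,\delta}(\theta)\prec\widehat{E}_{\leq k_0,\delta}(\theta)$ into definition \eqref{F-hatF-} of $\widehat{F}_-(\theta)$ gives $\widehat{F}_-(\theta)=\widehat{E}_{\leq k_0,\delta}(\theta)\ominus\widehat{\pi}_{-,\delta}(\theta)=\widehat{Q}_{I,-,\delta}(\theta)$; symmetrically \eqref{F-hatF+} gives $\widehat{F}_+(\theta)=\widehat{E}_{\geq k_0+1,\delta}(\theta)\ominus\widehat{\pi}_{+,\delta}(\theta)=\widehat{Q}_{I,+,\delta}(\theta)$. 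On $\Sigma_I$ I would use that $\widehat{\Pi}_I(\theta)\mathscr{F}_\theta$ reduces $\widehat{H}_\delta(\theta)$ and that $\widehat{H}_{I,\delta}(\theta)=\widehat{H}_\delta(\theta)\widehat{\Pi}_I(\theta)$ has the two simple eigenvalues $\lambda_{-,\delta}(\theta)<0<\lambda_{+,\delta}(\theta)$; combined with Hypothesis~\ref{H-1} (which forces $\lambda_{k_0-1}(\theta)<-\Lambda_-$ and $\lambda_{k_0+2}(\theta)>\Lambda_+$ on the compact set $\Sigma_I$, hence a uniform gap around the two window bands) and with Remark~\ref{R-spgap-Hdelta}(2),(4), one reads off, for $\delta\in(0,\delta_0)$, that $\lambda_{-,\delta}(\theta)$ is the $k_0$-th Bloch level of $\widehat{H}_\delta(\theta)$ with Riesz projection $\widehat{\pi}_{-,\delta}(\theta)$, that $\lambda_{+,\delta}(\theta)$ is the $(k_0+1)$-st with Riesz projection $\widehat{\pi}_{+,\delta}(\theta)$, and that $\widehat{E}_{-,\delta}(\theta)=\widehat{E}_{\leq k_0,\delta}(\theta)$. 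Subtracting the appropriate $\widehat{\pi}$ as before yields $\widehat{Q}_{I,-,\delta}(\theta)=\widehat{E}_{\leq k_0-1,\delta}(\theta)$ and $\widehat{Q}_{I,+,\delta}(\theta)=\widehat{E}_{\geq k_0+2,\delta}(\theta)$ on $\Sigma_I$, which is \eqref{F-incl-QI-}--\eqref{F-incl-QI+}.

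The proof is essentially bookkeeping, so the one delicate point is the matching of Bloch-level indices: one must make sure that the perturbation $\delta K_g$ --- supported near $\theta_0$ and acting only inside the two-dimensional window subspace $\widehat{\Pi}_I(\theta)\mathscr{F}_\theta$ --- does not move the $k_0$-th and $(k_0+1)$-st Bloch levels of $\widehat{H}(\theta)$ past their neighbours. This is where the smallness of $\delta$ (the standing assumption $\delta\in(0,\delta_0)$, $\delta_0$ from \eqref{F-est-gap}), the rigidity of Hypothesis~\ref{H-1}, and the condition $k_0\geq2$ (needed both for $\widehat{E}_{\leq k_0-1,\delta}(\theta)$ to be a projection of the correct rank and for $\hat{\Psi}_{-,\delta}$ to exist at all) are used; since all of this was already absorbed into Remark~\ref{R-spgap-Hdelta}, I expect no new estimate, only careful citation.
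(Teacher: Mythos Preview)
Your proof is correct and follows essentially the same route as the paper: both reduce everything to the identity $\widehat{Q}_{I,\pm,\delta}(\theta)=\widehat{E}_{\pm,\delta}(\theta)\ominus\widehat{\pi}_{\pm,\delta}(\theta)$ and then read off smoothness from the global smoothness of $\widehat{E}_{\pm,\delta}$ (Remark~\ref{R-spgap-Hdelta}(4)) and of $\widehat{\pi}_{\pm,\delta}$ (Proposition~\ref{P-smooth-glob-frame}). The only difference is that you spell out the verification of the explicit fibres \eqref{F-incl-QI-}--\eqref{F-incl-QI+} again, whereas the paper treats those as already established in the discussion preceding the proposition (culminating in Remark~\ref{R-desc-QI-Epm}) and keeps the proof to three lines.
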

\begin{proof}
	From the previous considerations we notice that
	$$
	\widehat{Q}_I(\theta)\widehat{E}_{-,\delta}(\theta)=
	\widehat{E}_{-,\delta}(\theta)\widehat{Q}_I(\theta)=
	\widehat{E}_{-,\delta}(\theta)\ominus
	\widehat{\pi}_{-,\delta}(\theta),
	$$
and	
	$$
	\widehat{Q}_I(\theta)\widehat{E}_{+,\delta}(\theta)=
	\widehat{E}_{+,\delta}(\theta)\widehat{Q}_I(\theta)=
	\widehat{E}_{+,\delta}(\theta)\ominus
	\widehat{\pi}_{+,\delta}(\theta),
	$$
	while the right hand side objects have been proved to be smooth global projection-valued sections.
\end{proof}

\begin{remark}\label{R-ordSymbOpm}
	We notice that $$Q_{I,\pm,\delta}=E_{\pm,\delta}\ominus\pi_{\pm, \delta}=E_{\pm,\delta}\big(\bb1-\pi_{\pm, \delta}\big)=
	\Op\big(\mathfrak{S}_{E_{\pm,\delta}}\sharp(1-p_{\pm,\delta})\big)\,.$$
	Moreover, using Corollary \ref{C-Edelta-pm} and Proposition \ref{P-Symb-PI} we infer that
	$q_{I,-,\delta}:=\mathfrak{S}_{E_{-,\delta}}\sharp(1-p_{-,\delta} )$ belongs to  $S^{-\infty}(\X\times\X^*)$ and 
	$q_{I,+,\delta}:=\mathfrak{S}_{E_{+,\delta}}\sharp(1-p_{+,\delta})$ belongs to $ S^{0}_1(\X\times\X^*)$.
\end{remark}

\begin{remark}\label{R-HPI}
	Our choice $\Psi_\pm(\theta)\in\mathscr{F}^2_\theta$ implies that $P_{I{,\delta}}L^2(\X)\subset\mathscr{H}^2(\X)$ and thus the product $H_\Gamma P_{I{,\delta}}$ has a bounded extension to $L^2(\X)$.
	This also implies that $Q_{I{,\delta}}H_\Gamma Q_{I{,\delta}}$ is a self-adjoint operator in $L^2(\X)$.
\end{remark}
We consider the open interval containing $0\in\mathbb{R}$:
\beq\label{DF-I-delta}
\mathring{I}=(-\Lambda_-\,,\,\Lambda_{+})\subset  I.
\eeq
\begin{proposition}\label{P-H-band-0-field}
	The triple $\big(H_\Gamma, P_{I,\delta},\mathring{I}\big)$ is admissible in the sense of Definition \ref{H-band-red}. 
\end{proposition}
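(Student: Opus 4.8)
The plan is to verify the defining clauses of an admissible triple (Definition~\ref{H-band-red}) one by one, reducing each of them to two structural facts that are already in place: that the bounded self-adjoint perturbation $\delta K_g$ appearing in $H_{\Gamma,\delta}=H_\Gamma+\delta K_g$ (see \eqref{FD-Hdelta}, \eqref{hckg}) has its range inside $P_{I,\delta}L^2(\X)$, and that $H_{\Gamma,\delta}$, unlike $H_\Gamma$, has a genuine spectral gap of width of order $\delta$ around $0$ (point (2) of Remark~\ref{R-spgap-Hdelta}).

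First I would take care of the ``soft'' clauses. Since $P_{I,\delta}=\Op(p_{I,\delta})$ with $p_{I,\delta}\in S^{-\infty}(\X\times\X^*)$ (Proposition~\ref{P-Symb-PI}), the projection $P_{I,\delta}$ is smoothing, so $P_{I,\delta}L^2(\X)\subset\mathscr{H}^2(\X)$, the domain of $H_\Gamma$; hence $H_\Gamma P_{I,\delta}$, $P_{I,\delta}H_\Gamma P_{I,\delta}$ and $P_{I,\delta}H_\Gamma Q_{I,\delta}$ all extend to bounded operators on $L^2(\X)$, while $Q_{I,\delta}H_\Gamma Q_{I,\delta}$ is self-adjoint on $Q_{I,\delta}\mathscr{H}^2(\X)$; this is exactly Remark~\ref{R-HPI}. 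For the H\"ormander-class requirements of the definition I would combine $h\in S^2_1(\X\times\X^*)$ (from \eqref{R-HGammaOp}), $p_{I,\delta}\in S^{-\infty}$, the symbols $\mathfrak{S}_{E_{\pm,\delta}}$ of Corollary~\ref{C-Edelta-pm}, and $q_{I,-,\delta}\in S^{-\infty}$, $q_{I,+,\delta}\in S^0_1$ (Remark~\ref{R-ordSymbOpm}): composing these via the Moyal product produces the required symbol estimates for the four blocks of $H_\Gamma$ associated with the splitting $\bb1=P_{I,\delta}\oplus Q_{I,\delta}$, and Proposition~\ref{C-Symb-rez-Hdelta} provides the symbol of the resolvent of the $Q$-block once its invertibility is established.

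The heart of the proof is the analysis of the $Q$-block. First I would observe that $K_g$ is invisible to $Q_{I,\delta}$: fibrewise $\delta K_g$ is supported on $\Sigma_I$, and there its fibre $\tfrac{\delta}{8}\,\widehat{\Upsilon}(\theta)^{-1}g_\delta(\theta)\,v^{(3)}\cdot\sigma\,\widehat{\Upsilon}(\theta)$ maps $\mathscr{F}_\theta$ into $\widehat{\Pi}_I(\theta)\mathscr{F}_\theta$, which by \eqref{hc8-bis} is precisely the range of the fibre of $P_{I,\delta}$ over $\theta\in\Sigma_I$; off $\Sigma_I$ the fibre of $K_g$ is zero. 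Therefore $Q_{I,\delta}K_g=K_g Q_{I,\delta}=0$ and consequently $Q_{I,\delta}H_\Gamma Q_{I,\delta}=Q_{I,\delta}H_{\Gamma,\delta}Q_{I,\delta}$, which replaces the gapless $H_\Gamma$ by the gapped $H_{\Gamma,\delta}$. Next, by Remark~\ref{R-ordSymbOpm}, $Q_{I,\delta}=Q_{I,-,\delta}\oplus Q_{I,+,\delta}$ with $Q_{I,\pm,\delta}=E_{\pm,\delta}\ominus\pi_{\pm,\delta}\prec E_{\pm,\delta}$, where $E_{-,\delta}=P_{(-\infty,0]}(H_{\Gamma,\delta})$ and $E_{+,\delta}=P_{(0,\infty)}(H_{\Gamma,\delta})$ are mutually orthogonal and commute with $H_{\Gamma,\delta}$; hence $Q_{I,-,\delta}H_{\Gamma,\delta}Q_{I,+,\delta}=0$ and $Q_{I,\delta}H_{\Gamma,\delta}Q_{I,\delta}$ is block-diagonal. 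For $\psi$ in $Q_{I,-,\delta}\mathscr{H}^2(\X)$ one has $E_{-,\delta}\psi=\psi$, so $\langle\psi,H_{\Gamma,\delta}\psi\rangle=\langle\psi,E_{-,\delta}H_{\Gamma,\delta}E_{-,\delta}\psi\rangle\leq-c_\delta\|\psi\|^2$, and symmetrically $\langle\psi,H_{\Gamma,\delta}\psi\rangle\geq c_\delta\|\psi\|^2$ for $\psi$ in $Q_{I,+,\delta}\mathscr{H}^2(\X)$, where $c_\delta>0$ is of order $\delta$ and is furnished by point (2) of Remark~\ref{R-spgap-Hdelta}. This yields that the spectrum of $Q_{I,\delta}H_\Gamma Q_{I,\delta}$, regarded as a self-adjoint operator in $Q_{I,\delta}L^2(\X)$, is contained in $(-\infty,-c_\delta]\cup[c_\delta,\infty)$, so $Q_{I,\delta}H_\Gamma Q_{I,\delta}-z$ is boundedly invertible on $Q_{I,\delta}L^2(\X)$ for every $z\in(-c_\delta,c_\delta)$, with resolvent of the class demanded by Definition~\ref{H-band-red} (Proposition~\ref{C-Symb-rez-Hdelta}, Remark~\ref{R-ordSymbOpm}). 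Choosing inside $\mathring I$ a small open interval $I_\delta\subset(-c_\delta,c_\delta)$ around $0$ gives the spectral-separation clause, and together with the previous paragraph this establishes that $\big(H_\Gamma,P_{I,\delta},\mathring I\big)$ --- equivalently $\big(H_\Gamma,P_{I,\delta},I_\delta\big)$ --- is admissible.

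The step I expect to carry the weight is this last one: one must exhibit a gap around $0$ for the \emph{compressed} operator $Q_{I,\delta}H_\Gamma Q_{I,\delta}$, which is not a priori decomposable, and both ingredients above are genuinely needed for it --- that $K_g$ is invisible to $Q_{I,\delta}$, allowing the passage to the gapped $H_{\Gamma,\delta}$, and that $Q_{I,\delta}$ splits as $Q_{I,-,\delta}\oplus Q_{I,+,\delta}$ in a way compatible with the spectral projections $E_{\pm,\delta}$ of $H_{\Gamma,\delta}$. The remaining effort is purely technical: matching exactly the H\"ormander classes prescribed by Definition~\ref{H-band-red}, which is precisely why the symbol computations of Propositions~\ref{P-Symb-Hdelta}, \ref{C-Symb-rez-Hdelta}, \ref{P-Symb-PI}, Corollary~\ref{C-Edelta-pm} and Remark~\ref{R-ordSymbOpm} were carried out beforehand.
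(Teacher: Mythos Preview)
Your block-diagonalization argument is clean and correct: the observation that $Q_{I,\delta}K_g=0$ (because the fibre of $K_g$ lands in $\widehat{\Pi}_I(\theta)\mathscr{F}_\theta=\widehat{P}_{I,\delta}(\theta)\mathscr{F}_\theta$ on $\Sigma_I$ and vanishes elsewhere) lets you replace $H_\Gamma$ by $H_{\Gamma,\delta}$ under $Q_{I,\delta}$, and then $Q_{I,\pm,\delta}\prec E_{\pm,\delta}$ with $E_{\pm,\delta}$ commuting with $H_{\Gamma,\delta}$ gives $Q_{I,-,\delta}H_\Gamma Q_{I,+,\delta}=0$ without any fibrewise computation. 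This is more elegant than the paper's route, which checks the vanishing of the mixed terms by explicit Floquet decomposition on $\Sigma_I$ and on its complement.

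However, your spectral bound is too weak to prove the proposition as stated. From $Q_{I,-,\delta}\prec E_{-,\delta}=P_{(-\infty,0]}(H_{\Gamma,\delta})$ you only get
\[
\langle\psi,H_\Gamma\psi\rangle=\langle\psi,H_{\Gamma,\delta}\psi\rangle\leq\Big(\sup\sigma(H_{\Gamma,\delta})\cap(-\infty,0]\Big)\,\|\psi\|^2\leq -c_\delta\|\psi\|^2,
\]
and $c_\delta$ is only of order $\delta$, since the global gap of $H_{\Gamma,\delta}$ is precisely the artificially opened one near $\theta_0$ (Remark~\ref{R-spgap-Hdelta}(2), estimate~\eqref{F-est-gap}). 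This yields a resolvent interval $(-c_\delta,c_\delta)$ for $Q_{I,\delta}H_\Gamma Q_{I,\delta}$, \emph{not} the full $\mathring I=(-\Lambda_-,\Lambda_+)$. Your final sentence asserting that admissibility for $I_\delta$ and for $\mathring I$ are ``equivalent'' is simply false: the latter is strictly stronger, and it is what Proposition~\ref{P-H-band-0-field} claims. The point you are missing is that removing $\pi_{-,\delta}$ from $E_{-,\delta}$ does more than preserve the gap of $H_{\Gamma,\delta}$: fibrewise, on $\Sigma_I$ it removes exactly the eigenvalue $\lambda_{-,\delta}(\theta)\in(-c\delta,0)$, leaving only the levels $\lambda_k(\theta)$ with $k\leq k_0-1$, all of which lie below $-\Lambda_-$ by Hypothesis~\ref{H-1}; off $\Sigma_I$ every eigenvalue of $\widehat H(\theta)$ already avoids $\mathring I$. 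This is precisely why the paper carries out the explicit fibrewise analysis in \eqref{F-Q-HQ-}--\eqref{F-Q+HQ+} and obtains \eqref{F-est-}--\eqref{F-est+} with the constants $\Lambda_\pm$ rather than $c_\delta$. Your operator-level argument cannot see this, because the global spectrum of $E_{-,\delta}H_{\Gamma,\delta}E_{-,\delta}$ reaches up to $-c_\delta$.

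Two minor remarks. First, Definition~\ref{H-band-red} imposes no H\"ormander-class requirements; the symbol-class bookkeeping you describe is relevant later (for Proposition~\ref{P-Hyp-H-magn} and Remark~\ref{R-QHQ-sp-desc}(3)) but is not part of what you must verify here. Second, your weaker gap $(-c_\delta,c_\delta)$ would in fact suffice for the downstream application, since $\delta$ is fixed once and for all and only enters the constants in Proposition~\ref{P-magn-FS-est}; but that is a separate statement, and the proposition you were asked to prove asserts the resolvent set contains all of $\mathring I$.
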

\begin{proof}~

\noindent\textbf{Step 1:}
	We prove that formula \eqref{F-desc-QI-Epm} induces the following decomposition of the projected Hamiltonian $Q_{I,\delta}H_\Gamma Q_{I,\delta}$:
	$$
	Q_{I,\delta}H_\Gamma Q_{I,\delta}=Q_{I,-,\delta}H_\Gamma Q_{I,-,\delta}\,\oplus\,Q_{I,+,\delta}H_\Gamma Q_{I,+,\delta}.
	$$
	
Changing to the Floquet representation we can write that
\begin{align} \nonumber
Q_{I,\delta}H_\Gamma Q_{I,\delta}&=\mathcal{U}_\Gamma^{-1}\left(\int_{\mathbb{T}_*}^
\oplus d\theta\,\big(\widehat{Q}_{I,-,\delta}(\theta)\oplus
\widehat{Q}_{I,+,\delta}(\theta)\big)
\left(\underset{k\in
	\mathbb{N}}{\sum}\lambda_k(\theta)\widehat{\pi}_k(\theta) \right) \big(\widehat{Q}_{I,-,\delta}(\theta)\oplus
\widehat{Q}_{I,+,\delta}(\theta)\big)\right)
\mathcal{U}_\Gamma.
\end{align}
We decompose the integral over $\mathbb{T}_*$ as the sum of the integrals over $\Sigma_I$ and its complementary in $\mathbb{T}_*$. Using \eqref{F-incl-QI-} and \eqref{F-incl-QI+} we notice that for $\theta \in \Sigma_I$:
$$
 \widehat{E}_{\leq k_0-1,\delta}(\theta)=\underset{k\leq k_0-1}{\sum}\widehat{\pi}_k(\theta),\quad\widehat{E}_{\geq k_0+2,\delta}(\theta)=\underset{k\geq k_0+2}{\sum}\widehat{\pi}_k(\theta)
$$
so that
\begin{align}
&\int\limits_{\Sigma_I}^
\oplus \;d\theta\,\big(\widehat{E}_{\leq k_0-1,\delta}(\theta)\oplus
\widehat{E}_{\geq k_0+2,\delta}(\theta)\big)
\left(\underset{k\in
	\mathbb{N}}{\sum}\lambda_k(\theta)\widehat{\pi}_k(\theta) \right) \big(\widehat{E}_{\leq k_0-1,\delta}(\theta)\oplus
\widehat{E}_{\geq k_0+2,\delta}(\theta)\big) \label{F-Q-HQ-} \\ \nonumber 
& \hspace*{0.2cm}=\int\limits_{\Sigma_I}^
\oplus d\theta\,\widehat{E}_{\leq k_0-1,\delta}(\theta)\Big(\underset{k\leq k_0-1}{\sum}\lambda_k(\theta)\widehat{\pi}_k(\theta) \Big)
\widehat{E}_{\leq k_0-1,\delta}(\theta) \,\oplus\,
\widehat{E}_{\geq k_0+2,\delta}(\theta)\Big(\underset{k\geq k_0+2}{\sum}\lambda_k(\theta)\widehat{\pi}_k(\theta) \Big)\widehat{E}_{\geq k_0+2,\delta}(\theta).
\end{align}
Using once again \eqref{F-incl-QI-} and \eqref{F-incl-QI+} and \eqref{F-hatF-} and \eqref{F-hatF+} we obtain
\begin{align}\label{F-Q+HQ+}
&\int\limits_{\mathbb{T}_*
	\setminus \Sigma_I}^
\oplus d\theta\,\big(\widehat{F}_{-}(\theta)\oplus
\widehat{F}_{+}(\theta)\big)
\left(\underset{k\in
	\mathbb{N}}{\sum}\lambda_k(\theta)\widehat{\pi}_k(\theta) \right) \big(\widehat{F}_{-}(\theta)\oplus
\widehat{F}_{+}(\theta)\big) \\ \nonumber 
&  \hspace*{0.5cm}=\int\limits_{\mathbb{T}_*\setminus \Sigma_I}^
\oplus d\theta\,\widehat{F}_{-,\delta}(\theta)\Big(\underset{k\leq k_0}{\sum}\lambda_k(\theta)\widehat{\pi}_k(\theta) \Big) \widehat{F}_{-,\delta}(\theta)\,\oplus\,
\widehat{F}_{+,\delta}(\theta)\Big(\underset{k\geq k_0+1}{\sum}\lambda_k(\theta)\widehat{\pi}_k(\theta)\Big).
\end{align}
Similar arguments show that the mixed terms $Q_{I,-,\delta}H_\Gamma Q_{I,+,\delta}$ and $Q_{I,+,\delta}H_\Gamma Q_{I,-,\delta}$ are zero.

\noindent\textbf{Step 2:} We localize the spectra of the self-adjoint operators $Q_{I,-,\delta}H_\Gamma Q_{I,-,\delta}$ and $Q_{I,+,\delta}H_\Gamma Q_{I,+,\delta}$ and prove that they are well separated inducing a spectral gap for $Q_{I,\delta}H_\Gamma Q_{I,\delta}$. 

We shall separately consider the two contributions to \eqref{F-Q-HQ-} and the two contributions to \eqref{F-Q+HQ+}. We notice that
$$
(-E_0)\widehat{E}_{\leq k_0-1,\delta}(\theta)\,\leq\,\widehat{E}_{\leq k_0-1,\delta}(\theta)\Big(\underset{k\leq k_0-1}{\sum}\lambda_k(\theta)\widehat{\pi}_k(\theta) \Big)
\widehat{E}_{\leq k_0-1,\delta}(\theta)\,\leq\,\big(-\Lambda_-\big)\widehat{E}_{\leq k_0-1,\delta}(\theta),\quad\forall\theta\in \Sigma_I.
$$
{  On the complementary of $\Sigma_I$ in $\mathbb{T}_*$ we have that 
$$
(-E_0)\widehat{E}_{\leq k_0,\delta}(\theta)\,\leq\,\widehat{F}_{-}(\theta)\Big(\underset{k\leq k_0}{\sum}\lambda_k(\theta)\widehat{\pi}_k(\theta) \Big) \widehat{F}_{-}(\theta)\,\leq\,(- \Lambda_-)\widehat{E}_{\leq k_0,\delta}(\theta),\quad\forall\theta\in\mathbb{T}_*\setminus \Sigma_I.
$$
	We conclude that
\begin{align}\label{F-est-}
\big(-E_0\big)Q_{I,-,\delta}\,\leq\,Q_{I,-,\delta}\,H_\Gamma\, Q_{I,-,\delta}\,&\leq\,(- \Lambda_-)Q_{I,-,\delta}.
\end{align}
In a similar way we notice that
\begin{align*}
\Lambda_+\widehat{E}_{\geq k_0+2,\delta}(\theta)\,\leqq\,\widehat{E}_{\geq k_0+2}(\theta)\Big(\underset{k\geq k_0+2}{\sum}\lambda_k(\theta)\widehat{\pi}_k(\theta) \Big)
\widehat{E}_{\geq k_0+2,\delta}(\theta),\quad\forall\theta\in \Sigma_I,\\
\Lambda_+\widehat{E}_{\geq k_0+1,\delta}(\theta)\,\leq\,\widehat{F}_{+}(\theta)\Big(\underset{k\geq k_0+1}{\sum}\lambda_k(\theta)\widehat{\pi}_k(\theta) \Big) \widehat{F}_{+}(\theta),\quad\forall\theta\in\mathbb{T}_*\setminus \Sigma_I
\end{align*}
and conclude that:
\begin{align}\label{F-est+}
Q_{I,+,\delta}\,H_\Gamma \, Q_{I,+,\delta}\,&\geq\,\Lambda_+Q_{I,+,\delta}.
\end{align}}

	Recalling that $Q_{I,\delta}=Q_{I,-,\delta}\oplus Q_{I,+,\delta}$, 
	this means that for $\delta\in(0,\delta_0)$, the interval $\mathring{I}=(-\Lambda_-,\Lambda_+)$ belongs to the resolvent set of $P_{I,\delta}^\bot H_\Gamma P_{I,\delta}^\bot$ considered as self-adjoint operator in $P_{I,\delta}^\bot\mathcal{H}$ and this finally obeys the conditions of Definition \ref{H-band-red}.
\end{proof}

\begin{remark}\label{R-QHQ-sp-desc}The analysis in the proof of  Proposition \ref{P-H-band-0-field} shows also that for $\delta\in(0,\delta_0)$ we have:
	\begin{enumerate}
		\item The operator $Q_{I,\delta}H_\Gamma Q_{I,\delta}$ has a spectrum composed of three isolated parts:
		$$
		\sigma\big(Q_{I,\delta}H_\Gamma Q_{I,\delta}\big)=S_-\sqcup\{0\}\sqcup S_+:\quad S_-\subset[-E_0,-\Lambda_-],\ S_+\subset[\Lambda_+,+\infty)\,,
		$$
		with $Q_{I,\delta}H_\Gamma Q_{I,\delta}=0$ on $P_{I,\delta}\mathcal{H}$.
		\item The Riesz orthogonal projections associated with the above three spectral components define the orthogonal decomposition $\bb1=Q_{I,-,\delta}\oplus P_{I,\delta}\oplus Q_{I,+,\delta}$\,.
		\item Given any $E\in \mathring{I}$,  the operator $Q_{I,-,\delta}(H_\Gamma-E\bb1) Q_{I,-,\delta}$, as bounded self-adjoint operator in $Q_{I,-,\delta}L^2(\X)\,$,  has an inverse $R_{I,-,\delta}(E) \equiv\Op\big(r_{I,-,\delta}(E)\big)$, and  $Q_{I,+,\delta}(H_\Gamma-E\bb1) Q_{I,+,\delta}$, as self-adjoint operator in $Q_{I,+,\delta}L^2(\X)$ has an inverse $R_{I,+,\delta}(E)\equiv\Op\big(r_{I,+,\delta}(E)\big)\,.$ 
	\end{enumerate}
\end{remark}

\section{The magnetic \enquote{quasi-bands}}\label{S-MLB}

In this section we define the \textit{magnetic version} of our quasi-band projection in Definition \ref{D-Pdelta} and quasi-band Hamiltonian \eqref{FD-qBandHamil}. We use the same ideas as in \cite{CHP-1}, \cite{CHP-2} and \cite{CIP}. As explained in Subsection 3.4 of \cite{CIP} the magnetic quantization of the symbol of a quasi-band projection defined by a family of quasi-Wannier functions may be written as a quasi-band projection defined by a family of \textit{magnetic \enquote{quasi-Wannier functions}}. When the magnetic field is constant these functions are the \textit{Zak magnetic translations} of a given principal Wannier function (see Subsection~8.1 in \cite{CHP-2}). 

Let us emphasize once more that all the arguments and computations are done for a fixed value of the parameter $\delta\in(0,\delta_0)$ with $\delta_0>0$ given by the requirements for \eqref{F-est-gap}. All the statements are true for any value of $\delta>0$ in the given interval but no uniform dependence is assumed. Although the notations keep trace of this dependence on $\delta>0$ no longer reference to this fact will be made in the coming statements.

Given a \textit{magnetic field} $B$ with an associated \textit{vector potential} $A$ the main mathematical objects appearing in the magnetic pseudodifferential calculus (\cite{MP-1}, \cite{IMP-1}, \cite{IMP-2}, \cite{AMP}) are the circulation of the vector potential along an oriented compact interval:
\beq  \label{FD-vect-pot-circ}
\int_{[x,y]}A:=\underset{j=1,2}{\sum}(y_j-x_j)\int_0^1dtA_j\big(x+t(y-x)\big)\,,
\eeq
and the flux of the magnetic field through an oriented triangle:
\beq \label{FD-magn-flux}
\int_{<x,y,z>}B:=\underset{j,k=1,2}{\sum}(y_j-x_j)(z_k-x_k)\int_0^1dt\int_0^tds\,B_{jk}\big(x+t(y-x)+s(z-y)\big)\,.
\eeq 
Important ingredients are  their imaginary exponentials:
\beq\label{eq:1.32}
\begin{array}{ll}
	\Lambda^A(x,y) &:=\exp\Big(-i\int_{[x,y]}A\Big)\,, \\
	\Lambda^{\epsilon,\kappa}(x,y)&:=\exp\Big(-i\int_{[x,y]}A^{\epsilon,\kappa}\Big) =\exp\Big(-i\epsilon\int_{[x,y]}\big(A^\circ+\kappa A\big)\Big),\\
	\Lambda^{\epsilon}(x,y)&:=\Lambda^{\epsilon,0}(x,y) =\exp\Big(-i\,\epsilon\int_{[x,y]}A^{\circ}\Big)\,,
\end{array}
\eeq
and 
\begin{align}\label{FD-Omega}
	\Omega^B(x,y,z) &:=\exp\Big(-i\int_{<x,y,z>}B\Big)\,,\nonumber \\
	\Omega^{\epsilon,\kappa}(x,y,z)&:=\exp\Big(-i\int_{<x,y,z>}B^{\epsilon,\kappa}\Big)=\exp\Big(-i\epsilon\int_{<x,y,z>}\big(B^\circ+\kappa B\big)\Big)\,,\nonumber  \\
	\Omega^\epsilon(x,y,z)&:=\exp\Big(-i\epsilon\int_{<x,y,z>}B^\circ\Big)\,.
\end{align}

By Stokes' Theorem we have that
\beq\label{F-Stokes}
\Omega^B(x,y,z)\ =\ \Lambda^A(x,y)\,\Lambda^A(y,z)\,\Lambda^A(z,x).
\eeq
We shall use the shorthand notation:
\beq\label{shn}
\Op^{\epsilon,\kappa}:=\Op^{A^{\epsilon,\kappa}},\quad\Op^\epsilon=\Op^{A^{\epsilon}}.
\eeq 

\subsection{The magnetic quasi-Wannier functions}
\label{SS-mWfunc}

Considering the magnetic field introduced in  \eqref{Bek} and the "quasi-band" projection defined in Definition \ref{D-Pdelta} we define \textit{the magnetic \enquote{quasi-Wannier functions}} by the procedure elaborated in \cite{CHN} that we used in \cite{CHP-1} and \cite{CHP-2}. In fact this subsection is intended mainly to recall some definitions, notations and results from the Subsections 3.1 and 3.2 in \cite{CHP-1}. First let us remind that for the constant magnetic field $\epsilon B^\circ$, the magnetic "quasi-Wannier functions" are  the Zak magnetic translations considered in \cite{HS} and \cite{Ne-RMP}. 

An important ingredient in the following computations is the form of the exponential function $\Lambda^\epsilon(x,y)$. 
We notice that due to the choice of \textit{transverse gauge} that we made in \eqref{defA0}, we have that
\begin{align}\label{rem1a}
	\int_{[x,y]}A^\circ \; 
	%=(1/2)B^\circ\Big((y_2-x_2)\int_0^1dt\,\big(x_1+t(y_1-x_1)\big)-(y_1-x_1)\int_0^1dt\,\big(x_2+t(y_2-x_2)\big)\Big)\\
	%&=(1/2)B^\circ\Big((y_2-x_2)\big(x_1+(1/2)(y_1-x_1)\big)-(y_1-x_1)\big(x_2+(1/2)(y_2-x_2)\big)\Big) \\
	=(B^\circ/2)x\wedge y.
\end{align}
Thus for any $z\in\X$ we obtain that
\begin{align}\label{rem1b}
	\int_{<x+w,y+w,z+w>}B^\circ
	%=\int_{[x+w,y+w]}A^\circ +\int_{[y+w,z+w]}A^\circ +\int_{[z+w,x+w]}A^\circ  \\
	%&=(B^\circ/2)\big(x\wedge y+y\wedge z+z\wedge x\big)\\ & =\int_{[x,y]}A^\circ +\int_{[y,z]}A^\circ +\int_{[z,x]}A^\circ
	=\int_{<x,y,z>}B^\circ,\quad\forall w\in\X.
\end{align}

\begin{definition}\label{D-P-epsilon}
	With $\Lambda^{\epsilon}$ defined in \eqref{eq:1.32} and using \eqref{DF-Psi-gamma}, we define the magnetic \enquote{quasi} Wannier functions:
$$
	\mathring{\phi}^{\epsilon}_{\pm,\gamma,\delta}=\Lambda^{\epsilon}(x,
	\gamma)\Psi_{\pm,\gamma,\delta}(x)\,,
	$$
	and the magnetic quasi-band as being the closed linear span of
	$\{\mathring{\phi}^{\epsilon}_{-,\gamma,\delta},\mathring{\phi}^{\epsilon}_{+,\gamma,\delta}\}_{\gamma\in\Gamma}\,$ with associated orthogonal projection $P_{I,\delta}^{\epsilon}$; let $p^\epsilon_{I,\delta}\in\mathscr{S}^\prime(\Xi)$ be its magnetic symbol, i.e. $\Op^\epsilon\big(p^\epsilon_{I,\delta}\big)=P^\epsilon_{I,\delta}$. We also denote by $Q^\epsilon_{I,\delta}:=\bb1-P^\epsilon_{I,\delta}$.
\end{definition}
Let us recall some properties of the \textit{Zak magnetic translations} in a constant magnetic field (see \cite{HS}, \cite{Ne-RMP}, \cite{CHN} and  Proposition 8.1 in \cite{CHP-2}).
\begin{proposition}\label{P-Zak-magn-transl} 
	The family of unitary operators
	$\big\{
	\mathcal{T}^\epsilon_\gamma:=\Lambda^\epsilon(\cdot,\gamma)\,\tau_{\gamma}\big\}_{\gamma\in\Gamma}$
	satisfies the following properties:
	\begin{enumerate}
		\item $\mathcal{T}^\epsilon_\alpha\mathcal{T}^\epsilon_\beta=
		\Lambda^\epsilon(\beta,\alpha)\mathcal{T}^\epsilon_{\alpha+\beta}$\,.
		\item The tempered distribution $F\in\mathscr{S}^\prime(\Xi)$ is $\Gamma$-periodic with respect to the variable in $\X$ if and only if the following commutation relations hold true for any $\gamma\in\Gamma$: $\Op^\epsilon(F)\mathcal{T}^\epsilon_\gamma=\mathcal{T}^\epsilon_\gamma\Op^\epsilon(F)$.
	\end{enumerate}
\end{proposition}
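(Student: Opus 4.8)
The plan is to establish the two assertions in turn, both reduced to the explicit form of the phase $\Lambda^\epsilon$ dictated by the transverse gauge.

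\textbf{Property (1).} First I would let both sides act on a test function $\phi\in\mathscr{S}(\X)$. Since $\tau_\gamma\phi=\phi(\cdot-\gamma)$, a one-line computation gives
\[
\big(\mathcal{T}^\epsilon_\alpha\mathcal{T}^\epsilon_\beta\phi\big)(x)=\Lambda^\epsilon(x,\alpha)\,\Lambda^\epsilon(x-\alpha,\beta)\,\phi(x-\alpha-\beta),\qquad
\big(\mathcal{T}^\epsilon_{\alpha+\beta}\phi\big)(x)=\Lambda^\epsilon(x,\alpha+\beta)\,\phi(x-\alpha-\beta),
\]
so it is enough to verify the pointwise identity $\Lambda^\epsilon(x,\alpha)\,\Lambda^\epsilon(x-\alpha,\beta)=\Lambda^\epsilon(\beta,\alpha)\,\Lambda^\epsilon(x,\alpha+\beta)$. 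By \eqref{rem1a} we have $\int_{[x,y]}A^\circ=(B^\circ/2)\,x\wedge y$, so both sides are exponentials of a bilinear expression in $x,\alpha,\beta$, and the identity is exactly the antisymmetry and bilinearity of the wedge product (the $2$-cocycle relation for the symplectic phase); alternatively it follows from Stokes' formula \eqref{F-Stokes} combined with the translation invariance \eqref{rem1b}. Since each $\mathcal{T}^\epsilon_\gamma$ is unitary on $L^2(\X)$ and preserves $\mathscr{S}(\X)$, the pointwise identity upgrades to the asserted operator identity.

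\textbf{Property (2).} The core is the covariance relation
\[
\mathcal{T}^\epsilon_\gamma\,\Op^\epsilon(F)\,\big(\mathcal{T}^\epsilon_\gamma\big)^{-1}=\Op^\epsilon(\tau_\gamma F),\qquad (\tau_\gamma F)(x,\xi):=F(x-\gamma,\xi),
\]
valid for all $F\in\mathscr{S}^\prime(\Xi)$ and all $\gamma\in\Gamma$. Granting this, the equivalence is immediate: $\Op^\epsilon(F)\mathcal{T}^\epsilon_\gamma=\mathcal{T}^\epsilon_\gamma\Op^\epsilon(F)$ for all $\gamma\in\Gamma$ means $\Op^\epsilon(\tau_\gamma F)=\Op^\epsilon(F)$ for all $\gamma\in\Gamma$, and since the magnetic quantization $\Op^\epsilon$ differs from the ordinary Weyl quantization only by the nowhere-vanishing factor $\Lambda^\epsilon$, it is injective on $\mathscr{S}^\prime(\Xi)$; hence this forces $\tau_\gamma F=F$ for all $\gamma\in\Gamma$, i.e. the $\Gamma$-periodicity of $F$ in the $\X$-variable, and conversely. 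To obtain the covariance relation I would note that plain translation conjugates $\Op^{A}(F)$ into $\Op^{A(\cdot-\gamma)}(\tau_\gamma F)$ (a change of variables in \eqref{D-MPsiDO}), and that because $B^\circ$ is constant the translated potential satisfies $A^\circ(x-\gamma)=A^\circ(x)-\nabla_x\phi_\gamma(x)$ with the linear function $\phi_\gamma(x):=(B^\circ/2)\,\gamma\wedge x$; by gauge covariance of the magnetic Weyl calculus (\cite{MP-1}, \cite{IMP-1}), passing from the potential $A^\circ$ to $A^\circ(\cdot-\gamma)$ is implemented by conjugation with the multiplication operator $e^{i\epsilon\phi_\gamma(\cdot)}$, which by \eqref{rem1a} equals exactly $\Lambda^\epsilon(\cdot,\gamma)$. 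Composing the two steps yields the displayed identity.

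I expect the only delicate point to be the rigorous justification of these conjugation formulas at the level of general tempered-distribution symbols: one should first check them for $F\in\mathscr{S}(\Xi)$ directly on the oscillatory integral \eqref{D-MPsiDO}, and then extend by density and duality using the continuity properties of $\Op^\epsilon$ and of the magnetic Moyal calculus recalled in \cite{MP-1}, \cite{IMP-1}, \cite{IMP-2} and appendix B of \cite{CHP-1}. Everything else is bookkeeping with the explicit transverse-gauge phase.
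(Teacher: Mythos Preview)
The paper does not supply its own proof of this proposition: it is stated as a recall of known properties of Zak magnetic translations, with pointers to \cite{HS}, \cite{Ne-RMP}, \cite{CHN} and Proposition~8.1 in \cite{CHP-2}. Your argument is correct and is precisely the standard one found in those references: a direct phase computation using \eqref{rem1a} for the composition law, and the combination of translation covariance plus gauge covariance of the magnetic Weyl calculus for the commutation criterion. There is nothing to compare.
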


\begin{remark}
Let us consider the family of scalar products 
$\mathbb{G}^{\epsilon,\delta}_{(j,\alpha),(k,\beta)}\,:=\,
\langle\mathring{\phi}^{\epsilon}_{j,\alpha,\delta},
\mathring{\phi}^{\epsilon}_{k,\beta,\delta}\rangle_{L^2(\X)}$
indexed by the set of indices $\big(\{-,+\}\times\Gamma\big)\times\big(\{-,+\}\times\Gamma
\big)\,$. We notice that
 \begin{align*}
\langle\mathring{\phi}^{\epsilon}_{j,\alpha,\delta},
\mathring{\phi}^{\epsilon}_{k,\beta,\delta}\rangle_{L^2(\X)}&=\langle\tau_\alpha
\Psi_{j,\delta},\Lambda^{\epsilon}(\alpha,\cdot) \Lambda^{\epsilon}(\cdot,\beta)\tau_\beta\Psi_{k,\delta}\rangle_{L^2(\X)}\nonumber \\ &=
\Lambda^{\epsilon}(\alpha,\beta)\langle\tau_\alpha
\Psi_{j,\delta},\Omega^{\epsilon}(\beta,\alpha,\cdot)\tau_\beta\Psi_{k,\delta}\rangle_{L^2(\X)}.
\end{align*} 
\end{remark}

Due to the rapid decay of $\Psi_{\pm,\delta}$, we obtain that (see the construction of the Wannier functions through the magnetic translations in \cite{CHN} (Lemmas~3.1 and 3.2) and Lemma 3.15 in \cite{CIP}):
\begin{proposition}\label{P-Gepsilon}
	The matrix $\mathbb{G}^{\epsilon,\delta}$ defines a positive bounded
	operator on $\ell^2(\Gamma)\otimes\mathbb{C}^2$ and  
	$$
	\mathbb{G}^{\epsilon,\delta}_{(j,\alpha),(k,\alpha)}=\delta_{jk}\,,\quad\forall\alpha\in\Gamma\,,
	$$
	Moreover, for any $m\in\mathbb{N}$, there exists $C_m >0$ such that for any $(j,k)\in\{-,+\}\times\{-,+\}$:
	$$
	\underset{(\alpha,\beta)\in\Gamma\times \Gamma}{\sup}\
	<\alpha-\beta>^m\big| \mathbb{G}^{\epsilon,\delta}_{(j,\alpha),(k,\beta)}\,-\,\delta_{\alpha,\beta}\delta_{jk}\big| \,\leq\,
	C_m\, \epsilon\quad\forall\epsilon>0.
	$$ 
\end{proposition}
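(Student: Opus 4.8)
The plan is to show that $\mathbb{G}^{\epsilon,\delta}$ is a small perturbation of the identity, both entry by entry and in the operator norm. I start from the identity recorded just above the statement,
\[
\mathbb{G}^{\epsilon,\delta}_{(j,\alpha),(k,\beta)}\,=\,\Lambda^{\epsilon}(\alpha,\beta)\,\langle\tau_\alpha\Psi_{j,\delta}\,,\,\Omega^{\epsilon}(\beta,\alpha,\cdot)\,\tau_\beta\Psi_{k,\delta}\rangle_{L^2(\X)}\,.
\]
For $\alpha=\beta$ the interval $[\alpha,\alpha]$ and the triangle $<\alpha,\alpha,x>$ are degenerate, so $\Lambda^{\epsilon}(\alpha,\alpha)=1$ and $\Omega^{\epsilon}(\alpha,\alpha,\cdot)\equiv1$; together with the orthonormality of the non-magnetic family $\{\tau_\gamma\Psi_{\pm,\delta}\}_{\gamma\in\Gamma}$ established above, this gives $\mathbb{G}^{\epsilon,\delta}_{(j,\alpha),(k,\alpha)}=\langle\Psi_{j,\delta},\Psi_{k,\delta}\rangle_{L^2(\X)}=\delta_{jk}$, which is the second assertion.

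For the off-diagonal estimate I would write $\Omega^{\epsilon}(\beta,\alpha,\cdot)=1+\big(\Omega^{\epsilon}(\beta,\alpha,\cdot)-1\big)$. The contribution of the ``$1$'' is $\Lambda^{\epsilon}(\alpha,\beta)\langle\tau_\alpha\Psi_{j,\delta},\tau_\beta\Psi_{k,\delta}\rangle_{L^2(\X)}=\Lambda^{\epsilon}(\alpha,\beta)\,\delta_{jk}\delta_{\alpha\beta}=\delta_{jk}\delta_{\alpha\beta}$, precisely the term subtracted in the statement. It remains to bound
\[
R^{\epsilon,\delta}_{(j,\alpha),(k,\beta)}:=\Lambda^{\epsilon}(\alpha,\beta)\,\langle\tau_\alpha\Psi_{j,\delta}\,,\,\big(\Omega^{\epsilon}(\beta,\alpha,\cdot)-1\big)\,\tau_\beta\Psi_{k,\delta}\rangle_{L^2(\X)}\,.
\]
Here I would recenter the flux at $\beta$ using the translation invariance \eqref{rem1b} of the constant field: with $\gamma:=\alpha-\beta$ and the substitution $x=\beta+y$ one has $\int_{<\beta,\alpha,\beta+y>}B^\circ=\int_{<0,\gamma,y>}B^\circ=(B^\circ/2)\,\gamma\wedge y$ (for a constant field the flux through a triangle is half its signed area, cf. \eqref{FD-magn-flux} and \eqref{rem1a}), hence $|\Omega^{\epsilon}(\beta,\alpha,\beta+y)-1|\le\tfrac12|B^\circ|\,\epsilon\,|\gamma\wedge y|\le\tfrac12|B^\circ|\,\epsilon\,|\gamma|\,|y|$. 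After the change of variables this yields
\[
\big|R^{\epsilon,\delta}_{(j,\alpha),(k,\beta)}\big|\ \le\ \tfrac12|B^\circ|\,\epsilon\,|\gamma|\int_{\X}|\Psi_{j,\delta}(y-\gamma)|\,|y|\,|\Psi_{k,\delta}(y)|\,dy\,.
\]
Since $\Psi_{\pm,\delta}\in\mathscr{S}(\X)$ (Remark \ref{R-smoothness-global-frame}), for each $m$ the functions $z\mapsto<z>^{m+1}\Psi_{j,\delta}(z)$ and $z\mapsto<z>^{m+1}|z|\,\Psi_{k,\delta}(z)$ are bounded and square-integrable; inserting $|\gamma|\,<\gamma>^m\le<\gamma>^{m+1}\le 2^{(m+1)/2}<y-\gamma>^{m+1}<y>^{m+1}$ (Peetre's inequality) into the last integral and applying the Cauchy–Schwarz inequality gives $<\alpha-\beta>^m\,\big|R^{\epsilon,\delta}_{(j,\alpha),(k,\beta)}\big|\le C_m\,\epsilon$ for $0<\epsilon\le1$; the same bound for $\epsilon\ge1$ follows (without the factor $\epsilon$) from the cruder estimate $|\Omega^\epsilon-1|\le2$ and the rapid decay of $\Psi_{\pm,\delta}$, so the announced estimate holds for all $\epsilon>0$.

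Finally, $\mathbb{G}^{\epsilon,\delta}$ is by definition the Gram matrix of the family $\{\mathring{\phi}^{\epsilon}_{\pm,\gamma,\delta}\}_{\gamma\in\Gamma}$, hence a nonnegative quadratic form on $\ell^2(\Gamma)\otimes\mathbb{C}^2$, i.e. a positive operator; and writing $\mathbb{G}^{\epsilon,\delta}=\bb1+R^{\epsilon,\delta}$, the estimate just proved (with any fixed $m>2$, so that $\sum_{\gamma\in\Gamma}<\gamma>^{-m}<\infty$) together with the Schur test shows that $\|R^{\epsilon,\delta}\|_{\mathbb{B}(\ell^2(\Gamma)\otimes\mathbb{C}^2)}\le C$ uniformly in $\epsilon$, so $\mathbb{G}^{\epsilon,\delta}$ is bounded. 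I expect the only genuinely delicate point to be the off-diagonal estimate: one must make sure that the factor $|y|$ coming from the area of the triangle $<0,\gamma,y>$ is absorbed by the rapid decay of the two Schwartz factors while still leaving a factor $<\alpha-\beta>^{-m}$ to spare — which is exactly what the combination ``area bound $+$ Peetre $+$ Cauchy–Schwarz'' provides, and which is also where it is essential that the field entering $\Omega^\epsilon$ is the constant field $\epsilon B^\circ$ (so that \eqref{rem1b} applies and the flux is bilinear in $\gamma$ and $y$). The remaining steps are routine and reproduce the arguments used for the analogous statements in \cite{CHN} and \cite{CIP}.
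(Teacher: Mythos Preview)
Your argument is correct and is precisely the standard one the paper defers to via the references \cite{CHN} (Lemmas 3.1--3.2) and \cite{CIP} (Lemma 3.15): split off the ``$1$'' from $\Omega^\epsilon$ to recover the non-magnetic Gram matrix, bound $|\Omega^\epsilon-1|$ by $\epsilon$ times the triangle area using the constant-field translation invariance \eqref{rem1b}, and absorb the polynomial factors with Peetre's inequality and the Schwartz decay of $\Psi_{\pm,\delta}$. The positivity (Gram matrix) and boundedness (Schur test) observations are also exactly what is needed.
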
\label{P-prop-G-epsilon}
\begin{definition}\label{D-F-epsilon}
For some small enough $\epsilon_0>0$ (in order to have invertibility) and for any $\epsilon \in [0,\epsilon_0]$, we can define $\mathbb{F}^{\epsilon,\delta}\,:=\big(\mathbb{G}^
{\epsilon,\delta}\big)^{-1/2}$ and the magnetic 'quasi-Wannier' functions:
\beq\label{def-q-Wannier}
\phi^{\epsilon}_{k,\gamma,\delta}\,:=\,\underset{\alpha\in\Gamma}{\sum}\ 
\underset{j\in\{-,+\}}{\sum}\, \mathbb{F
}^{\epsilon,\delta}_{(j,\alpha),(k,\gamma)}\, \mathring{\phi}^{\epsilon}_{j,\alpha,\delta}\,\in\,\mathscr{S}(\X)\,,\quad   \forall (k,\gamma) \in \{-,+\}\times\Gamma \,.
\eeq
\end{definition}
\begin{remark}\label{R-dep-eps0-delta}
We emphasize that the value of $\epsilon_0>0$ depends on our choice for $\delta\in(0,\delta_0)$. As we shall not vary this fixed value of $\delta\in(0,\delta_0)$ we shall not need any control on this dependence.
\end{remark}
The magnetic 'quasi-Wannier' functions $\big\{\phi^{\epsilon}_{k,\gamma,\delta}\,,\,k=\pm,\,\gamma\in\Gamma\big\}$ form an orthonormal basis of $P^{\epsilon}_{I,\delta} \, L^2(\X)$.

In order to compare with the results in \cite{CHP-2} we shall consider $\mathbb{G}^{\epsilon}$ and $\mathbb{F}^{\epsilon}$ as infinite matrices indexed by $\Gamma\times\Gamma$ and having entries $2\times2$ complex matrices.

\begin{proposition}\label{P-prop-F-epsilon}
	For $\epsilon\in[0,\epsilon_0]$ with $\epsilon_0>0$ fixed in Definition \ref{D-F-epsilon}, $\mathbb{F}^{\epsilon,\delta}$ has the following properties (see the construction of the Wannier functions through the magnetic translations in \cite{CHN} (cf Lemmas~3.1 and 3.2)):
	\begin{enumerate}
		\item  $\mathbb{F}^{\epsilon,\delta}\in\,\mathbb B \big(\ell^2(\Gamma)\otimes\mathbb{C}^2\big) \cap \mathbb B \big(\ell^\infty(\Gamma)\otimes\mathbb{C}^2\big) \,$.
		\item For any $m\in\mathbb{N}$, there exists $C_m >0$ such that
		\beq\label{decay-F}
		\underset{(\alpha,\beta)\in\Gamma\times \Gamma}{\sup}\
		<\alpha-\beta>^m\big| \mathbb{F}^{\epsilon,\delta}_{(j,\alpha),(k,\beta)}\,-\,\delta_{\alpha,\beta}\delta_{jk}\big| \,\leq\,
		C_m\, \epsilon\,.
		\eeq
	\end{enumerate}
\end{proposition}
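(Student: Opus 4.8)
The plan is to follow the scheme of \cite{CHN} (reproduced in Subsection 3.1 of \cite{CHP-1} and Subsection 8.1 of \cite{CHP-2}): regard $\mathbb{G}^{\epsilon,\delta}$ as a small, rapidly decaying perturbation of the identity inside a $*$-subalgebra of $\mathbb B\big(\ell^2(\Gamma)\otimes\mathbb{C}^2\big)$ which is stable under holomorphic (in particular square‑root) functional calculus. Write $\mathbb{G}^{\epsilon,\delta}=\bb1+\mathbb{R}^{\epsilon,\delta}$. By Proposition \ref{P-Gepsilon} the block matrix $\mathbb{R}^{\epsilon,\delta}$ has vanishing diagonal blocks and, for every $m\in\mathbb{N}$, satisfies $\sup_{\alpha,\beta}\langle\alpha-\beta\rangle^m\big\|\mathbb{R}^{\epsilon,\delta}_{(\cdot,\alpha),(\cdot,\beta)}\big\|_{\mathcal{M}_{2\times2}(\mathbb{C})}\le C_m\,\epsilon$ for all $\epsilon>0$; a Schur test then gives in particular $\|\mathbb{R}^{\epsilon,\delta}\|_{\mathbb B(\ell^2(\Gamma)\otimes\mathbb{C}^2)}\le C\,\epsilon$.

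The first step is to set up the algebra. For a $\Gamma\times\Gamma$ block matrix $M$ put $|M|_m:=\sup_{(\alpha,\beta)\in\Gamma\times\Gamma}\langle\alpha-\beta\rangle^m\|M_{\alpha\beta}\|_{\mathcal{M}_{2\times2}(\mathbb{C})}$ and let $\mathscr{A}$ be the space of $M$ with $|M|_m<\infty$ for every $m$. Fixing $m_0>\dim\Gamma=2$, the elementary inequality $\langle\alpha-\beta\rangle\le\sqrt2\,\langle\alpha-\gamma\rangle\langle\gamma-\beta\rangle$ together with the summability of $\gamma\mapsto\langle\alpha-\gamma\rangle^{-m_0}\langle\gamma-\beta\rangle^{-m_0}$ yields, for every $m$, a bound $|MN|_m\le\kappa_m\,|M|_{m+m_0}\,|N|_{m+m_0}$. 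Hence $\mathscr{A}$ is a $*$-algebra and $\mathscr{A}\subset\mathbb B\big(\ell^2(\Gamma)\otimes\mathbb{C}^2\big)\cap\mathbb B\big(\ell^\infty(\Gamma)\otimes\mathbb{C}^2\big)$ ($\ell^2$ by the Schur test, $\ell^\infty$ because the rows are $\ell^1$-summable). The key input — this is exactly Lemmas 3.1 and 3.2 of \cite{CHN}, a Jaffard‑type spectral‑invariance statement — is that $\mathscr{A}$ is inverse closed in $\mathbb B\big(\ell^2(\Gamma)\otimes\mathbb{C}^2\big)$, with the quantitative feature that for $M\in\mathscr{A}$ with $\|M\|_{\mathbb B(\ell^2)}\le\rho<1$ one has $|M^n|_m\le \Lambda_m\,\rho^{\,n}$ for every $m$, with $\Lambda_m$ depending only on $m$ and on finitely many seminorms $|M|_{m'}$.

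Now I would exploit the smallness in $\epsilon$. For $\epsilon\le\epsilon_0$ small enough (the $\epsilon_0$ of Definition \ref{D-F-epsilon}) we have $\|\mathbb{R}^{\epsilon,\delta}\|_{\mathbb B(\ell^2)}<1/2$, so $\mathbb{G}^{\epsilon,\delta}$ is invertible, $\sigma(\mathbb{G}^{\epsilon,\delta})\subset(1/2,3/2)$, the principal square root is holomorphic near this interval and $\mathbb{F}^{\epsilon,\delta}=(\mathbb{G}^{\epsilon,\delta})^{-1/2}$ is a well-defined positive operator with $\|\mathbb{F}^{\epsilon,\delta}\|_{\mathbb B(\ell^2)}\le\sqrt2$. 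For the decay I would use the binomial series $\mathbb{F}^{\epsilon,\delta}-\bb1=\sum_{n\ge1}\binom{-1/2}{n}\,(\mathbb{R}^{\epsilon,\delta})^n$, which converges in $\mathbb B(\ell^2)$ since $\|\mathbb{R}^{\epsilon,\delta}\|<1$ and $\big|\binom{-1/2}{n}\big|\le1$. For each fixed $m$, the estimate above with $\rho$ of order $\epsilon$ gives $\big|(\mathbb{R}^{\epsilon,\delta})^n\big|_m\le\Lambda_m\,(C\epsilon)^n$, whence $\big|\mathbb{F}^{\epsilon,\delta}-\bb1\big|_m\le\Lambda_m\sum_{n\ge1}(C\epsilon)^n\le C_m\,\epsilon$ for $\epsilon$ small — this is \eqref{decay-F}, statement (2). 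Statement (1) then follows at once: $\mathbb{F}^{\epsilon,\delta}-\bb1\in\mathscr{A}\subset\mathbb B(\ell^2(\Gamma)\otimes\mathbb{C}^2)\cap\mathbb B(\ell^\infty(\Gamma)\otimes\mathbb{C}^2)$ (equivalently, one may represent $\mathbb{F}^{\epsilon,\delta}=\frac1{2\pi i}\oint_{\mathscr{C}}z^{-1/2}(z\bb1-\mathbb{G}^{\epsilon,\delta})^{-1}\,dz$ over a contour $\mathscr{C}$ encircling $[1/2,3/2]$ and avoiding the cut, using inverse closedness to keep the resolvent in $\mathscr{A}$).

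The main obstacle is precisely the bookkeeping in the last step. One cannot sum the binomial series in the Fréchet topology of $\mathscr{A}$ once and for all, since the constants $C_m$ in Proposition \ref{P-Gepsilon} may grow arbitrarily fast in $m$; and for a fixed $m$ the naive iteration $\big|(\mathbb{R}^{\epsilon,\delta})^n\big|_m\le\kappa_m^{\,n-1}\,|\mathbb{R}^{\epsilon,\delta}|_{m+(n-1)m_0}^{\,n}$ has far too poor an $n$‑dependence (through $C_{m+(n-1)m_0}$) to be summed for any fixed $\epsilon>0$. This is exactly where the finer, Jaffard‑type estimate of \cite{CHN} — growth $\rho^{\,n}$ with $\rho<1$ rather than $(C_{m+nm_0}\epsilon)^n$ — is indispensable. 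Once that lemma is in hand, the remaining verifications are routine and run in parallel with Subsection 3.1 of \cite{CHP-1}.
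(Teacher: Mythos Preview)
Your proposal is correct and follows exactly the route the paper itself indicates: the paper does not give an independent proof of this proposition but simply refers to Lemmas~3.1 and~3.2 of \cite{CHN} (and to Subsection~3.1 of \cite{CHP-1}), which is precisely the Jaffard-type spectral-invariance / rapidly-decaying-matrix-algebra argument you sketch. Your identification of the key technical point --- that the naive iteration of seminorms blows up and one needs the finer $\rho^n$-type control from \cite{CHN} --- is exactly the content of those lemmas.
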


Moreover, due to Proposition \ref{P-Zak-magn-transl}, we have the following result (see Subsection 3.2  in \cite{CHN}):
\begin{proposition}\label{P-magn-W-function}~
	For $\epsilon\in[0,\epsilon_0]$ with $\epsilon_0>0$ fixed in Definition \ref{D-F-epsilon}, we have that
	\begin{enumerate}
		\item There exists a rapidly decaying function
		$\mathbf{F}^\epsilon_\delta :\Gamma\rightarrow\mathbb{B}\big(\mathbb{C}^2\big)$ such that for any pair
		$(\alpha,\beta)\in\Gamma\times\Gamma$ we have:
		$$
		\mathbb{F}^{\epsilon,\delta}_{\alpha,\beta}=  \Lambda^\epsilon(\alpha,\beta)\,
		\mathbf{F}^\epsilon_\delta(\alpha-\beta)\,.
		$$
		\item With $\psi^\epsilon_{\pm,0,\delta}\in\mathscr S(\mathbb R^2)$defined by
		\beq\label{F-psi-e0}
		\psi^\epsilon_{\pm,0,\delta}(x)=\underset{\alpha\in\Gamma}{\sum}\underset{j=\pm}{\sum}\mathbf{F}^\epsilon_\delta (\alpha)_{j\pm}\,
		\Omega^\epsilon(\alpha,0,x)\, \Psi_{j,\alpha,\delta}(x)\,,
		\eeq
		we have 
		\beq\label{F-psi-epsilon-0}
		\phi^\epsilon_{j,\gamma,\delta}\,=\,\Lambda^\epsilon(\cdot ,\gamma)(\tau_{\gamma}
		\psi^\epsilon_{j,0,\delta})\,,\qquad\forall\gamma\in\Gamma\,.
		\eeq
		\item 
		For  any $m\in\mathbb{N}\,$,
		$\alpha\in\mathbb{N}^2$, there exists  $C_{m,\alpha} >0$ such that 
		$$
		<x>^m\left|[\partial_x^\alpha(\psi^\epsilon_{\pm,0,\delta}-\Psi_{\pm,0,\delta})](x)\right|\leq\,C_{m,\alpha}\, \epsilon\,,\quad\forall x\in \mathbb R^2.
		$$
	\end{enumerate}
\end{proposition}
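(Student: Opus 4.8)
The plan is to deduce all three assertions from the magnetic‑translation covariance recorded in Proposition \ref{P-Zak-magn-transl}, following the Gram‑matrix construction of \cite{CHN} already exploited in Subsection 8.1 of \cite{CHP-2}, and transcribing it to the present $2\times2$‑matrix setting. The starting point is the identity $\mathring{\phi}^{\epsilon}_{j,\gamma,\delta}=\mathcal{T}^\epsilon_\gamma\Psi_{j,\delta}$, which is immediate from Definition \ref{D-P-epsilon} and \eqref{DF-Psi-gamma} with $\mathcal{T}^\epsilon_\gamma=\Lambda^\epsilon(\cdot,\gamma)\tau_\gamma$. Since $\Lambda^\epsilon(x,y)=\exp\big(-i\epsilon(B^\circ/2)\,x\wedge y\big)$ by \eqref{rem1a}, the $\mathcal{T}^\epsilon_\gamma$ are unitary, $\mathcal{T}^\epsilon_0=\bb1$, $(\mathcal{T}^\epsilon_\gamma)^*=\mathcal{T}^\epsilon_{-\gamma}$, and the cocycle $\Lambda^\epsilon$ obeys the elementary identities $\Lambda^\epsilon(\beta,-\alpha)=\Lambda^\epsilon(\alpha,\beta)$ and $\Lambda^\epsilon(\alpha,\beta)\Lambda^\epsilon(\beta,\gamma)\overline{\Lambda^\epsilon(\alpha,\gamma)}=\exp\big(-i\epsilon(B^\circ/2)(\alpha-\beta)\wedge(\beta-\gamma)\big)$, i.e. the phase discrepancy of the $2$‑cocycle depends only on the differences of the indices. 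These are the only facts about the magnetic geometry that will be used.

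For part (1), the plan is to write $\mathbb{G}^{\epsilon,\delta}_{(j,\alpha),(k,\beta)}=\langle\Psi_{j,\delta},(\mathcal{T}^\epsilon_\alpha)^*\mathcal{T}^\epsilon_\beta\Psi_{k,\delta}\rangle$ and to apply Proposition \ref{P-Zak-magn-transl}(1) to get $(\mathcal{T}^\epsilon_\alpha)^*\mathcal{T}^\epsilon_\beta=\mathcal{T}^\epsilon_{-\alpha}\mathcal{T}^\epsilon_\beta=\Lambda^\epsilon(\alpha,\beta)\,\mathcal{T}^\epsilon_{\beta-\alpha}$, whence $\mathbb{G}^{\epsilon,\delta}_{(j,\alpha),(k,\beta)}=\Lambda^\epsilon(\alpha,\beta)\,\mathbf{G}^\epsilon_\delta(\alpha-\beta)_{jk}$ with $\mathbf{G}^\epsilon_\delta(\gamma)_{jk}:=\langle\Psi_{j,\delta},\mathcal{T}^\epsilon_{-\gamma}\Psi_{k,\delta}\rangle$; Proposition \ref{P-Gepsilon} then shows that $\mathbf{G}^\epsilon_\delta-\delta_{\cdot,0}\,\bb1_2$ is rapidly decaying and $O(\epsilon)$. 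The cocycle identity above shows that matrices of the covariant form $\big(\Lambda^\epsilon(\alpha,\beta)\,m(\alpha-\beta)\big)_{\alpha,\beta}$ with $m-\delta_{\cdot,0}\bb1_2$ rapidly decaying form a Fréchet $*$‑subalgebra of $\mathbb{B}\big(\ell^2(\Gamma)\otimes\mathbb{C}^2\big)$ which is stable under holomorphic functional calculus — this is the standard magnetic‑matrix algebra argument of \cite{CHN}. Since $\mathbb{G}^{\epsilon,\delta}=\bb1+O(\epsilon)$ is positive and invertible for $\epsilon$ small, representing $\mathbb{F}^{\epsilon,\delta}=(\mathbb{G}^{\epsilon,\delta})^{-1/2}$ by a Dunford integral of $z\mapsto z^{-1/2}$ around its (small) spectrum places $\mathbb{F}^{\epsilon,\delta}$ in the same algebra, giving $\mathbb{F}^{\epsilon,\delta}_{\alpha,\beta}=\Lambda^\epsilon(\alpha,\beta)\,\mathbf{F}^\epsilon_\delta(\alpha-\beta)$; the rapid decay and the $O(\epsilon)$ bound \eqref{decay-F} for $\mathbf{F}^\epsilon_\delta-\delta_{\cdot,0}\bb1_2$ are then exactly Proposition \ref{P-prop-F-epsilon}.

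For part (2), I would substitute the covariance of $\mathbb{F}^{\epsilon,\delta}$ and $\mathring{\phi}^\epsilon_{j,\alpha,\delta}=\mathcal{T}^\epsilon_\alpha\Psi_{j,\delta}$ into \eqref{def-q-Wannier}, and use Proposition \ref{P-Zak-magn-transl}(1) once more in the form $\mathcal{T}^\epsilon_\alpha=\overline{\Lambda^\epsilon(\alpha-\gamma,\gamma)}\,\mathcal{T}^\epsilon_\gamma\mathcal{T}^\epsilon_{\alpha-\gamma}$. A short phase computation shows $\Lambda^\epsilon(\alpha,\gamma)\,\overline{\Lambda^\epsilon(\alpha-\gamma,\gamma)}=1$, so after the change of summation index $\mu=\alpha-\gamma$ one obtains $\phi^\epsilon_{k,\gamma,\delta}=\mathcal{T}^\epsilon_\gamma\big(\sum_{\mu,j}\mathbf{F}^\epsilon_\delta(\mu)_{jk}\,\mathcal{T}^\epsilon_\mu\Psi_{j,\delta}\big)$. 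Finally $\mathcal{T}^\epsilon_\mu\Psi_{j,\delta}(x)=\Lambda^\epsilon(x,\mu)\,\Psi_{j,\mu,\delta}(x)$, and Stokes' formula \eqref{F-Stokes} together with the transverse gauge gives $\Lambda^\epsilon(x,\mu)=\Omega^\epsilon(\mu,0,x)$ (both equal $\exp\big(-i\epsilon(B^\circ/2)\,x\wedge\mu\big)$); this identifies the inner sum with $\psi^\epsilon_{k,0,\delta}$ of \eqref{F-psi-e0}, and, recalling $\mathcal{T}^\epsilon_\gamma=\Lambda^\epsilon(\cdot,\gamma)\tau_\gamma$, yields \eqref{F-psi-epsilon-0}.

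For part (3): at $\epsilon=0$ the functions $\{\Psi_{j,\gamma,\delta}\}$ are already orthonormal (as verified before Definition \ref{D-Pdelta}), so $\mathbb{G}^{0,\delta}=\bb1$, $\mathbf{F}^0_\delta(\alpha)_{jk}=\delta_{\alpha,0}\delta_{jk}$ and $\psi^0_{\pm,0,\delta}=\Psi_{\pm,0,\delta}$. I would then write $\psi^\epsilon_{\pm,0,\delta}-\Psi_{\pm,0,\delta}=\sum_{\alpha,j}\big(\mathbf{F}^\epsilon_\delta(\alpha)_{j\pm}\Omega^\epsilon(\alpha,0,x)-\delta_{\alpha,0}\delta_{j\pm}\big)\Psi_{j,\alpha,\delta}(x)$, split off the term $\alpha=0$ (where $\Omega^\epsilon(0,0,\cdot)\equiv1$ and the bracket is $\mathbf{F}^\epsilon_\delta(0)_{j\pm}-\delta_{j\pm}=O(\epsilon)$ by \eqref{decay-F}, times the fixed Schwartz function $\Psi_{j,0,\delta}$), and for $\alpha\neq0$ use $|\mathbf{F}^\epsilon_\delta(\alpha)_{j\pm}|\le C_m\langle\alpha\rangle^{-m}\epsilon$, the fact that $\Omega^\epsilon(\alpha,0,x)=\exp\big(i\epsilon(B^\circ/2)\,\alpha\wedge x\big)$ is unimodular with $x$‑derivatives controlled by powers of $\epsilon\langle\alpha\rangle$, that $\Psi_{j,\delta}(x-\alpha)$ is Schwartz uniformly in $\alpha$, and Peetre's inequality $\langle x\rangle^m\le C_m\langle x-\alpha\rangle^m\langle\alpha\rangle^m$ to sum absolutely over $\Gamma$ and bound every weighted seminorm by $C_{m,\beta}\epsilon$. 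I expect the only genuinely delicate point to be inside part (1): the spectral‑invariance claim that $(\mathbb{G}^{\epsilon,\delta})^{-1/2}$ remains in the covariant rapidly‑decaying matrix algebra, together with keeping the bookkeeping of the phases $\Lambda^\epsilon,\Omega^\epsilon$ consistent; both, however, are carried out in \cite{CHN} and mirror Subsection 8.1 of \cite{CHP-2}, so the remaining work is essentially transcription.
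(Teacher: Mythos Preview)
Your proposal is correct and follows essentially the same approach as the paper, which does not spell out a proof but simply cites Subsection~3.2 of \cite{CHN} together with Proposition~\ref{P-Zak-magn-transl}; your argument is precisely the detailed transcription of that construction to the $2\times2$ setting, with the magnetic-translation covariance giving the factorized form of $\mathbb{G}^{\epsilon,\delta}$ and $\mathbb{F}^{\epsilon,\delta}$, and the rapid-decay/holomorphic-stability argument for the magnetic matrix algebra yielding the required properties of $\mathbf{F}^\epsilon_\delta$.
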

From \eqref{F-psi-epsilon-0} we conclude that
\begin{equation*}
\big\langle\psi^\epsilon_{j,0,\delta}\,,\,\psi^\epsilon_{k,0,\delta}\big\rangle_{L^2(\X)}=\big\langle\phi^\epsilon_{j,0,\delta}\,,\,\phi^\epsilon_{k,0,\delta}\big\rangle_{L^2(\X)}=\delta_{jk}\,.
\end{equation*}
Note also that the above defined magnetic "quasi-Wannier" functions belong to $\mathscr{S}(\X)$ (details may be found in Subsection 3.1 of \cite{CHP-1}).

We can write, with the series converging in the strong operator topology:
	\beq\label{FD-pepsilonI}
	P^\epsilon_{I,\delta}\,=\,\underset{\gamma\in\Gamma}{\sum}\,\underset{j=\pm}{\sum}\,\big|\phi^\epsilon_{j,\gamma,\delta}\big\rangle\big\langle\phi^\epsilon_{j,\gamma,\delta}\big|
	\eeq
In fact, due to the estimations proved in Proposition \ref{P-magn-W-function} and the fact that the quasi-Wannier functions belong to $\mathscr{S}(\X)$ uniformly with respect to $\gamma\in\Gamma$, we can write \eqref{FD-pepsilonI} as
\beq\label{F-desc-PepsilonI}
P^\epsilon_{I,\delta}\,=\,\underset{\gamma\in\Gamma}{\sum}\Pi^\epsilon_{\gamma,\delta},\quad\Pi^\epsilon_{\gamma,\delta}:=\underset{j=\pm}{\sum}\,\big|\phi^\epsilon_{j,\gamma,\delta}\big\rangle\big\langle\phi^\epsilon_{j,\gamma,\delta}\big|
\eeq
with the series converging in operator norm, as one can prove using the Cotlar-Stein procedure (see Lemma 18.6.5 in \cite{H-3}). In fact we prove that the integral kernels converge uniformly.

\begin{proposition}\label{P-PepsilonI-per} For $\epsilon\in[0,\epsilon_0]$ with $\epsilon_0>0$ fixed in Definition \ref{D-F-epsilon}, the symbol $p^\epsilon_{I,\delta}$ belongs to  $S^{-\infty}(\Xi)$ and  is $\Gamma$-periodic.
\end{proposition}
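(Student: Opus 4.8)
The plan is to prove the two assertions of the proposition separately: the $\Gamma$-periodicity of $p^\epsilon_{I,\delta}$ is the short part, and the membership $p^\epsilon_{I,\delta}\in S^{-\infty}(\Xi)$ is the technical one.

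For the periodicity I would use the structural fact, already recorded in Proposition \ref{P-magn-W-function}(2), that each magnetic quasi-Wannier function is a single Zak magnetic translate of a fixed pair, $\phi^\epsilon_{j,\gamma,\delta}=\mathcal{T}^\epsilon_\gamma\psi^\epsilon_{j,0,\delta}$ with $\mathcal{T}^\epsilon_\gamma=\Lambda^\epsilon(\cdot,\gamma)\tau_\gamma$. Plugging this into the composition rule of Proposition \ref{P-Zak-magn-transl}(1) gives, for every $\alpha\in\Gamma$, that $\mathcal{T}^\epsilon_\alpha\phi^\epsilon_{j,\gamma,\delta}=\mathcal{T}^\epsilon_\alpha\mathcal{T}^\epsilon_\gamma\psi^\epsilon_{j,0,\delta}=\Lambda^\epsilon(\gamma,\alpha)\,\phi^\epsilon_{j,\alpha+\gamma,\delta}$, a \emph{unimodular} multiple of $\phi^\epsilon_{j,\alpha+\gamma,\delta}$, so that $\mathcal{T}^\epsilon_\alpha\,|\phi^\epsilon_{j,\gamma,\delta}\rangle\langle\phi^\epsilon_{j,\gamma,\delta}|\,(\mathcal{T}^\epsilon_\alpha)^{-1}=|\phi^\epsilon_{j,\alpha+\gamma,\delta}\rangle\langle\phi^\epsilon_{j,\alpha+\gamma,\delta}|$. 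Summing over $(j,\gamma)$ in \eqref{FD-pepsilonI} and reindexing $\gamma\mapsto\alpha+\gamma$ yields $\mathcal{T}^\epsilon_\alpha\,P^\epsilon_{I,\delta}\,(\mathcal{T}^\epsilon_\alpha)^{-1}=P^\epsilon_{I,\delta}$ for all $\alpha\in\Gamma$; since $P^\epsilon_{I,\delta}$ is bounded on $L^2(\X)$, its magnetic symbol $p^\epsilon_{I,\delta}$ is a well-defined element of $\mathscr{S}^\prime(\Xi)$, and Proposition \ref{P-Zak-magn-transl}(2) then gives exactly that $p^\epsilon_{I,\delta}$ is $\Gamma$-periodic in the configuration variable.

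For the $S^{-\infty}$ regularity I would argue from the integral kernel, following the non-magnetic model of Proposition \ref{P-Symb-PI}. By \eqref{F-desc-PepsilonI} the kernel of $P^\epsilon_{I,\delta}$ is the (norm-convergent) series $\mathfrak{K}^\epsilon_{I,\delta}(x,y)=\sum_{\gamma\in\Gamma}\sum_{j=\pm}\phi^\epsilon_{j,\gamma,\delta}(x)\overline{\phi^\epsilon_{j,\gamma,\delta}(y)}$, and Propositions \ref{P-prop-F-epsilon}(2) and \ref{P-magn-W-function}(2)--(3) give $\phi^\epsilon_{j,\gamma,\delta}(x)=\Lambda^\epsilon(x,\gamma)\,(\tau_\gamma\psi^\epsilon_{j,0,\delta})(x)$ with $\psi^\epsilon_{j,0,\delta}\in\mathscr{S}(\X)$ whose seminorms are bounded uniformly in $\epsilon\in[0,\epsilon_0]$. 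By the transverse gauge the factor $\Lambda^\epsilon(x,\gamma)\overline{\Lambda^\epsilon(y,\gamma)}$ reduces, via \eqref{rem1a}, to a unimodular phase depending only on $x-y$ and $\gamma$, so each summand decays rapidly in all its arguments about $x=y=\gamma$; the uniform decay estimates plus the Cotlar--Stein argument already noted then show that $\mathfrak{K}^\epsilon_{I,\delta}$, together with all its derivatives, converges absolutely and uniformly, whence $\mathfrak{K}^\epsilon_{I,\delta}\in C^\infty(\X\times\X)$ with rapid decay of every derivative in $x-y$ for $(x+y)/2$ in bounded sets. It remains to recover $p^\epsilon_{I,\delta}$ by inverting the magnetic-Weyl correspondence \eqref{D-MPsiDO} --- i.e.\ multiply by the $A^\epsilon$-gauge phase $e^{\,i\int_{[x-v/2,\,x+v/2]}A^\epsilon}$ and take the partial Fourier transform in $v=x-y$ --- and to read off the symbol estimates: using the $\Gamma$-periodicity already proved to restrict $x$ to a compact fundamental cell, the gauge phase and its $x$-derivatives contribute only $x$-independent polynomial factors in $v$, so repeated integration by parts in $v$ yields, for all $N$ and $(\alpha,\beta)$, the bound $\langle\xi\rangle^{N}\big|\partial^\alpha_x\partial^\beta_\xi\,p^\epsilon_{I,\delta}(x,\xi)\big|\le C_{N,\alpha,\beta}$, i.e.\ $p^\epsilon_{I,\delta}\in S^{-\infty}(\Xi)$; this last passage is a standard one in the magnetic pseudodifferential calculus (cf.\ Section~6 of \cite{IMP-2} and the proof of Corollary~\ref{C-Edelta-pm}).

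The step I expect to be the main obstacle is precisely this last passage from kernel to symbol: handled naively, the constant magnetic phase $\Lambda^\epsilon$ spoils uniformity in $x$, so one must first have the $\Gamma$-periodicity of $p^\epsilon_{I,\delta}$ in hand and only then estimate the symbol on one fundamental cell; everything else is bookkeeping with the transverse-gauge identities \eqref{rem1a}--\eqref{rem1b} and the uniform Schwartz bounds of Propositions \ref{P-prop-F-epsilon} and \ref{P-magn-W-function}.
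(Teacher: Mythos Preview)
Your proof is correct and arrives at the same conclusions as the paper, but the two parts are handled with different emphases.

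For the $\Gamma$-periodicity, your argument is genuinely cleaner than the paper's. You work at the operator level: from $\phi^\epsilon_{j,\gamma,\delta}=\mathcal{T}^\epsilon_\gamma\psi^\epsilon_{j,0,\delta}$ and the cocycle relation $\mathcal{T}^\epsilon_\alpha\mathcal{T}^\epsilon_\gamma=\Lambda^\epsilon(\gamma,\alpha)\mathcal{T}^\epsilon_{\alpha+\gamma}$ you deduce immediately that conjugation by $\mathcal{T}^\epsilon_\alpha$ permutes the rank-one pieces of \eqref{FD-pepsilonI}, hence fixes $P^\epsilon_{I,\delta}$, and Proposition \ref{P-Zak-magn-transl}(2) finishes. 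The paper instead computes $[\mathcal{T}^\epsilon_\alpha]^{-1}\,\Int(\mathfrak{K}_{P^\epsilon_{I,\delta}})\,\mathcal{T}^\epsilon_\alpha$ directly at the kernel level, expanding all the $\Lambda^\epsilon$ factors and collapsing them using the transverse-gauge identities \eqref{rem1a}--\eqref{rem1b} and the Stokes relation \eqref{F-Stokes}. Your route is shorter and more conceptual; the paper's route is more explicit and displays the exact cancellations, which has some didactic value but is not needed for the statement.

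For the $S^{-\infty}$ regularity, the paper is in fact briefer than you: it writes the kernel $\mathfrak{K}_{P^\epsilon_{I,\delta}}$ and simply observes that the Schwartz property of the magnetic quasi-Wannier functions ``clearly implies'' $p^\epsilon_{I,\delta}\in S^{-\infty}(\Xi)$. Your version fills in the details of this passage (invert the magnetic Weyl correspondence, integrate by parts in $v$). One remark: your claim that periodicity must be established \emph{first} in order to localize $x$ to a fundamental cell is overcautious. In the transverse gauge the phase $\exp\big(i\int_{[x-v/2,\,x+v/2]}A^\epsilon\big)$ equals $\exp\big(i(\epsilon B^\circ/2)\,x\wedge v\big)$, whose $x$-derivatives bring down only polynomials in $v$ with constant coefficients, and the lattice sum $\sum_{\gamma}\psi^\epsilon_{j,0,\delta}(x-\gamma)\overline{\psi^\epsilon_{k,0,\delta}(y-\gamma)}$ is already uniformly controlled in $x$ by the Schwartz decay. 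So the $S^{-\infty}$ estimates go through uniformly in $x$ without invoking periodicity, which is why the paper treats the two claims independently. Your ordering is not wrong, just not forced.
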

\begin{proof}
	By definition \ref{D-P-epsilon} and \eqref{FD-pepsilonI},  we can write
	$$
	P^\epsilon_{I,\delta}=\Op^\epsilon\big(p^\epsilon_{I,\delta}\big)=
	\Int\big(\mathfrak{K}_{P^\epsilon_{I,\delta}}\big)
	$$
	with the integral kernel
	$$
	\mathfrak{K}_{P^\epsilon_{I,\delta}}(x,y)=\underset{\gamma\in\Gamma}{\sum}\Lambda^\epsilon(x,\gamma)\psi^\epsilon_{j,0,\delta}(x-\gamma)\overline{\Lambda^\epsilon(y,\gamma)}\,\overline{\psi^\epsilon_{k,0,\delta}(y-\gamma)}.
	$$
	The fact that the magnetic quasi-Wannier functions belong to $\mathscr{S}(\X)$ clearly implies that the magnetic symbol associated to this kernel is of class $S^{-\infty}(\Xi)$. 
	
	Following the second conclusion in Proposition \ref{P-Zak-magn-transl} let us compute the commutator
	\begin{align*}
\big(\big[\mathcal{T}^\epsilon_\alpha\big]^{-1}\,\Int(\mathfrak{K}_{P^\epsilon_{I,\delta}})\,\mathcal{T}^\epsilon_\alpha\Phi\big)(x)&=\Lambda^\epsilon(\alpha,x+\alpha)\big(\Int(\mathfrak{K}_{P^\epsilon_{I,\delta}})\,\mathcal{T}^\epsilon_\alpha\Phi\big)(x+\alpha) \\ \nonumber
&=\Lambda^\epsilon(\alpha,x+\alpha)\int_\X\,dz\,\mathfrak{K}_{P^\epsilon_{I,\delta}}(x+\alpha,z)\big(\mathcal{T}^\epsilon_\alpha\Phi\big)(z) \\
&=\Lambda^\epsilon(\alpha,x+\alpha)\int_\X\,dz\,\mathfrak{K}_{P^\epsilon_{I,\delta}}(x+\alpha,z)\Lambda^\epsilon(z,\alpha)\Phi(z-\alpha) \\
&=\Lambda^\epsilon(\alpha,x+\alpha)\int_\X\,dy\,\mathfrak{K}_{P^\epsilon_{I,\delta}}(x+\alpha,y+\alpha)\Lambda^\epsilon(y+\alpha,\alpha)\Phi(y)
	\end{align*}

	Using \eqref{rem1a}-\eqref{rem1b},  the definition of $\Omega^\epsilon$ in \eqref{FD-Omega} as well as its property \eqref{F-Stokes}, we obtain:
	\begin{align*}
	\forall\alpha\in\Gamma:\quad&\mathfrak{K}_{P^\epsilon_{I,\delta}}(x+\alpha,y+\alpha)\\
		&\hspace*{14pt}=\underset{\gamma\in\Gamma}{\sum}\Lambda^\epsilon(x+\alpha,\gamma)\psi^\epsilon_{j,0,\delta}(x+\alpha-\gamma)\overline{\Lambda^\epsilon(y+\alpha,\gamma)}\,\overline{\psi^\epsilon_{k,0,\delta}(y+\alpha-\gamma)} \\
		&\hspace*{14pt}=\underset{\gamma'\in\Gamma}{\sum}\Lambda^\epsilon(x+\alpha,\gamma'+\alpha)\psi^\epsilon_{j,0,\delta}(x-\gamma')\overline{\Lambda^\epsilon(y+\alpha,\gamma'+\alpha)}\,\overline{\psi^\epsilon_{k,0,\delta}(y-\gamma')}\\
		&\hspace*{14pt}=\Lambda^\epsilon(x+\alpha,y+\alpha)\underset{\gamma\in\Gamma}{\sum}\Omega^\epsilon(y+\alpha,x+\alpha,\gamma+\alpha)\psi^\epsilon_{j,0,\delta}(x-\gamma)\overline{\psi^\epsilon_{k,0,\delta}(y-\gamma)}\\
		& \hspace*{14pt} =\Lambda^\epsilon(x+\alpha,y+\alpha)
		\underset{\gamma\in\Gamma}{\sum}\Omega^\epsilon(y,x,\gamma)\psi^\epsilon_{j,0,\delta}(x-\gamma)\overline{\psi^\epsilon_{k,0,\delta}(y-\gamma)}\\
		&\hspace*{14pt}=\Lambda^\epsilon(x+\alpha,y+\alpha)\overline{\Lambda^\epsilon(x,y)}\,\underset{\gamma\in\Gamma}{\sum}\Lambda^\epsilon(x,\gamma)\psi^\epsilon_{j,0,\delta}(x-\gamma)\overline{\Lambda^\epsilon(y,\gamma)}\,\overline{\psi^\epsilon_{k,0,\delta}(y-\gamma)}\\  
		& \;\quad=\Lambda^\epsilon(x+\alpha,y+\alpha)\overline{\Lambda^\epsilon(x,y)}\,\mathfrak{K}_{P^\epsilon_{I,\delta}}(x,y).
	\end{align*}
	Finally, using \eqref{rem1a} we get
	\begin{align*}
\Lambda^\epsilon(\alpha,x+\alpha)&\Lambda^\epsilon(x+\alpha,y+\alpha)\overline{\Lambda^\epsilon(x,y)}\Lambda^\epsilon(y+\alpha,\alpha) \\
&=\exp\big\{i(B^\circ\epsilon/2)\}\big(\alpha\wedge x+(x+\alpha)\wedge(y+\alpha)+y\wedge x+y\wedge\alpha\big)\big\}=1.
	\end{align*}
\end{proof}

Here we recall the following important estimate proved in Subsection 3.2 of \cite{CHP-1} (see Formula (3.13) therein).

\begin{proposition}\label{F-est-dif-bdproj} 
	There exists $\epsilon_0\in(0,1]$, fixed to satisfy the condition in Definition \ref{D-F-epsilon} such that for any semi-norm $\widetilde{\nu}:S^{-\infty}(\X\times\X^*)_{2\times2}\rightarrow\mathbb{R}_+$ defining the topology of $S^{-\infty}(\X\times\X^*)_{2\times2}$ there exists some constant $C(\widetilde{\nu})>0$ such that 
	$$\widetilde{\nu}\big(p^{\epsilon}_{I,\delta}-p_{I,\delta}\big)\,\leq\,C(\widetilde{\nu})\, \epsilon\,,\, \forall \epsilon \in [0,\epsilon_0],$$
	with $p_{I,\delta}$ defined in Proposition \ref{P-Symb-PI} and $p^\epsilon_{I,\delta}$ defined in \eqref{FD-pepsilonI}.
\end{proposition}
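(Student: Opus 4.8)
The plan is to exploit the fact that, in the constant magnetic field $\epsilon B^\circ$, the magnetic quasi-band projection $P^\epsilon_{I,\delta}$ is built out of Zak magnetic translations of a single principal quasi-Wannier function, exactly as in the non-magnetic case where $P_{I,\delta}$ is built out of ordinary translations of $\Psi_{\pm,\delta}$. First I would record the explicit integral kernel of $P^\epsilon_{I,\delta}$, namely
$$
\mathfrak{K}_{P^\epsilon_{I,\delta}}(x,y)=\underset{\gamma\in\Gamma}{\sum}\underset{j=\pm}{\sum}\Lambda^\epsilon(x,\gamma)\,\psi^\epsilon_{j,0,\delta}(x-\gamma)\;\overline{\Lambda^\epsilon(y,\gamma)\,\psi^\epsilon_{j,0,\delta}(y-\gamma)}\,,
$$
and compare it termwise with the kernel $\mathfrak{K}_{P_{I,\delta}}$ of $P_{I,\delta}$ from \eqref{F-int-kernel-pi-pm}, which has the same form with $\Lambda^\epsilon\equiv 1$ and $\psi^\epsilon_{j,0,\delta}$ replaced by $\Psi_{j,0,\delta}$. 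The difference then splits naturally into two contributions: one coming from the difference of the profiles, $\psi^\epsilon_{j,0,\delta}-\Psi_{j,0,\delta}$, which is $\mathscr{O}(\epsilon)$ in every Schwartz seminorm uniformly in $\gamma$ by point (3) of Proposition \ref{P-magn-W-function}; and one coming from the deviation of the phase factors $\Lambda^\epsilon(x,\gamma)\overline{\Lambda^\epsilon(y,\gamma)}$ from $1$, which using the transverse-gauge identity \eqref{rem1a} equals $\exp\!\big(-i(B^\circ\epsilon/2)(x\wedge\gamma-y\wedge\gamma)\big)=\exp\!\big(-i(B^\circ\epsilon/2)(x-y)\wedge\gamma\big)$.

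The second step is to turn this into a symbol estimate. Using the Weyl-symbol formula $\mathfrak{S}_T(x,\xi)=(2\pi)^{-2}\int_{\X}dv\,e^{-i<\xi,v>}\mathfrak{K}_T(x+v/2,x-v/2)$, the relevant seminorms of $p^\epsilon_{I,\delta}-p_{I,\delta}$ are controlled by weighted $L^1$-norms in $v$ of $x$-derivatives of the kernel difference on the diagonal $x\pm v/2$. For the profile-difference term the bound is immediate from the uniform Schwartz estimates of Proposition \ref{P-magn-W-function}(3) together with the rapid decay of $\Psi_{j,0,\delta}$, summing the geometric-type series over $\gamma\in\Gamma$. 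For the phase term I would write $e^{-i(B^\circ\epsilon/2)(x-y)\wedge\gamma}-1$ and use the elementary bound $|e^{it}-1|\leq |t|$ to extract a factor $\epsilon\,|x-y|\,|\gamma|$; on the support of $\Psi_{j,0,\delta}(x-\gamma)\overline{\Psi_{j,0,\delta}(y-\gamma)}$ the variables $x-\gamma$, $y-\gamma$ are effectively bounded (rapid decay), so $|x-y|$ and $|\gamma|$ can be traded against the Schwartz decay of the two profiles, again leaving a convergent sum over $\Gamma$ and an overall factor $\epsilon$. Derivatives in $x$ and multiplication by powers of $v=x-y$ (needed for the $<\xi>^{-N}$ weights) produce only finitely many extra polynomial factors in $\gamma$, all absorbed by the rapid decay; the two estimates combined give $\widetilde\nu\big(p^\epsilon_{I,\delta}-p_{I,\delta}\big)\leq C(\widetilde\nu)\,\epsilon$.

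I expect the main obstacle to be purely bookkeeping rather than conceptual: organizing the derivative expansion so that every $x$-derivative and every factor of $v$ falling on the phase $e^{-i(B^\circ\epsilon/2)v\wedge\gamma}$ is either paired with an explicit $\epsilon$ (when it hits the exponent) or harmlessly bounded (when it does not), while simultaneously keeping the $\gamma$-sum uniformly convergent and the constants independent of $\gamma$. This is exactly the computation carried out in Subsection 3.2 of \cite{CHP-1} (their Formula (3.13)), and since the only new ingredient here is that the profiles and Gram matrices are $2\times2$-matrix valued — which changes nothing in the estimates beyond replacing moduli by matrix norms — I would simply invoke that argument, checking that the matrix-valued version of Proposition \ref{P-magn-W-function} supplies all the uniform Schwartz bounds it requires.
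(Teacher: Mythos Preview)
Your overall plan coincides with the paper's, which simply invokes Subsection~3.2 of \cite{CHP-1}; the kernel comparison and the use of Proposition~\ref{P-magn-W-function}(3) for the profile difference are exactly right. There is, however, a genuine slip in your phase computation that would make the sketch fail as written.

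You apply the plain Weyl formula $\mathfrak{S}_T(x,\xi)=(2\pi)^{-2}\int dv\,e^{-i\langle\xi,v\rangle}\mathfrak{K}_T(x+v/2,x-v/2)$ to the kernel of $P^\epsilon_{I,\delta}$ and obtain the phase $e^{-i(\epsilon B^\circ/2)(x-y)\wedge\gamma}$. But $p^\epsilon_{I,\delta}$ is the \emph{magnetic} symbol, so one must first multiply the kernel by $\overline{\Lambda^\epsilon(x,y)}$ before transforming; by Stokes this replaces your phase by the flux factor $\Omega^\epsilon(y,x,\gamma)$, whose argument for the constant field is $(\epsilon B^\circ/2)\,v\wedge(\gamma-x)$ rather than $(\epsilon B^\circ/2)\,v\wedge\gamma$. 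This is not cosmetic: in your version the bound $|e^{it}-1|\le|t|$ produces $\epsilon|v||\gamma|$, and on the effective support of $\Psi_{j,\delta}(x-\gamma)\overline{\Psi_{j,\delta}(y-\gamma)}$ it is $x-\gamma$ and $y-\gamma$ that are small, while $|\gamma|\sim|x|$ is \emph{unbounded}. The claimed ``trading $|\gamma|$ against the Schwartz decay'' therefore fails, and the resulting estimate blows up linearly in $|x|$, so it does not give a uniform $S^{-\infty}$ bound. With the correct magnetic de-phasing the argument becomes $v\wedge(\gamma-x)$, which \emph{is} controlled by the profile decay (this is exactly the translation invariance \eqref{rem1b}), and then your bookkeeping goes through verbatim. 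Once this correction is made, the remainder of your outline is precisely the argument of \cite{CHP-1}, with the $2\times2$ matrix structure indeed being irrelevant to the estimates.
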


\subsection{The magnetic quasi-band Hamiltonian.}
\label{SS-magn-qband-H}

\begin{definition}
	We call  \textit{magnetic quasi-band Hamiltonian} associated to the spectral interval $I$, the operator $P_{I,\delta}^{\epsilon}H^{\epsilon}_\Gamma P_{I,\delta}^{\epsilon}$ (with  $H^{\epsilon}_\Gamma$ introduced in \eqref{FD-Hepsilon}). 
\end{definition}

The estimation in Proposition \ref{F-est-dif-bdproj} and the properties of the smooth global sections $\Psi_{\pm,\delta}:\mathbb{T}_*\rightarrow\mathscr{F}$ imply the following statement (see also the arguments in Subsection 3.2 of \cite{CHP-1}).

\begin{proposition}
	There exists $\epsilon_0>0$, fixed to satisfy the condition in Definition \ref{D-F-epsilon}, such that for any $\epsilon\in[0,\epsilon_0]$, the range of $P_{I,\delta}^{\epsilon}$ belongs to the domain of $H^{\epsilon}_\Gamma$.
\end{proposition}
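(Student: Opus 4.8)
The plan is to mimic the non-magnetic statement of Remark \ref{R-HPI}, replacing the Sobolev space $\mathscr{H}^2(\X)$ by the magnetic Sobolev space attached to the calculus $\Op^\epsilon$, and to exploit the fact (Proposition \ref{P-PepsilonI-per}) that the magnetic symbol $p^\epsilon_{I,\delta}$ of $P^\epsilon_{I,\delta}$ is \emph{smoothing}. First I would record that, thanks to the transverse gauge \eqref{defA0}, the operator \eqref{FD-Hepsilon} is the magnetic Weyl quantization of the elliptic symbol $h\in S^2_1(\X\times\X^*)$ of \eqref{R-HGammaOp}, i.e. $H^\epsilon_\Gamma=\Op^\epsilon(h)$ with $\Op^\epsilon=\Op^{A^\epsilon}$, $A^\epsilon:=A^{\epsilon,0}=\epsilon A^\circ$ (the identities $\Op^A(\xi_j)=-i\partial_{x_j}-A_j$ and $\Op^A\big((\xi_j-a_j(x))^2\big)=(-i\partial_{x_j}-A_j-a_j)^2$ of the magnetic Weyl calculus, recalled in Appendix B of \cite{CHP-1} and still valid in our setting, do the job). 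Since $\Op^\epsilon\big(<\xi>^2\big)=\bb1+\big(-i\partial_{x_1}-A^\epsilon_1\big)^2+\big(-i\partial_{x_2}-A^\epsilon_2\big)^2$ is $\bb1$ plus the Landau Hamiltonian of the constant field $\epsilon B^\circ$, and $h-<\xi>^2\in S^1_1(\X\times\X^*)$ is a first-order perturbation, hence relatively bounded with relative bound $0$ with respect to that second-order elliptic operator, one has $\mathcal{D}(H^\epsilon_\Gamma)=\mathcal{D}\big(\Op^\epsilon(<\xi>^2)\big)=:\mathscr{H}^2_{A^\epsilon}(\X)$, the magnetic Sobolev space of order $2$.

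The core of the argument is then the following. By Proposition \ref{P-PepsilonI-per}, $P^\epsilon_{I,\delta}=\Op^\epsilon(p^\epsilon_{I,\delta})$ with $p^\epsilon_{I,\delta}\in S^{-\infty}(\Xi)$ for $\epsilon\in[0,\epsilon_0]$, $\epsilon_0$ as in Definition \ref{D-F-epsilon}. The magnetic Moyal product attached to $\Op^\epsilon$ adds the orders of its factors (a property of the magnetic pseudodifferential calculus which, as noted in the excerpt, carries over unchanged to our matrix-valued setting), so $<\xi>^2\,\sharp^{\epsilon B^\circ}\,p^\epsilon_{I,\delta}\in S^{-\infty}(\Xi)$, and hence $\Op^\epsilon\big(<\xi>^2\,\sharp^{\epsilon B^\circ}\,p^\epsilon_{I,\delta}\big)$ is bounded on $L^2(\X)$ by the magnetic Calderón–Vaillancourt theorem. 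On $\mathscr{S}(\X)$ this operator coincides with $\Op^\epsilon(<\xi>^2)\,P^\epsilon_{I,\delta}$; since $\Op^\epsilon(p^\epsilon_{I,\delta})$ is smoothing it maps $L^2(\X)$ into $\mathcal{D}\big(\Op^\epsilon(<\xi>^2)\big)$, so the density of $\mathscr{S}(\X)$ together with the closedness of $\Op^\epsilon(<\xi>^2)$ shows that $\Op^\epsilon(<\xi>^2)\,P^\epsilon_{I,\delta}$ is bounded on $L^2(\X)$. Therefore $P^\epsilon_{I,\delta}L^2(\X)\subset\mathscr{H}^2_{A^\epsilon}(\X)=\mathcal{D}(H^\epsilon_\Gamma)$, which is the claim. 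A fully equivalent, more hands-on variant avoids the composition formula with an unbounded operator: writing $\phi^\epsilon_{j,\gamma,\delta}=\mathcal{T}^\epsilon_\gamma\psi^\epsilon_{j,0,\delta}$ (Proposition \ref{P-magn-W-function}) and using that the Landau Hamiltonian commutes with the Zak magnetic translations $\mathcal{T}^\epsilon_\gamma$ (Proposition \ref{P-Zak-magn-transl}) one gets $\big(\Op^\epsilon(<\xi>^2)-\bb1\big)\phi^\epsilon_{j,\gamma,\delta}=\mathcal{T}^\epsilon_\gamma\,\eta^\epsilon_j$ with $\eta^\epsilon_j:=\big(\Op^\epsilon(<\xi>^2)-\bb1\big)\psi^\epsilon_{j,0,\delta}\in\mathscr{S}(\X)$ rapidly decaying; a Cotlar–Stein estimate (using the group relation of Proposition \ref{P-Zak-magn-transl} and the rapid decay of $\eta^\epsilon_j$) then bounds $\big\|\sum_{j,\gamma}c_{j\gamma}\big(\Op^\epsilon(<\xi>^2)-\bb1\big)\phi^\epsilon_{j,\gamma,\delta}\big\|_{L^2(\X)}$ by $C\,\|c\|_{\ell^2(\Gamma)}$ for any finitely supported family $(c_{j\gamma})$, and since $\{\phi^\epsilon_{j,\gamma,\delta}\}$ is an orthonormal basis of $P^\epsilon_{I,\delta}L^2(\X)$ and $H^\epsilon_\Gamma$ is closed, the range of $P^\epsilon_{I,\delta}$ lies in $\mathcal{D}(H^\epsilon_\Gamma)$.

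I expect the only genuinely delicate point, in either route, to be the bookkeeping forced by the unboundedness of $A^\epsilon=\epsilon A^\circ$: precisely identifying the natural domain of the magnetic calculus with $\mathcal{D}(H^\epsilon_\Gamma)$, and justifying the composition identity $\Op^\epsilon(<\xi>^2)\circ\Op^\epsilon(p^\epsilon_{I,\delta})=\Op^\epsilon\big(<\xi>^2\,\sharp^{\epsilon B^\circ}\,p^\epsilon_{I,\delta}\big)$ when the first factor is unbounded. Both are routine within the magnetic pseudodifferential framework of \cite{MP-1,IMP-1,IMP-2} recalled in Appendix B of \cite{CHP-1}, since $A^\circ$ is linear (hence of class $C^\infty_{\text{\sf pol}}$) and $B^\circ$ is constant, but they should be spelled out with some care; everything else is formal.
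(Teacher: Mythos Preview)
Your proposal is correct, and both variants you give are more detailed than what the paper actually does: the paper merely states that the result follows from Proposition \ref{F-est-dif-bdproj} (i.e. $p^\epsilon_{I,\delta}\in S^{-\infty}$) and the smoothness of the global sections $\Psi_{\pm,\delta}$, with a pointer to Subsection 3.2 of \cite{CHP-1}. Your second, hands-on route via $\phi^\epsilon_{j,\gamma,\delta}=\mathcal{T}^\epsilon_\gamma\psi^\epsilon_{j,0,\delta}\in\mathscr{S}(\X)$ is exactly in the spirit of that reference, while your first (symbolic) route is an equally valid and slightly cleaner packaging of the same smoothing-symbol fact.
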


We intend to compare the spectrum of $H^{\epsilon}_\Gamma$ in the interval $I\subset\mathbb{R}$ with the spectrum of the \textit{magnetic quasi-band Hamiltonian} $P^{\epsilon}_IH^{\epsilon}_\Gamma P^{\epsilon}_I$ in the same interval. Working with quasi-Wannier functions instead of true Wannier functions has as a consequence that although the product $H^{\epsilon}_\Gamma P^{\epsilon}_I$ is bounded, the norm of $(\bb1 -P^{\epsilon}_I)H^{\epsilon}_\Gamma P^{\epsilon}_I$  is not small of order $\epsilon$. Nevertheless, the modified Feshbach-Schur procedure elaborated in Appendix \ref{A-FS-arg} {will allow us to compare the spectrum of  $H^{\epsilon}_\Gamma$ in a neighborhood of $0$  with the spectrum of a \textit{\enquote{dressed} modified magnetic quasi-band Hamiltonian} of the type \eqref{F-Htilde}:
\begin{align}\label{F-Htilde-magn}
&\widetilde{H}^{\epsilon}_{I,\delta}\ \\
\nonumber &:=\ (Y_\delta^{\epsilon})^{-1/2}\left(P^{\epsilon}_{I,\delta}H^{\epsilon}_\Gamma P^{\epsilon}_{I,\delta}\,-\,P^{\epsilon}_{I,\delta} H^{\epsilon}_\Gamma\big\{P^{\epsilon}_{I,\delta}\big\}^\bot R^{\epsilon}_\bot(0)\big\{P^{\epsilon}_{I,\delta}\big\}^\bot H^{\epsilon}_\Gamma P^{\epsilon}_{I,\delta}\right)(Y^{\epsilon}_\delta)^{-1/2}\ \in\ \mathbb{B}\big(P^{\epsilon}_{I,\delta}L^2(\X)\big),
\end{align}
with 
\beq\label{F-Y}
Y^{\epsilon}_\delta\ :=\ P^{\epsilon}_{I,\delta}\,+\,P^{\epsilon}_{I,\delta} H^{\epsilon}_\Gamma\big\{P^{\epsilon}_{I,\delta}\big\}^\bot R^{\epsilon}_\bot(0)^2\big\{P^{\epsilon}_{I,\delta}\big\}^\bot H^{\epsilon}_\Gamma P^{\epsilon}_{I,\delta}\,.
\eeq
In Paragraph \ref{SSS-out-cr-est} we shall prove that some local estimate valid on a neighborhood of $\theta_0$  (see Paragraph \ref{SSS-cr-est}) is in fact sufficient for our analysis. In this  subsection we verify the admissibility of the triple $\big(H^\epsilon_\Gamma,P^\epsilon_{I,\delta},\mathring{I}\big)$ (see Definition \ref{H-band-red}) and apply Proposition \ref{P-FS-arg}  in order to estimate the \enquote{distance} between the parts of the spectra of $H^\epsilon_\Gamma$ and $ \widetilde{H}^\epsilon_{I,\delta}$ contained in the interval $I$.

We begin by proving a \enquote{magnetic version} of Proposition \ref{P-H-band-0-field}. Our proof makes use of the properties of the magnetic pseudodifferential calculus as developed in \cite{IMP-1} and \cite{IMP-2} and briefly summarized in the Appendix B of \cite{CHP-1}.
We recall the notation  $\Op^\epsilon$ introduced in \eqref{shn} (see also \eqref{D-MPsiDO}) and the magnetic Moyal product  $\sharp^\epsilon$ defined by \eqref{DF-magn-MoyalPr}.
For the convenience of the reader, let us recall Proposition B.14 from \cite{CHP-1} that we shall use several times in our arguments.
	\begin{proposition}\label{propB.14}
		For any pair $(p,s)\in\mathbb{R}\times\mathbb{R},$ any $\rho\in[0,1]$ and any $\epsilon_0\in[0,1]$ there exists a bilinear continuous map $\mathfrak{z}_\epsilon:S^p_\rho(\X\times\X^*)\times S^s_\rho(\X\times\X^*)\ni(F,G)\mapsto F\sharp^\epsilon G\in S^{s+p-2}_\rho(\X\times\X^*)$ uniformly in $\epsilon\in[0,\epsilon_0]$ such that
		$$ 
		F\sharp^\epsilon G\,=\,F\sharp G\,+\,\epsilon\, \mathfrak{z}_\epsilon(F,G).
		$$
	\end{proposition}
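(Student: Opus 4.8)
The statement asserts the existence of a remainder term $\mathfrak{z}_\epsilon$ that gains two orders of decay in $\langle\xi\rangle$ relative to the naive composition (so that $\epsilon\,\mathfrak{z}_\epsilon(F,G)\in S^{s+p-2}_\rho$) and is continuous uniformly in $\epsilon\in[0,\epsilon_0]$. The natural approach is to start from the explicit oscillatory-integral formula for the magnetic Moyal product $F\sharp^\epsilon G$, as recorded in the magnetic pseudodifferential calculus references (\cite{MP-1}, \cite{IMP-1}, \cite{IMP-2}) and summarized in Appendix B of \cite{CHP-1}. That formula differs from the usual Weyl–Moyal product only through the insertion of the flux factor $\Omega^{\epsilon B^\circ}(x,y,z) = \exp\big(-i\epsilon\int_{\langle x,y,z\rangle}B^\circ\big)$ (here with $B^\circ$ replaced by the relevant field; for us the constant field $\epsilon B^\circ$, but the proof is generic). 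The plan is: (i) write $F\sharp^\epsilon G$ as the oscillatory integral with the flux factor; (ii) split off the $\epsilon=0$ contribution, which is exactly $F\sharp G$, by writing $\Omega^{\epsilon}(x,y,z) = 1 + \big(\Omega^{\epsilon}(x,y,z)-1\big)$; (iii) identify the remainder integral as $\epsilon$ times a symbol in the better class.

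First I would recall that for $B^\circ$ with bounded smooth components, one has the uniform bound $|\Omega^\epsilon(x,y,z)-1|\le C\epsilon\,|x-y|\,|x-z|$ coming from $|\int_{\langle x,y,z\rangle}B^\circ|\le \|B^\circ\|_\infty\,|x-y|\,|x-z|$ and $|e^{i t}-1|\le |t|$, together with analogous bounds with derivatives of $B^\circ$ hitting the triangle after differentiating in the symbol variables. The factors $|x-y|$ and $|x-z|$ are exactly what is needed: in the oscillatory integral they can be converted, via integrations by parts in the dual (frequency) variables, into two extra powers of $\langle\xi\rangle^{-1}$ (this is the standard mechanism by which each power of $(x-y)$ in a Moyal-type kernel improves the symbol order by one, cf. the treatment in \cite{IMP-1}, \cite{IMP-2}). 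Thus, after these integrations by parts, the remainder integral defines a symbol of class $S^{s+p-2}_\rho$ rather than merely $S^{s+p}_\rho$, and the prefactor $\epsilon$ can be pulled out, defining $\mathfrak{z}_\epsilon(F,G)$. Bilinearity and continuity of $\mathfrak{z}_\epsilon$ as a map $S^p_\rho\times S^s_\rho\to S^{s+p-2}_\rho$ follow by tracking the semi-norm estimates through the oscillatory integral (Leibniz rule on the integrand, dominated by finitely many semi-norms of $F$ and $G$ times powers of $\epsilon\le\epsilon_0\le 1$), and uniformity in $\epsilon\in[0,\epsilon_0]$ is manifest because all constants depend only on $\|B^\circ\|_{BC^\infty}$ and $\epsilon_0$, not on $\epsilon$ itself.

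The main technical obstacle is the rigorous justification of the oscillatory-integral manipulations: the integrals defining $\sharp^\epsilon$ are not absolutely convergent for general symbols of positive order, so the integrations by parts that produce the $\langle\xi\rangle^{-2}$ gain and the extraction of the $\epsilon$ factor must be carried out in the sense of oscillatory integrals, with a regularizing cutoff that is removed at the end, exactly as in the proofs of the composition theorems in \cite{IMP-2} and Appendix B of \cite{CHP-1}. Since those references already establish the full symbol calculus for $\sharp^B$ (including the continuity of $(F,G)\mapsto F\sharp^B G$ with the expected orders), the cleanest route is to quote those results and only add the one extra observation specific to this proposition: that the difference $\Omega^\epsilon-1$ carries an explicit factor $\epsilon$ and two vanishing factors $(x-y),(x-z)$, which together buy the claimed improvement. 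This keeps the argument short and avoids redoing the machinery from scratch.
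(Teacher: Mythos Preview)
The paper does not prove this proposition; it merely recalls it verbatim from Appendix~B of \cite{CHP-1} (``For the convenience of the reader, let us recall Proposition B.14 from \cite{CHP-1}\ldots''). Your sketch --- splitting $\Omega^\epsilon = 1 + (\Omega^\epsilon-1)$, using the flux bound $|\Omega^\epsilon(x,y,z)-1|\le C\epsilon\,|y-x|\,|z-x|$ (which matches the paper's own use of this estimate in the proof of Lemma~\ref{P-red-OpF}), and converting the two linear factors into a gain of two symbol orders via integration by parts in the oscillatory integral --- is exactly the standard argument behind this result as developed in \cite{IMP-1,IMP-2} and summarized in \cite{CHP-1}. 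So your approach is correct and is the same as what the cited proof does.
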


\begin{proposition}\label{P-Hyp-H-magn}
	Recalling Definition \ref{D-P-epsilon} and \eqref{DF-I-delta}, 
	for any interval $I_\circ\subset \mathring{I}$ containing $0$ in its interior there exists some $\epsilon_0>0$, such that the triple $\big(H^{\epsilon}_\Gamma,P^{\epsilon}_{I,\delta},I_\circ\big)$ is admissible for any $\epsilon\in[0,\epsilon_0]$.
\end{proposition}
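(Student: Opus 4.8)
The plan is to deduce admissibility from its already-established non-magnetic counterpart Proposition \ref{P-H-band-0-field} by a perturbative magnetic parametrix argument in the small parameter $\epsilon$. Following Definition \ref{H-band-red} and the proof of Proposition \ref{P-H-band-0-field}, admissibility of $\big(H^\epsilon_\Gamma,P^\epsilon_{I,\delta},I_\circ\big)$ reduces to two facts: (i) the range of $P^\epsilon_{I,\delta}$ is contained in the domain of $H^\epsilon_\Gamma$, so that $Q^\epsilon_{I,\delta}H^\epsilon_\Gamma Q^\epsilon_{I,\delta}$ (with $Q^\epsilon_{I,\delta}=\bb1-P^\epsilon_{I,\delta}$ of Definition \ref{D-P-epsilon}) is a genuine self-adjoint operator in $Q^\epsilon_{I,\delta}L^2(\X)$; and (ii) $I_\circ$ lies in its resolvent set. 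Point (i) is essentially at hand: the preceding proposition gives $\mathrm{Ran}\,P^\epsilon_{I,\delta}\subset\mathrm{Dom}\,H^\epsilon_\Gamma$, and then, arguing as in Remark \ref{R-HPI}, the boundedness of $P^\epsilon_{I,\delta}H^\epsilon_\Gamma$ shows that $Q^\epsilon_{I,\delta}H^\epsilon_\Gamma Q^\epsilon_{I,\delta}$ differs from $H^\epsilon_\Gamma$ by a bounded operator on $Q^\epsilon_{I,\delta}L^2(\X)\cap\mathrm{Dom}\,H^\epsilon_\Gamma$ and is self-adjoint there (one also uses $Q^\epsilon_{I,\delta}\mathscr{S}(\X)\subset\mathscr{S}(\X)$, since $p^\epsilon_{I,\delta}\in S^{-\infty}(\Xi)$). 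Hence all the work goes into point (ii), the spectral gap, uniformly for small $\epsilon$.

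For this I would reuse the non-magnetic resolvent of the complementary Hamiltonian. By Remark \ref{R-QHQ-sp-desc} and Remark \ref{R-ordSymbOpm}, for each $E\in\mathring{I}$ the bounded self-adjoint operator $Q_{I,\delta}\big(H_\Gamma-E\bb1\big)Q_{I,\delta}$ is invertible inside $Q_{I,\delta}L^2(\X)$, with inverse $R_{I,\delta}(E):=R_{I,-,\delta}(E)\oplus R_{I,+,\delta}(E)=\Op\big(r_{I,\delta}(E)\big)$, where $r_{I,\delta}(E):=r_{I,-,\delta}(E)+r_{I,+,\delta}(E)\in S^{-2}_1(\X\times\X^*)$ has seminorms bounded uniformly for $E$ in the compact subinterval $I_\circ\subset\mathring{I}$. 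Writing $h\in S^2_1(\X\times\X^*)$ for the periodic symbol with $H_\Gamma=\Op(h)$ in \eqref{R-HGammaOp} (so also $H^\epsilon_\Gamma=\Op^\epsilon(h)$) and $q_{I,\delta}$ for the symbol of $Q_{I,\delta}$, this invertibility is equivalent to the two exact symbol identities $q_{I,\delta}\sharp(h-E)\sharp q_{I,\delta}\sharp r_{I,\delta}(E)\sharp q_{I,\delta}=q_{I,\delta}=q_{I,\delta}\sharp r_{I,\delta}(E)\sharp q_{I,\delta}\sharp(h-E)\sharp q_{I,\delta}$, where we used $q_{I,\delta}\sharp q_{I,\delta}=q_{I,\delta}$.

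The magnetic step is then to quantize these with $\Op^\epsilon,\sharp^\epsilon$ and control the errors. Set $S^\epsilon(E):=\Op^\epsilon\big(q^\epsilon_{I,\delta}\sharp^\epsilon r_{I,\delta}(E)\sharp^\epsilon q^\epsilon_{I,\delta}\big)$, where $q^\epsilon_{I,\delta}:=1-p^\epsilon_{I,\delta}$ is the magnetic symbol of $Q^\epsilon_{I,\delta}$; by Proposition \ref{F-est-dif-bdproj} it satisfies $\widetilde\nu\big(q^\epsilon_{I,\delta}-q_{I,\delta}\big)\le C(\widetilde\nu)\,\epsilon$ in $S^{-\infty}$, and also $q^\epsilon_{I,\delta}\sharp^\epsilon q^\epsilon_{I,\delta}=q^\epsilon_{I,\delta}$. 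Since $H^\epsilon_\Gamma-E\bb1=\Op^\epsilon(h-E)$, one has $Q^\epsilon_{I,\delta}(H^\epsilon_\Gamma-E\bb1)Q^\epsilon_{I,\delta}\,S^\epsilon(E)=\Op^\epsilon\big(q^\epsilon_{I,\delta}\sharp^\epsilon(h-E)\sharp^\epsilon q^\epsilon_{I,\delta}\sharp^\epsilon r_{I,\delta}(E)\sharp^\epsilon q^\epsilon_{I,\delta}\big)$. Expanding every $\sharp^\epsilon$ as $\sharp+\epsilon\,\mathfrak{z}_\epsilon$ through Proposition \ref{propB.14}, replacing $q^\epsilon_{I,\delta}$ by $q_{I,\delta}$ modulo $O(\epsilon)$ in $S^{-\infty}$, and cancelling the leading term by the non-magnetic identity above, one obtains that this symbol equals $q^\epsilon_{I,\delta}+\epsilon\,s^\epsilon(E)$ with $s^\epsilon(E)$ of order $\le 0$ (indeed of strictly negative order, because each $\mathfrak{z}_\epsilon$ in Proposition \ref{propB.14} lowers the order by two) and with $S^0_1$-seminorms bounded uniformly in $(\epsilon,E)\in[0,\epsilon_0]\times I_\circ$. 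By the continuity of the magnetic Weyl quantization on $S^0_1$ (Appendix B of \cite{CHP-1}) this yields $\big\|\Op^\epsilon(s^\epsilon(E))\big\|_{\mathbb{B}(L^2(\X))}\le C$ uniformly, so for $\epsilon$ small enough $Q^\epsilon_{I,\delta}+\epsilon\,\Op^\epsilon(s^\epsilon(E))$ is invertible inside $Q^\epsilon_{I,\delta}L^2(\X)$ by a Neumann series; this produces a bounded right inverse of $Q^\epsilon_{I,\delta}(H^\epsilon_\Gamma-E\bb1)Q^\epsilon_{I,\delta}$, which, being self-adjoint, is then bijective with everywhere-defined bounded inverse, i.e.\ $E\in\rho\big(Q^\epsilon_{I,\delta}H^\epsilon_\Gamma Q^\epsilon_{I,\delta}\big)$. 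The compactness of $I_\circ$ makes the bound on $s^\epsilon(E)$, hence $\epsilon_0$, uniform in $E\in I_\circ$, completing the verification of admissibility.

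The step I expect to be the main obstacle is precisely this last one: the order bookkeeping in the magnetic Moyal calculus, namely checking that every product written above is well defined and, above all, that $s^\epsilon(E)$ really is a bounded operator, uniformly in $\epsilon$ and $E$. This requires carefully tracking the H\"ormander orders ($h-E\in S^2_1$ is unbounded, while $r_{I,\delta}(E)\in S^{-2}_1$, $q_{I,\delta}\in S^0_1$ with $q_{I,\delta}-q^\epsilon_{I,\delta}\in S^{-\infty}$) and exploiting that each correction $\mathfrak{z}_\epsilon$ lowers the order by two, so that all error contributions are of strictly negative order. A related subtlety is that $r_{I,\delta}(E)$ is only an inverse \enquote{inside the corner} $Q_{I,\delta}L^2(\X)$, not a global parametrix for $H_\Gamma-E\bb1$, so the outermost factors $q_{I,\delta}$ (respectively $q^\epsilon_{I,\delta}$) in the products cannot be dropped: keeping them in place is exactly what legitimizes the cancellation of the leading term.
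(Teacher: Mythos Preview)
Your perturbative magnetic-parametrix strategy is correct and, once the order bookkeeping is done, is actually more streamlined than the paper's argument. The iterative estimate you outline does go through: writing $A_k$ for the $k$-fold $\sharp^\epsilon$-product of the first $k$ factors of $q^\epsilon_{I,\delta},(h-E),q^\epsilon_{I,\delta},r_{I,\delta}(E),q^\epsilon_{I,\delta}$ and $B_k$ for the corresponding non-magnetic $\sharp$-product with $q_{I,\delta}$, one checks inductively that $A_k-B_k\in\epsilon\,S^{m_k}_1$ with $m_2=m_3=0$, $m_4=m_5=-2$, using that each $\mathfrak z_\epsilon$ lowers order by two and that $q^\epsilon_{I,\delta}-q_{I,\delta}\in\epsilon\,S^{-\infty}$. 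Since $B_5=q_{I,\delta}$ by the non-magnetic resolvent identity, you get the desired $Q^\epsilon_{I,\delta}+\epsilon\,\Op^\epsilon\big(s^\epsilon(E)\big)$ with $s^\epsilon(E)\in S^{-2}_1$ uniformly, and the Neumann-series conclusion in the corner $Q^\epsilon_{I,\delta}L^2(\X)$ is clean because both factors in the product carry outer $Q^\epsilon_{I,\delta}$'s.

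There is, however, one genuine ingredient you take for granted: the assertion $r_{I,\delta}(E)\in S^{-2}_1$ (uniformly for $E\in I_\circ$). Neither Remark \ref{R-QHQ-sp-desc} nor Remark \ref{R-ordSymbOpm} gives this; they only record existence of the resolvents and the symbol classes of $q_{I,\pm,\delta}$. In the paper this fact is precisely the content of Steps~2--3 of the proof: $r_{I,-,\delta}(E)\in S^{-\infty}$ follows from $q_{I,-,\delta}\in S^{-\infty}$, while $r_{I,+,\delta}(E)\in S^{-2}_1$ requires a Beals-criterion argument (Proposition~6.3 in \cite{IMP-2}), essentially the same kind of reasoning as in Corollary~\ref{C-Symb-rez-Hdelta}. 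You must either reproduce that argument or, more directly, observe that $q_{I,\delta}\sharp h\sharp q_{I,\delta}=h+S^{-\infty}$ is elliptic in $S^2_1$, so the resolvent of $Q_{I,\delta}H_\Gamma Q_{I,\delta}+cP_{I,\delta}$ (for any $c\notin\mathring{I}$) has symbol in $S^{-2}_1$, from which $r_{I,\delta}(E)\in S^{-2}_1$ follows after subtracting $(c-E)^{-1}p_{I,\delta}$.

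By contrast, the paper's proof is more elaborate: it keeps the orthogonal decomposition $Q_{I,\delta}=Q_{I,-,\delta}\oplus Q_{I,+,\delta}$ throughout, defines $R^\epsilon_{I,\pm}(E):=\Op^\epsilon\big(r_{I,\pm}(E)\big)$ separately, and then --- because $\Op^\epsilon(q_{I,\pm,\delta})$ are no longer projections --- introduces auxiliary Riesz projections $\widetilde Q^\epsilon_{I,\pm}$, $\widetilde P^\epsilon_I$ of the magnetic operator $\Op^\epsilon\big(q_{I,\delta}\sharp h\sharp q_{I,\delta}\big)$ to recover the identity modulo $\mathcal O(\epsilon)$ (its Step~5). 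Your device of sandwiching the parametrix with the exact magnetic projector $q^\epsilon_{I,\delta}$ on both sides replaces that step entirely and keeps the argument inside the corner from the start; the price you pay is having to establish $r_{I,\delta}(E)\in S^{-2}_1$ up front, which is exactly the paper's Steps~2--3.
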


\begin{proof}
	With Definition \ref{H-band-red} in mind, we have to find a non-trivial interval $I_\circ$ as in the statement above, that is contained in the resolvent set of $Q^\epsilon_{I,\delta}H^\epsilon_\Gamma Q^\epsilon_{I,\delta}$.  We intend to use the conclusion of Proposition \ref{P-H-band-0-field} giving a spectral gap for the 0 field Hamiltonian and 
	proceed as in \cite{CHP-2} using the results in \cite{AMP} or \cite{CP-1} concerning the continuity of the spectrum with respect to the magnetic field.
	Denoting by $q^{\epsilon}_{I,\delta}$ the magnetic symbol of $Q^\epsilon_{I,\delta}$ and by $q_{I,\delta}$ the Weyl symbol of $Q_{I,\delta}$ we use Propositions \ref{propB.14}
		and \ref{F-est-dif-bdproj} in order to obtain the following estimate:
	\begin{equation}\label{F-pert}
	\begin{array}{ll}
	&Q_{I,\delta}^{\epsilon}H^{\epsilon}_\Gamma Q_{I,\delta}^{\epsilon}=\Op^{\epsilon}\big(q^{\epsilon}_{I,\delta}
	\,\sharp^\epsilon\,h\,\sharp^\epsilon\, q^{\epsilon}_{I,\delta}\big)=\Op^{\epsilon}\big((1-p^{\epsilon}_{I,\delta})\sharp^\epsilon h\sharp^\epsilon(1-p^{\epsilon}_{I,\delta})\big)\\
	&=\Op^{\epsilon}(h)-\Op^{\epsilon}\big(p^{\epsilon}_{I,\delta}\sharp^\epsilon h\sharp^\epsilon(1-p^{\epsilon}_{I,\delta})\big)-\Op^{\epsilon}\big(h\sharp^\epsilon p^{\epsilon}_{I,\delta}\big)\\
	&=\Op^{\epsilon}(h)-\Op^{\epsilon}\big(p_{I,\delta}\sharp h\sharp(1-p_{I,\delta})\big)-\Op^{\epsilon}\big(h\sharp p_{I,\delta}\big)\,+\,\mathcal{O}(\epsilon)\\
	&=\Op^{\epsilon}\big(q_{I,\delta}
	\,\sharp\,h\,\sharp\,q_{I,\delta}\big)\,+\,\mathcal{O}(\epsilon),
	\end{array} \end{equation}
with $\mathcal{O}(\epsilon)\in\mathbb{B}\big(L^2(\X)\big)$.

\noindent\textbf{Step 1:}
	We begin by noticing that Propositions \ref{propB.14}
	and \ref{F-est-dif-bdproj} allow us to estimate the difference $q^\epsilon_{I,\delta}-q_{I,\delta}=(1-p^\epsilon_{I,\delta})-(1-p_{I,\delta})=p_{I,\delta}-p^\epsilon_{I,\delta}$ as an $\mathcal{O}(\epsilon)\in\mathbb{B}\big(L^2(\X)\big)$. Moreover 
	from Remark \ref{R-ordSymbOpm} we know that $q_{I,-,\delta}\in S^{-\infty}(\X\times\X^*)$ and  $q_{I,+,\delta}\in S^0_1(\X\times\X^*)$. 
	
	We  recall Definition \ref{D-Qmp} and Remark \ref{R-desc-QI-Epm}, denote by $q_{I,\pm,\delta}$ the Weyl symbols of the orthogonal projections $Q_{I,\pm,\delta}$ and notice that
	\beq\label{F-desc-symb}
	\Op\big(q_{I,\delta}
	\,\sharp\,h\,\sharp\,q_{I,\delta}\big)=
	\Op\big(q_{I,-,\delta}\,\sharp\,h\,\sharp\,q_{I,-,\delta}\big)+\Op\big(q_{I,+,\delta}\,\sharp\,h\,\sharp\,q_{I,+,\delta}\big)\,.
	\eeq
	Hence, using Proposition \ref{propB.14}, the decomposition \eqref{F-desc-symb}, the fact that $Q_{I,\pm,\delta}=\Op\big(q_{I,\pm,\delta}\big)$ are projections and finally Proposition \ref{propB.14} once again, we can write
\begin{align}\nonumber
	q_{I,-,\delta}\,\sharp\,h\,\sharp\,q_{I,-,\delta}&=q_{I,-,\delta}\,\sharp\,q_{I,-,\delta}\,\sharp\,h\,\sharp\,q_{I,-,\delta}\,\sharp\,q_{I,-,\delta}=q_{I,-,\delta}\,\sharp\,(q_{I,-,\delta}\,\sharp\,h\,\sharp\,q_{I,-,\delta})\,\sharp\,q_{I,-,\delta}\\ \nonumber
	&=q_{I,-,\delta}\,\sharp^\epsilon\,(q_{I,-,\delta}\,\sharp\,h\,\sharp\,q_{I,-,\delta})\,\sharp^\epsilon\,q_{I,-,\delta}\,+\,\mathcal{O}(\epsilon),\\ \label{F-magn--}
\Op^\epsilon\big(q_{I,-,\delta}\,\sharp\,h\,\sharp\,q_{I,-,\delta}\big)&=\Op^\epsilon\big(q_{I,-,\delta}\big)\Op^\epsilon\big(q_{I,-,\delta}\,\sharp\,h\,\sharp\,q_{I,-,\delta}\big)\Op^\epsilon\big(q_{I,-,\delta}\big)\,+\,\mathcal{O}(\epsilon),
	\end{align}
\begin{align}\nonumber
q_{I,+,\delta}\,\sharp\,h\,\sharp\,q_{I,+,\delta}&=q_{I,+,\delta}\,\sharp\,q_{I,+,\delta}\,\sharp\,h\,\sharp\,q_{I,+,\delta}\,\sharp\,q_{I,+,\delta}\\ \nonumber
&=q_{I,+,\delta}\,(1-p_{I,\delta}-q_{I,-,\delta})\,\sharp\,h\,\sharp\,(1-p_{I,\delta}-q_{I,-,\delta})\,\sharp\,q_{I,+,\delta}\\ \nonumber
&=q_{I,+,\delta}\sharp h\sharp q_{I,+,\delta}-q_{I,+,\delta}\sharp\big[(p_{I,\delta}+q_{I,-,\delta})\sharp\,h\,\sharp\,(1-p_{I,\delta}-q_{I,-,\delta})\big]\sharp q_{I,+,\delta}\\ \nonumber
&\hspace*{2,65cm}-q_{I,+,\delta}\sharp\big[h\sharp(p_{I,\delta}+q_{I,-,\delta})\big]\sharp q_{I,+,\delta}
\end{align} 
\begin{align}
\nonumber
&=h-(p_{I,\delta}+q_{I,-,\delta})\sharp h\sharp(1-p_{I,\delta}-q_{I,-,\delta})-h\sharp(p_{I,\delta}+q_{I,-,\delta})\\ \nonumber
&\hspace*{2,65cm}-q_{I,+,\delta}\sharp\big[(p_{I,\delta}+q_{I,-,\delta})\sharp\,h\,\sharp\,(1-p_{I,\delta}-q_{I,-,\delta})\big]\sharp q_{I,+,\delta}\\ \nonumber
&\hspace*{2,65cm}-q_{I,+,\delta}\sharp\big[h\sharp(p_{I,\delta}+q_{I,-,\delta})\big]\sharp q_{I,+,\delta}\\ \nonumber
&=q_{I,+,\delta}\sharp^\epsilon h\sharp^\epsilon q_{I,+,\delta}-q_{I,+,\delta}\sharp^\epsilon\big[(p_{I,\delta}+q_{I,-,\delta})\sharp\,h\,\sharp\,(1-p_{I,\delta}-q_{I,-,\delta})\big]\sharp^\epsilon q_{I,+,\delta}\\ \nonumber
&\hspace*{2,65cm}-q_{I,+,\delta}\sharp^\epsilon\big[h\sharp(p_{I,\delta}+q_{I,-,\delta})\big]\sharp^\epsilon q_{I,+,\delta}\ +\ \mathcal{O}(\epsilon),
\end{align}
\begin{align}\label{F-mag-+}
\Op^\epsilon\big(q_{I,+,\delta}\,\sharp\,h\,\sharp\,q_{I,+,\delta}\big)&=\Op^\epsilon\big(q_{I,+,\delta}\big)\Op^\epsilon\big(q_{I,+,\delta}\,\sharp\,h\,\sharp\,q_{I,+,\delta}\big)\Op^\epsilon\big(q_{I,+,\delta}\big)\,+\,\mathcal{O}(\epsilon),
\end{align}
with $\mathcal{O}(\epsilon)\in\mathbb{B}\big(L^2(\X)\big)$.
	
\noindent\textbf{Step 2:} 
	For any $E\in \mathring{I}$, taking into account that we keep $\delta\in(0,\delta_0)$ fixed, we shall eliminate the reference to $\delta$ in the notation for the resolvents defined in point (3) of Remark \ref{R-QHQ-sp-desc} and their symbols and we shall also consider the magnetic operators defined by these symbols:
	\beq\label{D-RIepsilon}
	R^\epsilon_{I,\pm}(E)\,:=\,\Op^\epsilon\big(r_{I,\pm}(E)\big),\qquad R^\epsilon_I(E)\,:=\,R^\epsilon_{I,-}(E)+R^\epsilon_{I,+}(E)\,.
	\eeq
	For 0 magnetic field, we conclude from Remark \ref{R-QHQ-sp-desc} (3) that 
	$$
	R_{I,\pm}(E)=Q_{I,\pm,\delta}\,R_{I,\pm}(E)=R_{I,\pm}(E)\, Q_{I,\pm,\delta}\,,
	$$
	i.e. at the level of the symbols we have the equalities:
	\begin{equation}\label{F-band-res}
	r_{I,\pm}(E)= q_{I,\pm,\delta}\,\sharp\, r_{I,\pm}(E)= r_{I,\pm}(E)\,\sharp\, q_{I,\pm,\delta}\,.
	\end{equation}
	In order to use these equalities in our next estimation \eqref{F-epsilon-ortog} we have to study the regularity of the distributions $r_{I,\pm}(E)$. 
	We shall first consider the symbol $r_{I,-}(E)$ for the resolvent in $E$ of the bounded self-adjoint operator $$Q_{I,-,\delta}H_\Gamma Q_{I,-\delta}=\Op\big(q_{I,-,\delta}\sharp h\sharp q_{I,-,\delta}\big)$$ 
	acting in $Q_{I,-,\delta}\mathcal{H}$. 
	We know that $h\in S^2_1(\X\times\X^*)$ and $q_{I,-,\delta}\in S^{-\infty}(\X\times\X^*)$ so that $q_{I,-,\delta}\sharp h\sharp q_{I,-,\delta}\in S^{-\infty}(\X\times\X^*)\subset S^0_1(\X\times\X^*)$ and by Proposition 6.1 in \cite{IMP-2} we deduce that $r_{I,-}(E)\in S^0_1(\X\times\X^*)$. But this together with \eqref{F-band-res} imply that $r_{I,-}(E)\in  S^{-\infty}(\X\times\X^*)$.
	\medskip

\noindent\textbf{Step 3:}
	We shall prove the following fact that replaces Proposition 7.1 in \cite{CHP-2}:
\beq\label{F-symb-RI+} 
		r_{I,+}(E)\,\in\,S^{-2}_1(\X\times\X^*).
\eeq
		We intend to apply Proposition 6.3 from \cite{IMP-2} giving the class of the symbol of the inverse of an invertible  magnetic pseudodifferential operator.
		Let us recall that $ q_{I,+,\delta}\in S^0_1(\X\times\X^*)$  (see Remark \ref{R-ordSymbOpm}) and $h\in S^2_1(\X\times\X^*)$.
		Thus, if we recall that the order of the Moyal product of two H\"{o}rmander type symbols is the sum of their orders, we deduce that the operator $H_{\Gamma,+}:=Q_{I,+,\delta}H_\Gamma Q_{I,+,\delta}$ has a symbol 
		$$
		h_+\,=\,q_{I,+,\delta}\,\sharp\, h\,\sharp\, q_{I,+,\delta}\in S^2_1(\X\times\X^*).
		$$
		On the other hand, we may consider the operators $H_{\Gamma,+}$ and $R_{I,+}(E):=\big(H_{\Gamma,+}-E\bb1\big)^{-1}$ as elements of $ \mathbb{B}\big(Q_{I,+,\delta}L^2(\X)\big)$ and then we have the identity $ \big(H_{\Gamma,+}-E\bb1_{Q_{I,+,\delta}L^2(\X)}\big)R_{I,+}(E)=\bb1_{Q_{I,+,\delta}L^2(\X)}$. 
		If we denote by
		$$
		Q_{I,\leq}\,:= P_I\oplus Q_{I,-,\delta}\,:= \,\Op\big(q_{I,\leq}\big) 
		$$
		and using Proposition \ref{P-Symb-PI} and Remark \ref{R-ordSymbOpm} we deduce that $q_{I,\leq}\in S^{-\infty}(\X\times\X^*)$ and we obtain the following relation in $\mathbb{B}\big(L^2(\X)\big)$:
		$$
		\forall E\in \mathring{I},\qquad \Op\big(r_{I,+}(E)\big)\,=\,Q_{I,+, \delta}\,R_{I,+}(E)\, Q_{I,+,\delta}\,=\,Q_{I,+, \delta}\Big(Q_{I,\leq}+\big(H_{\Gamma,+}-EQ_{I,+, \delta}\big)\Big)^{-1}Q_{I,+,\delta}.
		$$
Let us consider the operator 
			$$
			\Big(Q_{I,\leq}+\big(H_{\Gamma,+}-EQ_{I,+, \delta}\big)\Big)^{-1}=
			Q_{I,\leq}+Q_{I,+,\delta}\,R_{I,+}(E)\, Q_{I,+, \delta}\in\mathbb{B}\big(L^2(\X)\big).
			$$
			Coming back to the symbols of these pseudodifferential operators, we would like to prove that
			\beq\label{mai1}
			\widetilde{r}_{I,+}(E) =q_{I,\leq}+r_{I,+}(E)\,\in S_1^{-2}(\X\times\X^*)
			\eeq
			where $\widetilde{r}_{I,+}(E) $ denotes the symbol of  the bounded operator $\Big(Q_{I,\leq}+\big(H_{\Gamma,+}-EQ_{I,+}\big)\Big)^{-1}$. 
			In order to control the regularity of this symbol  we shall use a standard procedure based on the Beals criterion. The original idea of the proof can be found in Section 3 of \cite{Bea-77}, but we shall make use here of the magnetic version of the result appearing in Subsection 6.1 of \cite{IMP-2}, Proposition 6.3. Let us include it here for the convenience of the reader:
			\begin{proposition}
				Suppose that $F\in S^{2}_{1}(\X\times\X^*)$ is such that $T:=\Op^A(F)$ is invertible in $\mathbb{B}\big(L^2(\X)\big)$ with bounded inverse $T^{-1}$. We denote by $T^{-1}:=\Op^A(F^-)$ for some $F^-\in\mathscr{S}^\prime(\Xi)$, and by  $\mathcal{s}_{2}(\xi):=<\xi>^{2}$. If  we have $\Op^A\big(\mathcal{s}_{2}\, \sharp^B\,  F^-\big)\in\mathbb{B}\big(L^2(\X)\big)$, then $F^-\in S^{-2}_{1} (\X\times\X^*)$.
			\end{proposition}
			We intend to use this statement with $A=0$ taking 
			\beq\label{F-h+}
			F=q_{I,\leq}+h_+-Eq_{I,+,\delta}=q_{I,\leq}+q_{I,+,\delta}\,\sharp\, h\,\sharp\, q_{I,+,\delta}-{ E}q_{I,+,\delta} \in
			S^2_1(\X\times\X^*).
			\eeq
			From the above arguments we know that $\Op(F)$ is invertible in $\mathbb{B}\big(L^2(\X)\big)$ with bounded inverse $\Op(F^-)$ and 
			$$
			F^-=\widetilde{r}_{I,+}(E)=q_{I,\leq}+r_{I,+}(E)=q_{I,\leq}+q_{I,+,\delta}\sharp r_{I,+}(E)\,\sharp\, q_{I,+,\delta}.
			$$
			Thus we only have to prove that 
			$$
			\Op\big(\mathcal{s}_2\, \sharp\,  \widetilde{r}_{I,+}(E)\big)\,\in\,\mathbb{B}\big(L^2(\X)\big).
			$$
			The fact that $h\in S^2_1(\X\times\X^*)$ is elliptic (see Remark \ref{R-HGammaOp}) implies the existence of a parametrix $\widetilde{h}\in S^{-2}_1(\X\times\X^*)$, satisfying, for some $h_\infty \in S^{-\infty}(\X\times\X^*)$ the relation:
			$$
			\widetilde{h}\,\sharp\,h\,=\,1\,+\,h_\infty\,.
			$$
			Composing to the left by $\mathcal{s}_2$, we get 
			$$\mathcal{s}_2=\mathcal{s}_2\,\sharp\,\widetilde{h}\,\sharp\,h\,-\,\mathcal{s}_2\,\sharp\,h_\infty,$$
			with $\mathcal{s}_2\,\sharp\,\widetilde{h}\in S^0_1(\X\times\X^*)$ and $\mathcal{s}_2\,\sharp\,h_\infty\in S^{-\infty}(\X\times\X^*)$.  These two symbols generate $L^2$-bounded operators. Since $h-h_+$ is a smoothing symbol, we are left with $h_+\sharp \widetilde{r}_{I,+}(E)$. But from \eqref{F-h+} we get:
			$$ h_+\sharp\widetilde{r}_{I,+}(E)= 1+E q_{I,+,\delta}\sharp\widetilde{r}_{I,+}(E)-q_{I,\leq}\sharp\widetilde{r}_{I,+}(E),$$
			defining a bounded operator in $L^2(X)$. This ends the proof of \eqref{mai1} and \eqref{F-band-res}.

\noindent\textbf{Step 4:}
	Using Proposition~\ref{propB.14} and the results of Step 2 and 3, we conclude that
		$$
		q_{I,\pm,\delta},\sharp^\epsilon\, r_{I,\pm}(E)=q_{I,\pm,\delta}\,\sharp\, r_{I,\pm}(E)\,+\,\mathfrak{z}_\pm^\epsilon
		$$
		with $\mathfrak{z}_-^\epsilon\in S^{-\infty}(\X\times\X^*)$ and $\mathfrak{z}_+^\epsilon\in S^{-4}_1(\X\times\X^*)$ and also 
		$ \|\Op^\epsilon\big(\mathfrak{z}_\pm^\epsilon\big)\|_{\mathbb{B}(L^2(\X))}=\mathcal{o}(\epsilon)\,$.
	Finally, using similar arguments as above, we deduce  that
	\begin{align} \label{F-epsilon-ortog}
		R^\epsilon_{I,\pm} (E)  =\Op^\epsilon\big(r_{I,\pm}(E)\big)&=  \Op^\epsilon \big( q_{I,\pm,\delta}\,\sharp\, r_{I,\pm} (E) \big)= \Op^\epsilon \big( q_{I,\pm,\delta}\,\sharp^\epsilon\, r_{I,\pm} (E) \big)+\mathcal{O}(\epsilon) \nonumber \\
		&=\Op^\epsilon \big(q_{I,\pm,\delta} \big)\, R^\epsilon_{I,\pm}(E) +\mathcal{O}(\epsilon) \\ 
		&\hspace*{-2,45cm}=\Op^\epsilon \big(r_{I,\pm}(E)\,\sharp\,q_{I,\pm,\delta} \big) 
		=\Op^\epsilon \big( r_{I,\pm}(E)\,\sharp^\epsilon\, q_{I,\pm,\delta} \big)+\mathcal{O}(\epsilon)
		=R^\epsilon_{I,\pm}(E) \Op^\epsilon \big(q_{I,\pm,\delta} \big)+\mathcal{O}(\epsilon)\,.\nonumber 
	\end{align}

\noindent\textbf{Step 5:}
	The remaining difficulty is that the operators $\Op^\epsilon(q_{I,\pm,\delta})$ are no longer two orthogonal projections in $L^2(\X)$. Nevertheless, they are not \enquote{far} from two mutually orthogonal, orthogonal projections and that is what we need to prove now.
	
{  We use Remark \ref{R-QHQ-sp-desc}, the formulas for Riesz projections associated with some isolated spectral intervals and the results concerning the continuity (even Lipschitz regularity) of the spectral edges with respect to the intensity of the constant magnetic field (see \cite{AMP}, \cite{CP-1}) in order to get the following result.
\begin{proposition} There exists some $\epsilon_0>0$ such that for any $\epsilon\in[0,\epsilon_0]$ we can find
	\begin{itemize}
		\item a smooth contour $\mathscr{C}_-\subset\mathbb{C}$ having the segment $[-E_0,-\Lambda_-]$ in its interior and the semi-axis $[0,+\infty)$ in its exterior and remaining at a strictly positive distance from the spectrum of $\Op^{\epsilon}\big(q_{I,\delta}\,\sharp\, h\,\sharp\, q_{I,\delta}\big)$, 
		\item a smooth contour $\mathscr{C}_0\subset\mathbb{C}$ leaving the segment $[-C\epsilon,C\epsilon]$ in its interior and the rest of the spectrum of $\Op^{\epsilon}\big(q_{I,\delta}\,\sharp\, h\,\sharp\, q_{I,\delta}\big)$ in its exterior at some strictly positive distance,
	\end{itemize} 
		 such that we can write the identities
		 $$
		 q_{I,-,\delta}=\frac{1}{2\pi i}\int_{\mathscr{C}_-}\widetilde{q}^{0}(\z)\,d\z,\qquad p_{I,\delta}=\frac{1}{2\pi i}\int_{\mathscr{C}_0}\widetilde{q}^{0}(\z)\,d\z,
		 $$
where	
	$$
	\widetilde{q}^{0}(\z)\,:= \,\big(q_{I,\delta}\,\sharp\, h\,\sharp\, q_{I,\delta}-\z\bb1\big)^{-}
	$$
is the Weyl symbol of the resolvent of $q_{I,\delta}\,\sharp\, h\,\sharp\, q_{I,\delta}$ in $\z$, the inverse being taken for the usual Moyal product,
	\end{proposition}}

Let us define $\widetilde{q}^{\epsilon}(\z):=\big(q_{I,\delta}\,\sharp\, h\,\sharp\, q_{I,\delta}-\z\bb1\big)^{-}_\epsilon$, with the inverse  taken with respect to the magnetic Moyal product associated to the magnetic field $\epsilon B^\circ$, and also
		$$
		\widetilde{Q}^{\epsilon}_{I,-}=\Op^{\epsilon}\Big(\frac{1}{2\pi i}\int_{\mathscr{C}_-}\widetilde{q}^{\epsilon}(\z)\,d\z\Big)
		$$
which is an orthogonal projection. Thus, using Proposition \ref{propB.14} we conclude that:
	\begin{align*}
	& \Op^{\epsilon}\big(q_{I,-,\delta}\big)=\Op^\epsilon\Big(\frac{1}{2\pi i}\int_{\mathscr{C}_-}\widetilde{q}^{0}(\z)\,d\z\Big)=\widetilde{Q}^{\epsilon}_{I,-}+\mathcal{O}(\epsilon).
	\end{align*}
	
	We can also notice that
	\begin{align*}
	\Op^{\epsilon}\big(q_{I,+,\delta}\big)&=\Op^{\epsilon}\big(1-p_{I,\delta}-q_{I,-,\delta}\big)=\bb1-\Op^{\epsilon}\big(p_{I,\delta}\big)-\Op^{\epsilon}\big(q_{I,-,\delta}\big) \\
	&=\bb1-\Op^{\epsilon}\left(\frac{1}{2\pi i}\oint_{\mathscr{C}_0}d\z\,\widetilde{q}^{0}(\z)\right)-\Op^{\epsilon}\left(\frac{1}{2\pi i}\oint_{\mathscr{C}_-}d\z\,\widetilde{q}^{0}(\z)\right).
	\end{align*}
	We introduce the orthogonal projections:
	\begin{align*}
\widetilde{P}^{\epsilon}_I:=\Op^{\epsilon}\left(\frac{1}{2\pi i}\oint_{\mathscr{C}_0}d\z\,\widetilde{q}^{\epsilon}(\z)\right),\quad \widetilde{Q}^{\epsilon}_{I,+}:=\bb1\,\ominus\,\widetilde{P}^{\epsilon}_I\,\ominus\,\widetilde{Q}^{\epsilon}_{I,-}
	\end{align*}
and notice that due to Proposition \ref{propB.14} we have that
	\begin{equation*}
	\Op^{\epsilon}\big(q_{I,+,\delta}\big)=\widetilde{Q}^{\epsilon}_{I,+}\,+\,\mathcal{O}(\epsilon).
	\end{equation*}
	Finally, putting all these formulas together we see that
	\begin{align*}
		\Op^\epsilon\big(q_{I,\pm,\delta}\big)\Op^\epsilon\big(q_{I,\pm,\delta}\big)&=
		\Big[\widetilde{Q}^\epsilon_{I,\pm}+\mathcal{O}(\epsilon)\Big]\Big[\widetilde{Q}^\epsilon_{I,\pm}+\mathcal{O}(\epsilon)\Big]=\widetilde{Q}^\epsilon_{I,\pm}+\mathcal{O}(\epsilon)=\Op^\epsilon\big(q_{I,\pm,\delta}\big)+\mathcal{O}(\epsilon),\\
		\Op^\epsilon\big(q_{I,\pm,\delta}\big)\Op^\epsilon\big(q_{I,\mp}\big)&=
		\Big[\widetilde{Q}^\epsilon_{I,\pm}+\mathcal{O}(\epsilon)\Big]\Big[\widetilde{Q}^\epsilon_{I,\mp}+\mathcal{O}(\epsilon)\Big]=\mathcal{O}(\epsilon),\\
		\Op^\epsilon\big(q_{I,\pm,\delta}\big)\Op^\epsilon\big(p_{I,\delta}\big)&=
		\Big[\widetilde{Q}^\epsilon_{I,\pm}+\mathcal{O}(\epsilon)\Big]\Big[\widetilde{P}^\epsilon_{I}+\mathcal{O}(\epsilon)\Big]=\mathcal{O}(\epsilon).
	\end{align*}
	Recalling \eqref{F-magn--} and \eqref{F-mag-+} we conclude that for any $E\in I_\circ$ we have the estimations
	\begin{align} \nonumber
	&\Big(Q^\epsilon_{I,\delta}H^\epsilon_\Gamma Q^\epsilon_{I,\delta}-E\bb1\Big)R^\epsilon_I(E)=\Big[Q^\epsilon_{I,\delta}\Big(H^\epsilon_\Gamma-E\Big)Q^\epsilon_{I,\delta}-EP^\epsilon_{I,\delta}\Big]R^\epsilon_I(E)=\\ \nonumber
	&\hspace*{2cm}\hspace*{14pt}=\Big[\Op^\epsilon\big(q_{I,-,\delta}\big)\Op^\epsilon\big(q_{I,-,\delta}\,\sharp\,(h-E)\,\sharp\, q_{I,-,\delta}\big)\Op^\epsilon\big(q_{I,-,\delta}\big) \\
	&\hspace*{2cm}\hspace*{28pt}+\Op^\epsilon\big(q_{I,+,\delta}\big)\Op^\epsilon\big(q_{I,+,\delta}\,\sharp\,(h-E)\,\sharp\ q_{I,+,\delta}\big)\Op^\epsilon\big(q_{I,+,\delta}\big)\,+\,\mathcal{O}(\epsilon)-EP^\epsilon_{I,\delta}\Big]\times \nonumber \\ \nonumber
	&\hspace*{2cm}\hspace*{28pt}\times\Big(\Op^\epsilon\big(q_{I,-,\delta}\big)\Op^\epsilon\big(r^\epsilon_-(E)\big)+\Op^\epsilon\big(q_{I,+,\delta}\big)\Op^\epsilon\big(r^\epsilon_+(E)\big)+\mathcal{O}(\epsilon)\Big)\\ \label{F-RIepsilon}
	&\hspace*{2cm}\hspace*{14pt}=\widetilde{Q}^\epsilon_{I,-}+\widetilde{Q}^\epsilon_{I,+}+\mathcal{O}(\epsilon).
	\end{align}
	Thus, 
	for any closed interval $I_\circ\subset\mathring{I}$ there exists some $\epsilon_0>0$, such that for any $E\in I_\circ$ the operator \break  $\Big(Q^\epsilon_{I,\delta}H^\epsilon_\Gamma Q^\epsilon_{I,\delta}-E\bb1\Big)$ is invertible as operator in $Q^\epsilon_{I,\delta}L^2(\X)$, uniformly for $(E,\epsilon)\in I_\circ\times[0,\epsilon_0]$ and this finishes our proof.
\end{proof}

\begin{remark}\label{R-control-2}
We notice that the upper bound on $\epsilon>0$ and thus the value of $\epsilon_0$ is controlled by the invertibility conditions involved by \eqref{F-RIepsilon} and thus depend on the norm of $R^\epsilon_I(E)$ that is controlled uniformly in $\epsilon>0$ by the distance $\dist(E,\mathbb{R}\setminus I)\leq\dist(I_\circ,\mathbb{R}\setminus I)$.
\end{remark}

Applying Proposition \ref{P-FS-arg} and taking in \eqref{F-Htilde-magn}:
\beq\label{FD-R-epskappa-bot}
R^{\epsilon}_\bot(0)\,:=\,R^{\epsilon}_{I}(0) \,=\,R^{\epsilon}_{I,+}(0)+R^{\epsilon}_{I,-}(0)
\eeq  
with the notation introduced  in the proof of  Proposition \ref{P-Hyp-H-magn}, we obtain the following statement.

\begin{proposition}\label{P-magn-FS-est}
	Given any compact interval $I_\circ\subset\mathring{I}$ containing $0$ in its interior there exist $\epsilon_0>0$ and $C>0$ (depending on $\epsilon_0>0$, such that for any $\epsilon\in[0,\epsilon_0]$ we have the estimation:
	$$
	\max\left \{ \sup_{e\in \sigma(H^{\epsilon}_\Gamma)\cap I_\circ}\dist\Big(e,\sigma(\widetilde{H}^{\epsilon}_{I,\delta})\Big),\; \sup_{e\in \sigma(\widetilde{H}^{\epsilon}_{I,\delta})\cap I_\circ}\dist\Big(e,\sigma(H^{\epsilon}_\Gamma)\Big)\right \}\leq\,\|H^{\epsilon}_\Gamma P^{\epsilon}_I\|^2\left(\frac{\ell_{I_\circ}}{\Lambda_I}\right)^2\frac{1}{d_{I_\circ}}\; \leq C \; \ell_{I_\circ}^2/d_{I_\circ}
	$$
	where $d_{I_\circ}:=\dist(I_\circ,\R\setminus \mathring{I})$, $\ell_{I_\circ}:=\sup\{|t|\,,\,t\in I_\circ\}$ and $\Lambda_I:=\min\big\{\Lambda_-,\Lambda_+\big\}$.
\end{proposition}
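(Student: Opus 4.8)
The plan is to feed the admissibility of the triple $\big(H^{\epsilon}_\Gamma, P^{\epsilon}_{I,\delta}, I_\circ\big)$, already verified in Proposition \ref{P-Hyp-H-magn}, into the abstract Feshbach--Schur comparison of Appendix \ref{A-FS-arg} (Proposition \ref{P-FS-arg}), specialized with the choice \eqref{FD-R-epskappa-bot} of the off-diagonal resolvent, $R^{\epsilon}_\bot(0) := R^{\epsilon}_I(0)$. First I would fix the compact interval $I_\circ \subset \mathring{I}$ with $0$ in its interior; by Proposition \ref{P-Hyp-H-magn} there is $\epsilon_0 > 0$ such that for every $\epsilon \in [0,\epsilon_0]$ the operator $Q^{\epsilon}_{I,\delta} H^{\epsilon}_\Gamma Q^{\epsilon}_{I,\delta} - E$ is boundedly invertible on $Q^{\epsilon}_{I,\delta} L^2(\X)$ for all $E \in I_\circ$, uniformly in $(E,\epsilon)$, which is precisely the hypothesis of Definition \ref{H-band-red}. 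This makes the dressed operator $\widetilde{H}^{\epsilon}_{I,\delta}$ of \eqref{F-Htilde-magn} and the auxiliary operator $Y^{\epsilon}_\delta \geq P^{\epsilon}_{I,\delta}$ of \eqref{F-Y} well defined and bounded on $P^{\epsilon}_{I,\delta} L^2(\X)$, and it allows the conclusion of Proposition \ref{P-FS-arg} to be applied.

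That abstract result yields a two-sided bound on $\dist\big(\cdot,\cdot\big)$ between $\sigma\big(H^{\epsilon}_\Gamma\big) \cap I_\circ$ and $\sigma\big(\widetilde{H}^{\epsilon}_{I,\delta}\big) \cap I_\circ$ in terms of the coupling norm $\|H^{\epsilon}_\Gamma P^{\epsilon}_{I,\delta}\|$, the energy size $\ell_{I_\circ} = \sup\{|t|\,,\,t \in I_\circ\}$, and the off-diagonal resolvent norms $\|R^{\epsilon}_I(0)\|$ and $\sup_{E \in I_\circ}\|R^{\epsilon}_I(E)\|$: the error picked up in comparing the Feshbach operator $F(E) - E\,P^{\epsilon}_{I,\delta}$ with $F(0) - E\,Y^{\epsilon}_\delta$ is of order $\|H^{\epsilon}_\Gamma P^{\epsilon}_{I,\delta}\|^2 \,\ell_{I_\circ}^2\, \|R^{\epsilon}_I(0)\|^2 \sup_{E \in I_\circ}\|R^{\epsilon}_I(E)\|$. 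Next I would insert the spectral localization of $Q^{\epsilon}_{I,\delta} H^{\epsilon}_\Gamma Q^{\epsilon}_{I,\delta}$ furnished by Proposition \ref{P-H-band-0-field} and its magnetic counterpart inside Proposition \ref{P-Hyp-H-magn}: on $Q^{\epsilon}_{I,\delta} L^2(\X)$ the spectrum stays in $\R \setminus \mathring{I}$ up to an $\mathcal{O}(\epsilon)$ that is absorbed by shrinking $\epsilon_0$, hence $\dist(0, \sigma) \geq \Lambda_I := \min\{\Lambda_-,\Lambda_+\}$ and $\dist(E, \sigma) \geq d_{I_\circ} := \dist(I_\circ, \R \setminus \mathring{I})$ for $E \in I_\circ$, giving $\|R^{\epsilon}_I(0)\| \leq 1/\Lambda_I$ and $\sup_{E \in I_\circ}\|R^{\epsilon}_I(E)\| \leq 1/d_{I_\circ}$. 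This produces the displayed bound $\|H^{\epsilon}_\Gamma P^{\epsilon}_I\|^2 (\ell_{I_\circ}/\Lambda_I)^2 (1/d_{I_\circ})$. Finally, since $H^{\epsilon}_\Gamma P^{\epsilon}_{I,\delta}$ is bounded uniformly in $\epsilon \in [0,\epsilon_0]$ — the range of $P^{\epsilon}_{I,\delta}$ lying in the domain of $H^{\epsilon}_\Gamma$, with uniform control following from Proposition \ref{F-est-dif-bdproj} and the Schwartz regularity of the global sections $\Psi_{\pm,\delta}$, as in Subsection 3.2 of \cite{CHP-1} — the quantity $\|H^{\epsilon}_\Gamma P^{\epsilon}_I\|^2/\Lambda_I^2$ is bounded by a constant $C$ depending only on the periodic data and on the fixed $\delta$, and the bound becomes $C\,\ell_{I_\circ}^2/d_{I_\circ}$.

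The main obstacle is not the manipulation of constants but making sure that everything is genuinely uniform in $\epsilon$ on the $\epsilon$-dependent subspace $P^{\epsilon}_{I,\delta} L^2(\X)$ — in particular that the resolvent $R^{\epsilon}_I(0)$ has a norm bounded uniformly for $\epsilon \in [0,\epsilon_0]$ — which is exactly what the admissibility in Proposition \ref{P-Hyp-H-magn} secures, and which Remark \ref{R-control-2} ties to the invertibility estimate \eqref{F-RIepsilon} and hence to $\dist(I_\circ, \R \setminus I)$; the permissible size of $\epsilon_0$ therefore degrades as $I_\circ$ is enlarged towards $\mathring{I}$. Given this, the proof reduces to quoting Proposition \ref{P-FS-arg} with $R^{\epsilon}_\bot(0) = R^{\epsilon}_I(0)$ and specializing its abstract constants as above.
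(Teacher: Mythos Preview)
Your proposal is correct and follows the same route as the paper: the paper's proof consists only of the sentence ``Applying Proposition~\ref{P-FS-arg} and taking in \eqref{F-Htilde-magn} $R^{\epsilon}_\bot(0):=R^{\epsilon}_{I}(0)$ \ldots\ we obtain the following statement,'' and you have simply unpacked what that application entails---verifying admissibility via Proposition~\ref{P-Hyp-H-magn}, reading off the resolvent bounds $\|R^{\epsilon}_I(0)\|\le 1/\Lambda_I$ and $\sup_{E\in I_\circ}\|R^{\epsilon}_I(E)\|\le 1/d_{I_\circ}$, and bounding $\|H^{\epsilon}_\Gamma P^{\epsilon}_{I,\delta}\|$ uniformly in $\epsilon$.
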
 

\section{The Peierls-Onsager effective Hamiltonian.}\label{S-PO-eH}

\subsection{The magnetic matrix.}\label{SS-symbol-PO-eff-ham}

An important step in our construction, when working with a constant magnetic field, is to put into evidence a Toeplitz matrix associated to the magnetic quasi-band Hamiltonian in the orthonormal basis of the quasi-Wannier functions. This allows then to define the Peierls-Onsager effective Hamiltonian of the quasi-band (the 5-th step in our strategy as described in Subsection \ref{SS-ProofStr}).

In our situation when the fiber of the quasi-band has dimension two, we have to pay attention to the 2-dimensional basis that we choose in order to compare the Peierls-Onsager effective Hamiltonian with the magnetic quasi-band Hamiltonian. More precisely, we shall compare the matrices of the two operators in the basis given by $\hat{\varphi}_\pm(\theta):=\widehat{\Upsilon}(\theta)^{-1}\varepsilon_\pm$ associated by the trivialization $\widehat{\Upsilon}(\theta):\int^\oplus_{\Sigma_I}d\theta\,\big[\widehat{\Pi}(\theta)\mathscr{F}_\theta\big]\overset{\sim}{\rightarrow}\Sigma_I\times\mathbb{C}^2$ to the canonical basis $\{\varepsilon_-,\varepsilon_+\}$ of $\mathbb{C}^2$. 
Let us define the \enquote{change-of-base matrix} unitary operator defined on each $\widehat{\Pi}_I(\theta)\mathscr{F}_\theta$ for $\theta\in \Sigma_I$:
\beq\label{DF-mItheta}
\widehat{\mathfrak{u}}_{I,\delta}(\theta)\hat \Psi_{\pm,\delta}(\theta)=\hat{\varphi}_\pm(\theta).
\eeq
This family of unitaries is smooth as a function of $\theta$ on $\Sigma_I$, thus we may find a ball $B_{2R}(\theta_0)$ where one can define a smooth logarithm.   Let us now choose a smooth cut-off function $\chi\in C^\infty_0\big(B_{2R}(\theta_0);[0,1]\big)$ with $\chi(\theta)=1$ for $\theta\in B_{R}(\theta_0)$ and define the smooth section
$$
\mathbb{T}_*\ni\theta\mapsto A(\theta):=\chi(\theta)\ln\widehat{\mathfrak{u}}_{I,\delta}(\theta)\in\mathbb{B}\big(\widehat{\Pi}_I(\theta)\mathscr{F}_\theta\big).
$$ 
We define the smooth global orthonormal frame
\beq\label{DF-Phi-pm}
\hat{\Phi}_{\pm,\delta}(\theta):=e^{ A(\theta)}\hat{\Psi}_{\pm,\delta}
\eeq
verifying:
\beq\begin{split}\label{mai2}
	\hat{\Phi}_{\pm,\delta}(\theta)&=\hat{\varphi}_\pm(\theta),\ \forall\theta\in B_R(\theta_0),\\
	\hat{\Phi}_{\pm,\delta}(\theta)&=\hat{\Psi}_{\pm,\delta}(\theta),\ \forall\theta\in\mathbb{T}_*\setminus\Sigma_I.
\end{split}\eeq
We can now define the new pair of quasi-Wannier functions 
$$
\Phi_{\pm,\delta}:=\mathcal{U}_\Gamma^{-1}\int_{\mathbb{T_*}}^\oplus d\theta\;  \hat{\Phi}_{\pm,\delta}
$$
and their $\Gamma$-translations $\Phi_{\pm,\gamma,\delta}$ for any $\gamma\in\Gamma$. They still form an orthonormal basis for the quasi-band subspace $P_{I,\delta}L^2(\X)$ given in Definition \ref{D-Pdelta}, so that we have the equality (similar to the one in Definition \ref{D-Pdelta}):
$$
P_{I,\delta}=\underset{\gamma\in\Gamma}{\sum}\left(\underset{j=\pm}{\sum}\big|\Phi_{j,\gamma,\delta}\big\rangle\big\langle\Phi_{j,\gamma,\delta}\big|\right)
$$

Passing now to the magnetic quasi-band let us notice that all the arguments in Subsection \ref{SS-mWfunc} only depend on two facts: (i) the existence of an orthonormal basis for the quasi-band projection and (ii) the Zak magnetic translations we apply to this orthonormal basis. Thus, starting with the basis $\big\{\Phi_{\pm,\gamma,\delta}\big\}_{\gamma\in\Gamma}$ we obtain the same orthogonal projection $P^\epsilon_{I,\delta}$ and we have an analogue of Propositions \ref{P-Gepsilon} - \ref{P-magn-W-function}.

We shall denote by
	\beq\label{F-psi-e2}
\widetilde{\psi}^\epsilon_{\pm,0,\delta}(x)=\underset{\alpha\in\Gamma}{\sum}\underset{j=\pm}{\sum}\mathbf{F}^\epsilon_\delta (\alpha)_{j\pm}\,
\Omega^\epsilon(\alpha,0,x)\, \Phi_{j,\alpha,\delta}(x)\,,
\eeq
and notice that the family
\beq\label{F-psi-epsilon-2}
\widetilde{\phi}^\epsilon_{j,\gamma,\delta}\,=\,\Lambda^\epsilon(\cdot ,\gamma)(\tau_{\gamma}
\widetilde{\psi}^\epsilon_{j,0,\delta})\,,\qquad\forall\gamma\in\Gamma\,
\eeq
also forms an orthonormal basis of $P^\epsilon_{I,\delta}L^2(\X)$.
Let us consider the unitary operator 
$$
\mathbf{\Phi}^\epsilon_\delta:P^\epsilon_{I,\delta}L^2(\X)\overset{\sim}{\rightarrow}\ell^2(\Gamma)\otimes\mathbb{C}^2,\quad\mathbf{\Phi}^\epsilon_\delta\big(\widetilde{\phi}^\epsilon_{\pm,\gamma,\delta}\big):=\mathfrak{e}_\gamma\otimes\varepsilon_\pm
$$
where $\{\varepsilon_-,\varepsilon_+\}$ is the canonical orthonormal basis in $\mathbb{C}^2$ (see Definition \ref{D-onb-SigmaI}) and 
 $\{\mathfrak{e}_\gamma\}_{\gamma\in\Gamma}\subset\ell^2(\Gamma)$ the canonical orthonormal basis of $\ell^2(\Gamma)$.

Starting from \eqref{F-desc-PepsilonI} and using the Cotlar-Stein procedure to control the convergence of the series, we can write that:
$$
P^{\epsilon}_{I,\delta}\widetilde{H}^{\epsilon}_{I,\delta}P^{\epsilon}_{I,\delta}=\underset{\alpha,\beta\in\Gamma}{\sum}\Pi^\epsilon_{\alpha,\delta}\widetilde{H}^{\epsilon}_{I,\delta}\Pi^\epsilon_{\beta,\delta}=\underset{\alpha,\beta\in\Gamma}{\sum}\left(\,\underset{j,k=\pm}{\sum}\big|\widetilde{\phi}^\epsilon_{j,\alpha,\delta}\big\rangle\left\langle\widetilde{\phi}^{\epsilon}_{j,\alpha,\delta}\,,\widetilde{H}^{\epsilon}_{I,\delta}\,\widetilde{\phi}^{\epsilon}_{k,\beta,\delta}\right\rangle_{L^2(\X)}\big\langle\widetilde{\phi}^\epsilon_{k,\beta,\delta}\big|\right)
$$
with the series converging in the operator-norm topology and resp. in the uniform topology for the integral kernels and in the topology of $BC^\infty(\Xi)$ for the magnetic symbols. This allows us to view our \textit{dressed modified magnetic quasi-band Hamiltonian} as an infinite matrix, in fact as an infinite matrix with entries from $\mathcal{M}_{2\times2}(\mathbb{C})$:
\beq\label{FD-magn-matrix}
\big[\mathscr{M}^{\epsilon}_{I,\delta}\big]_{(j,\alpha),(k,\beta)}\,:=\,\left\langle\widetilde{\phi}^{\epsilon}_{j,\alpha,\delta}\,,\widetilde{H}^{\epsilon}_{I,\delta}\,\widetilde{\phi}^{\epsilon}_{k,\beta,\delta}\right\rangle_{L^2(\X)},\quad(j,k)\in\{-,+\}\times\{-,+\}.
\eeq
From Propositions \ref{P-Zak-magn-transl} and \ref{P-magn-W-function}, taking into account Proposition \ref{P-PepsilonI-per} and the definition of $\widetilde{H}^\epsilon_{I,\delta}$ given in  \eqref{F-Htilde-magn} we can write
\begin{align*}
	\big[\mathscr{M}^{\epsilon}_{I,\delta}\big]_{(j,\alpha),(k,\beta)}&=\left\langle\widetilde{\phi}^{\epsilon}_{j,\alpha,\delta}\,,\widetilde{H}^{\epsilon}_{I,\delta}\,\widetilde{\phi}^{\epsilon}_{k,\beta,\delta}\right\rangle_{L^2(\X)}=\left\langle\mathcal{T}^\epsilon_\alpha\widetilde{\psi}^{\epsilon}_{j,0,\delta}\,,\widetilde{H}^{\epsilon}_{I,\delta}\,\mathcal{T}^\epsilon_\beta\widetilde{\psi}^{\epsilon}_{k,0,\delta}\right
	\rangle_{L^2(\X)}\\
	&=\Lambda^\epsilon(\alpha,\beta)\left\langle\mathcal{T}^\epsilon_{\alpha-\beta}
	\widetilde{\psi}^{\epsilon}_{j,0,\delta}\,,\widetilde{H}^{\epsilon}_{I,\delta}\,\widetilde{\psi}^{\epsilon}_{k,0,\delta}\right\rangle_{L^2(\X)}=\Lambda^\epsilon(\alpha,\beta)\big[\mathscr{M}^{\epsilon}_{I,\delta}\big]_{(j,\alpha-\beta),(k,0)}\\
	&\equiv\Lambda^\epsilon(\alpha,\beta)\mathcal{m}^\epsilon_{I,\delta}(\alpha-\beta)_{jk}
\end{align*}
and conclude that
$$
P^{\epsilon}_{I,\delta}\widetilde{H}^{\epsilon}_{I,\delta}P^{\epsilon}_{I,\delta}=\underset{\alpha,\beta\in\Gamma}{\sum}\left(\,\underset{j,k=\pm}{\sum}\,\Lambda^\epsilon(\alpha,\beta)\mathcal{m}^\epsilon_{I,\delta}(\alpha-\beta)_{jk}\,\big|\widetilde{\phi}^\epsilon_{j,\alpha,\delta}\big\rangle\big\langle\widetilde{\phi}^\epsilon_{k,\beta,\delta}\big|\right).
$$
Now we can define the discrete Fourier transform
\beq\label{FD-kIepsilon}
\mathcal{k}^\epsilon_{I,\delta}(\theta):=\underset{\gamma\in\Gamma}{\sum}e^{-i<\theta,\gamma>}\mathcal{m}^\epsilon_{I,\delta}(\gamma)\in\mathcal{M}_{2\times2}(\mathbb{C}),\quad\forall\theta\in\mathbb{T}_*
\eeq
and  repeat the arguments in Subsection 8.3 in \cite{CHP-2}. In fact, we can view the above function as a matrix-valued symbol which is $\Gamma_*$-periodic with respect to $\theta$ and does not depend on the variables from $\X$. It belongs to the class $S^0_0(\X\times\X^*)_{2\times2}$ defined in Notation \ref{N-matrix-symbols}. 

\textit{The magnetic quantization} of our matrix-valued symbol $ \mathcal{k}^\epsilon_{I,\delta}\in S^0_0(\X\times\X^*)_{2\times2}$ with the constant magnetic field $\epsilon B^\circ$ is
$$\Op^\epsilon(\mathcal{k}^\epsilon_{I,\delta})\,\in\,\mathbb{B}\big(L^2(\X)\otimes\mathbb{C}^2\big).$$
 Also, for all $u\in\mathscr{S}(\X;\mathbb{C}^2)$ we have 
\begin{align*}
\big(\Op^\epsilon(\mathcal{k}^\epsilon_{I,\delta})u\big)_\pm(x)&=(2\pi)^{-2}\int_\X dy\,\Lambda^\epsilon(x,y)\int_{\X^*}d\xi\,e^{i<\xi,x-y>}\Big(\underset{j=\pm}{\sum}\mathcal{k}^\epsilon_{I,\delta}\big(\xi\big)_{\pm,j}u_j(y)\Big)\\ \nonumber
&=\underset{\gamma\in\Gamma}{\sum}\Lambda^\epsilon(x,x-\gamma)\Big(\underset{j=\pm}{\sum}\mathcal{m}^\epsilon_{I,\delta}(\gamma)_{\pm,j}u_j(x-\gamma)\Big).
\end{align*}
If we combine  the unitary transformation
$$
\mathcal{W}_\Gamma:L^2(\X;\mathbb{C}^2)\overset{\sim}{\longrightarrow}\ell^2(\Gamma;\mathbb{C}^2)\otimes L^2(E),\qquad\big(\mathcal{W}_\Gamma\phi\big)(\gamma,\hat{x})_\pm:=\phi(\gamma+\hat{x})_\pm,\ \forall\phi\in\mathscr{S}(\X;\mathbb{C}^2)
$$
with the \textit{Luttinger gauge} transformation (\cite{Lu}):
$$
\mathcal{U}^\epsilon_L:\ell^2(\Gamma;\mathbb{C}^2)\otimes L^2(E)\rightarrow\ell^2(\Gamma;\mathbb{C}^2)\otimes L^2(E),\qquad\big(\mathcal{U}^\epsilon_L\phi\big)(\alpha,\hat{x})_\pm:=\Lambda^\epsilon(\hat{x},\alpha)\phi(\alpha,\hat{x})_\pm,
$$
we obtain 
\begin{align*}
	\big(\mathcal{U}^\epsilon_L\mathcal{W}_\Gamma\Op^
	\epsilon\big(\mathcal{k}^\epsilon_{I,\delta}\big)\phi\big)(\alpha,\hat{x})_\pm&=\Lambda^\epsilon(\hat{x},\alpha)\underset{\gamma\in\Gamma}{\sum}\Lambda^\epsilon(\hat{x}+\alpha,\hat{x}+\alpha-\gamma)\Big(\underset{j=\pm}{\sum}\mathcal{m}^\epsilon_{I,\delta}(\gamma)_{\pm,j}\phi_j(\hat{x}+\alpha-\gamma)\Big) \\
	&=\underset{\gamma\in\Gamma}{\sum}\Lambda^\epsilon(\alpha,\alpha-\gamma)\Big(\underset{j=\pm}{\sum}\mathcal{m}^\epsilon_{I,\delta}(\gamma)_{\pm,j}\Lambda^\epsilon(\hat{x},\alpha-\gamma)\phi_j(\hat{x}+\alpha-\gamma)\Big)\\
	&=\underset{\beta\in\Gamma}{\sum}\Lambda^\epsilon(\alpha,\beta)\Big(\underset{j=\pm}{\sum}\mathcal{m}^\epsilon_{I,\delta}(\alpha-\beta)_{\pm,j}\Lambda^\epsilon(\hat{x},\beta)\phi_j(\hat{x}+\beta)\Big) \\
	&=\underset{\beta\in\Gamma}{\sum}\underset{j=\pm}{\sum}\big(\Lambda^\epsilon(\alpha,\beta)\mathcal{m}^\epsilon_{I,\delta}(\alpha-\beta)_{\pm,j}\big)\big(\mathcal{U}^\epsilon_L\mathcal{W}_\Gamma\phi\big)(\hat{x},\beta)_j \\
	& =\underset{\beta\in\Gamma}{\sum}\underset{j=\pm}{\sum}\big[\mathscr{M}^{\epsilon}_{I,\delta}\big]_{(\pm,\alpha),(j,\beta)}\big(\mathcal{U}^\epsilon_L\mathcal{W}_\Gamma\phi\big)(\hat{x},\beta)_j.
\end{align*}
Thus, having the canonical orthonormal basis $\{\varepsilon_-,\varepsilon_+\}$ in $\mathbb{C}^2$ (see Definition \ref{D-onb-SigmaI}) and 
 the canonical orthonormal basis $\{\mathfrak{e}_\gamma\}_{\gamma\in\Gamma}$ in $\ell^2(\Gamma)$, we conclude that 
$$
\underset{(\alpha,\beta)\in\Gamma^2}{\sum}\underset{j=\pm,k=\pm}{\sum}\big[\mathscr{M}^{\epsilon}_{I,\delta}\big]_{(j,\alpha),(k,\beta)}\big|\mathfrak{e}_\alpha\otimes\varepsilon_j\big\rangle\big
\langle\mathfrak{e}_\beta\otimes\varepsilon_k\big|\ =\ \big(\mathcal{U}^\epsilon_L\mathcal{W}_\Gamma\big)
\Op^\epsilon\big(\mathcal{k}^\epsilon_{I,\delta}\big)\big(\mathcal{U}^
\epsilon_L\mathcal{W}_\Gamma\big)^{-1}.
$$
Moreover, by construction, we have that
\begin{align*}
\begin{split}
	\widetilde{H}^\epsilon_{I,\delta}&=\underset{(\alpha,\beta)\in\Gamma^2}{\sum}\underset{j=\pm,k=\pm}{\sum}\big[\mathscr{M}^{\epsilon}_{I,\delta}\big]_{(j,\alpha),(k,\beta)}\big|\widetilde{\phi}^\epsilon_{j,\alpha,\delta}\big\rangle\big\langle\widetilde{\phi}^\epsilon_{k,\beta,\delta}\big| \\
	&=\underset{(\alpha,\beta)\in\Gamma^2}{\sum}\underset{j=\pm,k=\pm}{\sum}\big[\mathscr{M}^{\epsilon}_{I,\delta}\big]_{(j,\alpha),(k,\beta)}\big[\mathbf{\Phi}^\epsilon_\delta\big]^{-1}\big|\mathfrak{e}_\alpha\otimes\varepsilon_j\big\rangle\big
	\langle\mathfrak{e}_\beta\otimes\varepsilon_k\big|\big[\mathbf{\Phi}^\epsilon_\delta\big] \\
	&=\big(\big[\mathbf{\Phi}^\epsilon_\delta\big]^{-1}
	\mathcal{U}^
	\epsilon_L\mathcal{W}_\Gamma\big)\Op^\epsilon\big(
	\mathcal{k}^\epsilon_{I,\delta}\big)\big(\big[\mathbf{\Phi}^\epsilon_\delta\big]^
	{-1}\mathcal{U}^\epsilon_L\mathcal{W}_\Gamma\big)^
	{-1}
\end{split}
\end{align*}
and we conclude that the following statement holds.

\begin{proposition}\label{P-magn-matrix-ke}
	The operator $\mathfrak{Op}^{\epsilon}(\mathcal{k}^\epsilon_{I,\delta})$ in $\mathbb{B}\big(L^2(\X)\otimes\mathbb{C}^2\big)$ and the \enquote{dressed} modified quasi-band Hamiltonian $\widetilde{H}^\epsilon_{I,\delta}$ in $P^{\epsilon}_{I,\delta}L^2(\X)$ are unitarily equivalent and thus have the same spectrum.
\end{proposition}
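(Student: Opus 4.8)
The plan is to observe that all the ingredients of the proof have already been assembled in the computations preceding the statement, so that the argument reduces to chaining together three explicit unitary operators and invoking the twisted-convolution (magnetic matrix) structure of $\widetilde{H}^\epsilon_{I,\delta}$ in the quasi-Wannier basis. The end product is the displayed identity $\widetilde{H}^\epsilon_{I,\delta}=\big([\mathbf{\Phi}^\epsilon_\delta]^{-1}\mathcal{U}^\epsilon_L\mathcal{W}_\Gamma\big)\Op^\epsilon(\mathcal{k}^\epsilon_{I,\delta})\big([\mathbf{\Phi}^\epsilon_\delta]^{-1}\mathcal{U}^\epsilon_L\mathcal{W}_\Gamma\big)^{-1}$, from which unitary equivalence and equality of spectra are immediate.

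First I would record that $\big\{\widetilde{\phi}^\epsilon_{j,\gamma,\delta}\big\}_{j=\pm,\,\gamma\in\Gamma}$ is an orthonormal basis of $P^\epsilon_{I,\delta}L^2(\X)$, so that $\mathbf{\Phi}^\epsilon_\delta$ is a genuine unitary $P^\epsilon_{I,\delta}L^2(\X)\overset{\sim}{\to}\ell^2(\Gamma)\otimes\Co^2$ and $\widetilde{H}^\epsilon_{I,\delta}$ is faithfully encoded by the infinite matrix $\mathscr{M}^\epsilon_{I,\delta}$ of \eqref{FD-magn-matrix}; the convergence of the defining series is controlled by the Cotlar--Stein lemma together with the rapid-decay estimates of Proposition \ref{P-magn-W-function}. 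Then, using the composition law $\mathcal{T}^\epsilon_\alpha\mathcal{T}^\epsilon_\beta=\Lambda^\epsilon(\beta,\alpha)\mathcal{T}^\epsilon_{\alpha+\beta}$ of Proposition \ref{P-Zak-magn-transl} and the $\Gamma$-periodicity of the symbol of $\widetilde{H}^\epsilon_{I,\delta}$ (established exactly as in Proposition \ref{P-PepsilonI-per}), I would derive the twisted-Toeplitz identity $[\mathscr{M}^\epsilon_{I,\delta}]_{(j,\alpha),(k,\beta)}=\Lambda^\epsilon(\alpha,\beta)\,\mathcal{m}^\epsilon_{I,\delta}(\alpha-\beta)_{jk}$, take its discrete Fourier transform to obtain the $\Gamma_*$-periodic, $\X$-independent symbol $\mathcal{k}^\epsilon_{I,\delta}\in S^0_0(\X\times\X^*)_{2\times2}$ of \eqref{FD-kIepsilon}, and note that its magnetic quantization $\Op^\epsilon(\mathcal{k}^\epsilon_{I,\delta})$ is bounded on $L^2(\X)\otimes\Co^2$.

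The decisive step is to compose the unfolding unitary $\mathcal{W}_\Gamma$ with the Luttinger gauge unitary $\mathcal{U}^\epsilon_L$ and to compute $\big(\mathcal{U}^\epsilon_L\mathcal{W}_\Gamma\big)\Op^\epsilon(\mathcal{k}^\epsilon_{I,\delta})\big(\mathcal{U}^\epsilon_L\mathcal{W}_\Gamma\big)^{-1}$ as in the display above: the role of the Luttinger gauge is precisely that the phase $\Lambda^\epsilon(\hat x,\alpha)$ it inserts recombines, through $\Lambda^\epsilon(\hat x,\alpha)\Lambda^\epsilon(\hat x+\alpha,\hat x+\alpha-\gamma)=\Lambda^\epsilon(\alpha,\alpha-\gamma)\Lambda^\epsilon(\hat x,\alpha-\gamma)$ --- a consequence of the transverse-gauge identity \eqref{rem1a} --- into exactly the cocycle $\Lambda^\epsilon(\alpha,\beta)$ carried by the twisted-Toeplitz matrix. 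This identifies $\Op^\epsilon(\mathcal{k}^\epsilon_{I,\delta})$, conjugated by $\mathcal{U}^\epsilon_L\mathcal{W}_\Gamma$, with the operator on $\ell^2(\Gamma)\otimes\Co^2$ having matrix $\mathscr{M}^\epsilon_{I,\delta}$; composing with the basis identification $\mathbf{\Phi}^\epsilon_\delta$ yields the claimed conjugation, and since $[\mathbf{\Phi}^\epsilon_\delta]^{-1}\mathcal{U}^\epsilon_L\mathcal{W}_\Gamma$ is unitary, $\widetilde{H}^\epsilon_{I,\delta}$ and $\Op^\epsilon(\mathcal{k}^\epsilon_{I,\delta})$ are unitarily equivalent and therefore isospectral.

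I expect the main obstacle to be the careful bookkeeping of the magnetic phase factors: one must check that the Peierls cocycle built into the magnetic Weyl quantization $\Op^\epsilon$, the phases absorbed into the definition of the magnetic quasi-Wannier functions $\widetilde{\phi}^\epsilon_{j,\gamma,\delta}$, and those produced by $\mathcal{U}^\epsilon_L$ cancel in a consistent way; this relies on the transverse-gauge formula \eqref{rem1a} and on the translation invariance \eqref{rem1b} of the constant-field flux. A secondary technical point, already addressed in the surrounding text, is the justification of interchanging the infinite sums over $\Gamma$ with the operators, which is where the Cotlar--Stein estimates and the Schwartz-class decay of Proposition \ref{P-magn-W-function} enter.
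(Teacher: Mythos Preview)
Your proposal is correct and follows essentially the same route as the paper: the proposition is in fact just the summary of the computations carried out immediately before its statement, and you have identified all the pieces (the twisted-Toeplitz structure via Proposition~\ref{P-Zak-magn-transl}, the discrete Fourier transform \eqref{FD-kIepsilon}, the unfolding map $\mathcal{W}_\Gamma$, the Luttinger gauge $\mathcal{U}^\epsilon_L$, and the basis map $\mathbf{\Phi}^\epsilon_\delta$) together with the decisive phase identity that makes the cocycles match. The only bookkeeping point you could make more explicit is that the conjugate of $\Op^\epsilon(\mathcal{k}^\epsilon_{I,\delta})$ by $\mathcal{U}^\epsilon_L\mathcal{W}_\Gamma$ acts as $\mathscr{M}^\epsilon_{I,\delta}\otimes\bb1_{L^2(\mathcal E)}$, so a trivial tensor factor is present when identifying it with the matrix on $\ell^2(\Gamma)\otimes\Co^2$; this is harmless for the spectral conclusion but worth noting.
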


Our strategy in the following is to compare the symbol $\mathcal{k}^\epsilon_{I,\delta}(\theta)$ in the neighborhood of $\theta_0$ with the $2\times2$-matrix valued function $M_I(\theta)$ defined in \eqref{FD-Psi}.

\subsection{Estimating the magnetic perturbation.}\label{SS-est-PO-eff-ham}

In agreement with the notations for the Weyl and the magnetic quantizations \eqref{D-PsiDO} and \eqref{D-MPsiDO} and with the previous notations, we shall denote the operators and symbols for the $0$-magnetic field by just suppressing the parameter $\epsilon=0$. Let us consider the smooth matrix valued function obtained by putting $\epsilon=0$ in \eqref{FD-kIepsilon}:
$$
\mathbb{T}_*\ni\theta\mapsto\mathcal{k}_{I,\delta}(\theta)_{jk}:=\underset{\gamma\in\Gamma}{\sum}e^{-i<\theta,\gamma>}\left\langle\Phi_{j,\gamma,\delta},\widetilde{H}_{I,\delta}\Phi_{k,0,\delta}\right\rangle_{L^2(\X)}\in\mathcal{M}_{2\times2}(\mathbb{C}).
$$
We shall study the difference
\beq\label{FD-kI0}
\begin{split}
	&\big\|\big(\partial^a\mathcal{k}^\epsilon_{I,\delta}\big)(\theta)-\big(\partial^a\mathcal{k}_{I,\delta}\big)(\theta)\big\|_{\mathcal{M}_{2\times2}(\mathbb{C})}=\underset{j,k}{\max}\left|\underset{\gamma\in\Gamma}{\sum}(-i\gamma)^ae^{-i<\theta,\gamma>}\big(\mathcal{m}^\epsilon_{I,\delta}(\gamma)_{jk}-
	\mathcal{m}_{I,\delta}(\gamma)_{jk}\big)\right| \\
	&\leq \underset{j,k}{\max}\left(\underset{\gamma\in\Gamma}{\sum}<\gamma>^{|a|}\Big|\left\langle\mathcal{T}^\epsilon_{\gamma}
	\widetilde{\psi}^{\epsilon}_{j,0,\delta}\,,\widetilde{H}^{\epsilon}_{I,\delta}\,\widetilde{\psi}^{\epsilon}_{k,0,\delta}\right\rangle_{L^2(\X)}-\left\langle
	\tau_{\gamma}
	\widetilde{\psi}^{}_{j,0,\delta}\,,\widetilde{H}_{I,\delta}\,\widetilde{\psi}^{}_{k,0,\delta}\right\rangle_{L^2(\X)}\Big|\right)
\end{split}
\eeq
and compute
\begin{align}\nonumber
	\left\langle\mathcal{T}^\epsilon_{\gamma}
	\widetilde{\psi}^{\epsilon}_{j,0,\delta}\,,\widetilde{H}^{\epsilon}_{I,\delta}\,\widetilde{\psi}^{\epsilon}_{k,0,\delta}\right\rangle_{L^2(\X)}&-\left\langle
	\tau_{\gamma}
	\widetilde{\psi}^{}_{j,0,\delta}\,,\widetilde{H}_{I,\delta}\,\widetilde{\psi}^{}_{k,0,\delta}\right\rangle_{L^2(\X)} \\
	\nonumber
	&\hspace*{-2cm}=\left\langle
	\widetilde{\phi}^{\epsilon}_{j,\gamma}\,,(Y^\epsilon_\delta)^{-1/2}H^{\epsilon}_{I,\delta}
	(Y^\epsilon_\delta)^{-1/2}\,\widetilde{\phi}^{\epsilon}_{k,0,\delta}\right\rangle_{L^2(\X)}-\left\langle
	\Phi^{}_{j,\gamma}\,,Y_\delta^{-1/2}H_{I,\delta}Y_\delta^{-1/2}\,\Phi^{}_{k,0,\delta}\right\rangle_{L^2(\X)} \\ \label{FD-kIO-1}
	&\hspace*{-1,75cm}+\left\langle
	\widetilde{\phi}^{\epsilon}_{j,\gamma}\,,X^\epsilon_\delta\,\widetilde{\phi}^{\epsilon}_{k,0,\delta}\right\rangle_{L^2(\X)}-\left\langle
	\Phi^{}_{j,\gamma}\,,X_\delta\,\Phi^{}_{k,0,\delta}\right\rangle_{L^2(\X)}
\end{align}
where we used \eqref{FD-R-epskappa-bot} in order to define:
$$
X^\epsilon_\delta:=(Y^\epsilon_\delta)^{-1/2}H^\epsilon_
\Gamma Q^\epsilon_{I,\delta}R^\epsilon_\bot(0)Q^\epsilon_{I,\delta}
H^\epsilon_\Gamma (Y^\epsilon_\delta)^{-1/2}.
$$
Considering formula \eqref{F-Y}, denoting by $\mathfrak{S}^\epsilon_T$ the magnetic symbol of a given operator $T$ (as in \eqref{N-magn-symb}) and dropping for the moment the subscript $\delta$ in order to ease the reading of the formulas, we can show that
$$
\mathfrak{S}^\epsilon_{Y^\epsilon}=p^\epsilon_I+p^\epsilon_I\,\sharp^\epsilon\, h\,\,\sharp^\epsilon\, q^\epsilon_I\,\,\sharp^\epsilon\,\mathfrak{S}^\epsilon_
{R^\epsilon_\bot(0)}\,\sharp^\epsilon\, q^\epsilon_I\,\sharp^\epsilon\, h\,\sharp^\epsilon\, p^\epsilon_I\ \in\ S^{-\infty}(\X\times\X^*).
$$
Indeed, the arguments following \eqref{F-RIepsilon} and Step 3 in the proof of Proposition \ref{P-Hyp-H-magn} imply that $\mathfrak{S}^\epsilon_{R^\epsilon_\bot(0)}\in S^{-2}_1(\X\times\X^*)$ and as $p^\epsilon_I$ belongs to $S^{-\infty}(\X\times\X^*)$, the above regularity of $\mathfrak{S}^\epsilon_{Y^\epsilon}$ follows. We conclude that
$$
\mathfrak{S}^\epsilon_{{Y_\epsilon}^{-1/2}H^{\epsilon}_I
	{Y_\epsilon}^{-1/2}}\ \in\ S^{-\infty}(\X\times\X^*),\quad\mathfrak{S}^\epsilon_{X^\epsilon}\in S^{-\infty}(\X\times\X^*).
$$

We shall use an estimation similar to Lemma 8.8 in \cite{CHP-2} which in our case of a constant magnetic field $\epsilon B^\circ$ has a more interesting form.
\begin{lemma}\label{P-red-OpF}
	Let us consider $\epsilon_0>0$ as large as allowed by the arguments in the proof of Proposition \ref{P-Hyp-H-magn} and for some $\epsilon\in[0,\epsilon_0]$ and for some $p<0$, an operator of the form
	$\mathfrak{Op}^{\epsilon}(F)$
	with $F\in S^p_1(\X\times\X^*)$. Then for
	any $m\in\mathbb{N}$, there exists a semi-norm $\nu_m$ such that,  $\forall \,
	(\alpha,\beta) \in \Gamma\times \Gamma$, $j=\pm$, $k=\pm$ and $\forall\epsilon\in[0,\epsilon_0]$, we have the estimate:
	$$
	\left|\left\langle\widetilde{\phi}^\epsilon_{j,\alpha,\delta}\,,
	\,\mathfrak{Op}^{\epsilon}(F)
	\widetilde{\phi}^\epsilon_{k,\beta,\delta}\right\rangle_{L^2(\X)}\,-\,
	\Lambda^\epsilon(\alpha,\beta)\left\langle\Phi_{j,\alpha,\delta},\,\mathfrak{Op}(F)
	\Phi_{k,\beta,\delta}\right\rangle_{L^2(\X)}\right| 
	\leq\ \epsilon\,\nu_m(F)<\alpha-\beta>^{-m}\,.
	$$
\end{lemma}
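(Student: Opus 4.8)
The plan is to follow the scheme of Lemma~8.8 in \cite{CHP-2}, taking advantage of the fact that here the magnetic field $\epsilon B^\circ$ is constant, so that the Peierls phase one has to strip off is exactly $\Lambda^\epsilon(\alpha,\beta)$. First I would exploit the Zak magnetic translation structure: by \eqref{F-psi-epsilon-2}, $\widetilde{\phi}^\epsilon_{j,\gamma,\delta}=\mathcal{T}^\epsilon_\gamma\widetilde{\psi}^\epsilon_{j,0,\delta}$ with $\mathcal{T}^\epsilon_\gamma=\Lambda^\epsilon(\cdot,\gamma)\tau_\gamma$, while $\Phi_{j,\gamma,\delta}=\tau_\gamma\Phi_{j,0,\delta}$. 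Using Proposition~\ref{P-Zak-magn-transl}(1), which gives $(\mathcal{T}^\epsilon_\alpha)^{-1}=\mathcal{T}^\epsilon_{-\alpha}$ and $(\mathcal{T}^\epsilon_\alpha)^{-1}\mathcal{T}^\epsilon_\beta=\Lambda^\epsilon(\alpha,\beta)\,\mathcal{T}^\epsilon_{\beta-\alpha}$, together with the gauge covariance and the behaviour under translations of the magnetic pseudodifferential calculus (Appendix~B of \cite{CHP-1}), one checks the conjugation identity $(\mathcal{T}^\epsilon_\alpha)^{-1}\mathfrak{Op}^{\epsilon}(F)\mathcal{T}^\epsilon_\alpha=\mathfrak{Op}^{\epsilon}(F_\alpha)$, where $F_\alpha(x,\xi):=F(x+\alpha,\xi)$: the transverse gauge \eqref{defA0} has the property that $A^\circ(\cdot-\alpha)$ differs from $A^\circ$ only by a constant vector potential, and $\Lambda^\epsilon(\cdot,\alpha)$ is precisely the associated gauge factor. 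Note that $F_\alpha\in S^p_1(\X\times\X^*)$ with the same semi-norms as $F$, and the (elementary) non-magnetic analogue reads $\tau_\alpha^{-1}\mathfrak{Op}(F)\tau_\alpha=\mathfrak{Op}(F_\alpha)$. Writing $\gamma:=\beta-\alpha$, these identities reduce the claimed inequality to the \emph{local} estimate
\begin{equation*}
\Big|\big\langle\widetilde{\psi}^\epsilon_{j,0,\delta}\,,\,\mathfrak{Op}^{\epsilon}(F_\alpha)\,\mathcal{T}^\epsilon_{\gamma}\,\widetilde{\psi}^\epsilon_{k,0,\delta}\big\rangle-\big\langle\Phi_{j,0,\delta}\,,\,\mathfrak{Op}(F_\alpha)\,\tau_{\gamma}\,\Phi_{k,0,\delta}\big\rangle\Big|\ \leq\ \epsilon\,\nu_m(F)\,\langle\gamma\rangle^{-m},
\end{equation*}
a comparison between matrix elements of operators acting on fixed Schwartz functions.

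Then I would estimate this difference by inserting two intermediate terms and treating three contributions. (a) Replacing $\mathfrak{Op}^{\epsilon}(F_\alpha)$ by $\mathfrak{Op}(F_\alpha)$: the two integral kernels differ by the factor $\overline{\Lambda^\epsilon(x,y)}-1$, bounded by $\epsilon\,c\,|x\wedge y|\leq\epsilon\,c\,|x|\,|y|$; since the left slot localises $x$ near $0$ and $\mathcal{T}^\epsilon_{\gamma}\widetilde{\psi}^\epsilon_{k,0,\delta}$ localises $y$ near $\gamma$ with rapid decay, $|x|\,|y|\lesssim|x|(|y-\gamma|+|\gamma|)$ is absorbed by the Schwartz decay of $\widetilde{\psi}^\epsilon_{j,0,\delta}$, $\widetilde{\psi}^\epsilon_{k,0,\delta}$ and by the rapid off-diagonal decay of the (locally integrable) kernel $K^0_{F_\alpha}$ of $\mathfrak{Op}(F_\alpha)$, whose bounds are controlled by semi-norms of $F$ since $p<0$; a Peetre-type inequality $\langle\gamma\rangle\lesssim\langle x\rangle\langle x-y\rangle\langle y-\gamma\rangle$ then produces the factor $\langle\gamma\rangle^{-m}$. (b) Replacing $\mathcal{T}^\epsilon_{\gamma}$ by $\tau_{\gamma}$: the difference $(\Lambda^\epsilon(\cdot,\gamma)-1)\widetilde{\psi}^\epsilon_{k,0,\delta}(\cdot-\gamma)$ is bounded by $\epsilon\,c\,|y-\gamma|\,|\gamma|$ times a Schwartz function of $y-\gamma$, and is treated exactly as in (a), the extra $|\gamma|$ being absorbed by the overlap decay. (c) Replacing $\widetilde{\psi}^\epsilon_{\pm,0,\delta}$ by $\Phi_{\pm,0,\delta}$ in both slots: by the analogue of Proposition~\ref{P-magn-W-function}(3) for the frame $\{\Phi_{\pm,\gamma,\delta}\}$ (already observed in Subsection~\ref{SS-symbol-PO-eff-ham}), $\widetilde{\psi}^\epsilon_{\pm,0,\delta}-\Phi_{\pm,0,\delta}=\mathcal{O}(\epsilon)$ in every Schwartz semi-norm, so a telescoping in the two slots together with the $\mathscr{S}\to\mathscr{S}$ continuity and off-diagonal kernel decay of $\mathfrak{Op}(F_\alpha)$ again yields $\lesssim\epsilon\,\nu_m(F)\langle\gamma\rangle^{-m}$. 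Summing (a)--(c) gives the lemma.

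The part I expect to be the main obstacle is the conjugation identity and the attendant phase bookkeeping: one has to check with care that stripping the Zak translations leaves precisely the factor $\Lambda^\epsilon(\alpha,\beta)$ and no residual $\alpha$- or $\beta$-dependent phase that is not controlled by a relative displacement — this is exactly where the constancy of $B^\circ$ and the transverse gauge \eqref{defA0} are used, and it is the reason the estimate takes this clean form (in contrast with the slowly-varying situation of \cite{CHP-2}, where the Peierls cocycle is more complicated). Once this is settled, $\epsilon$ enters only through $|\Lambda^\epsilon-1|\leq\epsilon\,c\,|\cdot\wedge\cdot|$ and through the $\mathcal{O}(\epsilon)$ comparison $\widetilde{\psi}^\epsilon-\Phi$, while the $\langle\gamma\rangle^{-m}$ decay and the dependence of the constant on semi-norms of $F$ are routine.
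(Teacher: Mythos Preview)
Your proposal is correct and follows essentially the same scheme as the paper's proof. The paper proceeds by a direct kernel computation: writing $\widetilde{\phi}^\epsilon_{j,\alpha,\delta}=\Lambda^\epsilon(\cdot,\alpha)\tau_\alpha\widetilde{\psi}^\epsilon_{j,0,\delta}$ and using that the kernel of $\mathfrak{Op}^\epsilon(F)$ equals $\Lambda^\epsilon(x,y)\mathfrak{K}_F(x,y)$, it applies Stokes' theorem \eqref{F-Stokes} to factor the product of three $\Lambda^\epsilon$'s as $\Lambda^\epsilon(\alpha,\beta)\,\Omega^\epsilon(\alpha,x,\beta)\,\Omega^\epsilon(x,y,\beta)$, and then replaces each $\Omega^\epsilon$ by $1$ using the flux bounds $|\Omega^\epsilon(\alpha,x,\beta)-1|\le C\epsilon|x-\alpha||x-\beta|$ and $|\Omega^\epsilon(x,y,\beta)-1|\le C\epsilon|x-y||x-\beta|$, finishing with the $\widetilde{\psi}^\epsilon\to\Phi$ comparison via (the analogue of) Proposition~\ref{P-magn-W-function}(3).

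Your organization is slightly different but equivalent: you first invoke the conjugation identity $(\mathcal{T}^\epsilon_\alpha)^{-1}\mathfrak{Op}^\epsilon(F)\mathcal{T}^\epsilon_\alpha=\mathfrak{Op}^\epsilon(F_\alpha)$ to strip the factor $\Lambda^\epsilon(\alpha,\beta)$ from the outset, and then your steps (a) and (b) replace two segment phases $\Lambda^\epsilon$ by $1$ where the paper replaces two triangle phases $\Omega^\epsilon$ by $1$. Both routes use exactly the constancy of $B^\circ$ (via \eqref{rem1a}--\eqref{rem1b}) for the clean phase bookkeeping you rightly flag as the crux, and both conclude with the same $\widetilde{\psi}^\epsilon\to\Phi$ step. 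The paper's $\Omega^\epsilon$ bounds are intrinsically translation-invariant, whereas your $|\Lambda^\epsilon(x,y)-1|\le c\,\epsilon|x\wedge y|$ is not, but after your conjugation the functions are localized near $0$ and $\gamma$, so this makes no practical difference.
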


\begin{proof}
	Using Proposition B.8 in \cite{CHP-1} we may conclude that for 
	$F\in S^p_1(\X\times\X^*)$ and $\epsilon\in[0,1]$ the operator $\Op^\epsilon(F)$ has the integral kernel $\Lambda^\epsilon(x,y)\mathfrak{K}_F(x,y)$,
	where 
\begin{enumerate}
	    \item[(i)] $\mathfrak{K}_F(x,y)$ is the integral kernel of $\Op(F)$,
	    \item[(ii)] there exists a semi-norm $\tilde{\nu}$ such that $\|\Op^\epsilon(F)\|_{\mathbb{B}(L^2(\X))}\leq\tilde{\nu}(F)$,
	    \item[(iii)] for any $k\in\mathbb{N}$ there exists a semi-norm
	$\nu_k:S^p_1(\Xi)\rightarrow\mathbb{R}_+$ such that
	$$
	\underset{x\in\X}{\sup}\int_\X \,\langle x-y\rangle ^k|\mathfrak{K}_F(x,y)|\,dy \leq \nu_k(F) \, .
	$$
	\end{enumerate}
	Using Points 2 and 3 of Proposition \ref{P-magn-W-function} and the estimations (see \eqref{FD-magn-flux}):
		$$
	\left|\widetilde{\Omega} ^{\epsilon}(\alpha,x,\beta)-1\right|\leq
	C\, \epsilon\, |x-\alpha|\,|x-\beta|\,,\quad
	\left|\widetilde{\Omega} ^{\epsilon}(x,x+z,\beta)-1\right|\leq
	C\, \epsilon\, |z|\,|x-\beta|\,,
	$$
	we can prove that for any $N\in\mathbb{N}$ there exists a constant $C_N>0$ and a semi-norm $\widetilde{\nu}_N$ on $S^p_1(\X\times\X^*)$ such that
	\begin{align*}
		\left\langle\widetilde{\phi}^\epsilon_{j,\alpha,\delta},
		\mathfrak{Op}^{\epsilon}(F)
		\widetilde{\phi}^\epsilon_{k,\beta,\delta}\right\rangle_{L^2(\X)}&=
		\left\langle\Lambda^\epsilon(\cdot,\alpha)\tau_\alpha\widetilde{\psi}^\epsilon_{j,0,\delta}\,,
		\,\mathfrak{Op}^{\epsilon}(F)
		\Lambda^\epsilon(\cdot,\beta)\tau_\beta\widetilde{\psi}^\epsilon_{k,0,\delta}\right\rangle_
		{L^2(\X)} \\
		&\hspace*{-3cm}=\left\langle\tau_\alpha\widetilde{\psi}^\epsilon_{j,0,\delta}\,,
		\,\Lambda^\epsilon(\alpha,\cdot)\Int\big(\Lambda^{\epsilon}(F)\mathfrak{K}_F\big)
		\Lambda^\epsilon(\cdot,\beta)\tau_\beta\widetilde{\psi}^\epsilon_{k,0,\delta}\right\rangle_
		{L^2(\X)} \\
		&\hspace*{-3cm}=\Lambda^\epsilon(\alpha,\beta)\int_\X dx\int_\X dy\,
		\Omega^\epsilon(\alpha,x,\beta)\Omega^\epsilon(x,y,\beta)\mathfrak{K}_F(x,y)\Big(\overline{
			\widetilde{\psi}^\epsilon_{j,0,\delta}(x-\alpha)}\Big)\Big(\widetilde{\psi}^\epsilon_{j,0,\delta}(y-\beta)\Big) \\
		&\hspace*{-3cm}=\Lambda^\epsilon(\alpha,\beta)\int_\X dx\int_\X dy\,
		\mathfrak{K}_F(x,y)\Big(\overline{
			\Phi_{j,0,\delta}(x-\alpha)}\Big)\Big(\Phi_{j,0,\delta}(y-\beta)\Big)  +C_N\tilde{\nu}_N(F)\epsilon<\alpha-\beta>^{-N}.
	\end{align*}
	Thus we obtain the conclusion of the lemma.
\end{proof}

\begin{proposition}\label{P-kepsilon-k}
 With our choice of gauge for the magnetic field $\epsilon B^\circ$ in \eqref{defA0}, for any $a\in\mathbb{N}^2$ there exists a constant $C_a>0$ such that
\beq\label{F-est-kepsilon-k}
\big\|\big(\partial^a\mathcal{k}^\epsilon_{I,\delta}\big)(\theta)-\big(\partial^a\mathcal{k}_{I,\delta}\big)(\theta)\big\|_{\mathcal{M}_{2\times2}(\mathbb{C})}\ \leq\, C_{a} \, \epsilon,\qquad\forall\theta\in\mathbb{T}_*.
\eeq
\end{proposition}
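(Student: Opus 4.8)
The plan is to push through the reduction already prepared in \eqref{FD-kI0}--\eqref{FD-kIO-1}. By \eqref{FD-kI0} it suffices to show that for every $m\in\mathbb{N}$ there is a constant $C_m>0$, independent of $\gamma\in\Gamma$, $\theta\in\mathbb{T}_*$, $j,k=\pm$ and $\epsilon\in[0,\epsilon_0]$, such that
\begin{align*}
&\Big|\big\langle\mathcal{T}^\epsilon_{\gamma}\widetilde{\psi}^{\epsilon}_{j,0,\delta},\widetilde{H}^{\epsilon}_{I,\delta}\widetilde{\psi}^{\epsilon}_{k,0,\delta}\big\rangle_{L^2(\X)}-\big\langle\tau_{\gamma}\widetilde{\psi}_{j,0,\delta},\widetilde{H}_{I,\delta}\widetilde{\psi}_{k,0,\delta}\big\rangle_{L^2(\X)}\Big|\\
&\qquad\qquad\leq\ C_m\,\epsilon\,\langle\gamma\rangle^{-m}.
\end{align*}
Indeed, taking $m=|a|+3$ and using that $\Gamma$ is a $2$-dimensional lattice, the $\langle\gamma\rangle^{|a|}$-weighted series in \eqref{FD-kI0} is then dominated by $\epsilon\,C_m\sum_{\gamma\in\Gamma}\langle\gamma\rangle^{|a|-m}=\epsilon\,C_m\sum_{\gamma\in\Gamma}\langle\gamma\rangle^{-3}=:C_a\,\epsilon<\infty$, which is exactly \eqref{F-est-kepsilon-k}; uniformity in $\theta$ is automatic because the bound above is $\theta$-free. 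By \eqref{FD-kIO-1} the bracket above splits into a contribution of $(Y^\epsilon_\delta)^{-1/2}H^\epsilon_{I,\delta}(Y^\epsilon_\delta)^{-1/2}$ and a contribution of $X^\epsilon_\delta$, so it is enough to estimate, for each of these two operators, the difference of its matrix elements taken between the magnetic and the non-magnetic quasi-Wannier functions.

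The first, and technically heaviest, step is to check that the magnetic symbols of the two operators differ from their $\epsilon=0$ counterparts by $\mathcal{O}(\epsilon)$ in the topology of $S^{-\infty}(\X\times\X^*)$, i.e. that for every defining semi-norm $\nu$ of that space, $\nu\big(\mathfrak{S}^\epsilon_{(Y^\epsilon_\delta)^{-1/2}H^\epsilon_{I,\delta}(Y^\epsilon_\delta)^{-1/2}}-\mathfrak{S}_{Y_\delta^{-1/2}H_{I,\delta}Y_\delta^{-1/2}}\big)\leq C(\nu)\,\epsilon$, and likewise for $X^\epsilon_\delta$. By construction (see \eqref{D-RIepsilon} and \eqref{FD-R-epskappa-bot}) the operator $R^\epsilon_\bot(0)$ is the \emph{magnetic} quantization of an \emph{$\epsilon$-independent} symbol in $S^{-2}_1(\X\times\X^*)$, so the only sources of $\epsilon$-dependence in the symbols of $Y^\epsilon_\delta$, of $H^\epsilon_{I,\delta}=P^\epsilon_{I,\delta}H^\epsilon_\Gamma P^\epsilon_{I,\delta}$ and of $X^\epsilon_\delta$ are: the projection symbol $p^\epsilon_{I,\delta}$, which obeys $p^\epsilon_{I,\delta}-p_{I,\delta}=\mathcal{O}(\epsilon)$ in $S^{-\infty}$ by Proposition~\ref{F-est-dif-bdproj} (hence also $q^\epsilon_{I,\delta}-q_{I,\delta}=\mathcal{O}(\epsilon)$); the passage from the Moyal product to the magnetic one, controlled by Proposition~\ref{propB.14} through $\sharp^\epsilon=\sharp+\epsilon\,\mathfrak{z}_\epsilon$ with $\mathfrak{z}_\epsilon$ uniformly bounded between the relevant classes; and the functional calculus $(Y^\epsilon_\delta)^{-1/2}$. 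Since $Y^\epsilon_\delta\geq P^\epsilon_{I,\delta}$, the spectrum of $Y^\epsilon_\delta$ inside $P^\epsilon_{I,\delta}L^2(\X)$ lies in a compact subset of $(0,\infty)$, so I would write $(Y^\epsilon_\delta)^{-1/2}$ as a Riesz/contour integral $\tfrac{1}{2\pi i}\oint z^{-1/2}(z\bb1-Y^\epsilon_\delta)^{-1}dz$ around that set; the restricted resolvent has magnetic symbol in $S^{-\infty}$ and is $\mathcal{O}(\epsilon)$-close to the $\epsilon=0$ one by the Moyal-inversion estimates of \cite{IMP-2} used in Steps~2--3 of the proof of Proposition~\ref{P-Hyp-H-magn} together with the two previous points, and integration in $z$ preserves this. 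Feeding these facts back into the finitely many magnetic Moyal products defining the symbols of $Y^\epsilon_\delta$, $(Y^\epsilon_\delta)^{-1/2}$, $H^\epsilon_{I,\delta}$ and $X^\epsilon_\delta$ — all landing in $S^{-\infty}$, as recorded just before Lemma~\ref{P-red-OpF} — yields the claimed $\mathcal{O}(\epsilon)$ bounds. Write $F^\epsilon$ for the symbol of either of the two operators and $F^0$ for its value at $\epsilon=0$, so that $F^\epsilon-F^0=\mathcal{O}(\epsilon)$ in every $S^{-\infty}$ semi-norm.

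The second step turns this into the required decay. Recall from \eqref{F-psi-epsilon-2} that $\widetilde{\phi}^\epsilon_{j,\gamma,\delta}=\mathcal{T}^\epsilon_\gamma\widetilde{\psi}^\epsilon_{j,0,\delta}$ and $\widetilde{\phi}^\epsilon_{k,0,\delta}=\widetilde{\psi}^\epsilon_{k,0,\delta}$, and (at $\epsilon=0$, where $\mathbf{F}^0_\delta$ is the identity) $\Phi_{j,\gamma,\delta}=\tau_\gamma\Phi_{j,0,\delta}$ and $\Phi_{k,0,\delta}=\widetilde{\psi}_{k,0,\delta}$; so each contribution is of the form $\big\langle\widetilde{\phi}^\epsilon_{j,\gamma,\delta},\Op^\epsilon(F^\epsilon)\widetilde{\phi}^\epsilon_{k,0,\delta}\big\rangle-\big\langle\Phi_{j,\gamma,\delta},\Op(F^0)\Phi_{k,0,\delta}\big\rangle$, which I would decompose as
\begin{align*}
&\big\langle\widetilde{\phi}^\epsilon_{j,\gamma,\delta},\Op^\epsilon(F^\epsilon)\widetilde{\phi}^\epsilon_{k,0,\delta}\big\rangle_{L^2(\X)}-\big\langle\Phi_{j,\gamma,\delta},\Op(F^0)\Phi_{k,0,\delta}\big\rangle_{L^2(\X)}\\
&\quad=\big\langle\widetilde{\phi}^\epsilon_{j,\gamma,\delta},\Op^\epsilon(F^\epsilon-F^0)\widetilde{\phi}^\epsilon_{k,0,\delta}\big\rangle_{L^2(\X)}\\
&\qquad+\Big(\big\langle\widetilde{\phi}^\epsilon_{j,\gamma,\delta},\Op^\epsilon(F^0)\widetilde{\phi}^\epsilon_{k,0,\delta}\big\rangle_{L^2(\X)}-\big\langle\Phi_{j,\gamma,\delta},\Op(F^0)\Phi_{k,0,\delta}\big\rangle_{L^2(\X)}\Big).
\end{align*}
No $\Lambda^\epsilon(\gamma,0)-1$ term arises, because $\Lambda^\epsilon(\gamma,0)=1$ for every $\gamma\in\Gamma$ — precisely the transverse-gauge identity \eqref{rem1a}. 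The parenthesis on the right is bounded by $\epsilon\,\nu_m(F^0)\langle\gamma\rangle^{-m}$ by a direct application of Lemma~\ref{P-red-OpF} with $F=F^0\in S^{-\infty}\subset S^p_1$, $p<0$, $\alpha=\gamma$, $\beta=0$. For the first term I would re-run the kernel estimate inside the proof of Lemma~\ref{P-red-OpF}: $\Op^\epsilon(F^\epsilon-F^0)$ has integral kernel $\Lambda^\epsilon(x,y)\mathfrak{K}_{F^\epsilon-F^0}(x,y)$ with $\sup_x\int_\X\langle x-y\rangle^{k}|\mathfrak{K}_{F^\epsilon-F^0}(x,y)|\,dy$ bounded by an $S^{-\infty}$ semi-norm of $F^\epsilon-F^0$ (Proposition~B.8 of \cite{CHP-1}), while $\widetilde{\phi}^\epsilon_{j,\gamma,\delta}$ and $\widetilde{\phi}^\epsilon_{k,0,\delta}$ are Schwartz with rapid decay centred at $\gamma$ and at $0$ respectively, uniformly in $\epsilon$ (Proposition~\ref{P-magn-W-function}(3) and $\Phi_{\pm,0,\delta}\in\mathscr{S}(\X)$); Peetre's inequality $\langle\gamma\rangle\lesssim\langle x-\gamma\rangle\langle x-y\rangle\langle y\rangle$ then gives $\big|\big\langle\widetilde{\phi}^\epsilon_{j,\gamma,\delta},\Op^\epsilon(F^\epsilon-F^0)\widetilde{\phi}^\epsilon_{k,0,\delta}\big\rangle\big|\leq\nu'_m(F^\epsilon-F^0)\langle\gamma\rangle^{-m}$, and by Step~1 the right-hand side is $\mathcal{O}(\epsilon)\langle\gamma\rangle^{-m}$. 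Adding the two bounds and returning to \eqref{FD-kI0} completes the argument.

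The only genuine difficulty is Step~1: propagating the $\mathcal{O}(\epsilon)$-in-$S^{-\infty}$ closeness of the building blocks $p^\epsilon_{I,\delta}$ and of the magnetic Moyal product through the inverse square root $(Y^\epsilon_\delta)^{-1/2}$ and the various compositions, while keeping everything inside $S^{-\infty}$. All the ingredients are in place (Propositions~\ref{propB.14} and~\ref{F-est-dif-bdproj}, the Moyal-inversion results of \cite{IMP-2}, and Steps~2--4 of the proof of Proposition~\ref{P-Hyp-H-magn}), so this is careful bookkeeping within the magnetic pseudodifferential calculus rather than a new idea — but it is where the real work lies.
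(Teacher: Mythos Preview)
Your proposal is correct and follows essentially the paper's approach: apply Lemma~\ref{P-red-OpF} to each of the two contributions in \eqref{FD-kIO-1} and use the transverse-gauge identity $\Lambda^\epsilon(\gamma,0)=1$. The paper's one-sentence proof leaves your Step~1 (the $\mathcal{O}(\epsilon)$ closeness of the magnetic symbols $F^\epsilon$ to $F^0$ in $S^{-\infty}$, needed because the lemma compares $\Op^\epsilon(F)$ with $\Op(F)$ for the \emph{same} $F$) implicit, whereas you correctly identify it as the place where the real bookkeeping lies and outline it properly.
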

\begin{proof}
The Proposition follows by applying two times the above Lemma to the right hand side of the equality \eqref{FD-kI0} for the two contributions given in \eqref{FD-kIO-1} and noticing that $\Lambda^\epsilon(\gamma,0)=1$.
\end{proof}

\subsection{Estimations close to the crossing-point.}\label{SSS-cr-est}
In this subsection we shall prove that on a small neighborhood of the point $\theta_0\in\mathbb{T}_*$, the symbol $\mathcal{k}_{I,\delta}(\theta)\in\mathcal{M}_{2\times2}(\mathbb{C}^2)$ is very well approximated by the matrix $M_I(\theta)$ in \eqref{F-HIcirc} associated to the local quasi-band Hamiltonian $H_{I}$.
Recalling that
$
\mathcal{k}_{I,\delta}(\theta)_{jk}:=\underset{\gamma\in\Gamma}{\sum}e^{-i<\theta,\gamma>}\left\langle\Phi_{j,\gamma,\delta},\widetilde{H}_{I,\delta}\Phi_{k,0,\delta}\right\rangle_{L^2(\X)}
$ we prove the following.

\begin{proposition}\label{P-p1T}
	For $\theta\in B_R(\theta_0)$ (as defined in \eqref{mai2}) and with $M_I$ introduced in \eqref{F-HIcirc} we have that
	$
	\mathcal{k}_{I,\delta}(\theta)=
	M_I(\theta)
	$.
\end{proposition}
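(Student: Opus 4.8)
The plan is to unwind all the definitions and show that near $\theta_0$ the "dressed modified" correction term vanishes, so that $\mathcal{k}_{I,\delta}$ reduces to the matrix of $\widehat{H}_I$ in the frame $\hat\varphi_\pm$, which is precisely $M_I$. Concretely, I would first record that for $\theta\in B_R(\theta_0)$ the frame $\hat\Phi_{\pm,\delta}(\theta)$ coincides with $\hat\varphi_\pm(\theta)$ by \eqref{mai2}, hence $\mathcal{L}_{\mathbb C}\{\Phi_{-,\delta},\Phi_{+,\delta}\}$ localizes (in the Bloch-Floquet picture, on $B_R(\theta_0)$) to $\widehat\Pi_I(\theta)\mathscr F_\theta$. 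The discrete Fourier transform $\mathcal{k}_{I,\delta}(\theta)=\sum_{\gamma}e^{-i<\theta,\gamma>}\langle\Phi_{j,\gamma,\delta},\widetilde H_{I,\delta}\Phi_{k,0,\delta}\rangle$ is nothing but the Bloch-Floquet fiber of the operator $P_{I,\delta}\widetilde H_{I,\delta}P_{I,\delta}$ expressed in the smooth frame $\{\hat\Phi_{\pm,\delta}(\theta)\}$; so I would identify $\mathcal{k}_{I,\delta}(\theta)_{jk}=\langle\hat\Phi_{j,\delta}(\theta),\widehat{\widetilde H}_{I,\delta}(\theta)\hat\Phi_{k,\delta}(\theta)\rangle_{\mathscr F_\theta}$ for the fiber operator of $\widetilde H_{I,\delta}$, and then reduce the problem to understanding $\widehat{\widetilde H}_{I,\delta}(\theta)$ on $B_R(\theta_0)$.

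Next I would analyze $\widetilde H_{I,\delta}$ at $\epsilon=0$, i.e. the non-magnetic specialization of \eqref{F-Htilde-magn}. Setting $\epsilon=0$ one has $Y_\delta=P_{I,\delta}+P_{I,\delta}H_\Gamma Q_{I,\delta}R_{I}(0)^2Q_{I,\delta}H_\Gamma P_{I,\delta}$ and $\widetilde H_{I,\delta}=Y_\delta^{-1/2}\big(P_{I,\delta}H_\Gamma P_{I,\delta}-P_{I,\delta}H_\Gamma Q_{I,\delta}R_I(0)Q_{I,\delta}H_\Gamma P_{I,\delta}\big)Y_\delta^{-1/2}$. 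The key point is that in the Bloch-Floquet representation all these operators are decomposable, and on $B_R(\theta_0)\subset\Sigma_I$ we have the exact identities $\widehat P_{I,\delta}(\theta)=\widehat\Pi_I(\theta)$ and $\widehat Q_{I,\delta}(\theta)=\widehat{Q}_{I,-,\delta}(\theta)\oplus\widehat Q_{I,+,\delta}(\theta)$ with, by \eqref{F-incl-QI-}-\eqref{F-incl-QI+}, $\widehat Q_{I,\pm,\delta}(\theta)$ equal to the genuine spectral projections $\widehat E_{\leq k_0-1,\delta}(\theta)$ and $\widehat E_{\geq k_0+2,\delta}(\theta)$ of $\widehat H(\theta)$. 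Since $\widehat H(\theta)$ is block-diagonal with respect to the orthogonal decomposition $\widehat E_{\leq k_0-1}\oplus\widehat\Pi_I\oplus\widehat E_{\geq k_0+2}$, the off-diagonal blocks $\widehat P_{I,\delta}(\theta)\widehat H(\theta)\widehat Q_{I,\delta}(\theta)$ and $\widehat Q_{I,\delta}(\theta)\widehat H(\theta)\widehat P_{I,\delta}(\theta)$ vanish fiberwise on $B_R(\theta_0)$. Consequently $\widehat Y_\delta(\theta)=\widehat\Pi_I(\theta)$ and $\widehat{\widetilde H}_{I,\delta}(\theta)=\widehat\Pi_I(\theta)\widehat H(\theta)\widehat\Pi_I(\theta)=\widehat H_I(\theta)$ there (recalling $\widehat H_{I}(\theta)=\widehat H(\theta)\widehat\Pi_I(\theta)$ and that $\delta(\theta_0)$-perturbation agrees with $\widehat H$ away from a small disk; if necessary one also uses that $\widehat H_{I,\delta}=\widehat H_\delta\widehat\Pi_I$ and compares within $B_R$, shrinking $R$ if needed so that the crossing-fix cut-off does not interfere — this is a bookkeeping point about the relative sizes of $R$ and $\delta$).

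Putting these together: for $\theta\in B_R(\theta_0)$,
\[
\mathcal{k}_{I,\delta}(\theta)_{jk}=\langle\hat\Phi_{j,\delta}(\theta),\widehat{\widetilde H}_{I,\delta}(\theta)\hat\Phi_{k,\delta}(\theta)\rangle=\langle\hat\varphi_j(\theta),\widehat H_I(\theta)\hat\varphi_k(\theta)\rangle_{\mathscr F_\theta}=M_I(\theta)_{jk},
\]
where the last equality is exactly the definition \eqref{FD-Psi}. This gives the claim. The main obstacle I anticipate is the fiberwise identification of the Fourier-transformed matrix $\mathcal{k}_{I,\delta}$ with the Bloch-Floquet fiber of $\widetilde H_{I,\delta}$ in the twisted frame $\{\hat\Phi_{\pm,\delta}(\theta)\}$ — one must carry the $\Lambda^\epsilon$-phases and the $\Gamma$-translation structure through carefully (this is the $\epsilon=0$ shadow of the computation leading to Proposition \ref{P-magn-matrix-ke}, where the Luttinger gauge and $\mathcal{W}_\Gamma$ conspire to produce exactly the decomposable fiber operator) — together with the verification that the auxiliary operators $Q_{I,\delta}$, $R_I(0)$, $Y_\delta$ really are decomposable and that their fibers are the stated spectral projections of $\widehat H_\delta(\theta)$ on $B_R(\theta_0)$; once that structural fact is in hand the vanishing of the Feshbach correction term on $B_R(\theta_0)$ is immediate from the block-diagonality of $\widehat H$.
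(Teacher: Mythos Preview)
Your proposal is correct and follows essentially the same route as the paper: both arguments hinge on the fact that on $\Sigma_I\supset B_R(\theta_0)$ the fiber projection $\widehat P_{I,\delta}(\theta)=\widehat\Pi_I(\theta)$ is a genuine spectral projection of $\widehat H(\theta)$, so the off-diagonal block $\widehat P_{I,\delta}(\theta)\widehat H(\theta)\widehat Q_{I,\delta}(\theta)$ vanishes fiberwise, the Feshbach correction and the $Y_\delta$-dressing both trivialize, and the remaining matrix is $\langle\hat\varphi_j(\theta),\widehat H_I(\theta)\hat\varphi_k(\theta)\rangle=M_I(\theta)_{jk}$. The paper organizes this via an intermediate comparison term $\mathring{\mathcal{k}}_{I,\delta}$ (the undressed matrix of $P_{I,\delta}H_\Gamma P_{I,\delta}$) whereas you argue directly on the fiber of $\widetilde H_{I,\delta}$; your parenthetical worry about the $\delta$-perturbation interfering is unnecessary, since $\widetilde H_{I,\delta}$ is built from $H_\Gamma$ (not $H_{\Gamma,\delta}$) and $M_I$ is the matrix of the unperturbed $\widehat H_I$.
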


\begin{proof}
	We shall proceed as in Subsection 8.4 of \cite{CHP-2} and we shall  only focus on the technical details that are specific to our new situation. We shall concentrate on the case $m=0$, the derivatives being easily controlled due to the smoothness conditions verified by the involved functions. 
	As in \cite{CHP-2} (but working directly for the case $\epsilon=0$) we introduce a \enquote{comparison term} defining the $2\times2$-matrix valued smooth map
	$$
	\mathring{\mathcal{k}}_{I,\delta}(\xi)_{jk}=\underset{\gamma\in\Gamma}{\sum}e^{-i<\xi,\gamma>}\big\langle\Phi_{j,\gamma,\delta},
	P_{I, \delta} H_\Gamma P_{I, \delta}\Phi_{k,0,\delta}\big\rangle_{L^2(\X)}
	$$
	and study its difference with both operators appearing in  Proposition \ref{P-p1T}.
	
		\noindent\textbf{Step 1:} Let us first notice that for any $\theta\in B_R(\theta_0)$ and for $j=\pm$ and $k=\pm$:
		\begin{equation*}
		\begin{array}{ll}
		\mathring{\mathcal{k}}_{I,\delta}(\theta)_{jk}
		&=\underset{\gamma\in\Gamma}{\sum}e^{-i<\theta,\gamma>}\big\langle\Phi_{j,\gamma,\delta},
		H_\Gamma \Phi_{k,0,\delta}\big\rangle_{L^2(\X)}\\
		&=\underset{\gamma\in\Gamma}{\sum}e^{-i<\theta,\gamma>}\int_{\mathbb{T}_*}d\omega\,\underset{\gamma\in\Gamma}{\sum}e^{-i<\omega,\gamma>}\big\langle \widehat{\Phi}_{j,\delta}(\omega)\,,\,\widehat{H}(\theta)
		\widehat{\Phi}_{k,\delta}(\omega)\big\rangle_{\mathfrak{F}_\theta}
		 \\
		&=\left\langle\widehat{\Phi}_{j,\delta}(\theta)\,,\,\widehat{\Pi}_I(\theta)\widehat{H}(\theta)\widehat{\Pi}_I(\theta)\widehat{\Phi}_{k,\delta}(\theta)
		\right\rangle_{\mathscr{F}_\theta} =\left\langle\hat{\varphi}_j(\theta)\,,\,\widehat{H}_I(\theta)\hat{\varphi}_k(\theta)
		\right\rangle_{\mathscr{F}_\theta}=M_I(\theta)_{jk}.
		\end{array}
		\end{equation*}
	
	\noindent\textbf{Step 2:} We consider now the difference
	$
	\mathcal{k}_{I,\delta}(\theta)\,-\,\mathring{\mathcal{k}}_{I,\delta}(\theta)$, for $\theta\in\Sigma_I$.
	Thus let us write
	\begin{align*}
		\mathcal{k}_{I,\delta}(\theta)_{jk}\,-\,\mathring{\mathcal{k}}_{I,\delta}(\theta)_{jk}\,&=\,\underset{\gamma\in\Gamma}{\sum}e^{-i<\theta,\gamma>}\left\langle\Phi_{j,\gamma}\,,\,\Big(\widetilde{H}_{I,\delta}-P_{I,\delta} H_\Gamma P_{I,\delta}\Big)\Phi_{k,0,\delta}\right\rangle_{L^2(\X)}.
	\end{align*}
As in \cite{CHP-2} we notice that by construction the bounded self-adjoint operator $Z_\delta:=\widetilde{H}_{I,\delta}-P_{I,\delta} H_\Gamma P_{I,\delta}$ commutes with the translations by vectors $\gamma\in\Gamma$ and thus decomposes with respect to the direct integral in the Floquet representation. The analysis presented in Step 2 of the proof of Proposition 8.13 in \cite{CHP-2} remains valid and we can write:
	\begin{align}
	    \label{final-II-0}
		Z_\delta=P_{I,\delta}Z_\delta P_{I,\delta} &=P_{I,\delta}H_\Gamma\big(Y_\delta^{-1/2}-P_{I,\delta}\big)+\big(Y_\delta^{-1/2}-P_{I,\delta}\big)H_\Gamma P_{I,\delta}+
		\big(Y_\delta^{-1/2}-P_{I,\delta}\big)H_\Gamma\big(Y_\delta^{-1/2}-P_{I,\delta}\big)\nonumber \\ & \quad \qquad 
		+Y_\delta^{-1/2}P_{I,\delta} H_\Gamma Q_{I,\delta} R_\bot Q_{I,\delta} H_\Gamma P_{I,\delta} Y_\delta^{-1/2}\,,
	\end{align}
	and recall from \eqref{F-Y} that 
	$Y_\delta-P_{I,\delta}:=P_{I,\delta} H_\Gamma Q_{I,\delta} R_\bot^2 Q_{I,\delta} H_\Gamma P_{I,\delta}$.
We have the following direct integral decomposition:
	$$
	Y_\delta-P_{I,\delta}=P_{I,\delta} H_\Gamma Q_{I,\delta} R_\bot^2 Q_{I,\delta} H_\Gamma P_{I,\delta}\equiv H_\bullet R_\bot^2H_\bullet=\mathscr{U}_\Gamma^{-1}\left(\int_{\mathbb{T}_*}^
	\oplus d\theta\,
	\widehat{Y}_\bullet(\theta)\right)\mathscr{U}_\Gamma,
	$$
	where
	$$
	\widehat{Y}_\bullet(\theta):=\widehat{H}_\bullet(\theta)\widehat{R}_\bot
	(\theta)^2\big[\widehat{H}
	_\bullet(\theta)\big]^*.
	$$
	 For $\theta\in\Sigma_I$ we notice that $\widehat{H}_\bullet(\theta)$ maps $\widehat{\Pi}_I(\theta)^\bot\mathscr{F}
	_\theta$ into $\widehat{\Pi}_I(\theta)\mathscr{F}_\theta$ (see \eqref{D-PiI}, \eqref{FD-Pi-delta} and \eqref{F-PiI} for the definition of $\widehat \Pi_I(\theta)$), being equal to
	$$ 
	\widehat{H}_\bullet(\theta)
	=\widehat \Pi_I(\theta) 
	\big(\underset{n\in\mathbb{N}}{\sum}\lambda_n(\theta)\widehat{\pi}
	_n(\theta)\big)\, \widehat \Pi_I(\theta)^\perp\,=\big(\widehat{\pi}_{k_0}(\theta)+\widehat{\pi}_{k_0+1}(\theta)\big)\big(\underset{n\in\mathbb{N}}{\sum}\lambda_n(\theta)\widehat{\pi}
	_n(\theta)\big)\big(\hspace*{-18pt}\underset{n\in\mathbb{N}\setminus\{k_0,k_0+1\}\}}{\sum}\hspace*{-18pt}\widehat{\pi}
	_n(\theta)\big).
	$$	
	We conclude that for $\theta\in \Sigma_I$ we have that $ \widehat{Y}_\bullet(\theta)=0$ and thus also $
	\widehat{Z}_\delta(\theta)=0$.
	Finally, these imply:
	$$
	\mathcal{k}_{I,\delta}(\theta)\,=\,\mathring{\mathcal{k}}_{I,\delta}(\theta),\qquad\forall\theta\in\Sigma_I.
	$$
\end{proof}

\subsection{Estimations far from the crossing-point.}\label{SSS-out-cr-est}

In this paragraph we prove that for $\theta\in\mathbb{T}_*\setminus B_R(\theta_0)$ the Hermitian $2\times2$ matrix 
$\mathcal{k}_{I,\delta}(\theta)$ has a spectral gap, i.e. two eigenvalues at a strictly positive distance one from the other, and we estimate this gap.

\begin{proposition}\label{P-restr-sp-window}
Let $R>0$ such that \eqref{mai2} is satisfied, there exists $\Lambda_R>0$ such that the interval $\left (-\Lambda_R,\Lambda_R\right ) $ is in the resolvent set of the Hermitian $2\times2$ matrix $\mathcal{k}_{I,\delta}(\theta)$ for any $\theta\in\mathbb{T}_*\setminus B_R(\theta_0)$.
\end{proposition}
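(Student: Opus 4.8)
The plan is to transfer the question to the fibers of the decomposable operator $\widetilde H_{I,\delta}$, and then to treat separately the part of $\mathbb{T}_*\setminus B_R(\theta_0)$ lying in $\Sigma_I$, where $\mathcal{k}_{I,\delta}(\theta)$ is the genuine window matrix, and the part outside $\Sigma_I$, where $\widehat H(\theta)$ already possesses a gap around $0$ and one invokes the Feshbach--Schur reduction.

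First I would record two preliminary facts. Since $H_\Gamma$, $P_{I,\delta}$ and $Q_{I,\delta}H_\Gamma Q_{I,\delta}$ are $\Gamma$-periodic, so is $\widetilde H_{I,\delta}$; it therefore decomposes, $\widetilde H_{I,\delta}=\int^\oplus_{\mathbb{T}_*}d\theta\,\widehat{\widetilde H}_{I,\delta}(\theta)$ with $\widehat{\widetilde H}_{I,\delta}(\theta)$ self-adjoint on the two-dimensional fiber $\widehat P_{I,\delta}(\theta)\mathscr{F}_\theta$, and the Feshbach--Schur construction of $\widetilde H_{I,\delta}$ respects this decomposition, so $\widehat{\widetilde H}_{I,\delta}(\theta)$ is the Feshbach reduction of $\widehat H(\theta)$ relative to $\widehat P_{I,\delta}(\theta)$. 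Because $\{\hat\Phi_{-,\delta}(\theta),\hat\Phi_{+,\delta}(\theta)\}$ is, for every $\theta$, an orthonormal basis of $\widehat P_{I,\delta}(\theta)\mathscr{F}_\theta$ (it is the image under the unitary $e^{A(\theta)}$ of the orthonormal frame $\{\hat\Psi_{\pm,\delta}(\theta)\}$ of Proposition~\ref{P-smooth-glob-frame} on $\Sigma_I$, and equals it off $\Sigma_I$), the matrix $\mathcal{k}_{I,\delta}(\theta)$ is exactly the matrix of $\widehat{\widetilde H}_{I,\delta}(\theta)$ in that basis, so $\sigma(\mathcal{k}_{I,\delta}(\theta))=\sigma(\widehat{\widetilde H}_{I,\delta}(\theta))$. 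Moreover $\mathcal{k}_{I,\delta}\in S^0_0(\X\times\X^*)_{2\times2}$ is smooth, so $g(\theta):=\dist\big(0,\sigma(\mathcal{k}_{I,\delta}(\theta))\big)$ is continuous on $\mathbb{T}_*$; since $\{\theta\in\mathbb{T}_*:d(\theta,\theta_0)\ge R\}$ is compact and contains $\mathbb{T}_*\setminus B_R(\theta_0)$, it suffices to prove $g(\theta)>0$ for all $\theta\neq\theta_0$, and then take $\Lambda_R:=\min_{\{d(\cdot,\theta_0)\ge R\}}g>0$.

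For $\theta\in\Sigma_I\setminus\{\theta_0\}$, the computation in Step~1 and the conclusion of Step~2 of the proof of Proposition~\ref{P-p1T} give $\mathcal{k}_{I,\delta}(\theta)=\mathring{\mathcal{k}}_{I,\delta}(\theta)$, and $\mathring{\mathcal{k}}_{I,\delta}(\theta)$ is the matrix of $\widehat H_I(\theta)=\widehat\Pi_I(\theta)\widehat H(\theta)\widehat\Pi_I(\theta)$ (see \eqref{eq:1.19}, \eqref{D-PiI}) in the orthonormal basis $\{\hat\Phi_{\pm,\delta}(\theta)\}$ of $\widehat\Pi_I(\theta)\mathscr{F}_\theta$; hence $\sigma(\mathcal{k}_{I,\delta}(\theta))=\{\lambda_-(\theta),\lambda_+(\theta)\}$. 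By Hypothesis~\ref{H-1}, $\lambda_-(\theta)\le 0\le\lambda_+(\theta)$, with both inequalities strict for $\theta\ne\theta_0$ by \eqref{F-unique-int}, so $g(\theta)=\min\{-\lambda_-(\theta),\lambda_+(\theta)\}>0$. For $\theta\in\mathbb{T}_*\setminus\Sigma_I$, Hypothesis~\ref{H-1} gives $\sigma(\widehat H(\theta))\cap\mathring I=\emptyset$ while $0\in\mathring I$, and the fiber triple $(\widehat H(\theta),\widehat P_{I,\delta}(\theta),\mathring I)$ is admissible, since the proof of Proposition~\ref{P-H-band-0-field} is carried out fiber by fiber and shows that $\mathring I$ lies in the resolvent set of $\widehat P_{I,\delta}(\theta)^\bot\widehat H(\theta)\widehat P_{I,\delta}(\theta)^\bot$ for every $\theta$. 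Fixing a compact interval $I_\circ\subset\mathring I$ with $0$ in its interior and applying the abstract Feshbach--Schur estimate of Appendix~\ref{A-FS-arg} (Proposition~\ref{P-FS-arg}), i.e. the fiberwise analogue of Proposition~\ref{P-magn-FS-est} at $\epsilon=0$, to this fiber triple, the Hausdorff distance between $\sigma(\widehat{\widetilde H}_{I,\delta}(\theta))\cap I_\circ$ and $\sigma(\widehat H(\theta))\cap I_\circ=\emptyset$ is finite; hence $\sigma(\widehat{\widetilde H}_{I,\delta}(\theta))\cap I_\circ=\emptyset$, so $g(\theta)\ge\dist(0,\mathbb{R}\setminus I_\circ)>0$. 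This establishes $g>0$ on $\mathbb{T}_*\setminus\{\theta_0\}$.

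The gluing by compactness is routine, and the region $\theta\in\Sigma_I\setminus B_R(\theta_0)$ is easy. The delicate point is the region $\theta\notin\Sigma_I$: there $\mathcal{k}_{I,\delta}(\theta)$ is no longer the explicit window matrix, and one must instead check that the whole Feshbach--Schur construction of $\widetilde H_{I,\delta}$, and the admissibility of the triple $(H_\Gamma,P_{I,\delta},\mathring I)$, are genuinely fiberwise, so that $\widehat{\widetilde H}_{I,\delta}(\theta)$ transports, fiber by fiber, the spectral gap that $\widehat H(\theta)$ already has outside $\Sigma_I$.
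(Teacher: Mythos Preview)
Your overall architecture---reduce to the fibers, show that $g(\theta):=\dist\big(0,\sigma(\mathcal{k}_{I,\delta}(\theta))\big)>0$ for every $\theta\neq\theta_0$, then conclude by continuity and compactness of $\{d(\cdot,\theta_0)\ge R\}$---is sound and is in fact somewhat different from the paper's route. The paper does not split $\mathbb{T}_*\setminus B_R(\theta_0)$ into $\Sigma_I$ and its complement and does not use compactness: it applies Proposition~\ref{P-FS-arg} \emph{quantitatively} on the whole of $\mathbb{T}_*\setminus B_R(\theta_0)$, choosing the test interval $I'$ so small that the right--hand side $\|H\Pi\|^2(\ell_{I'}/\Lambda_I)^2/d_{I'}$ is strictly smaller than the uniform gap $\widetilde\Lambda_R/2$ of $\widehat H(\theta)$; this forces $I'$ into the resolvent set of $\mathcal{k}_{I,\delta}(\theta)$ and produces an explicit $\Lambda_R$. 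Your handling of $\theta\in\Sigma_I\setminus\{\theta_0\}$, identifying $\sigma(\mathcal{k}_{I,\delta}(\theta))=\{\lambda_-(\theta),\lambda_+(\theta)\}$ via Step~2 of Proposition~\ref{P-p1T}, is a nice shortcut that the paper does not take.

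There is, however, a genuine gap in your treatment of $\theta\notin\Sigma_I$. Proposition~\ref{P-FS-arg} does \emph{not} bound the Hausdorff distance between the truncated spectra $\sigma(\widetilde H)\cap I_\circ$ and $\sigma(H)\cap I_\circ$; it bounds $\sup_{e\in\sigma(\widetilde H)\cap I_\circ}\dist(e,\sigma(H))$ where the distance is to the \emph{full} spectrum $\sigma(H)$. So from $\sigma(\widehat H(\theta))\cap I_\circ=\emptyset$ you cannot conclude $\sigma(\widehat{\widetilde H}_{I,\delta}(\theta))\cap I_\circ=\emptyset$ merely because ``the distance is finite'': for an arbitrary fixed $I_\circ$ the bound may well exceed $\dist(I_\circ,\sigma(\widehat H(\theta)))$, leaving room for eigenvalues of $\mathcal{k}_{I,\delta}(\theta)$ inside $I_\circ$. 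Two easy repairs are available. Either follow the paper and shrink $I_\circ$ so that $C_R^2\,\ell_{I_\circ}^2/(\Lambda_I^2 d_{I_\circ})$ is strictly smaller than the fiber gap (this is the content of \eqref{mai5} and the lines following it). Or, more directly and in the spirit of your compactness reduction, observe that by construction $\widehat{\widetilde H}_{I,\delta}(\theta)=Y(\theta)^{-1/2}S_\theta(0)Y(\theta)^{-1/2}$ with $S_\theta(0)$ the fiber Feshbach map at $e=0$; the exact Feshbach equivalence recalled in the Remark of Appendix~\ref{A-FS-arg} then gives $0\in\rho(\widehat H(\theta))\Leftrightarrow 0\in\rho(\widehat{\widetilde H}_{I,\delta}(\theta))$, which immediately yields $g(\theta)>0$ for $\theta\notin\Sigma_I$ without any quantitative input.
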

\begin{proof}
Due to our Hypotheses \ref{H-1} and \ref{H-2}, for any value of $\theta \in\mathbb{T}_*\setminus B_R(\theta_0)$, the fiber operator $\widehat{H}(\theta)$ has a spectral gap $I_R=[-\widetilde{\Lambda}_R,\widetilde{\Lambda}_R]\subset I$. Let $I'\subset (-\widetilde{\Lambda}_R/2,\widetilde{\Lambda}_R/2)$ be any compact interval containing $0$ and notice that the distance $d_{I'}$ between $I'$ and $\R\setminus I$ is bounded from below by  $\widetilde{\Lambda}_R/2$. Also, let us notice that:
\begin{align}\label{mai6}
\sup_{\theta\in \mathbb{T}_*\setminus B_R(\theta_0)} \Vert \widehat{H}(\theta)\widehat{P}_{I,\delta}(\theta)\Vert =:C_R< \infty. 
\end{align}

We shall apply Proposition \ref{P-FS-arg} with the  matrix $\mathcal{k}_{I,\delta}(\theta)$ playing the role of the operator $\widetilde{H}$, the fiber Hamiltonian $\widehat{H}(\theta)$ playing the role of $H$, the projection $\widehat{P}_{I,\delta}(\theta)$ playing the role of $\Pi$ and $I_R$ playing the role of $I$. For every $e\in I'$ we have that $\dist(e,\sigma(\widehat{H}(\theta))\geq \widetilde{\Lambda}_R/2$. Now if the width $\ell_{I'}$ of the interval $I'$ is such that 
$$\widetilde{\Lambda}_R/2>  C_R^2 (2\ell_{I'}^2/\widetilde{\Lambda}_R)\geq C_R^2 (\ell_{I'}^2/\widetilde{\Lambda}_R^2) d_{I'}^{-1},$$
then \eqref{mai5} implies that $e$ is in the resolvent set of $\mathcal{k}_{I,\delta}(\theta)$. This implies that $$\left (-\frac{\widetilde{\Lambda}_R}{2C_R},\frac{\widetilde{\Lambda}_R}{2C_R}\right ) $$ belongs to the resolvent set of $\mathcal{k}_{I,\delta}(\theta)$ for all $\theta\in \mathbb{T}_*\setminus B_R(\theta_0)$. 
\end{proof}

\subsection{The symbol of the Peierls-Onsager effective Hamiltonian.}

Let us summarize what we know about $\mathcal{k}_{I,\delta}$ from the previous two Subsections \ref{SSS-cr-est} and \ref{SSS-out-cr-est}:

\begin{proposition}\label{P-kI}
	The smooth function $\mathbb{T}_*\ni\theta\mapsto\mathcal{k}_{I,\delta}
	(\theta)\in\mathcal{M}_{2\times2}(\mathbb{C}^2)$ defined in \eqref{FD-kI0} has the following properties:
	\begin{enumerate}
		\item $\forall\theta\in B_R(\theta_0)$ we have the identity $\mathcal{k}_{I,\delta}
		(\theta)=M_I(\theta)=F_0(\theta)\bb1\,+\,\underset{\ell =1,2,3}{\sum}F_\ell (\theta)\, \sigma_\ell$;
		\item There exists some $L>0$ such that for any $\theta\in\mathbb{T}_*\setminus B_R(\theta_0)$ the $2\times2$ complex matrix $\mathcal{k}_{I,\delta}(\theta)$ has two real eigenvalues
		$\kappa_{-,\delta}(\theta)<\kappa_{+,\delta}(\theta)$ satisfying the estimate
		$$
		\kappa_{-,\delta}(\theta)\,<\,-L\,<\,L\,<\,\kappa_{+,\delta}(\theta).
		$$
	\end{enumerate}
	\end{proposition}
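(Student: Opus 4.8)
The plan is to simply collect the two facts already established in Subsections \ref{SSS-cr-est} and \ref{SSS-out-cr-est} and reformulate them as the statement of the proposition. For Conclusion (1), I would invoke Proposition \ref{P-p1T} directly: it asserts precisely that $\mathcal{k}_{I,\delta}(\theta)=M_I(\theta)$ for $\theta\in B_R(\theta_0)$, and then the explicit Pauli-matrix form $M_I(\theta)=F_0(\theta)\bb1+\sum_{\ell=1,2,3}F_\ell(\theta)\sigma_\ell$ is just the definition \eqref{F-HIcirc}. So Conclusion (1) requires no new work whatsoever.

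For Conclusion (2), I would invoke Proposition \ref{P-restr-sp-window}: it produces a $\Lambda_R>0$ (explicitly, one may take $\Lambda_R=\widetilde{\Lambda}_R/(2C_R)$ in the notation of that proof) such that $(-\Lambda_R,\Lambda_R)$ lies in the resolvent set of the Hermitian $2\times2$ matrix $\mathcal{k}_{I,\delta}(\theta)$ for every $\theta\in\mathbb{T}_*\setminus B_R(\theta_0)$. Since $\mathcal{k}_{I,\delta}(\theta)$ is a $2\times2$ Hermitian matrix it has exactly two real eigenvalues $\kappa_{-,\delta}(\theta)\le\kappa_{+,\delta}(\theta)$ (counted with multiplicity); the fact that the open interval $(-\Lambda_R,\Lambda_R)$ avoids the spectrum forces each eigenvalue to lie in $(-\infty,-\Lambda_R]\cup[\Lambda_R,\infty)$. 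It then remains to argue that one eigenvalue is $\le-\Lambda_R$ and the other is $\ge\Lambda_R$ (so that they are genuinely distinct and strictly separated): this follows from Conclusion 2 of Proposition \ref{P-cons-Hyp} together with the Hypotheses \ref{H-1} and \ref{H-2} and the connectedness argument already used in the proof of Proposition \ref{P-restr-sp-window} — the two Bloch levels $\lambda_\pm$ are continuous, separated on $\mathbb{T}_*\setminus B_R(\theta_0)$, and straddle $0$, so by continuity of the eigenvalues of $\mathcal{k}_{I,\delta}(\theta)$ (it being a smooth Hermitian-matrix-valued function with a uniform spectral gap around $0$) the labelling is consistent over the connected region. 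Setting $L:=\Lambda_R$ gives the claimed strict inequalities.

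There is essentially no obstacle here: the proposition is a bookkeeping statement that packages Propositions \ref{P-p1T} and \ref{P-restr-sp-window} into a single convenient form, and its proof is one or two lines citing those results. The only mild subtlety — and the one point I would spell out — is making sure the two eigenvalues really sit on opposite sides of $0$ rather than both on the same side; for this I would note that $\mathcal{k}_{I,\delta}(\theta)$ is homotopic through Hermitian matrices with uniformly gapped spectrum (via the family $\mathcal{k}_{I,\delta}$ restricted to paths joining a point near $\theta_0$ to an arbitrary $\theta\notin B_R(\theta_0)$, using that near $\theta_0$ it equals $M_I$ whose eigenvalues $\lambda_\pm$ indeed straddle $0$ by Hypothesis \ref{H-2} and \eqref{F-lambdapm}), so the signature is preserved. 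Then $\kappa_{-,\delta}(\theta)<-L<L<\kappa_{+,\delta}(\theta)$ follows. I would conclude the proof with \cqfd.

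\begin{proof}
Conclusion (1) is exactly Proposition \ref{P-p1T} combined with the definition \eqref{F-HIcirc} of $M_I$.

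For Conclusion (2), Proposition \ref{P-restr-sp-window} provides $\Lambda_R>0$ such that the open interval $(-\Lambda_R,\Lambda_R)$ is contained in the resolvent set of the Hermitian $2\times2$ matrix $\mathcal{k}_{I,\delta}(\theta)$ for every $\theta\in\mathbb{T}_*\setminus B_R(\theta_0)$. Being Hermitian, $\mathcal{k}_{I,\delta}(\theta)$ has two real eigenvalues $\kappa_{-,\delta}(\theta)\leq\kappa_{+,\delta}(\theta)$, and since neither can lie in $(-\Lambda_R,\Lambda_R)$ each of them belongs to $(-\infty,-\Lambda_R]\cup[\Lambda_R,+\infty)$. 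It remains to check that the smaller one is in $(-\infty,-\Lambda_R]$ and the larger one in $[\Lambda_R,+\infty)$. By Conclusion (1), on $B_R(\theta_0)$ the eigenvalues of $\mathcal{k}_{I,\delta}$ coincide with $\lambda_\pm=F_0\pm|\mathbf{F}|$, which by Hypothesis \ref{H-2} and \eqref{F-lambdapm} satisfy $\lambda_-(\theta)<0<\lambda_+(\theta)$ near (and away from) $\theta_0$; since $\mathcal{k}_{I,\delta}$ is a smooth Hermitian-matrix-valued function whose spectrum stays a uniform distance $\Lambda_R$ away from $0$ on $\mathbb{T}_*\setminus B_R(\theta_0)$, the signature of $\mathcal{k}_{I,\delta}(\theta)$ is constant there, so indeed $\kappa_{-,\delta}(\theta)\leq-\Lambda_R$ and $\kappa_{+,\delta}(\theta)\geq\Lambda_R$. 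Taking $L:=\Lambda_R$ yields
$$
\kappa_{-,\delta}(\theta)\,<\,-L\,<\,L\,<\,\kappa_{+,\delta}(\theta),\qquad\forall\theta\in\mathbb{T}_*\setminus B_R(\theta_0),
$$
which is Conclusion (2).
\end{proof}
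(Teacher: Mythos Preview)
Your proposal is correct and follows exactly the paper's approach: the proposition is stated in the paper as a summary (\enquote{Let us summarize what we know about $\mathcal{k}_{I,\delta}$ from the previous two Subsections}) of Propositions \ref{P-p1T} and \ref{P-restr-sp-window}, with no separate proof given. You in fact supply slightly more detail than the paper does, namely the connectedness/signature argument explaining why the two eigenvalues lie on \emph{opposite} sides of $0$ rather than both on the same side; the paper leaves this implicit. One trivial quibble: taking $L:=\Lambda_R$ only yields $\kappa_{-,\delta}(\theta)\le -L$ and $\kappa_{+,\delta}(\theta)\ge L$, not the strict inequalities stated; simply choose any $L\in(0,\Lambda_R)$ instead.
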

As a direct consequence of this Proposition, we can define some smooth functions $(F_0,\mathbf{F}):\mathbb{T}_*\rightarrow\mathbb{R}\times\mathbb{R}^3$ so that 
$$
\mathcal{k}_{I,\delta}(\theta)=F_0(\theta)\bb1\,+\,\underset{\ell =1,2,3}{\sum}F_\ell (\theta)\, \sigma_\ell,\qquad\forall\theta\in\mathbb{T}_*
$$
and coinciding on $B_R(\theta_0)\subset\Sigma_I$ with the one defined in Subsection \ref{SS-LH}. While on $B_R(\theta_0)$ they do not depend on $\delta$ being equal to the functions defined in Subsection \ref{SS-LH}, outside this ball they will depend on our choices: the value of $\delta\in(0,\delta_0)$ and the extension of the smooth sections in Subsection \ref{SS-Gl-Sections.}. We shall have the same relations:
$$
F_0(\theta)=(1/2)\big(\kappa_{-,\delta}(\theta)+\kappa_{+,\delta}(\theta)\big),\quad\big|\mathbf{F}(\theta)\big|=(1/2)\big|\kappa_{-,\delta}(\theta)-\kappa_{+,\delta}(\theta)\big|,\qquad\forall\theta\in\mathbb{T}_*.
$$

\begin{definition}\label{D-symbol-POeffH}
	We call $\Op^\epsilon(\mathcal{k}_{I,\delta})$ the Peierls-Onsager effective Hamiltonian for the spectral region $I$, considered as a $\Gamma_*$-periodic symbol in $S^0_0(\X\times\X^*)_{2\times2}$ constant along the $\X$ directions.
\end{definition}

Once defined a Peierls-Onsager effective Hamiltonian (Definition \ref{D-symbol-POeffH}) with a symbol that is close of order $\epsilon$ to the operator $\Op^\epsilon(\mathcal{k}^\epsilon_{I,\delta})$ (Proposition \ref{P-kepsilon-k}) that is unitary equivalent with the magnetic 'quasi-band' Hamiltonian (Proposition \ref{P-magn-matrix-ke}), the proof of our main Theorem \ref{T-Main-2} may be obtained by using the magnetic pseudo-differential calculus to construct a 'quasi-resolvent' as in our papers \cite{CHP-1} and \cite{CHP-2}. Nevertheless, taking advantage of the fact that we deal with a constant magnetic field $\epsilon B^\circ$, we have chosen to avoid these rather complicated computations that where necessary in the case of a non-constant magnetic field and transform our problem in a semi-classical problem for a Weyl type pseudo-differential operator (as in \cite{HS1} and \cite{HS2})} and use the arguments developed  in these references. The next section is devoted to this analysis.

\section{Semi-classical analysis for the Peierls-Onsager effective Hamiltonian.}\label{S-red-ModelH}
The aim of this section is to present results which were developed in \cite{HS1} and \cite{HS2} in the context  of the Harper model and apply them to the Peierls-Onsager effective Hamiltonian $\Op^\epsilon(\mathcal{k}_{I,\delta})$ associated above to our magnetic quasi-band Hamiltonian for the spectral interval $I\subset\mathbb{R}$. Although most of the analysis there is rather general, the case of \enquote{touching bands} is only solved in 
a particular situation and there is a need to detail some new aspects. We include here the analogue of \cite{HS2} in our more general context. Throughout this section we shall consider the smooth map $\mathbb{T}_*\ni\theta\mapsto\mathcal{k}_{I,\delta}(\theta)\in\mathcal{M}_{2\times2}(\mathbb{C})$ as a smooth $\Gamma_*$-periodic function $\mathbb{R}^2\ni\xi\mapsto\mathcal{k}_{I,\delta}(\xi)\in\mathcal{M}_{2\times2}(\mathbb{C})$.

\subsection{Formulation of the one-dimensional \textit{semi-classical} problem.}\label{ss7.1new}

A few elements associated to the symplectic structure will be used in this subsection. Let us recall that $\Xi=\X\times\X^*$ has a canonical symplectic structure given by the symplectic form 
$$
\sigma\big((x,\xi),(y,\eta)\big):=<\xi,y>-<\eta,x>\equiv\langle(x,\xi)\,,\,\mathfrak{J}(y,\eta)\rangle_\Xi,\;\forall\big((x,\xi),(y,\eta)\big)\in\Xi\times\Xi,
$$
\beq\label{DF-J}
\ \mathfrak{J}:=\left(\begin{array}{cc}
	0&-\bb1_2\\\bb1_2&0
\end{array}\right).
\eeq
Let us write down the explicit form of our Peierls-Onsager effective Hamiltonian. 
\begin{align*}\begin{split}
&\forall f\in\mathscr{S}(\X)\oplus\mathscr{S}(\X)\subset L^2(\X)\otimes\mathbb{C}^2: \\
&\hspace*{1cm}\big(\Op^\epsilon(\mathcal{k}_{I,\delta})f\big)_j(x)=(2\pi)^{-2}\int_{\X}dy\,e^{-(i\epsilon B^\circ/2)x\wedge y}\int_{\X^*}d\xi\,e^{i<\xi,x-y>}\underset{k=\pm}{\sum}\,\mathcal{k}_{I,\delta}(\xi)_{jk}\,f_k(y).
\end{split}\end{align*}
If we consider the orthogonal transformation:
$$
\mathbb{R}^2\ni x\mapsto\mathfrak{J}x:=(-x_2,x_1)\in\mathbb{R}^2
$$
and notice that $-(\epsilon B^\circ/2)x\wedge y\,+<\xi,x-y>=\big<(\xi+\epsilon B^\circ(\mathfrak{J}x)/2,(x-y)\big>$ we conclude after a change of variables $\xi\mapsto\xi-\epsilon B^\circ(\mathfrak{J}x) /2$, that
\beq\label{F-5.4}
\big(\Op^\epsilon(\mathcal{k}_{I,\delta})f\big)_j(x)=(2\pi)^{-2}\int_{\X}dy\,\int_{\X^*}d\xi\,e^{i<\xi,x-y>}\underset{k=\pm}{\sum}\,\mathcal{k}_{I,\delta}\big(\xi-\epsilon B^\circ(\mathfrak{J}x) /2\big)_{jk}\,f_k(y)
\eeq
We notice that we can view the above change of variables as a linear map $\Xi\ni X\mapsto\widetilde{Z}_1(X):=\big(\widetilde{z}_1(X),\widetilde{\zeta}_1(X)\big)\in\Xi$ with
\begin{align*}\left\{\begin{array}{l}
\widetilde{z}_1(X):=\xi_2-(\epsilon B^\circ/2)x_1,\\
\widetilde{\zeta}_1(X):=\xi_1+(\epsilon B^\circ/2)x_2
\end{array}\right.
\end{align*}
and we remark that
$$
\sigma\big(\widetilde{z}_1,\widetilde{\zeta}_1\big)=\sigma\big(\xi_2-(\epsilon B^\circ/2)x_1,\xi_1+(\epsilon B^\circ/2)x_2\big)=2(\epsilon B^\circ/2)=\epsilon B^\circ.
$$
Let us denote by $\h:=\epsilon B^\circ$ and extend this rank 2 linear map to a symplectic linear map $\Xi\ni X\mapsto\widetilde{X}\in\Xi$ with determinant 1:
\begin{align*}\left\{\begin{array}{l}
\widetilde{x}_1:=\xi_2-(\h/2)x_1,\\
\widetilde{x}_2:=-\xi_2-(\h/2)x_1,\\
\widetilde{\xi}_1:=\h^{-1}\xi_1+(1/2)x_2, \\
\widetilde{\xi}_2:=\h^{-1}\xi_1-(1/2)x_2.
\end{array}\right.
\end{align*}
having the determinant
\begin{align*}\left|\begin{array}{cccc}
-\h/2&0&0&1\\
-\h/2&0&0&-1\\
0&1/2&\h^{-1}&0\\
0&-1/2&\h^{-1}&0
\end{array}\right|=1.
\end{align*}
We shall denote by $\widetilde{\X}$ the subspace generated by $\big(\widetilde{x}_1(X),\widetilde{x}_2(X)\big)$ when $X\in\Xi$ and by $\widetilde{\X}^*$ the one generated by $\big(\widetilde{\xi}_1(X),\widetilde{\xi}_2(X)\big)$ when $X\in\Xi$. The abstract theory of symplectic maps implies (see chapter 7 in \cite{Go}) the existence of a unitary operator (element of the metaplectic group, determined up to a sign) $\mathfrak{U}^\epsilon:L^2(\X)\overset{\sim}{\rightarrow}L^2(\widetilde{\X})$ such that, if we denote by $\Op$ the Weyl calculus associated to the decomposition $\Xi=\X\times\X^*$ and by $\widetilde{\Op}$ the Weyl calculus associated to the decomposition $\Xi=\widetilde{\X}\times\widetilde{\X}^*$, related by the symplectic transform $S^\epsilon:\Xi\ni X\mapsto\widetilde{X}(X)\in\Xi$, we have the identity:
\beq\label{F-Egor}
\mathfrak{U}^\epsilon\Op\big(\Phi\circ S^\epsilon\big)\big[\mathfrak{U}^\epsilon\big]^{-1}=\widetilde{\Op}(\Phi),\quad\forall\Phi\in\mathscr{S}^\prime(\Xi).
\eeq
We apply this remark to our symbol $\mathcal{k}_{I,\delta}$ and write that
$$
\mathcal{k}_{I,\delta}\big(\xi-\epsilon B^\circ(\mathfrak{J}x)\big)=
\mathcal{k}_{I,\delta}(\widetilde{x}_1,\h\widetilde{\xi}_1)\otimes\bb1(\widetilde{x}_2,\h\widetilde{\xi}_2)=:\widetilde{\mathcal{K}}_{I,\delta}\big(S^\epsilon(X)\big)
$$
so that \eqref{F-5.4} and \eqref{F-Egor} become
$$
\Op^\epsilon\big(\mathcal{k}_{I,\delta}\big)=\Op\big(\widetilde{\mathcal{K}}_{I,\delta}\circ S^\epsilon\big)=
\big[\mathfrak{U}^\epsilon\big]^{-1}\widetilde{\Op}\big(\widetilde{\mathcal{K}}_{I,\delta}\big)\mathfrak{U}^\epsilon.
$$
This formulas suggest to work in the following decomposition of $\Xi$:
$$
\Xi=\Xi_1\times\Xi_2=\Big(\widetilde{\X}_1\times\widetilde{\X}^*_1\Big)\times\Big(\widetilde{\X}_2\times\widetilde{\X}^*_2\Big),\quad X:=\big(\widetilde{X}_1,\widetilde{X}_2\big)=\big((\widetilde{x}_1,\widetilde{\xi}_1),(\widetilde{x}_2,\widetilde{\xi}_2)\big)
$$
so that for any $\varphi\in\mathscr{S}(\widetilde{\X})$: 
\beq
\Big(\widetilde{\Op}\big(\widetilde{\mathcal{K}}_{I,\delta}\big)\varphi\Big)(\widetilde{x}_1,\widetilde{x}_2)=(2\pi)^{-1}\int_{\widetilde{X}_1}d\widetilde{y}_1\int_{\widetilde{X}^*_1}d\widetilde{\xi}_1\,e^{i<\widetilde{\xi}_1,(\widetilde{x}_1-\widetilde{y}_1)>}\mathcal{k}_{I,\delta}\big((\widetilde{x}_1-\widetilde{y}_1)/2,\h\widetilde{\xi}_1\big)\big)\varphi(\widetilde{y}_1,\widetilde{x}_2).
\eeq
Thus we are reduced to a 1-dimensional semi-classical problem for a  $2\times 2$ matrix valued symbol, with semi-classical parameter $\h=\epsilon B^\circ$ controlled by the intensity of the magnetic field. During this section we shall denote simply $\widetilde{\Xi}_1\cong\mathbb{R}^2$, $\widetilde{\X}_1\cong\mathbb{R}$, $\widetilde{\X}^*_1\cong\mathbb{R}$ and the variables $\widetilde{x}_1=t$, $\widetilde{y}_1=s$, $\widetilde{\xi}_1=\tau$ and $T=(t,\tau)\in\mathbb{R}^2$. We shall also use the notation
\beq\label{DF-kIh}
\mathcal{k}_{I,\delta,\h}(t,\tau):=\mathcal{k}_{I,\delta}(t,\h\tau).
\eeq
We shall consider the usual Weyl quantization of symbols on $\mathbb{R}^2$:
\beq\label{DF-Op1}
\big(\Op_1(f)\varphi\big)(t):=(2\pi)^{-1}\int_{\mathbb{R}}ds\int_{\mathbb{R}}d\tau\,e^{i<\tau,(t-s)>}f\big((t+s)/2,\tau\big)\varphi(s)
\eeq
and state the conclusion of the above construction.
\begin{proposition}
With $\h:=\epsilon B^\circ>0$ and the notations introduced in \eqref{F-Egor}, \eqref{DF-Op1} and \eqref{DF-kIh} we have the identities:
$$
\Op^\epsilon\big(\mathcal{k}_{I,\delta}\big)=\big[\mathfrak{U}^\epsilon\big]^{-1}\left(\Op_1(\mathcal{k}_{I,\delta,\h}\big)\otimes\bb1_{L^2(\mathbb{R})}\right)\mathfrak{U}^\epsilon.
$$
\end{proposition}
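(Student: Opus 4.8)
The plan is to assemble the identity from three pieces that have already been set up in the text: the explicit kernel formula \eqref{F-5.4} for $\Op^\epsilon(\mathcal k_{I,\delta})$ obtained by completing the square in the magnetic phase, the metaplectic intertwining relation \eqref{F-Egor} attached to the symplectic map $S^\epsilon$, and the tensor factorization $\widetilde{\mathcal K}_{I,\delta}(S^\epsilon(X)) = \mathcal k_{I,\delta}(\widetilde x_1,\h\widetilde\xi_1)\otimes\bb1(\widetilde x_2,\h\widetilde\xi_2)$. First I would record that \eqref{F-5.4} says precisely $\Op^\epsilon(\mathcal k_{I,\delta}) = \Op(\mathcal k_{I,\delta}(\xi - \epsilon B^\circ(\mathfrak J x)/2))$ as operators on $L^2(\X)\otimes\mathbb C^2$; here the matrix structure is just a spectator, so every Weyl-calculus statement for scalar symbols applies entrywise.

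Next I would check that the symbol appearing there is genuinely $\widetilde{\mathcal K}_{I,\delta}\circ S^\epsilon$. Writing $S^\epsilon(X) = \widetilde X$ with the four components displayed in the excerpt, one verifies $\widetilde x_1 = \xi_2 - (\h/2)x_1$ and $\h\widetilde\xi_1 = \xi_1 + (\h/2)x_2$, so that $(\widetilde x_1, \h\widetilde\xi_1) = (\xi - \epsilon B^\circ(\mathfrak J x)/2)$ componentwise (recall $\mathfrak J x = (-x_2,x_1)$); hence $\mathcal k_{I,\delta}(\xi - \epsilon B^\circ(\mathfrak J x)/2) = \mathcal k_{I,\delta}(\widetilde x_1,\h\widetilde\xi_1) = \widetilde{\mathcal K}_{I,\delta}(\widetilde X)$, using that $\widetilde{\mathcal K}_{I,\delta}$ is defined to be independent of the second pair of variables. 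Since $S^\epsilon$ has determinant $1$ and is symplectic (the computation is already in the text: $\sigma(\widetilde z_1,\widetilde\zeta_1)=\h$ forces the normalization, and the $4\times4$ determinant is exhibited), the metaplectic theorem from \cite{Go} provides the unitary $\mathfrak U^\epsilon:L^2(\X)\xrightarrow{\sim} L^2(\widetilde\X)$ with \eqref{F-Egor}. Applying \eqref{F-Egor} to $\Phi = \widetilde{\mathcal K}_{I,\delta}\in\mathscr S^\prime(\Xi)_{2\times2}$ gives $\Op^\epsilon(\mathcal k_{I,\delta}) = \Op(\widetilde{\mathcal K}_{I,\delta}\circ S^\epsilon) = [\mathfrak U^\epsilon]^{-1}\,\widetilde{\Op}(\widetilde{\mathcal K}_{I,\delta})\,\mathfrak U^\epsilon$.

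Finally I would unpack $\widetilde{\Op}(\widetilde{\mathcal K}_{I,\delta})$ in the split coordinates $\Xi = \Xi_1\times\Xi_2$. Because $\widetilde{\mathcal K}_{I,\delta}(\widetilde X_1,\widetilde X_2) = \mathcal k_{I,\delta}(\widetilde x_1,\h\widetilde\xi_1)$ is constant in $\widetilde X_2$, its Weyl quantization on $L^2(\widetilde\X) = L^2(\widetilde\X_1)\otimes L^2(\widetilde\X_2)$ acts as $\widetilde{\Op}_1(\widetilde x_1\mapsto\mathcal k_{I,\delta}(\widetilde x_1,\h\widetilde\xi_1))\otimes\bb1_{L^2(\widetilde\X_2)}$; this is the elementary fact that quantizing a symbol of the form $f(\widetilde x_1,\widetilde\xi_1)$ on a product phase space gives $\Op_1(f)\otimes\bb1$, which one sees directly from the kernel formula since the $\widetilde\xi_2$-integral produces a delta in $\widetilde x_2 - \widetilde y_2$. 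Renaming $\widetilde x_1 = t$, $\widetilde\xi_1 = \tau$ and using \eqref{DF-kIh}, $\mathcal k_{I,\delta,\h}(t,\tau) = \mathcal k_{I,\delta}(t,\h\tau)$, the first factor is exactly $\Op_1(\mathcal k_{I,\delta,\h})$ with $\Op_1$ as in \eqref{DF-Op1}. Combining the two displays yields
$$
\Op^\epsilon(\mathcal k_{I,\delta}) = [\mathfrak U^\epsilon]^{-1}\big(\Op_1(\mathcal k_{I,\delta,\h})\otimes\bb1_{L^2(\mathbb R)}\big)\mathfrak U^\epsilon,
$$
which is the claim. The only genuine subtlety — and thus the step I would be most careful about — is the bookkeeping in identifying the symbol $\mathcal k_{I,\delta}(\xi-\epsilon B^\circ(\mathfrak J x)/2)$ with $\widetilde{\mathcal K}_{I,\delta}\circ S^\epsilon$ under the extended symplectic map, i.e. making sure the rank-$2$ map $\widetilde Z_1$ is extended to $S^\epsilon$ in a way that is simultaneously symplectic, has determinant $1$, and does not disturb the first pair of variables; all of this is already laid out in the excerpt, so the proof is essentially a matter of citing \eqref{F-5.4}, \eqref{F-Egor}, the determinant computation, and the tensor-factorization remark in the correct order.
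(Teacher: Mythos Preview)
Your proposal is correct and follows essentially the same approach as the paper: indeed, the paper does not give a separate proof after the proposition but rather presents the entire construction leading up to it (the kernel formula \eqref{F-5.4}, the symplectic extension $S^\epsilon$, the metaplectic intertwining \eqref{F-Egor}, and the tensor factorization), and then ``state[s] the conclusion of the above construction'' as the proposition. You have simply reorganized that same material into a linear argument, which is exactly what is needed.
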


\begin{corollary}\label{C-equi-scl}
The self-adjoint operator $\Op^\epsilon\big(\mathcal{k}_{I,\delta}\big)$ acting in $L^2(\X)$ and $\Op_1(\mathcal{k}_{I,\delta,\h}\big)$ acting in $L^2(\mathbb{R})$ have equal spectra.
\end{corollary}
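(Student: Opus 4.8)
The plan is to read off Corollary \ref{C-equi-scl} directly from the preceding Proposition. Since $\mathfrak{U}^\epsilon$ is a unitary operator, the identity
$$
\Op^\epsilon\big(\mathcal{k}_{I,\delta}\big)=\big[\mathfrak{U}^\epsilon\big]^{-1}\left(\Op_1(\mathcal{k}_{I,\delta,\h})\otimes\bb1_{L^2(\mathbb{R})}\right)\mathfrak{U}^\epsilon
$$
exhibits $\Op^\epsilon(\mathcal{k}_{I,\delta})$ as unitarily equivalent to $\Op_1(\mathcal{k}_{I,\delta,\h})\otimes\bb1_{L^2(\mathbb{R})}$, and therefore the two operators have the same spectrum. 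Here one carries the harmless $\mathbb{C}^2$-factor along everywhere and uses the splitting $\widetilde{\X}=\widetilde{\X}_1\times\widetilde{\X}_2$ to identify $L^2(\X;\mathbb{C}^2)$ with $L^2(\mathbb{R};\mathbb{C}^2)\otimes L^2(\mathbb{R})$, in such a way that $\Op_1(\mathcal{k}_{I,\delta,\h})$ acts on the first tensor factor and the identity on the second.

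It then suffices to invoke the elementary fact that $\sigma(A\otimes\bb1_{\mathcal{K}})=\sigma(A)$ for a bounded self-adjoint operator $A$ on a Hilbert space $\mathcal{H}$ and the identity on any nonzero Hilbert space $\mathcal{K}$. For the inclusion $\sigma(A\otimes\bb1_{\mathcal{K}})\subseteq\sigma(A)$ I would note that $(A\otimes\bb1_{\mathcal{K}})-z=(A-z)\otimes\bb1_{\mathcal{K}}$ has the bounded two-sided inverse $(A-z)^{-1}\otimes\bb1_{\mathcal{K}}$ whenever $z\notin\sigma(A)$. For the reverse inclusion, given $z\in\sigma(A)$, self-adjointness of $A$ provides a normalized Weyl sequence $(\varphi_n)\subset\mathcal{H}$ with $\|(A-z)\varphi_n\|\to0$; fixing a unit vector $\chi\in\mathcal{K}$, the vectors $\varphi_n\otimes\chi$ form a Weyl sequence for $A\otimes\bb1_{\mathcal{K}}$ at $z$, so $z\in\sigma(A\otimes\bb1_{\mathcal{K}})$. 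Applying this with $A=\Op_1(\mathcal{k}_{I,\delta,\h})$, which is bounded and self-adjoint because $\mathcal{k}_{I,\delta,\h}$ takes values in the Hermitian $2\times2$ matrices and belongs to $S^0_0$, and with $\mathcal{K}=L^2(\mathbb{R})$, closes the argument.

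There is essentially no obstacle here: the whole content sits in the preceding Proposition, whose proof rests on the metaplectic representation of the explicit symplectic change of variables $S^\epsilon$. The only point requiring a little care when writing the details is the bookkeeping of the tensor/direct-sum factors, namely checking that the metaplectic unitary $\mathfrak{U}^\epsilon$ intertwines the two Weyl quantizations exactly as in \eqref{F-Egor} once the $\mathbb{C}^2$-valued nature of the symbols is taken into account. Alternatively, one can bypass the Weyl-sequence step by regarding $\Op_1(\mathcal{k}_{I,\delta,\h})\otimes\bb1_{L^2(\mathbb{R})}$ as a constant direct integral $\int^\oplus_{\mathbb{R}}\Op_1(\mathcal{k}_{I,\delta,\h})\,dt$, whose spectrum is known to coincide with $\sigma\big(\Op_1(\mathcal{k}_{I,\delta,\h})\big)$.
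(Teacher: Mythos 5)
Your proposal is correct and follows essentially the same route as the paper: the corollary is read off from the preceding Proposition's unitary conjugation identity together with the elementary fact that $\sigma(A\otimes\bb1_{\mathcal{K}})=\sigma(A)$. The paper leaves this last step implicit, and your Weyl-sequence/resolvent argument (or the constant direct-integral alternative) is a perfectly standard way to supply it.
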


In fact we shall prefer to work with a \enquote{symmetric parametrization} for $\mathbb{R}^2$ and consider the  unitary dilation operators 
$$
L^2(\mathbb{R})\ni f \mapsto \mathfrak{d}_r f\in L^2(\mathbb{R})  \mbox{  with } \big(\mathfrak{d}_r f\big) (t) =r^\frac{1}{2} f(rt),\quad\mbox{for } r>0.
$$
Then 
\beq\label{F-sym-repr}
\mathfrak{d}_{\h^{1/2}}\Op_1\big(\mathcal{k}_{I,\delta,h}\big)\mathfrak{d}_{\h^{1/2}}^{-1}=\Op_1\big(\widetilde{\mathcal{k}}_{I,\delta,h}\big),\quad\mbox{where }\widetilde{\mathcal{k}}_{I,\delta,h}(t,\tau):=\mathcal{k}_{I,\delta}(\h^{1/2}t,\h^{1/2}\tau).
\eeq

\begin{corollary}\label{C-isosp-scl}
	The self-adjoint operator $\Op^\epsilon\big(\mathcal{k}_{I,\delta}\big)$ acting in $L^2(\X)$ and $\Op_1(\widetilde{\mathcal{k}}_{I,\delta,\h}\big)$ acting in $L^2(\mathbb{R})$ have equal spectra.
\end{corollary}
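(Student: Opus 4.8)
The plan is to chain together the two unitary equivalences established just before this corollary with a final dilation argument. The statement is that $\Op^\epsilon(\mathcal{k}_{I,\delta})$ on $L^2(\X)$ and $\Op_1(\widetilde{\mathcal{k}}_{I,\delta,\h})$ on $L^2(\mathbb{R})$ are isospectral, where $\widetilde{\mathcal{k}}_{I,\delta,\h}(t,\tau)=\mathcal{k}_{I,\delta}(\h^{1/2}t,\h^{1/2}\tau)$. First I would invoke Corollary \ref{C-equi-scl}, which already gives $\sigma\big(\Op^\epsilon(\mathcal{k}_{I,\delta})\big)=\sigma\big(\Op_1(\mathcal{k}_{I,\delta,\h})\big)$, so it remains only to show $\Op_1(\mathcal{k}_{I,\delta,\h})$ and $\Op_1(\widetilde{\mathcal{k}}_{I,\delta,\h})$ have equal spectra. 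For this I would observe that both are bounded self-adjoint operators on $L^2(\mathbb{R})$ (the symbols lie in $S^0_0$ along the relevant directions and are Hermitian-matrix valued, so Calder\'on--Vaillancourt applies) and that unitary equivalence implies equal spectra, so it suffices to exhibit a unitary intertwining them.

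The unitary is the dilation $\mathfrak{d}_{\h^{1/2}}$ defined in the excerpt, $(\mathfrak{d}_r f)(t)=r^{1/2}f(rt)$, which is manifestly unitary on $L^2(\mathbb{R})$ for $r>0$ with inverse $\mathfrak{d}_{r}^{-1}=\mathfrak{d}_{1/r}$. The key computational step is the conjugation identity \eqref{F-sym-repr}, namely $\mathfrak{d}_{\h^{1/2}}\Op_1(\mathcal{k}_{I,\delta,\h})\mathfrak{d}_{\h^{1/2}}^{-1}=\Op_1(\widetilde{\mathcal{k}}_{I,\delta,\h})$. I would verify this directly from the Weyl quantization formula \eqref{DF-Op1}: applying $\mathfrak{d}_{\h^{1/2}}$ on the left and $\mathfrak{d}_{\h^{1/2}}^{-1}$ on the right, then changing variables $t\mapsto \h^{1/2}t$, $s\mapsto \h^{1/2}s$ in the integral and rescaling the dual variable $\tau\mapsto \h^{-1/2}\tau$, the Jacobian factors combine with the $\h^{1/2}$-factors coming from the dilations so that the oscillatory kernel $e^{i\tau(t-s)}$ is preserved and the symbol evaluated at $\big((t+s)/2,\tau\big)$ is replaced by the symbol evaluated at $\big(\h^{1/2}(t+s)/2,\h^{1/2}\tau\big)=\mathcal{k}_{I,\delta}\big(\h^{1/2}(t+s)/2,\h\cdot\h^{-1/2}\tau\big)$, which is exactly $\widetilde{\mathcal{k}}_{I,\delta,\h}\big((t+s)/2,\tau\big)$ since $\mathcal{k}_{I,\delta,\h}(\cdot,\cdot)=\mathcal{k}_{I,\delta}(\cdot,\h\,\cdot)$. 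This is the standard covariance of the Weyl calculus under symplectic dilations $(t,\tau)\mapsto(rt,r^{-1}\tau)$, so there is no real obstacle — it is a routine change of variables.

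Putting the pieces together: $\sigma\big(\Op^\epsilon(\mathcal{k}_{I,\delta})\big)=\sigma\big(\Op_1(\mathcal{k}_{I,\delta,\h})\big)=\sigma\big(\Op_1(\widetilde{\mathcal{k}}_{I,\delta,\h})\big)$, the first equality from Corollary \ref{C-equi-scl} and the second from the unitary conjugation by $\mathfrak{d}_{\h^{1/2}}$. The only points requiring a word of care are that $\h=\epsilon B^\circ>0$ (so the dilation parameter $\h^{1/2}$ is well-defined and positive, which holds by the standing hypothesis $\epsilon>0$ and $B^\circ>0$), and that the matrix-valued nature of the symbol causes no complication since the dilation acts as $\mathfrak{d}_{\h^{1/2}}\otimes\bb1_{\mathbb{C}^2}$ and commutes with the $\mathbb{C}^2$-structure entrywise. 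I expect the ``main obstacle'', such as it is, to be merely bookkeeping the powers of $\h$ in the change of variables; there is no conceptual difficulty, and indeed the identity \eqref{F-sym-repr} is already asserted in the excerpt, so the corollary is essentially immediate.
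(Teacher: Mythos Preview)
Your proposal is correct and follows exactly the same approach as the paper: the corollary is stated immediately after the dilation identity \eqref{F-sym-repr}, and the paper offers no separate proof beyond the implicit chaining of Corollary~\ref{C-equi-scl} with that unitary conjugation. Your verification of the change of variables under $\mathfrak{d}_{\h^{1/2}}$ simply makes explicit what the paper leaves to the reader.
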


Thus, from now on we consider $\mathcal{k}_{I,\delta}$ as symbol on the symplectic linear space $\mathbb{R}^2=\mathbb{R}\times\mathbb{R}$ with the symplectic structure $\sigma_1\big((t,\tau),(t',\tau')\big):=\tau t'-\tau' t$. We recall that for any $\xi\in\mathbb{R}^2$ we have that $\mathcal{k}_{I,\delta}(\xi)\in\mathcal{M}_{2\times2}(\mathbb{C})$ and from Proposition \ref{P-kI} we see that its eigenvalues $\kappa_{\pm,\delta}(\theta)$ satisfy:
\begin{itemize}
	\item $\kappa_{-,\delta}(\xi)<-L<0<L<\kappa_{+,\delta}(\xi)$ for any $\xi\in\mathbb{R}^2\setminus\left\{\underset{\gamma^*\in\Gamma_*}{\bigcup} B_R(\theta_0+\gamma^*)\right\}$, for some $L>0$.
	%that does not depend on $\delta>0$.
	\item $\kappa_{-,\delta}(\xi)=\kappa_{+,\delta}(\xi)$ for $\xi=\theta_0+\gamma^*$ for any $\gamma^*\in\Gamma_*$.
\end{itemize}

\subsection{Decoupling the lattice of crossing-points.}

The aim of this section is the spectral analysis of the spectrum of $\Op_1\big(\mathcal{k}_{I,\delta,\h}\big)$ in a neighborhood of $0\in\mathbb{R}$. Thus, the interesting region is the neighborhood of the points $\big\{\theta_0+\gamma^*\big\}_{\gamma^*\in\Gamma_*}\subset\mathbb{R}^2$.  We shall continue our analysis following \S2 in \cite{HS1}, thus we shall locally perturb our symbol $\mathcal{k}_{I,\delta}$ in order to generate a $\Gamma_*$-indexed family of symbols  $\big\{\mathcal{k}_{I,\delta}^{\gamma_*}\big\}_{\gamma^*\in\Gamma_*}$ that have \textit{only one crossing-point} in $\theta_0+\gamma^*$ for some $\gamma_*\in\Gamma_*$, being uniformly elliptic outside any neighborhood of this fixed point $\theta_0+\gamma^*\in\mathbb{R}^2$. 

 In order to obtain this family of \enquote{one crossing-point} symbols we apply the procedure developed in Subsection  \ref{SSS-eigval-sep} and locally introduce a gap near all the the crossing points but the one in $\theta_0+\gamma^*$ in order to obtain:
\begin{proposition}
Let $c>0$ be the lower bound in \eqref{aprilie2} and $g\in  C^\infty_0(\mathbb{R}^2)$ defined at the beginning of Subsection \ref{SSS-eigval-sep}. There exist some $d_0>0$, $d_1>0$ both small enough and $C>0$ only depending on the cut-off function $g$ and the form of $\mathcal{k}_{I,\delta}$ close to the crossing point $\theta_0$ such that the symbol 
\beq\label{DF-kcircIdelta}
\mathcal{k}^{\gamma^*}_{I,\delta}(T)\ :=\ \mathcal{k}_{I,\delta}(T)\,+\,\underset{\alpha^*\in\Gamma_*\setminus\{\gamma^*\}}{\sum}(d_0/8) g ((T-\alpha^*-\theta_0)c/d_0)\sigma_{v^{(3)}}
\eeq
has a spectral gap $(-d_0/C,d_0/C)$ for any $T\in\mathbb{R}^2$ outside a ball of radius of order $d_1$ centered at $\theta_0+\gamma^*$.
\end{proposition}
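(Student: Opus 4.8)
The plan is to carry out, simultaneously at every point of the lattice $\theta_0+\Gamma_*$ except $\theta_0+\gamma^*$, the gap‑opening construction of Subsection~\ref{SSS-eigval-sep}, and then to patch the resulting local estimates together using the global information already available on $\mathcal{k}_{I,\delta}$. Two structural facts would be used repeatedly: first, $\mathcal{k}_{I,\delta}$ is $\Gamma_*$-periodic, so by Proposition~\ref{P-kI}(1) one has on every ball $B_R(\theta_0+\alpha^*)$, $\alpha^*\in\Gamma_*$, the representation $\mathcal{k}_{I,\delta}(T)=F_0(T-\alpha^*)\bb1+\mathbf{F}(T-\alpha^*)\cdot\sigma$ with $F_0,\mathbf{F}$ as in \eqref{F-HIcirc}; second, by Proposition~\ref{P-kI}(2) there is $L>0$ with $\sigma\big(\mathcal{k}_{I,\delta}(T)\big)\cap(-L,L)=\emptyset$ whenever $T\notin\bigcup_{\alpha^*\in\Gamma_*}B_R(\theta_0+\alpha^*)$. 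I would first shrink $R$ so that $B_R(\theta_0)\subset V_0$, the neighbourhood of Proposition~\ref{P-cons-Hyp}(5) — harmless, since in \eqref{mai2} only $\chi\equiv1$ near $\theta_0$ is needed and the extension of Subsection~\ref{SS-Gl-Sections.} is insensitive to the size of $R$ — and note that combining \eqref{aprilie2}, Proposition~\ref{P-cons-Hyp}(5) and Proposition~\ref{P-kI}(2) still yields a uniform (possibly smaller) gap of $\mathcal{k}_{I,\delta}(T)$ for all $T$ outside $\bigcup_{\alpha^*}B_R(\theta_0+\alpha^*)$. Writing $g_{d_0}(\theta):=g\big((\theta-\theta_0)c/d_0\big)$, the $\alpha^*$-summand in \eqref{DF-kcircIdelta} equals $(d_0/8)g_{d_0}(T-\alpha^*)\,v^{(3)}\!\cdot\sigma$ and is supported in $B_{d_0/c}(\theta_0+\alpha^*)$; I would fix $d_0\in(0,\delta_0]$ and set $d_1:=d_0/c$, both so small that these supports are pairwise disjoint, each lies in the corresponding $B_R(\theta_0+\alpha^*)$, and the $\mathcal{O}(d_0^3)$ remainders appearing below are negligible.

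The heart of the argument is the behaviour near an ``active'' crossing $\theta_0+\alpha^*$, $\alpha^*\neq\gamma^*$: for $T$ with $g_{d_0}(T-\alpha^*)>0$ (then, by disjointness, $T$ is $d_0/c$-close to no other lattice point) the symbol $\mathcal{k}^{\gamma^*}_{I,\delta}(T)$ is a $\Gamma_*$-translate of the perturbed local Hamiltonian $M_{I,d_0}$ of \eqref{hc0}, namely $F_0(T-\alpha^*)\bb1+\mathbf{F}_{d_0}(T-\alpha^*)\cdot\sigma$ with $\mathbf{F}_{d_0}:=\mathbf{F}+(d_0/8)g_{d_0}v^{(3)}$, whose eigenvalues are $F_0\pm|\mathbf{F}_{d_0}|$, so the symmetric interval $(-r,r)$ misses the spectrum precisely when $r\le|\mathbf{F}_{d_0}|-|F_0|$. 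Using the orthogonality \eqref{apr3} of $v^{(3)}$ to ${\rm Span}(v^{(1)},v^{(2)})$ and the Taylor expansion \eqref{F-v12}, I would establish the near‑Pythagorean identity $|\mathbf{F}_{d_0}|^2=|\mathbf{F}|^2+(d_0g_{d_0}/8)^2+\mathcal{O}\big(d_0\min(|\mathbf{F}|,d_0)^2\big)$, and then — feeding in the strict bound $|F_0|\le\tilde a|\mathbf{F}|$, $\tilde a<1$, of Proposition~\ref{P-cons-Hyp}(5) — deduce, after absorbing the $\mathcal{O}(d_0^3)$ remainder, that $|\mathbf{F}_{d_0}|^2-F_0^2\gtrsim|\mathbf{F}|^2+d_0^2g_{d_0}^2$. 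Since the eigenvalue gap is $2|\mathbf{F}_{d_0}|\lesssim\sqrt{|\mathbf{F}|^2+d_0^2g_{d_0}^2}$, this forces $|\mathbf{F}_{d_0}|-|F_0|\gtrsim\sqrt{|\mathbf{F}|^2+d_0^2g_{d_0}^2}\gtrsim d_0$ — the last inequality because wherever $g_{d_0}<1$ one has $|T-\theta_0-\alpha^*|>d_0/(2c)$ and hence $|\mathbf{F}|\ge d_0/4$ by \eqref{aprilie2}. This is exactly estimate \eqref{F-est-gap}, now refined so as to produce a gap symmetric about $0$.

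It remains to treat $T\notin B_{d_1}(\theta_0+\gamma^*)$ with $g_{d_0}(T-\alpha^*)=0$ for all $\alpha^*\neq\gamma^*$, where $\mathcal{k}^{\gamma^*}_{I,\delta}(T)=\mathcal{k}_{I,\delta}(T)$: if $T\notin\bigcup_{\alpha^*}B_R(\theta_0+\alpha^*)$ the gap $(-L,L)$ of Proposition~\ref{P-kI}(2) contains $(-d_0/C,d_0/C)$ once $d_0\le CL$; if $T\in B_R(\theta_0+\alpha^*)$ for some $\alpha^*\neq\gamma^*$ then $g_{d_0}(T-\alpha^*)=0$ gives $|T-\theta_0-\alpha^*|>d_0/(2c)$, so $|\mathbf{F}|>d_0/4$ by \eqref{aprilie2} and $|F_0|\le\tilde a|\mathbf{F}|$ by Proposition~\ref{P-cons-Hyp}(5), whence the eigenvalues $F_0\pm|\mathbf{F}|$ avoid $\big(-(1-\tilde a)d_0/4,(1-\tilde a)d_0/4\big)$; and if $T\in B_R(\theta_0+\gamma^*)$ — where (for $d_0,R$ small) no bump can be active, so $|T-\theta_0-\gamma^*|\ge d_1=d_0/c$ — the same computation with $\mathbf{F}(T-\gamma^*)$, now $|\mathbf{F}|\ge cd_1/2=d_0/2$, again yields such a gap. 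Taking $C$ to be the largest of the finitely many constants so produced — it depends only on $c$, on $C$ and $\tilde a$ of \eqref{aprilie2} and Proposition~\ref{P-cons-Hyp}(5), on $L$, and on the $\mathcal{O}(d_0^3)$ constant, which is controlled by the cut‑off $g$ and the $2$-jet of $\mathcal{k}_{I,\delta}$ at $\theta_0$ — and $d_0$, $d_1=d_0/c$ small enough, the above regions cover $\mathbb{R}^2\setminus B_{d_1}(\theta_0+\gamma^*)$ and the proposition follows. The one delicate point is precisely the symmetry about $0$ near the active crossings: the bare bound $|\mathbf{F}_{d_0}|\ge d_0/C$ of \eqref{F-est-gap} controls only the distance between the two eigenvalues, and it is the strict inequality $|F_0|\le\tilde a|\mathbf{F}|$, $\tilde a<1$ — which, after shrinking $R$ and using periodicity, holds on every relevant small ball — that keeps $|F_0|$ a fixed fraction below $|\mathbf{F}_{d_0}|$ and thereby puts $0$ in the middle of the gap; the rest is bookkeeping about disjoint supports and absorption of lower‑order remainders.
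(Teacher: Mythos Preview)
Your argument is correct and follows precisely the route the paper indicates: the paper does not give an explicit proof here, merely stating that one ``applies the procedure developed in Subsection~\ref{SSS-eigval-sep}'' at every lattice point but $\theta_0+\gamma^*$, and you have carried this out in full. The one point you make explicit that the paper leaves tacit is the role of Proposition~\ref{P-cons-Hyp}(5): the bare estimate \eqref{F-est-gap} only separates the two eigenvalues, and it is the strict inequality $|F_0|\le\tilde a\,|\mathbf{F}|$, $\tilde a<1$, that forces the gap to contain a symmetric interval about $0$ of width $\sim d_0$ --- exactly the refinement the proposition requires.
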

We shall denote by $\mathcal{k}^\circ_{I,\delta}$ the \enquote{one crossing-point} symbol with $\gamma^*=0\in\Gamma_*$.

\subsection{Spectral analysis of the \enquote{one crossing point} extended symbol} \label{SS-1-p-scls}

The defining formula \eqref{DF-kcircIdelta} shows that for $T\in B_R(\theta_0)$ the 'one crossing point' modified symbol $\mathcal{k}^\circ_{I,\delta}(T)$ coincides with the '$\Gamma_*$-periodic' one $\mathcal{k}_{I,\delta}(T)$ and this one, when projected on the torus $\mathbb{T}_*$ does not depend on $\delta>0$ and coincides with $M_I(\theta)$ in \eqref{F-HIcirc}. In fact, the two important properties of $\mathcal{k}^\circ_{I,\delta}(T)$ that will be used in the following analysis are:
\begin{itemize}
	\item its restriction to $B_R(\theta_0)$
	\item the minimal width of its spectral gap outside $B_R(\theta_0)$
\end{itemize}
and both these elements do not depend on $\delta>0$. Having this in mind and in order to simplify the notations in the coming arguments, in this subsection we shall use the short-hand notations 
\begin{align}\label{F-not-simpl-k}
\forall T=(t,\tau)\in\mathbb{R}^2:\quad& 
\mathcal{k}(T)\equiv\mathcal{k}^\circ_{I,\delta}(T+\theta_0),\\ &\mathcal{k}_\h(T)\equiv\mathcal{k}^\circ_{I,\delta,\h}(T+\theta_0)=\mathcal{k}^\circ_{I,\delta}(t+\theta_{0,1},\h(\tau+\theta_{0,2}))\nonumber \\ \label{F-not-simpl-tildek} &\widetilde{\mathcal{k}}_\h(T)\equiv\widetilde{\mathcal{k}}^\circ_{I,\delta,\h}(T+\theta_0)=\mathcal{k}^\circ_{I,\delta}(\h^{1/2}(T+\theta_0)).
\end{align}
As stated from the beginning of our paper, our spectral analysis is based on the approximation of the spectrum in a small neighborhood of 0 contained in $I\subset\mathbb{R}$ by the spectrum in the given neighborhood of 0 of the linearisation of the Peierls-Onsager effective Hamiltonian $\Op^\epsilon(\mathcal{k}_{I,\delta})$, and this has the same spectrum with $\Op_1(\widetilde{\mathcal{k}}_\h)$ (with $\h=\epsilon B^\circ$), as seen in Corollary \ref{C-isosp-scl}. Thus let us consider the formal Taylor expansion of the matrix valued symbol $\widetilde{\mathcal{k}}_\h$ near $T=0$ and recast this expansion in powers of $\h^{1/2}$:
\begin{align}\label{F-T-dev-tildek}
\widetilde{\mathcal{k}}_{\h}(t,\tau)\sim &\h^{1/2}\mathfrak{l} (t,\tau) + \sum_{\ell \in \mathbb N, \ell > 1}\h^{\ell/2}\, \hat{\mathcal{k}}^\ell (t,\tau),\\\nonumber 
&\mathfrak{l}(t,\tau)):=t\Big[\left.\partial_t\Big(F_0\big(\theta_0+(t,\tau)\big)\bb1_2\,+\,\underset{\ell =1,2,3}{\sum}\mathbf{F}_\ell \big(\theta_0+(t,\tau)\big)\, \sigma_\ell\Big)\Big]\right|_{(t,\tau)=(0,0)}\\\nonumber 
&\hspace*{36pt}+
\tau\Big[\left.\partial_\tau\Big(F_0\big(\theta_0+(t,\tau)\big)\bb1_2\,+\,\underset{\ell =1,2,3}{\sum}\mathbf{F}_\ell \big(\theta_0+(t,\tau)\big)\, \sigma_\ell\Big)\Big]\right|_{(t,\tau)=(0,0)} \\\nonumber 
&\hspace*{32pt}=\big(f_1\bb1_2+\sigma_{v^{(1)}}\big)t\,+\,\big(f_2\bb1_2+\sigma_{v^{(2)}}\big)\tau
\end{align}
where  $\hat{\mathcal{k}}^\ell (t,\tau)$ is a homogeneous polynomial of degree $\ell$ in $(t,\tau)\in\mathbb{R}^2$ and we have used the notations in \eqref{F-desc-F-v}.
Writing the formula for the remainder in the $k$-th order Taylor formula we have the following control of this formal series as an asymptotic expansion.
\begin{definition}\label{D-Tdev-tildek}
Let us define: 
\begin{align*}
r_1(t,\tau,\h) &:=\widetilde{\mathcal{k}}_{\h}(t,\tau)- \h^{1/2}\mathfrak{l}(t,\tau)), \\
r_k(t,\tau,\h) &:=\widetilde{\mathcal{k}}_{\h}(t,\tau)- \h^{1/2}\mathfrak{l}(t,\tau)) - \sum_{k-1\geq \ell > 1} \hat{\mathcal{k}}^\ell (t,\tau)\, \h^{\ell/2},\qquad\forall k\geq2.
\end{align*}
\end{definition}
\begin{proposition}\label{P-Tdev-tildek}
For $k\in\mathbb{N}\setminus\{0\}$ we have that $r_k(t,\tau,\h)\in C^\infty_{\text{\sf pol}}\big(\mathbb{R}^2\times\mathbb{R}_+;\mathcal{M}_{2\times2}(\mathbb{C})\big)$ and we have the estimations:
\begin{equation}\label{estsurr}
||\big(\partial_t^\ell \partial_\tau^m r_k\big)(t,\tau,\h) || \leq C_{\ell,m}\, \h^{(k+1)/2} (t^2 + \tau^2)^{[(k+1-\ell - m)/2]_+} \,.
\end{equation}
\end{proposition}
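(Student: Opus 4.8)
\textbf{Proof plan for Proposition \ref{P-Tdev-tildek}.}

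The plan is to reduce everything to the quantitative remainder estimate in Taylor's formula, applied to the smooth matrix-valued function $\mathcal{k}^\circ_{I,\delta}$ in a fixed neighborhood of $\theta_0$, and then carefully track how the rescaling $T\mapsto \h^{1/2}(T+\theta_0)$ converts Taylor-order vanishing at the origin into powers of $\h^{1/2}$. First I would record that $\widetilde{\mathcal{k}}_\h(t,\tau)=\mathcal{k}^\circ_{I,\delta}\big(\h^{1/2}(t+\theta_{0,1}),\h^{1/2}(\tau+\theta_{0,2})\big)$ by \eqref{F-sym-repr} and \eqref{F-not-simpl-tildek}; since $\mathcal{k}^\circ_{I,\delta}$ coincides near $\theta_0$ with $\mathcal{k}_{I,\delta}=M_I$ (Proposition \ref{P-kI}(1)) and is globally a symbol of class $S^0_0(\X\times\X^*)_{2\times2}$ which is smooth and bounded with all derivatives bounded, the composition with the affine map $T\mapsto \h^{1/2}(T+\theta_0)$ produces, for each fixed $\h>0$, a function in $C^\infty_{\text{\sf pol}}(\mathbb{R}^2)$ — in fact here it is even bounded-with-bounded-derivatives in $(t,\tau)$, the polynomial growth only entering through the way the constants may a priori depend on $(t,\tau)$ before one uses boundedness of the higher derivatives of $\mathcal{k}^\circ_{I,\delta}$. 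The point worth checking is that $r_k$ depends smoothly \emph{jointly} on $(t,\tau,\h)$ including at $\h=0$: this is clear because the subtracted terms $\h^{1/2}\mathfrak{l}$ and $\h^{\ell/2}\hat{\mathcal{k}}^\ell$ are exactly the Taylor coefficients of $\mathcal{k}^\circ_{I,\delta}$ at $\theta_0$ evaluated on the rescaled variable, so $r_k$ is, up to the $\h$-power bookkeeping, nothing but the integral-form Taylor remainder of $\mathcal{k}^\circ_{I,\delta}$ at $\theta_0$.

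The core estimate \eqref{estsurr} I would obtain as follows. Set $G(x,y):=\mathcal{k}^\circ_{I,\delta}(\theta_0+(x,y))$, a $BC^\infty$ matrix-valued function on $\mathbb{R}^2$. Its $k$-th order Taylor remainder at the origin is
\[
\rho_k(x,y)=G(x,y)-\sum_{|\alpha|<k}\frac{1}{\alpha!}(\partial^\alpha G)(0)\,(x,y)^\alpha
= \sum_{|\alpha|=k}\frac{k}{\alpha!}(x,y)^\alpha\int_0^1 (1-s)^{k-1}(\partial^\alpha G)\big(s(x,y)\big)\,ds,
\]
and more generally, differentiating under the integral, $\partial_x^\ell\partial_y^m\rho_k$ is a finite sum of terms of the form $(x,y)^\beta$ (with $|\beta|\ge (k-\ell-m)_+$, counting how many of the $\ell+m$ derivatives hit the polynomial factor $(x,y)^\alpha$, $|\alpha|=k$) times a bounded integral of derivatives of $G$ of order $\le k+\ell+m$; since $G\in BC^\infty$ all those integrals are bounded by a constant $C_{\ell,m}$. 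Thus $\|(\partial_x^\ell\partial_y^m\rho_k)(x,y)\|\le C_{\ell,m}\,|(x,y)|^{(k-\ell-m)_+}$, with the homogeneity degree being exactly the number of surviving polynomial factors. Now substitute $(x,y)=\h^{1/2}(t,\tau)$: because $r_k(t,\tau,\h)=\rho_k\big(\h^{1/2}(t+\theta_{0,1}),\h^{1/2}(\tau+\theta_{0,2})\big)$ — wait, here one must be slightly careful: $r_k$ as defined in Definition \ref{D-Tdev-tildek} subtracts the Taylor terms of $G$ expressed in the \emph{unshifted} variable $\h^{1/2}T$, not $\h^{1/2}(T+\theta_0)$, so in fact $\mathfrak{l}$ and the $\hat{\mathcal{k}}^\ell$ are the homogeneous pieces of the Taylor expansion of $T\mapsto G(\h^{1/2}T)$ reorganized by powers of $\h^{1/2}$, which is legitimate since $\partial^\alpha$ of $G(\h^{1/2}T)$ at $T=0$ carries the factor $\h^{|\alpha|/2}$. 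Hence $r_k(t,\tau,\h)$ equals the $k$-th Taylor remainder of $T\mapsto G(\h^{1/2}T)$ at $T=0$, which by the chain rule is $\rho_k(\h^{1/2}t,\h^{1/2}\tau)$ up to the appropriate power of $\h$; applying the remainder bound above with $(x,y)=\h^{1/2}(t,\tau)$ and noting that each of the $\ell+m$ derivatives $\partial_t^\ell\partial_\tau^m$ brings out a further $\h^{\ell/2}$ resp. $\h^{m/2}$ from the chain rule, one collects exactly $\h^{(k+1)/2}$ in front (this is the bookkeeping that turns "$k$ terms subtracted, so remainder order $k+1$ in $\h^{1/2}$" into the stated exponent) times $(t^2+\tau^2)^{[(k+1-\ell-m)/2]_+}$. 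The only subtlety is matching the floor function $[(k+1-\ell-m)/2]_+$ in \eqref{estsurr} with the homogeneity $|(x,y)|^{(k-\ell-m)_+}$ coming out of the raw Taylor estimate: the extra $+1$ and the division by $2$ come from writing the bound in terms of $(t^2+\tau^2)$ rather than $|(t,\tau)|$ and from the half-integer grading in $\h^{1/2}$; I would verify this by splitting into the cases $\ell+m\le k$ and $\ell+m>k$ and, in the first case, reconciling $|(t,\tau)|^{k-\ell-m}$ with $(t^2+\tau^2)^{[(k+1-\ell-m)/2]_+}$ using $(t^2+\tau^2)^{1/2}\le \langle (t,\tau)\rangle$ together with the observation that one half-power of $\h$ can always be traded for a half-power of $\langle(t,\tau)\rangle$ to match parity.

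The main obstacle I anticipate is precisely this last reconciliation of exponents — getting the floor function and the "$+1$" exactly right, rather than merely up to a unit — because it requires keeping simultaneous track of three gradings: the Taylor order $k$, the number $\ell+m$ of phase-space derivatives, and the power of $\h^{1/2}$, with the rescaling $\h^{1/2}$ coupling the first and third. Everything else (smoothness and polynomial class of $r_k$, $BC^\infty$ bounds on $G$, the integral form of the remainder) is routine given that $\mathcal{k}^\circ_{I,\delta}\in S^0_0$ is smooth and bounded with all derivatives, which follows from Definition \ref{D-symbol-POeffH} and the construction \eqref{DF-kcircIdelta}. I would also remark, as in \cite{HS1,HS2}, that for $\ell+m\ge k+1$ the right-hand side of \eqref{estsurr} is simply $C_{\ell,m}\h^{(k+1)/2}$, which holds trivially from the global $BC^\infty$ bounds on $\mathcal{k}^\circ_{I,\delta}$ without any Taylor argument, so the genuine content is confined to the range $\ell+m\le k$.
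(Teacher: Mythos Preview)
Your approach is correct and is exactly the one the paper takes: the paper's entire proof is the single clause ``Writing the formula for the remainder in the $k$-th order Taylor formula,'' and your integral-remainder argument for $G(x,y)=\mathcal{k}^\circ_{I,\delta}(\theta_0+(x,y))\in BC^\infty$ followed by the rescaling $(x,y)=\h^{1/2}(t,\tau)$ is precisely that. Your self-identified obstacle is a non-issue: in this paper $[\,\cdot\,]_+$ denotes the positive part $\max(\cdot,0)$, not a floor, so the exponent $(k+1-\ell-m)_+/2$ that drops straight out of your Taylor bound \emph{is} the exponent in \eqref{estsurr}; and your mid-argument worry about the $\theta_0$ shift is resolved the way you guessed --- the intended meaning of $\widetilde{\mathcal{k}}_\h(t,\tau)$ is $G(\h^{1/2}t,\h^{1/2}\tau)$, i.e.\ translate the crossing to the origin first, then rescale, so that $r_k(t,\tau,\h)$ is literally the $(k{+}1)$-st Taylor remainder $\rho_{k+1}(\h^{1/2}t,\h^{1/2}\tau)$.
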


\subsubsection{The linearised 1-dimensional differential operator.}

We are interested in the spectral analysis of the differential operator in $L^2(\mathbb{R})$ associated to the linear symbol $\mathfrak{l}(t,\tau)$ and given explicitly by
\begin{align*}
\big(\Op_1\big(\mathfrak{l}\big)\varphi\big)(t)&=\frac{1}{2\pi}\left(\int_\mathbb{R}ds\int_{\mathbb{R}}d\tau\,e^{i<\tau,(t-s)>}\big(f_1\bb1_2+\sigma_{v^{(1)}}\big)\frac{t+s}{2}\varphi(s)\right .\\
&\quad \quad \left .\,+\,\int_\mathbb{R}ds\int_{\mathbb{R}}
d\tau\,e^{i<\tau,(t-s)>}\big(f_2\bb1_2+
\sigma_{v^{(2)}}\big)\tau\varphi(s)\right) \\
&=\big(f_1\bb1_2+\sigma_{v^{(1)}}\big)t\varphi(t)-\big(f_2\bb1_2+\sigma_{v^{(2)}}\big)\big(i\partial_t\varphi\big)(t).
\end{align*}
By construction $\mathfrak{l}(t,\tau)$ is a linear function $\mathbb{R}^2\ni(t,\tau)\rightarrow\mathcal{M}_{2\times2}(\mathbb{C})$ with hermitian values, so that its determinant defines a quadratic function $\mathbb{R}^2\ni(t,\tau)\rightarrow\mathbb{R}$. Thus there exists a matrix $\mathfrak{D}_0\in\mathcal{M}_{2\times2}(\mathbb{R})$ such that
$$
\det\big(\mathfrak{l}(t,\tau)\big)=\left\langle\big(t,\tau\big)\,,\,\mathfrak{D}_0\left(\begin{array}{c}
t\\ \tau
\end{array}\right)\right\rangle_{\mathbb{R}^2}.
$$
Moreover, Proposition \ref{P-kI} tells us that $\mathcal{k}((t,\tau))=M_I(\theta_0+(t,\tau))$
so that for $\sqrt{|t|^2+|\tau|^2}$ small enough
\begin{align*}
\det\big(M_I(\theta_0+(t,\tau))\big)&=\det\big(\mathfrak{l}(t,\tau)\big)\,+\,\mathcal{O}(|t|^2+|\tau|^2) \\
&=\lambda_-(\theta_0+(t,\tau))\lambda_+(\theta_0+(t,\tau)),
\end{align*}
and Hypothesis \ref{H-2} tells that its Hessian is negative definite near its minimum 0 at $(t,\tau)=(0,0)$. But it is a quadratic function of $(t,\tau)$ and thus it must be a negative definite one, i.e.
\beq\label{F-est-det-k}
\exists a_0>0,\quad\det\big(\mathfrak{l}(t,\tau)\big)=\left\langle\big(t,\tau\big)\,,\,\mathfrak{D}_0\left(\begin{array}{c}
	t\\ \tau
\end{array}\right)\right\rangle_{\mathbb{R}^2}\,\leq\,(-a_0)\big(|t|^2+|\tau|^2\big).
\eeq

We conclude that $\mathfrak{l}$ is a globally elliptic symbol (in the sense of \cite{He-84}). 
The theory of globally elliptic operators (as presented in \cite{He-84} mainly on the basis of results with D. Robert, see \cite{He-84} and references therein) allows us to get a series of important conclusions.
\begin{theorem}\label{T-HR} (see \cite{He-84})
	\begin{enumerate}
		\item There exists $\mathfrak{p}(t,\tau) \in C^\infty (\mathbb R^2;\mathbb C)$ such that for any non negative integers  $\ell, m$ there exists $C_{\ell m}$ such that
		$$ |\partial_t^\ell\partial_\tau^m\mathfrak{p}(t,\tau)| \leq C_{\ell m} (1+ t^2 +\tau^2)^{-(1+\ell + m)/2}  \mbox{ for } (t,\tau)\in \mathbb R^2\,.
		$$
		\begin{equation*}
		\mathfrak{p}(t,\tau) = \mathfrak{l}(t,\tau)^{-1} \mbox{ for } t^2 +\tau^2 \geq 1\,.
		\end{equation*}
\item $\Op_1(\mathfrak{l})$ is an essentially self-adjoint operator on the domain $\mathscr S (\mathbb R;\mathbb C^2)$ and its closure  $\mathfrak{L} =\overline{\Op_1(\mathfrak{l})}$ as an unbounded self-adjoint operator in $L^2(\mathbb R;\mathbb C^2)$ has a compact resolvent and hence a discrete sequence of eigenvalues with finite multiplicities.
		\item The domain of $\mathfrak{L}$ is $B^1(\mathbb R,\mathbb C^2) = \{ u \in L^2(\mathbb R;\mathbb C^2), x u  \in L^2, u' \in L^2\}.$
		\item All the eigenfunctions are in $\mathscr S(\mathbb R;\mathbb C^2)$. 
		\item The eigenvalues of $\mathfrak{L}$ have multiplicity one.
		\item For any $\lambda \notin \sigma(\mathfrak{L})$, $(\mathfrak{L}-\lambda)^{-1}$ is a pseudodifferential operator with a symbol $\mathfrak{r}(\lambda)\in C^\infty (\mathbb R^2;\mathbb C)$ such that for any non negative integers  $\ell, m$ there exists $C_{\ell m}$ such that
		$$ |\partial_t^\ell\partial_\tau^m\mathfrak{r}(t,\tau)| \leq C_{\ell m} (1+ t^2 +\tau^2)^{-(1+\ell + m)/2}  \mbox{ for } (t,\tau)\in \mathbb R^2\,
		$$ which depends holomorphically on $\lambda$ in each domain of holomorphy of the resolvent of $\mathfrak{L}$.
		\item  If  $\lambda$ is an eigenvalue of $\mathfrak{L}$, then $-\lambda$ is also an eigenvalue of $\mathfrak{L}$.
	\end{enumerate}
\end{theorem}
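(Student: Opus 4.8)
Apart from items (5) and (7), Theorem~\ref{T-HR} is the translation to our situation of the standard theory of globally elliptic (Shubin type) pseudodifferential operators on $\mathbb{R}^n$ in the matrix valued setting, and the plan is simply to invoke it, following \cite{He-84}. The starting point is the global ellipticity already recorded before the statement: since $\mathfrak{l}$ is a linear function with values in the Hermitian $2\times2$ matrices and, by \eqref{F-est-det-k}, $\det\mathfrak{l}(t,\tau)\le(-a_0)(t^2+\tau^2)$, the matrix $\mathfrak{l}(t,\tau)$ is invertible for $(t,\tau)\ne(0,0)$, and by $(-1)$-homogeneity and compactness of the unit circle one gets $\|\mathfrak{l}(t,\tau)^{-1}\|\le C(t^2+\tau^2)^{-1/2}$; thus $\mathfrak{l}$ is an elliptic symbol of order $1$ in the isotropic calculus with weight $(1+t^2+\tau^2)^{1/2}$. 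From here: a parametrix $\mathfrak{p}$ of order $-1$ is constructed by inverting the symbol outside a large ball (with a smooth cut-off near the origin) and iterating to push the remainder into $S^{-\infty}$, which gives (1); the corresponding a priori estimates give essential self-adjointness on $\mathscr{S}(\mathbb{R};\mathbb{C}^2)$ and show that the graph norm is equivalent to the order-$1$ isotropic Sobolev norm, which for order $1$ is the $B^1(\mathbb{R},\mathbb{C}^2)$ norm, giving (2) and (3); the compactness of the embedding $B^1\hookrightarrow L^2$ makes the resolvent compact, hence the spectrum discrete with finite multiplicities; elliptic regularity through the parametrix turns any $L^2$ eigenfunction into an element of $\mathscr{S}(\mathbb{R};\mathbb{C}^2)$, which is (4); and a parametrix for $\mathfrak{l}-\lambda$ together with the resolvent identity yields (6), the statement about the resolvent symbol and its holomorphic dependence on $\lambda$.

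\textbf{Spectral symmetry (7).} Let $\mathcal{P}$ be the unitary parity operator $(\mathcal{P}u)(t):=u(-t)$ on $L^2(\mathbb{R};\mathbb{C}^2)$. Changing variables $(s,\tau)\mapsto(-s,-\tau)$ in \eqref{DF-Op1} gives $\mathcal{P}\,\Op_1(f)\,\mathcal{P}=\Op_1(\check f)$ with $\check f(t,\tau):=f(-t,-\tau)$ for every symbol $f$. Since $\mathfrak{l}$ is linear, $\check{\mathfrak{l}}=-\mathfrak{l}$, whence $\mathcal{P}\mathfrak{L}\mathcal{P}=-\mathfrak{L}$, so $\sigma(\mathfrak{L})=-\sigma(\mathfrak{L})$ and $\lambda$ is an eigenvalue of $\mathfrak{L}$ if and only if $-\lambda$ is; this proves (7).

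\textbf{Simplicity of the eigenvalues (5).} Evaluating \eqref{F-est-det-k} at $(t,\tau)=(1,0)$ and $(0,1)$ yields $f_1^2<|v^{(1)}|^2$ and $f_2^2<|v^{(2)}|^2$ (recall \eqref{F-desc-F-v} and \eqref{hypv1v2}), so the Hermitian matrices $L_1:=f_1\bb1_2+\sigma_{v^{(1)}}$ and $L_2:=f_2\bb1_2+\sigma_{v^{(2)}}$ are invertible. Writing out $\Op_1(\mathfrak{l})$ as in the displayed formula preceding the statement, the eigenvalue equation $\Op_1(\mathfrak{l})u=\lambda u$ is then equivalent to the linear first order system $u'=A(t,\lambda)u$, $A(t,\lambda):=-i\,L_2^{-1}(tL_1-\lambda\bb1_2)$, whose solution space is two-dimensional. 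Put $N:=L_2^{-1}L_1$, so that $A(t,\lambda)\sim -itN$ as $t\to\pm\infty$; here $\mathrm{tr}\,N$ is real and $\det N=\det L_1/\det L_2>0$, and the global ellipticity of $\mathfrak{l}$ read on the line $\{\tau=1\}$, namely $\det\mathfrak{l}(t,1)=\det(L_2)\,(1+t\,\mathrm{tr}\,N+t^2\det N)<0$ for all $t\in\mathbb{R}$ with $\det(L_2)<0$, forces $(\mathrm{tr}\,N)^2<4\det N$; thus $N$ has a pair of non-real conjugate eigenvalues $\alpha\pm i\beta$ with $\beta\ne0$. A standard asymptotic-integration (Levinson type) analysis of $u'=A(t,\lambda)u$ then shows that near each of the two ends one fundamental solution grows like $e^{|\beta|t^2/2}$ while the other decays like $e^{-|\beta|t^2/2}$, so the solutions that are square integrable near $+\infty$ (respectively near $-\infty$) form a one-dimensional subspace; an eigenfunction, being square integrable at both ends, lies in the intersection of these two lines, so each eigenspace has dimension at most one, and together with (4) this gives (5).

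\textbf{Main difficulty.} The substance of the argument is concentrated in item (5): the remaining assertions are a direct appeal to the globally elliptic calculus of \cite{He-84}, and (7) is an immediate parity symmetry. The delicate point is the ODE analysis at infinity — first using the ellipticity inequality \eqref{F-est-det-k} to force the frozen matrix $N=L_2^{-1}L_1$ to have non-real spectrum, and then controlling the two fundamental solutions of the non-autonomous system $u'=A(t,\lambda)u$ by a turning-point-free asymptotic integration, so as to isolate exactly one square-integrable mode at each end.
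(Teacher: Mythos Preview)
Your proof is correct. Items (1)--(4), (6), and (7) match the paper exactly: the paper also refers the first group to the globally elliptic calculus of \cite{He-84}, and proves (7) via the parity $\mathfrak{I}u(x)=u(-x)$ together with the antisymmetry $\mathfrak{l}(-t,-\tau)=-\mathfrak{l}(t,\tau)$.

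For item (5) you and the paper diverge. The paper's one-line hint ``Cauchy uniqueness theorem together with \eqref{F-est-det-k}'' points to a shorter argument than your Levinson-type analysis: once \eqref{F-est-det-k} (evaluated at $(0,1)$) makes $L_2$ invertible, the eigenvalue equation is the first-order system $u'=A(t,\lambda)u$ with $\mathrm{tr}\,A(t,\lambda)\in i\mathbb{R}$ for real $\lambda$ (since $L_1,L_2$ are Hermitian --- you already checked $\mathrm{tr}\,N\in\mathbb{R}$), so by Abel's identity $|\det[u_1(t),u_2(t)]|$ is constant for any pair of solutions. Two linearly independent eigenfunctions would lie in $\mathscr{S}$ by item (4), forcing this Wronskian to vanish identically, and then Cauchy uniqueness at any point $t_0$ with $u_2(t_0)\ne0$ gives $u_1=cu_2$, a contradiction. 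This bypasses the spectral analysis of $N=L_2^{-1}L_1$ and the asymptotic-integration machinery; your route, on the other hand, is self-contained (it does not rely on item (4)) and yields the explicit Gaussian growth/decay rates $e^{\pm|\beta|t^2/2}$, which are consistent with the quasi-mode structure exploited just afterwards in Subsection~\ref{SS-qmodes00}.
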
 
Nothing is related to the dimension one, except the fifth item which is a consequence of the  Cauchy uniqueness theorem 
together with \eqref{F-est-det-k}.  The last item is a consequence of the property that the symbol is antisymmetric
\begin{equation*}
\mathfrak{l}(t,\tau) =- \mathfrak{l}(-t,-\tau)\,.
\end{equation*}
We then observe that if $(\lambda,u)$ is a spectral pair for $\mathfrak{L}$ then $(-\lambda, \mathfrak{I} u)$ is also a spectral pair
with $\big(\mathfrak{I}u\big)(x):= u(-x)$.

\begin{remark}\label{R-first-order}
If we  now consider an eigenvalue $\lambda_1$ of $\mathfrak{L}$ and let $v_0$ be a normalized eigenfunction, we  observe (Fredholm alternative)  that the equation
$$
(\mathfrak{L}-\lambda_1) v= f\,, f\in \mathscr S(\mathbb R), v \in \mathscr S(\mathbb R)\,,
$$
has a solution if and only if $\int_\mathbb R f(t) \bar v_0(t)\, dt =0$. Moreover the solution is unique if we impose $\int_\mathbb R v(t) \bar v_0(t)\, dt =0$.
\end{remark}

\subsubsection{Quasi-modes near $(0,0)$}\label{SS-qmodes00}
We come back to the complete symbol  $\mathcal{k}^\circ_{I,\delta}$ in \eqref{DF-kcircIdelta} and its associated symmetric form (as in \eqref{F-sym-repr}) and  use the simplified notation in \eqref{F-not-simpl-k} and \eqref{F-not-simpl-tildek}.
Recalling Definition \ref{D-Tdev-tildek} and Proposition \ref{P-Tdev-tildek} we have the following asymptotic expansions of the spectral elements of $\Op_1(\mathcal{k}_\h)$ associated to the spectral elements of the self-adjoint operator $\mathfrak{L}$.

\begin{proposition}
	There exist  two infinite sequences $\{\lambda_\ell(\h)\}_{\ell\geq1}\subset\mathbb{R}$ and $\{v_\ell\}_{\ell\in\mathbb{N}} \subset \mathscr S(\mathbb R)$ 
	such that, for any $k\geq1$, we have 
	\begin{equation}\label{eq:8.55}
	\Big(\big(\Op_1(\mathcal{k}_{\h})-\lambda^{(k)} (\h))  u^{(k-1)}\Big)(t,\h)  = \mathcal{O} (h^{(k+1)/2})\,, 
	\end{equation}
	in $L^2(\mathbb R)$ with 
	$$
	\lambda^{(k)}(h) = \sum_{1 \leq \ell \leq k} h^{\ell /2} \lambda_\ell\,,\, u^{(k-1)}(t,\h) = \h^{-1/4} \sum_{0 \leq \ell \leq k-1} \h^{\ell /2} v_\ell (h^{-1/2} t)\,.
	$$
\end{proposition}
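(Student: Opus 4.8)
The statement is a standard WKB / matching-asymptotics construction for a semiclassical operator whose leading term is a globally elliptic model operator, so the plan is to build the quasi-mode hierarchy order by order in powers of $\h^{1/2}$ and then quantize. First I would recall from \eqref{F-sym-repr}--\eqref{F-not-simpl-tildek} that, after the dilation by $\mathfrak{d}_{\h^{1/2}}$, the operator $\Op_1(\mathcal{k}_{\h})$ is unitarily equivalent to $\Op_1(\widetilde{\mathcal{k}}_{\h})$, whose symbol has the asymptotic expansion \eqref{F-T-dev-tildek}: $\widetilde{\mathcal{k}}_{\h} \sim \h^{1/2}\mathfrak{l} + \sum_{\ell>1}\h^{\ell/2}\hat{\mathcal{k}}^\ell$, with the remainders $r_k$ controlled by Proposition \ref{P-Tdev-tildek}. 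So it suffices to produce eigenpairs of $\Op_1(\widetilde{\mathcal{k}}_{\h})$ modulo $\mathcal{O}(\h^{(k+1)/2})$; undoing the dilation converts $v_\ell(t)$ into $\h^{-1/4}v_\ell(\h^{-1/2}t)$ and leaves the eigenvalue powers untouched, giving the stated form of $u^{(k-1)}$ and $\lambda^{(k)}$. Note that $\Op_1(\h^{1/2}\mathfrak{l})=\h^{1/2}\mathfrak{L}$ on $\mathscr{S}(\mathbb{R};\mathbb{C}^2)$ since $\mathfrak{l}$ is linear, so after dividing through by $\h^{1/2}$ the whole problem becomes a regular perturbation of $\mathfrak{L}$ in the small parameter $\h^{1/2}$ — except that the perturbing operators $\Op_1(\hat{\mathcal{k}}^\ell)$ are unbounded (polynomials of degree $\ell$ in $(t,\tau)$), which is exactly why one must work inside the scale of spaces $B^k$ rather than with naive Neumann series.

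Second, I would set up the recursion. Fix an eigenvalue $\lambda_1$ of $\mathfrak{L}$ with normalized eigenfunction $v_0\in\mathscr{S}(\mathbb{R};\mathbb{C}^2)$ (Theorem \ref{T-HR}, items (2),(4),(5) guarantee existence, discreteness, simplicity, and Schwartz regularity). Plug the formal Ansatz $u(t,\h)=\sum_{\ell\geq0}\h^{\ell/2}v_\ell(t)$, $\lambda(\h)=\sum_{\ell\geq1}\h^{\ell/2}\lambda_\ell$ into $\h^{-1/2}\Op_1(\widetilde{\mathcal{k}}_{\h})u=\lambda(\h)\h^{-1/2}u$ and collect powers of $\h^{1/2}$. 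At order $\h^0$ one recovers $(\mathfrak{L}-\lambda_1)v_0=0$. At order $\h^{(\ell-1)/2}$ for $\ell\geq1$ one gets an equation of the form $(\mathfrak{L}-\lambda_1)v_\ell = \lambda_{\ell+1}v_0 + G_\ell$, where $G_\ell$ is an explicit combination of $\Op_1(\hat{\mathcal{k}}^j)v_{\ell-j+1}$ ($2\le j\le \ell$) and $\lambda_{j}v_{\ell-j+1}$ ($2\le j\le \ell$) — i.e. it depends only on the already-determined $v_0,\dots,v_{\ell-1}$ and $\lambda_1,\dots,\lambda_\ell$. By the Fredholm alternative (Remark \ref{R-first-order}), this is solvable iff the right-hand side is orthogonal to $v_0$, which fixes $\lambda_{\ell+1} = -\langle G_\ell, v_0\rangle$; imposing $\langle v_\ell, v_0\rangle = 0$ then determines $v_\ell$ uniquely. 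One must check inductively that $v_\ell\in\mathscr{S}(\mathbb{R};\mathbb{C}^2)$: since $G_\ell$ is a finite sum of polynomial-coefficient derivatives applied to Schwartz functions, $G_\ell\in\mathscr{S}$, and $\mathfrak{L}-\lambda_1$ restricted to $\{v_0\}^\perp$ maps $\mathscr{S}$ onto $\mathscr{S}\cap\{v_0\}^\perp$ (this is again the global ellipticity / hypoellipticity of the model operator, Theorem \ref{T-HR}, items (1),(3),(4),(6)). This step is the technical core and where I expect to spend the most care: one needs the elliptic regularity of $\mathfrak{L}$ in the Shubin/$B^k$ scale to propagate Schwartz decay through each solve, and one needs to keep track that all $G_\ell$ are genuinely Schwartz (not merely polynomially bounded) — this uses that $v_0$ is Schwartz and that $\hat{\mathcal{k}}^j$ are polynomials.

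Third, I would truncate and estimate. Define $u^{(k-1)} = \sum_{0\le\ell\le k-1}\h^{\ell/2}v_\ell$ and $\lambda^{(k)} = \sum_{1\le\ell\le k}\h^{\ell/2}\lambda_\ell$. By construction, $\big(\Op_1(\widetilde{\mathcal{k}}_{\h}) - \h^{1/2}\lambda^{(k)}\h^{-1/2}\big)u^{(k-1)}$ has no contribution from the Taylor polynomial of $\widetilde{\mathcal{k}}_{\h}$ of order $\le k$ in the sense dictated by the recursion; the leftover splits into (i) terms $\h^{j/2}\Op_1(\hat{\mathcal{k}}^j)v_\ell$ with $j+\ell\ge k+1$ — each a Schwartz function times an explicit power $\h^{\ge(k+1)/2}$, hence $\mathcal{O}(\h^{(k+1)/2})$ in $L^2$ — and (ii) the genuine remainder term $\Op_1(r_{k+1})u^{(k-1)}$, which by the estimate \eqref{estsurr} of Proposition \ref{P-Tdev-tildek} together with the Calderón--Vaillancourt theorem (or directly, since $r_{k+1}$ has symbol bounds $\lesssim \h^{(k+2)/2}(t^2+\tau^2)^{\cdots}$ and $u^{(k-1)}$ is Schwartz) is also $\mathcal{O}(\h^{(k+1)/2})$ in $L^2(\mathbb{R})$. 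Multiplying back by $\h^{1/2}$ and translating $\lambda^{(k)}\h^{-1/2}\cdot\h^{1/2}=\lambda^{(k)}$ shows $\big(\Op_1(\widetilde{\mathcal{k}}_{\h}) - \lambda^{(k)}(\h)\big)u^{(k-1)} = \mathcal{O}(\h^{(k+1)/2})$; pulling back through the unitary dilation $\mathfrak{d}_{\h^{1/2}}$ produces exactly \eqref{eq:8.55} with $u^{(k-1)}(t,\h)=\h^{-1/4}\sum_{0\le\ell\le k-1}\h^{\ell/2}v_\ell(\h^{-1/2}t)$ (the $\h^{-1/4}$ being the $L^2$-normalizing Jacobian factor, which is harmless since the estimate is stated in $L^2$). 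The only remaining point is to note that the construction can be carried out simultaneously for every eigenvalue $\lambda_1$ of $\mathfrak{L}$, and that the sequences $\{\lambda_\ell(\h)\}$ and $\{v_\ell\}$ in the statement are understood to be the ones attached to a fixed such $\lambda_1$; since $\sigma(\mathfrak{L})$ is discrete with simple eigenvalues, this is unambiguous.
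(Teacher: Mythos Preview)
Your proposal is correct and follows essentially the same route as the paper's proof: reduce via the dilation $\mathfrak{d}_{\h^{1/2}}$ to $\Op_1(\widetilde{\mathcal{k}}_{\h})$, expand in powers of $\h^{1/2}$ using \eqref{F-T-dev-tildek}, solve the resulting hierarchy $(\mathfrak{L}-\lambda_1)v_\ell=\ldots$ by the Fredholm alternative of Remark~\ref{R-first-order}, and control the truncation error via the remainder bounds \eqref{estsurr} together with Calder\'on--Vaillancourt--type continuity. Your write-up is in fact slightly more explicit than the paper's about the inductive Schwartz regularity of the $v_\ell$ and about splitting the error into cross terms versus genuine Taylor remainder, but the argument is the same.
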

\begin{proof}
	Recalling the unitary equivalence of \eqref{DF-kcircIdelta} and \eqref{F-sym-repr}, it is enough to expand $\widetilde{\mathcal{k}}_{\h}$ in powers of $\h^{1/2}$, as in \eqref{F-T-dev-tildek} and write formal power series with respect to $\h^{1/2}$ for the spectral elements $\lambda\in\mathbb{R}$ as eigenvalue and $v (t,\h):=\mathfrak{d}_{\h^{1/2}}u\in L^2(\mathbb{R})$:
	$$v (t,\h):=\mathfrak{d}_{\h^{1/2}}u \sim \sum_{\ell \geq 0}  v_\ell (t) \h^{\ell/2}\,,\, \lambda(h) \sim  \sum_{1 \leq \ell } \h^{\ell /2} \lambda_\ell,
	$$
	\begin{equation}\label{eq:8.56}
	\Big(\big(\Op_1((\widetilde{\mathcal{k}}_{\h})-\lambda (h)\big)  v\Big)(t,\h) \sim 0.
	\end{equation}
	This allows us to write down the sequence of equations describing the annulation of the coefficients of the powers of $\h^{1/2}$.
	The first equation, for the coefficient of $\h^{1/2}$, reads
	$$
	\big(\Op_1(\mathfrak{l}) - \lambda_1) v_0 =0\,.
	$$
	This is satisfied by our choice of $(\lambda_1,v_0)$ in Remark \ref{R-first-order}.
	
	We now consider the coefficient of $\h$ and get
	\beq\label{F-1-st-order}
	\big(\Op_1(\mathfrak{l}) - \lambda_1\big) v_1 = - \Op_1(\hat{\mathcal{k}}^2) v_0+ \lambda_2  v_0\,.
	\eeq
	As noticed in Remark \ref{R-first-order} we can solve this equation if and only if
	$$
	\lambda_2= <v_0, \Op_1(\hat{\mathcal{k}}^2) v_0>\,,
	$$ 
	and this defines $\lambda_2$. We can then choose $v_1 \in \mathscr S(\mathbb R)$ the unique solution of \eqref{F-1-st-order}
	orthogonal to $v_0$.
	
	Let us go one step further and consider the equation obtained by the vanishing of the coefficient of $\h^{3/2}$:
	\beq\label{F-2-nd-order}
	\big(\Op_1(\mathfrak{l}) - \lambda_1\big) v_2 = - \Op_1(\hat{\mathcal{k}}^2) v_1 - \Op_1(\hat{\mathcal{k}}^3) v_0 + \lambda_3  v_0+\lambda_2v_1.
	\eeq
	
	It is clear that for the formal expansion we can solve term by term determining at each step $\lambda_{k+1}$ and $u_{k}$ such that
	\beq\label{F-iter-kIh}
	\Op_1(\mathfrak{l})v_k + \sum_{k+1\geq \ell > 1} \Op_1(\hat{\mathcal{k}}^\ell)v_{k+1-\ell}\,=\,\underset{1\leq\ell\leq k+1}{\sum}\lambda_\ell v_{k+1-\ell} ,\quad u^{(k)}=\mathfrak{d}_{\h^{-1/2}}\left(\underset{0\leq\ell\leq k}{\sum}\h^{\ell/2}v_{\ell}\right).
	\eeq
	
	It remains to show how we control the remainder  estimates. We treat for simplification the case $k=1$. Let us show that
	$$
	\Big(\big(\Op_1(\mathcal{k}_{\h}) -\h^{1/2} \lambda_1\big)u^{(0)}\Big) (t ,\h)) = \mathscr{O} (\h),\quad\mbox{in }L^2(\mathbb{R}).
	$$
	This will clearly follow if we prove that
	$$
	\big\|\big(\Op_1(\widetilde{\mathcal{k}}_{\h})-\h^{1/2}\Op_1(\mathcal{l})\big) v_0\big\| =\mathscr{O} (\h).
	$$
	This last estimate is obtained by looking at the symbol $\widetilde{\mathcal{k}}_{\h} - \h^{1/2}\mathcal{l}$. Using \eqref{estsurr},     
	this implies, using a global  Beals like pseudo-differential  calculus
	and the continuity properties of the associated operators, 
	$$
	\big\|\big(\Op_1(\widetilde{\mathcal{k}}_{\h})-\h^{1/2}\Op_1(\mathcal{l})\big) v_0\big\| \leq C \h  \sum_{\ell + m \leq 2} || t^\ell \partial_t^m v_0|| \,.
	$$
	The last quantity is finite since $v_0 \in \mathscr S(\mathbb R)$.
	
	Suppose we have proven \eqref{eq:8.55} for $k\leq m$ and consider $k=m+1$ and the corresponding expression:
	$$
	\left(\Op_1(\mathcal{k}_{\h})-\Big(\underset{1\leq\ell\leq m}{\sum}\h^{\ell/2}\lambda_\ell\Big)\right)\Big(\underset{0\leq\ell\leq k-1}{\sum}\h^{\ell/2}(\mathfrak{d}_{\h^{1/2}}v_\ell)\Big).
	$$
	Using \eqref{F-iter-kIh},
	the same arguments as before, based on \eqref{estsurr}, imply \eqref{eq:8.55} for $k=m+1$.
\end{proof}

\begin{corollary} 
There exists an infinite sequence $\{\lambda_\ell\}_{\ell\geq1}\subset\mathbb{R}$ such that for any $k\geq 1$, there exists $\h_k>0$ and $C_k>0$ such that
\begin{equation*}
\dist\left(\sigma\big(\Op_1(\mathcal{k}_{\h})\big), \underset{1\leq\ell\leq k}{\sum}\h^{\ell/2}\lambda_\ell\right) \leq C_k \h^{(k+1)/2}\,,\, \forall \h\in (0,\h_k]\,.
\end{equation*}
\end{corollary}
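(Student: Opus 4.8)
The plan is to deduce this corollary directly from the quasi-mode construction of the preceding Proposition, together with the elementary spectral-theoretic fact that for a self-adjoint operator $A$ and a nonzero vector $u$ one has $\dist\big(\mu,\sigma(A)\big)\leq\|(A-\mu\bb1)u\|/\|u\|$ for all $\mu\in\mathbb{R}$. First I would fix the sequence $\{\lambda_\ell\}_{\ell\geq1}\subset\mathbb{R}$ to be exactly the ($k$-independent) one produced in that Proposition, so that $\lambda^{(k)}(\h)=\sum_{1\leq\ell\leq k}\h^{\ell/2}\lambda_\ell$ is the quantity appearing in the statement. I would also record that $\Op_1(\mathcal{k}_\h)$ is a \emph{bounded} self-adjoint operator on $L^2(\mathbb{R};\mathbb{C}^2)$: indeed $\mathcal{k}^\circ_{I,\delta}\in S^0_0(\X\times\X^*)_{2\times2}$ has Hermitian matrix values (it is $F_0\bb1+\mathbf{F}\cdot\sigma$ plus a real multiple of $v^{(3)}\cdot\sigma$), hence so does the rescaled symbol $\mathcal{k}_\h$ of \eqref{F-not-simpl-k}, and the Weyl quantization of a bounded Hermitian-valued symbol is bounded and symmetric, thus self-adjoint; this is also the operator whose spectrum near $0$ is the relevant one, by Corollary \ref{C-isosp-scl}.

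Next I would verify that the quasi-mode $u^{(k-1)}$ does not degenerate in $L^2$-norm. Using $\mathfrak{d}_r f(t)=r^{1/2}f(rt)$ one sees, as already noted in \eqref{F-iter-kIh}, that $u^{(k-1)}(\cdot,\h)=\mathfrak{d}_{\h^{-1/2}}\big(\sum_{0\leq\ell\leq k-1}\h^{\ell/2}v_\ell\big)$; since $\mathfrak{d}_{\h^{-1/2}}$ is unitary on $L^2(\mathbb{R})$ and $\|v_0\|=1$, the triangle inequality gives $\|u^{(k-1)}(\cdot,\h)\|\geq 1-\mathscr{O}(\h^{1/2})$, so there is a threshold $\h_k>0$ below which $\|u^{(k-1)}(\cdot,\h)\|\geq 1/2$. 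Then, reading \eqref{eq:8.55} as $\|(\Op_1(\mathcal{k}_\h)-\lambda^{(k)}(\h)\bb1)u^{(k-1)}\|\leq C_k'\,\h^{(k+1)/2}$ for $\h\in(0,\h_k]$ (shrinking $\h_k$ if necessary), the elementary bound above yields $\dist\big(\sigma(\Op_1(\mathcal{k}_\h)),\lambda^{(k)}(\h)\big)\leq 2C_k'\,\h^{(k+1)/2}$, which is the claimed estimate with $C_k:=2C_k'$.

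There is no serious obstacle here: the whole difficulty was already absorbed into the preceding Proposition (and into Theorem \ref{T-HR} on the globally elliptic linearized operator $\mathfrak{L}$). The only points deserving a line of care are (i) that the coefficients $\lambda_\ell$ built iteratively there are independent of the truncation order $k$, so a single sequence serves all $k$ — this is already part of that Proposition's statement — and (ii) the uniform-in-$\h$ lower bound on $\|u^{(k-1)}\|$, handled above via unitarity of $\mathfrak{d}_{\h^{-1/2}}$. I would also flag, to set up Section \ref{S-final}, that this corollary gives only a one-sided control (each approximate eigenvalue $\lambda^{(k)}(\h)$ lies close to $\sigma(\Op_1(\mathcal{k}_\h))$); the full Hausdorff comparison in Theorem \ref{T-Main-2} between $\sigma(H^{\epsilon}_\Gamma)$ and the rescaled Landau-type levels $(\epsilon B^\circ)^{1/2}\sigma(\mathfrak{L})$ is obtained by combining this estimate with the converse inclusion, with Corollary \ref{C-isosp-scl} and Proposition \ref{P-magn-matrix-ke}, with the $\mathscr{O}(\epsilon)$ comparison $\mathcal{k}^\epsilon_{I,\delta}\leftrightarrow\mathcal{k}_{I,\delta}$ of Proposition \ref{P-kepsilon-k}, and with the Feshbach–Schur estimate of Proposition \ref{P-magn-FS-est}.
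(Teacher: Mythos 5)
Your argument is correct and is precisely the (implicit) proof the paper intends: the corollary follows from the quasi-mode estimate \eqref{eq:8.55} via the standard bound $\dist(\mu,\sigma(A))\leq\|(A-\mu)u\|/\|u\|$ for self-adjoint $A$, once one checks that $\|u^{(k-1)}\|\geq 1/2$ for small $\h$ (which your unitarity-of-$\mathfrak{d}_{\h^{-1/2}}$ observation settles). Your side remarks — that the $\lambda_\ell$ are $k$-independent and that the estimate is one-sided — are both accurate.
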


\subsubsection{The quasiresolvents in the one crossing point situation}

We now consider the question of the resolvent of the linearised operator. We consider a compact $K \subset \mathbb C$ and a point $\lambda \in  K$ such that $\dist(\lambda, \sigma (\mathfrak{L})) \geq \alpha_0>0
$.

\begin{proposition}\label{P-sp-kIdelta}
Given $\alpha_0>0$, there exists $\h_0>0$ such that given any compact $K\subset\mathbb{C}$ with $\dist\big(K,\sigma(\mathfrak{L})\big)=\alpha_0>0$, the set $\h^{1/2}K$ is contained in the resolvent set of $\Op_1\big(\mathcal{k}_{\h}\big)$ for any $\h\in(0,\h_0]$.
\end{proposition}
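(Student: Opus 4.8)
The plan is to reduce the statement about $\Op_1(\mathcal{k}_\h)$ to a statement about the linearised operator $\mathfrak{L}$ by a rescaling argument combined with the asymptotic control of the symbol $\widetilde{\mathcal{k}}_\h$ near $T=0$ from Proposition \ref{P-Tdev-tildek}, together with the global ellipticity of $\mathcal{k}^\circ_{I,\delta}$ away from the crossing point. First I would pass, via \eqref{F-sym-repr}, to the symmetrically rescaled operator $\Op_1(\widetilde{\mathcal{k}}_\h)$, which is unitarily equivalent to $\Op_1(\mathcal{k}_\h)$ and hence has the same spectrum; thus it suffices to show that $\h^{1/2}K$ lies in the resolvent set of $\Op_1(\widetilde{\mathcal{k}}_\h)$ for $\h$ small. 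Writing $\lambda = \h^{1/2}\mu$ with $\mu\in K$, the equation $(\Op_1(\widetilde{\mathcal{k}}_\h) - \lambda)u = f$ becomes, after dividing by $\h^{1/2}$, an equation of the form $(\Op_1(\mathfrak{l}) - \mu + \h^{-1/2}\Op_1(r_1(\cdot,\h)))u = \h^{-1/2}f$, where $r_1$ is the first-order Taylor remainder of Definition \ref{D-Tdev-tildek}.

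\textbf{Key steps.} The core of the argument is then a Neumann-series perturbation: since $\dist(\mu,\sigma(\mathfrak{L}))\geq\alpha_0$ for $\mu\in K$ (here I would first note that $K$ compact and $\sigma(\mathfrak{L})$ discrete make this a uniform bound), Theorem \ref{T-HR}(6) gives $(\mathfrak{L}-\mu)^{-1}=\Op_1(\mathfrak{r}(\mu))$ with $\mathfrak{r}(\mu)$ in the global Beals/Shubin symbol class $S^{-1}$ uniformly for $\mu\in K$, and in particular $\|(\mathfrak{L}-\mu)^{-1}\|\leq 1/\alpha_0$. It then remains to show that $\h^{-1/2}\Op_1(r_1(\cdot,\h))$, restricted to the domain $B^1(\mathbb{R};\mathbb{C}^2)$ of $\mathfrak{L}$, is a relatively bounded perturbation with small relative bound: from the estimate \eqref{estsurr} with $k=1$, namely $\|\partial_t^\ell\partial_\tau^m r_1(t,\tau,\h)\|\leq C_{\ell,m}\h\,(t^2+\tau^2)^{[(2-\ell-m)/2]_+}$, the symbol $\h^{-1/2}r_1$ is $\mathscr{O}(\h^{1/2})$ in the weight class corresponding to $\mathfrak{l}$ itself; composing with $\mathfrak{r}(\mu)\in S^{-1}$ yields an operator that is $\mathscr{O}(\h^{1/2})$ in $\mathbb{B}(L^2(\mathbb{R};\mathbb{C}^2))$, uniformly in $\mu\in K$. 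Hence for $\h$ below some $\h_0$ (depending only on $\alpha_0$, the symplectic data $f_i,v^{(i)}$ and the fixed cut-off $g$) the operator $\bb1 + \h^{-1/2}\Op_1(r_1(\cdot,\h))(\mathfrak{L}-\mu)^{-1}$ is invertible by Neumann series, which gives the boundedly invertible resolvent $(\Op_1(\widetilde{\mathcal{k}}_\h)-\lambda)^{-1} = (\mathfrak{L}-\mu)^{-1}\h^{-1/2}\big(\bb1+\h^{-1/2}\Op_1(r_1)(\mathfrak{L}-\mu)^{-1}\big)^{-1}$.

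\textbf{Main obstacle.} The delicate point is not the Neumann-series bookkeeping but justifying that the ``global Beals-like pseudodifferential calculus'' referred to after \eqref{estsurr} applies here: I must make sure that $\widetilde{\mathcal{k}}_\h$ (equivalently $\mathcal{k}^\circ_{I,\delta,\h}$), which equals the Taylor polynomial plus $r_k$ only \emph{near} $T=0$, is globally controlled — away from $B_R(\theta_0)$ the ``one crossing point'' symbol $\mathcal{k}^\circ_{I,\delta}$ is uniformly elliptic with a spectral gap of width of order $d_0/C$ (the previous proposition), and this ellipticity survives the rescaling $T\mapsto\h^{1/2}T$ in the sense that $\widetilde{\mathcal{k}}_\h(T)-\lambda$ is invertible with controlled inverse for $|T|\gtrsim \h^{-1/2}R$; one then patches the local (small-$T$) linearised parametrix with the global (large-$T$) elliptic parametrix by a partition of unity adapted to the scale $\h^{-1/2}$, exactly as in \S2--\S3 of \cite{HS1}. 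I expect verifying the uniformity of all symbol seminorms in $\h$ and $\mu$ through this patching to be the real work; everything else is a routine relatively-bounded perturbation estimate built on Theorem \ref{T-HR} and Proposition \ref{P-Tdev-tildek}.
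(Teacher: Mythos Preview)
Your overall architecture matches the paper's: pass to $\Op_1(\widetilde{\mathcal{k}}_\h)$, use $(\mathfrak{L}-\mu)^{-1}$ as an approximate inverse near the origin, and patch with an elliptic inverse at infinity via a cut-off at scale $\h^{-1/2}$. So the ``Main obstacle'' paragraph is on target and is in fact the heart of the proof.

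However, the ``Key steps'' paragraph, as written, does not stand on its own. Your claim that $\h^{-1/2}r_1$ is $\mathscr{O}(\h^{1/2})$ \emph{in the class of $\mathfrak{l}$} (i.e.\ order $1$ in the global Shubin calculus) is incorrect: estimate \eqref{estsurr} with $k=1$ gives $|\partial^{\alpha}r_1|\le C\h\langle T\rangle^{(2-|\alpha|)_+}$, so $\h^{-1/2}r_1$ is $\mathscr{O}(\h^{1/2})$ only in the class of order~$2$, not order~$1$. Composing with $\mathfrak{r}(\mu)\in\Gamma^{-1}$ therefore produces a symbol that still grows linearly at infinity (this is exactly the paper's $\mathfrak{s}_1$ in \eqref{F-est-s1}), and the resulting operator is \emph{not} bounded on $L^2$. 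For the same reason $\h^{-1/2}\Op_1(r_1)$ is not relatively bounded with respect to $\mathfrak{L}$: a symbol of quadratic growth cannot be controlled by a first-order operator. So the direct Neumann series in your display cannot be run without first inserting the cut-off.

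In short: the patching you relegate to a technical afterthought is not merely a bookkeeping issue about uniformity of seminorms---it is what makes the remainder $L^2$-small at all. The paper makes this explicit by writing $R_{app}(\lambda)=\Op_1(\widetilde{\chi}_{1,\varrho^{-2}\h})R_1(\lambda)+R_2(\lambda)$ and showing separately that (i) the localized piece $\Op_1(\widetilde{\chi}_{1,\varrho^{-2}\h})\Op_1(\mathfrak{s}_1)$ is bounded with norm $\le\eta_0$ once $\varrho$ is small, and (ii) the elliptic piece $R_2$ produces an $\mathscr{O}(\h^{1/2})$ remainder. Your proposal should promote the patching from ``Main obstacle'' to the core of the argument and drop the stand-alone Neumann series claim.
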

\begin{proof}
Under the hypothesis of the Proposition, we can construct an inverse of $(\mathfrak{L} -\lambda)$ satisfying the properties described in point 6 of Theorem \ref{T-HR} (as explained in \cite{He-84}). We consider the operator
\beq\label{F-03}
R_1(\lambda,\h) := \h^{-1/2} (\mathfrak{L} -\lambda)^{-1},
\eeq
and look at $ R_1(\lambda,\h)\circ \Op_1(\widetilde{\mathcal{k}}_{\h})$.  From point (6) in Theorem \ref{T-HR} we know that $  R_1(\lambda,\h)=\Op_1\big(\h^{-1/2}\mathfrak{r}(\lambda)\big)$ with $\mathfrak{r}(\lambda)\in C^\infty (\mathbb R^2;\mathbb C)$ such that for any non negative integers  $\ell, m$ there exists $C_{\ell m}$ such that
$$ |\partial_t^\ell\partial_\tau^m\mathfrak{r}(t,\tau)| \leq C_{\ell m} (1+ t^2 +\tau^2)^{-(1+\ell + m)/2}  \mbox{ for } (t,\tau)\in \mathbb R^2.
$$
Thus $R_1(\lambda,\h)\circ \Op_1(\widetilde{\mathcal{k}}_{\h})=\Op_1\big(\h^{-1/2}\mathfrak{r}(\lambda)\sharp\widetilde{\mathcal{k}}_{\h}\big)$ and using the estimates above and the pseudo-differential calculus, we get
\begin{align*}
\big(\h^{-1/2}\mathfrak{r}(\lambda)\sharp(\widetilde{\mathcal{k}}_{\h}-\h^{1/2}\lambda)\big)(t,\tau)&=\big(\mathfrak{r}(\lambda)\sharp(\mathcal{l}-\lambda)\big)(t,\tau)+\h^{-1/2}\big(\mathfrak{r}(\lambda)\sharp r_1(\cdot,\h)\big)(t,\tau)\\
&=1+\h^{-1/2}\big(\mathfrak{r}(\lambda)\sharp r_1(\cdot,\h)\big)(t,\tau)
\end{align*}
\begin{align*}
\h^{-1/2}\big(\mathfrak{r}(\lambda)\sharp r_1(\cdot,\h)\big)(t,\tau)&=\pi^{-2}\h^{-1/2}\int_{\mathbb{R}^2}dT'\int_{\mathbb{R}^2}dT"\,e^{-2i\sigma(T-T',T-T")}\mathfrak{r}(\lambda;T')r_1(T",\h) \\
&=\pi^{-2}\h^{1/2}\mathfrak{s}_1(T,\h))
\end{align*}
where $\mathfrak{s}_{1}\in C^\infty(\mathbb{R}^2\times\mathbb{R}_+;\mathcal{M}_{2\times2}(\mathbb{C})\big)$ satisfies
\begin{equation}\label{F-est-s1}
\big|\big(\partial_t^\ell \partial_\tau^m\mathfrak{s}_{1}\big)(t,\tau,h,\lambda) \big| \leq C_{\ell,m} (|t|^2+|\tau|^2)^{[(1-m-\ell)/2]_+}\,.
\end{equation}
Thus we can write
\begin{equation}\label{8.?}
R_1(\lambda,\h) \circ (\Op_1(\widetilde{\mathcal{k}}_{\h})-\h^{1/2} \lambda)  = \bb1 + \h^{1/2}\Op_1(\mathfrak{s}_{1}) \,,
\end{equation} 
  We notice now that the divergence at infinity of the remainder terms in Definition \ref{D-Tdev-tildek} goes up with $k\in\mathbb{N}$. In order to control this growing of the first residue term $r_1$ we shall take advantage of the fact that the symbol $\widetilde{k}_\h$ is uniformly invertible outside any neighbourhood of $T=(0,0)$ and use a cut-off in order to control the local behaviour close to $T=(0,0)$.
We consider $\chi_1$ with compact support and equal to $1$ in a neighbourhood of $0$ and for any $r>0$ denote by $\chi_{1,r}(t,\tau):=\chi_1(t,r\tau)$ and  $\widetilde{\chi}_{1,r}(t,\tau):=\chi_1(r^{1/2}t,r^{1/2}\tau)$. With a parameter $\varrho>0$ to be fixed later, we compose equation \eqref{8.?} on the left with $ \Op_1(\widetilde{\chi}_{1,\varrho^{-2}\h})$.
Using the pseudo differential calculus, we obtain for the right member:
\begin{align}
&\big(\h^{1/2}\widetilde{\chi}_{1,\varrho^{-2}\h}\sharp\mathfrak{s}_{1}\big)(T,\h)=\pi^{-2}\h^{1/2}\int_{\mathbb{R}^2}dT'\int_{\mathbb{R}^2}dT"\,e^{-2i\sigma(T',T")}\chi_1\big(\varrho^{-1}\h^{1/2}(T-T')\big)\,\mathfrak{s}_1(T-T",\h) \nonumber \\ \label{F-rest-chi1s1}
&\hspace*{14pt}=\h^{1/2}\chi_1\big(\varrho^{-1}\h^{1/2}T\big)\,\mathfrak{s}_1(T;\h)\,+\\ \nonumber
&\hspace*{28pt}+\h^{1/2}(2i\pi)^{-2}\frac{\h^{1/2}}{\varrho}\int_{\mathbb{R}^2}dT'\int_{\mathbb{R}^2}dT"\int_0^1ds\,e^{-2i<T',\mathfrak{J}T">}\big(\mathfrak{J}^t\nabla_T\chi_1\big)\big(\varrho^{-1}\h^{1/2}(T-T')\big)\cdot\\ \nonumber &\hspace*{8cm}\cdot \big(\nabla_T\mathfrak{s}_1\big)(T-sT",\h),
\end{align}
(with the notation in \eqref{DF-J}). Using estimation \eqref{F-est-s1} for the first term above and the estimation in the Calderon-Vaillancourt theorem we notice that there exists some $p\in\mathbb{N}$ and a constant $C>0$ such that
\begin{align*}
\big\|\h^{1/2}\Op_1(\widetilde{\chi}_{1,\varrho^{-2}\h})
\Op_1(\mathfrak{s}_{1})\big\|_{\mathbb{B}(L^2(\mathbb{R}^2;\mathbb{C}^2))}&\leq
\h^{1/2}\underset{m+\ell\leq p}{\sum}\big\|\partial_t^m\partial_\tau^\ell\big(\widetilde{\chi}_{1,\varrho^{-2}\h}\sharp\mathfrak{s}_{1}\big)\big\|_\infty\\
&\leq C\h^{1/2}\frac{\varrho}{h^{1/2}}\left[\underset{0\leq q\leq p}{\sum}\left(\frac{h^{1/2}}{\varrho}\right)^q\right].
\end{align*}
For the second term in \eqref{F-rest-chi1s1} usual oscillatory integrals techniques show that it is bounded as operator on $L^2(\mathbb{R}^2;\mathbb{C}^2)$ with a norm of order $h^{1/2}\left(\frac{h^{1/2}}{\varrho}\right)^3$.
We shall choose $\varrho\in(0,1]$ small enough but strictly positive and then we shall choose $\h_\varrho\in(0,\varrho^2]$ such that for $\h\in(0,\h_\varrho]$ we have that the parameter $\h^{1/2}/\varrho\in(0,1]$. Then all the semi-norms of a symbol $\widetilde{F}_{\varrho^{-2}\h}(T):=F(\varrho^{-1}\h^{1/2}T)$ are uniformly bounded by the corresponding semi-norms of the symbol $F$.
In particular, for any $\eta_0 >0$,  there exists $\varrho_0>0$  such that for $\varrho\leq \varrho_0$, there exists $\h_\varrho>0$ such that for $\h\in (0,\h_\varrho]$
we have 
\beq\label{F-eta0}
||\h^{1/2}\Op_1(\widetilde{\chi}_{1,\varrho^{-2}\h})\Op_1(\mathfrak{s}_{1})||_{\mathbb B (L^2(\mathbb R,\mathbb C^2))} \leq \eta_0\,.
\eeq
We now fix $\varrho>0$ with the above property.  

The second claim is that there exists a pseudo-differential operator $R_2(\lambda):=\Op_1\big(\widetilde{r}_{2,\lambda,\h}\big)$, with $\widetilde{r}_{2,\lambda,\h}(t,\tau):=r_{2,\lambda}(\h^{1/2}t,\h^{1/2}\tau)$ for some $r_{2,\lambda} \in S^0(\mathbb R^2;\mathcal M_{2\times 2}(\mathbb C))$ uniformly with respect to $\lambda\in K$  for a compact $K\subset\mathbb{C}$ at distance greater then $\alpha_0>0$ from $\sigma(\mathfrak{L})\subset\mathbb{R}$, and such that
\begin{equation}\label{F-R2}
R_2(\lambda) \circ \big(\Op_1(\widetilde{\mathcal{k}}_{\h})-h^{1/2}\lambda) = \bb1-\Op_1\big(\widetilde{\chi}_{1,\varrho^{-2}\h}\big) + h^{1/2}  \Op_1\big(\mathfrak s_2\big) \,,
\end{equation}
where $\mathfrak s_2 \in S^0$.
This second property is obtained by simply assuming that $ \mathcal{k}^\circ_{I,\delta}(t,\tau)$ is uniformly  invertible outside any neighbourhood of $(0,0)\in\mathbb{R}^2$ and defining
$$
r_{2,\lambda}(\varrho^{-1}\h^{1/2}t,\varrho^{-1}\h^{1/2}\tau):= \big(1-\chi_1(\varrho^{-1}\h^{1/2}t,\varrho^{-1}\h^{1/2}\tau)\big) \big(\mathcal{k}_{I,\delta}(\h^{1/2}t,\h^{1/2}\tau)\big)^{-1}\,.
$$
By usual Weyl calculus with the parameter $\varrho^{-1}\h^{1/2}\in(0,1]$ we notice that 
 $$
\big(1-\widetilde{\chi}_{1,\varrho^{-2}\h}\big)\widetilde{\mathcal{k}}_{\h}^{-1}=\big(1-\widetilde{\chi}_{1,\varrho^{-2}\h}\big)\sharp\widetilde{\mathcal{k}}_{\h}^{-1}+\h^{1/2}\mathfrak{x}_1(\varrho,\h)
$$
where $\mathfrak{x}_1(\varrho,\h)$ is a symbol of class $S^0(\mathbb{R}^2)$ uniformly for $\h\in(0,1]$. Similarly, we obtain that 
$$ 
\widetilde{r}_{2,\lambda,\h}\sharp\big(\widetilde{\mathcal{k}}_{\h}-\h^{1/2}\lambda\big)=1+\h^{1/2}\mathfrak{x}_2(\lambda,\varrho,\h)
$$
with $\mathfrak{x}_2(\lambda,\varrho,\h)$ a symbol of class $S^0(\mathbb{R}^2)$ uniformly for $\h\in(0,1]$ and $\lambda\in K$  for a compact $K\subset\mathbb{C}$ at distance greater then $\alpha_0>0$ from $\sigma(\mathfrak{L})\subset\mathbb{R}$. Let us emphasize that the dependence of $\mathfrak{x}_1$ and $\mathfrak{x}_2$ on $\varrho\in(0,1]$ may be singular when $\varrho$ goes to 0 but we shall fix some $\varrho\in(0,1]$ small enough, when fixing $\eta_0$ in \eqref{F-eta0} and keep it fixed for all our analysis. This finishes the proof of \eqref{F-R2}.

Finally  the operator
\beq\label{F-02}
R_{app}(\lambda):=\Op_1\big(\widetilde{\chi}_{1,\varrho^{-2}\h}\big)R_1(\lambda) + R_2(\lambda)
\eeq 
has the property that
\beq\label{F-01}
R_{app}(\lambda) \circ  \big(\Op_1(\widetilde{\mathcal{k}}_{\h})-h^\frac 12 \lambda)  = I + \Op_1(\mathfrak r_{\h,\lambda})\,,
\eeq
with $\mathfrak r_{\h,\lambda}$ a symbol of class $S^0$ satisfying for $\h$ small enough the estimation: 
\beq\label{F-04}
||\Op_1(\mathfrak r_{\h,\lambda})||_{\mathbb B (L^2(\mathbb R^2,\mathbb C^2))} < 2 \eta_0 \,.
\eeq
Hence, as soon as $2\eta_0 < 1$, we have for $\h$ small enough
$$\h^{1/2} \lambda \not\in \sigma \big(\Op_1(\widetilde{\mathcal{k}}_{\h})\big)\,.$$
\end{proof}

\begin{corollary}
For any $\alpha_0 >0$, there exists $\h_0$ such that for $h\in (0,h_0)$
$$ \Big(\sigma\big(h^{-1/2} \Op_1(\mathcal{k}_{\h})\big)\cap K\Big) \subset  \Big(\sigma(\mathfrak{L}) \cap K\Big)\,+\, (-\alpha_0, \alpha_0) 
$$
\end{corollary}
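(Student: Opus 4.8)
The plan is to obtain this corollary as an immediate consequence of Proposition~\ref{P-sp-kIdelta} by a compactness packaging followed by contraposition. Fix $\alpha_0>0$ and the compact set $K\subset\mathbb{C}$ appearing in the statement, put $r_K:=\sup_{z\in K}|z|$, and introduce
$$
\widetilde{K}\ :=\ \big\{\,z\in\mathbb{C}\ :\ |z|\le r_K,\ \dist\big(z,\sigma(\mathfrak{L})\big)\ge\alpha_0\,\big\}.
$$
This set is closed and bounded, hence compact; it satisfies $\dist(\widetilde{K},\sigma(\mathfrak{L}))\ge\alpha_0$ by construction; and it contains every $\lambda\in K$ with $\dist(\lambda,\sigma(\mathfrak{L}))\ge\alpha_0$.

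Next I would apply Proposition~\ref{P-sp-kIdelta} with $\widetilde{K}$ playing the role of the compact $K$ there — note that the proof of that proposition only uses the lower bound $\dist(\lambda,\sigma(\mathfrak{L}))\ge\alpha_0$, through the estimate $\|(\mathfrak{L}-\lambda)^{-1}\|\le\alpha_0^{-1}$ which holds uniformly for $\lambda\in\widetilde{K}$, so no modification is required. This produces $\h_0>0$, depending only on $\alpha_0$ and on $\widetilde{K}$ (hence only on $\alpha_0$ and $K$), such that $\h^{1/2}\widetilde{K}$ is contained in the resolvent set of $\Op_1(\mathcal{k}_{\h})$ for every $\h\in(0,\h_0]$; equivalently, for such $\h$ every point of $\widetilde{K}$ lies in the resolvent set of $\h^{-1/2}\Op_1(\mathcal{k}_{\h})$.

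Taking the contrapositive gives, for $\h\in(0,\h_0]$, that any $\lambda\in\sigma\big(\h^{-1/2}\Op_1(\mathcal{k}_{\h})\big)\cap K$ must satisfy $\dist(\lambda,\sigma(\mathfrak{L}))<\alpha_0$. Since $\Op_1(\mathcal{k}_{\h})$ is bounded self-adjoint (its Weyl symbol $\mathcal{k}_{\h}$ takes Hermitian values, cf.\ Proposition~\ref{P-kI}) and $\mathfrak{L}$ is self-adjoint by Theorem~\ref{T-HR}, the left-hand spectrum is real, so the nearest point $\mu\in\sigma(\mathfrak{L})$ is real with $|\lambda-\mu|<\alpha_0$; as $\sigma(\mathfrak{L})\cap K$ consists of the finitely many eigenvalues of $\mathfrak{L}$ inside $K$ (Theorem~\ref{T-HR}), one has $\mu\in K$ — either directly, because in the intended application $K$ is a compact real interval symmetric about $0$, or after an innocuous shrinking of $\alpha_0$. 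This establishes $\sigma\big(\h^{-1/2}\Op_1(\mathcal{k}_{\h})\big)\cap K\subset\big(\sigma(\mathfrak{L})\cap K\big)+(-\alpha_0,\alpha_0)$.

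There is no real obstacle here; the only point that deserves attention is that Proposition~\ref{P-sp-kIdelta} has to be invoked uniformly over all $\lambda\in K$ at distance $\ge\alpha_0$ from $\sigma(\mathfrak{L})$, which is precisely why one collects these spectral parameters into the single compact set $\widetilde{K}$ before quoting the proposition. Finally, combining this inclusion with the reverse one supplied by the quasi-mode construction preceding Proposition~\ref{P-sp-kIdelta} (which shows that each eigenvalue of $\mathfrak{L}$, after the $\h^{-1/2}$ rescaling, lies within $\mathcal{O}(\h^{1/2})$ of $\sigma\big(\h^{-1/2}\Op_1(\mathcal{k}_{\h})\big)$, hence within $\alpha_0$ for $\h$ small) shows that the two spectra coincide inside $K$ up to an error that vanishes with $\h$; this is the form in which the statement is used in Section~\ref{S-final}.
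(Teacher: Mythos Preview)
Your derivation is correct and is precisely how one makes explicit the passage the paper leaves implicit: the Corollary is stated without proof as an immediate consequence of Proposition~\ref{P-sp-kIdelta}, and your contraposition argument via the auxiliary compact $\widetilde{K}$ is the natural bridge. The one nuance you flag --- that the nearby eigenvalue $\mu\in\sigma(\mathfrak{L})$ actually lies in $K$ --- is resolved in the paper's context (where $K$ is a disk whose boundary avoids $\sigma(\mathfrak{L})$) exactly by the shrinking of $\alpha_0$ you mention.
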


Let us choose $K$ as a disk whose boundary avoids $\sigma(\mathfrak{L})$.
For $\alpha_0 >0$ small enough the right hand side of the equality in the Corollary above consists of disjoint intervals each interval centered on one (and only one) eigenvalue of $\mathfrak{L}$ (all having multiplicity 1). Due to \eqref{F-sym-repr} we know that $\Op_1(\mathcal{k}_{I,\delta,\h})$ and $\Op_1(\widetilde{\mathcal{k}}_{I,\delta,\h})$ are unitary equivalent and thus have the same spectrum. For simplicity, in the statement and proof of the following Proposition, we shall use the notation
$$
\mathfrak{L}_1:=h^{-1/2} \Op_1(\widetilde{\mathcal{k}}_{\h}),
$$
for $\h\in(0,\h_0]$ with $\h_0$ as in the statement of Proposition \ref{P-sp-kIdelta}.
Given any $\lambda_1\in\sigma(\mathfrak{L})$ we shall compare the projections $ \Pi_1$ and $ \Pi$ relative to the spectral interval $(\lambda_1-\alpha_0,\lambda_1+\alpha_0)$
of $\mathfrak{L}_1$ and $\mathfrak{L}$. 
\begin{proposition}
Suppose fixed $\lambda_1\in\sigma(\mathfrak{L})$ and some $\alpha_0 >0$ small enough. Let us denote by $\Pi_1$ and resp. by $\Pi$ the spectral orthogonal projections of $\mathfrak{L}_1$ and resp. of $\mathfrak{L}$ on the spectral interval $L:=(\lambda_1-\alpha_0,\lambda_1+\alpha_0)$. Then, for any $\eta_1>0$ there exists $\h_1\in(0,\h_0]$ such that for $\h\in(0,\h_1]$ we have that $\big\|\Pi_1-\Pi\big\|_{\mathbb{B}(L^2(\mathbb{R};\mathbb{C}^2))}\,<\,\eta_1$.
\end{proposition}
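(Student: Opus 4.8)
The plan is to express both spectral projections as Riesz integrals over the same contour and estimate the resolvent difference. First I would choose the contour $\partial K$ to be the boundary of a disk $K$ centered at $\lambda_1$ of radius, say, $2\alpha_0$, chosen (using that all eigenvalues of $\mathfrak{L}$ are simple and isolated, cf. Theorem \ref{T-HR}) so that $\partial K$ avoids $\sigma(\mathfrak{L})$ and encloses exactly the one eigenvalue $\lambda_1$; more precisely one should shrink $\alpha_0$ so that $\dist(\partial K,\sigma(\mathfrak{L}))\geq\alpha_0/2$. By Proposition \ref{P-sp-kIdelta} applied to this compact $K$ (or rather to $\partial K$), there is $\h_0>0$ such that for $\h\in(0,\h_0]$ the rescaled contour $\partial K$ lies in the resolvent set of $\mathfrak{L}_1=\h^{-1/2}\Op_1(\widetilde{\mathcal{k}}_\h)$, so that we may write
\begin{equation*}
\Pi_1=\frac{1}{2\pi i}\oint_{\partial K}(\mathfrak{L}_1-\lambda)^{-1}\,d\lambda,\qquad \Pi=\frac{1}{2\pi i}\oint_{\partial K}(\mathfrak{L}-\lambda)^{-1}\,d\lambda,
\end{equation*}
and hence
\begin{equation*}
\Pi_1-\Pi=\frac{1}{2\pi i}\oint_{\partial K}\Big[(\mathfrak{L}_1-\lambda)^{-1}-(\mathfrak{L}-\lambda)^{-1}\Big]\,d\lambda.
\end{equation*}

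Next I would control the integrand uniformly in $\lambda\in\partial K$. The key input is the approximate-inverse identity \eqref{F-01}: the operator $R_{app}(\lambda)$ built in \eqref{F-02} satisfies $R_{app}(\lambda)\circ(\mathfrak{L}_1-\lambda)=\h^{1/2}\big(\bb1+\Op_1(\mathfrak{r}_{\h,\lambda})\big)$ after dividing by $\h^{1/2}$ — being careful with the rescaling between $\Op_1(\widetilde{\mathcal{k}}_\h)-\h^{1/2}\lambda$ and $\h^{1/2}(\mathfrak{L}_1-\lambda)$ — with $\|\Op_1(\mathfrak{r}_{\h,\lambda})\|<2\eta_0<1$ for $\h$ small, uniformly for $\lambda\in\partial K$ (this uniformity is exactly what Proposition \ref{P-sp-kIdelta} and its proof provide since $\partial K$ is at a fixed positive distance from $\sigma(\mathfrak{L})$). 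Therefore $(\mathfrak{L}_1-\lambda)^{-1}=\big(\bb1+\Op_1(\mathfrak{r}_{\h,\lambda})\big)^{-1}\h^{-1/2}R_{app}(\lambda)$. Now $\h^{-1/2}R_{app}(\lambda)=\Op_1(\widetilde{\chi}_{1,\varrho^{-2}\h})\,\h^{-1/2}R_1(\lambda)+\h^{-1/2}R_2(\lambda)$; using \eqref{F-03}, $\h^{-1/2}R_1(\lambda)=\h^{-1}(\mathfrak{L}-\lambda)^{-1}$ — here I would re-examine the powers of $\h$ so that the final statement has no net power of $\h$, which is forced by the fact that $\Pi_1$ is a projection — and the point is that modulo $\Op_1(\widetilde{\chi}_{1,\varrho^{-2}\h})$, which converges strongly (indeed in a suitable operator-norm sense after composing with the rapidly decaying symbol $\mathfrak{r}(\lambda)$) to the identity as $\h\to0$, the leading term of $(\mathfrak{L}_1-\lambda)^{-1}$ is $(\mathfrak{L}-\lambda)^{-1}$, and the corrections — the $\Op_1(\mathfrak{r}_{\h,\lambda})$ Neumann-series tail, the commutator terms from \eqref{F-rest-chi1s1}, the term $R_2(\lambda)$ which has symbol supported away from $(0,0)$, and the symbol $\mathfrak{x}_1,\mathfrak{x}_2$ remainders — are all $\mathcal{O}(\h^{1/2})$ in operator norm, uniformly for $\lambda\in\partial K$. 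Hence
\begin{equation*}
\sup_{\lambda\in\partial K}\big\|(\mathfrak{L}_1-\lambda)^{-1}-(\mathfrak{L}-\lambda)^{-1}\big\|_{\mathbb{B}(L^2(\mathbb{R};\mathbb{C}^2))}\ \leq\ C\,\eta_0\ +\ C'\h^{1/2}.
\end{equation*}

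Finally I would integrate this estimate over $\partial K$, whose length is bounded by $C\alpha_0$, to obtain $\|\Pi_1-\Pi\|\leq C(\eta_0+\h^{1/2})$. Given the target $\eta_1>0$: first fix $\varrho>0$ and hence $\eta_0>0$ small enough that $C\eta_0<\eta_1/2$ (this is legitimate because the choice of $\varrho,\eta_0$ in \eqref{F-eta0}–\eqref{F-04} is independent of $\h$, and once fixed, the constants $C,C'$ in the resolvent estimate are fixed); then pick $\h_1\in(0,\h_0]$ with $C'\h_1^{1/2}<\eta_1/2$, so that $\|\Pi_1-\Pi\|<\eta_1$ for all $\h\in(0,\h_1]$, as claimed.

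The main obstacle I anticipate is the careful bookkeeping of the various powers of $\h^{1/2}$ and of the rescalings ($\mathfrak{d}_{\h^{1/2}}$, the substitution $\widetilde{\chi}_{1,\varrho^{-2}\h}$, the dilated symbols $\widetilde{r}_{2,\lambda,\h}$) so that the error terms genuinely come out $\mathcal{O}(\eta_0)+\mathcal{O}(\h^{1/2})$ with all constants uniform in $\lambda$ over the fixed contour $\partial K$; in particular one must verify that the cut-off $\Op_1(\widetilde{\chi}_{1,\varrho^{-2}\h})$, which does \emph{not} tend to the identity in operator norm by itself, does so after composition with the smoothing resolvent symbol $\mathfrak{r}(\lambda)$ of $\mathfrak{L}$ (whose symbol decays like $(1+t^2+\tau^2)^{-1/2}$ with all derivatives, by Theorem \ref{T-HR}(6)), since $(1-\widetilde{\chi}_{1,\varrho^{-2}\h})$ is supported where $|t|^2+|\tau|^2\gtrsim\varrho^2\h^{-1}\to\infty$. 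This is essentially a quantitative version of the reasoning already present in the proof of Proposition \ref{P-sp-kIdelta}, so no genuinely new idea is needed beyond assembling those estimates into a norm bound on the resolvent difference along $\partial K$.
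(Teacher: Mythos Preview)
Your approach is essentially the paper's: both express $\Pi_1$ and $\Pi$ as Riesz integrals over the same contour and feed in the approximate resolvent $R_{app}(\lambda)$ built in the proof of Proposition \ref{P-sp-kIdelta}. Two small points where the paper's argument is cleaner than yours are worth noting.

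First, your power-of-$\h$ bookkeeping is indeed off where you flagged it. From \eqref{F-01} and $\Op_1(\widetilde{\mathcal{k}}_\h)-\h^{1/2}\lambda=\h^{1/2}(\mathfrak{L}_1-\lambda)$ one gets $(\mathfrak{L}_1-\lambda)^{-1}=\h^{+1/2}R_{app}(\lambda)-\Op_1(\mathfrak{r}_{\h,\lambda})(\mathfrak{L}_1-\lambda)^{-1}$, hence
\[
(\mathfrak{L}_1-\lambda)^{-1}=\Op_1(\widetilde{\chi}_{1,\varrho^{-2}\h})(\mathfrak{L}-\lambda)^{-1}+\h^{1/2}R_2(\lambda)-\Op_1(\mathfrak{r}_{\h,\lambda})(\mathfrak{L}_1-\lambda)^{-1},
\]
so the $R_2$ contribution already carries an $\h^{1/2}$ and no blow-up occurs.

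Second, the paper avoids estimating two of your terms pointwise on the contour. It observes that $R_2(\lambda)$ is \emph{holomorphic} in $\lambda$ on a neighbourhood of the contour, so its contour integral vanishes identically; and rather than bounding $(\bb1-\Op_1(\widetilde{\chi}_{1,\varrho^{-2}\h}))(\mathfrak{L}-\lambda)^{-1}$, it first integrates to get $\Pi_1=\Op_1(\widetilde{\chi}_{1,\varrho^{-2}\h})\Pi+\mathcal{O}(\eta_0)$ and then bounds $(\bb1-\Op_1(\widetilde{\chi}_{1,\varrho^{-2}\h}))\Pi$, which is $\mathcal{O}(\h^\infty)$ since $\Pi$ has a symbol in $\mathscr{S}(\mathbb{R}^2)$ (rank-one projection onto a Schwartz eigenfunction). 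Your route---estimating the resolvent difference itself via the decay of the resolvent symbol $\mathfrak{r}(\lambda)$ on the support of $1-\widetilde{\chi}_{1,\varrho^{-2}\h}$---also works and gives $\mathcal{O}(\h^{1/2})$, but is a bit more laborious. Either way the conclusion $\|\Pi_1-\Pi\|\leq C\eta_0+C'\h^{1/2}$ follows, and the choice of $\eta_0$ then $\h_1$ is exactly as you outline.
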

\begin{proof}
We know that these projections are obtained 
by integrating the resolvent around a small circle in $\mathbb C$ centered at $\lambda_1$. For these resolvents  we have found an explicit formula.
Let us look at
\begin{equation*}
\frac{1}{2\pi}  \int_{|\lambda|=\alpha_0} (\mathfrak{L}_1 -\lambda)^{-1} d\lambda\,.
\end{equation*}
 We shall use the notations and elements defined in the proof of Proposition \ref{P-sp-kIdelta}. Using \eqref{F-01}, \eqref{F-02} and \eqref{F-03}, we can write the equation
$$
(\mathfrak{L}_1-\lambda)^{-1} = \Op_1\big(\widetilde{\chi}_{1,\varrho^{-2}\h}\big)(\mathfrak{L}-\lambda)^{-1} +
h^{-\frac 12}  R_2(\lambda) - \Op_1\big(\mathfrak r_{h,\lambda}\big) (\mathfrak{L}_1-\lambda)^{-1}
$$
By integration, observing that $ R_2(\lambda)$ depends holomorphically of $\lambda\in K$ and using the estimation \eqref{F-04} we get 
$$
\Pi_1 = \Op_1\big(\widetilde{\chi}_{1,\varrho^{-2}\h}\big)\Pi  + \mathcal O (\eta_0)\,.
$$
We can choose $\eta_0$ and $\varrho$ such that for $\h$ small enough
$$
||  \Pi_1 - \Op_1\big(\widetilde{\chi}_{1,\varrho^{-2}\h}\big)\Pi  ||_{\mathbb B (L^2(\mathbb R,\mathbb C^2))} \leq \frac{1}{2}\,.
$$
We end the proof by showing that:
$$
|| ( 1- \Op_1\big(\widetilde{\chi}_{1,\varrho^{-2}\h}\big)\Pi ||_{\mathbb B (L^2(\mathbb R,\mathbb C^2))} = \mathcal O (h^\frac 12)\,.
$$
This results of the Calderon-Vaillancourt theorem noting that $\Pi$ has a symbol in $\mathscr S(\mathbb R^2,\mathbb C^2)$ and that the support of the symbol $(t,\tau) \mapsto ( 1-  \chi (\varrho^{-1}\h^{1/2}t,\varrho^{-1}\h^{1/2}\tau)))$ is contained in 
$t^2 + \tau^2 \geq C\h^{-1/2}$. Actually this norm is $\mathcal O (\h^\infty)$. Finally, for any $\eta_1>0$,  we have found  $\h_1>0$ such that for $\h\in (0,\h_1]$ we have the estimation
$$ ||  \Pi-\Pi_1 ||_{\mathbb B (L^2(\mathbb R,\mathbb C^2))} < \eta_1\,.
$$
\end{proof}
As a consequence, choosing $\eta_1 < 1$, we can see that the rank of $\Pi_1$ is exactly one. Hence the interval $(\lambda_1-\alpha_0,\lambda_1 + \alpha_0)$ contains for $\h$ small enough a unique eigenvalue of $\h^{-1/2}\Op_1( \mathcal{k}_{I,\delta,\h})$.
Moreover using the results in Subsection \ref{SS-qmodes00} we have a complete expansion in powers of $h^\frac 12$ of this eigenvalue.

\subsection{The periodic semi-classical symbol.}

We remark that we are in the situation treated in \S2 ("cas de plusieurs puits microlocaux") in \cite{HS1} with a family $\big\{P_{\alpha^*}:=\tau_{\alpha^*}\mathcal{k}^\circ_{I,\delta}\big\}_{\alpha^*\in\Gamma_*}$ of symbols on $\mathbb{R}^2$, uniformly elliptic outside any neighborhood of the form $B_{d_1}(\theta_0+\alpha^*)$ and a symbol $P=\mathcal{k}_{I,\delta}$ elliptic outside the union of the above countable family of balls that coincides with $P_{\alpha^*}$ on $B_{d_1}(\theta_0+\alpha^*)$. Moreover, in our case all the operators $\Op_1(P_\alpha^*)$ are unitary equivalent. We have to study the spectrum of each of the operators $\Op_1(P_\alpha^*)$ and then (2.25) and results in \S2 of \cite{HS1} tell us that the spectrum of $\Op_1(\widetilde{\mathcal{k}}_{I,\delta,\h})$ in a neighborhood of 0, is contained in a $\mathcal O(h^\infty)$-neighborhood of the spectrum of the 'one crossing point' Hamiltonian $\Op_1(\widetilde{\mathcal{k}}^\circ_{I,\delta,\h})$. More precisely, we obtain the following statement.

\begin{proposition}\label{P-sp-location}~
\begin{enumerate}
\item	For any $L >0$ as in the statement of Theorem \ref{T-Main-2}, with $\text{\tt d}_H$ defined in \eqref{apr-1}, there exist $\h_0$ and $C_0$ such that:
	\begin{equation*}
	\text{\tt d}_H\Big(\sigma\big(\Op_1(\mathcal{k}_{I,\delta,\h})\big)\cap (-L\h^{1/2}, L\h^{1/2}) , \h^{1/2} \sigma (\mathfrak{L})\cap (-L\h^{1/2}, L\h^{1/2})\Big)\leq C_0 \h \,,\, \forall \h \in (0,\h_0]\,.
	\end{equation*}
\item	Moreover, for each eigenvalue $\lambda_n$ of $\sigma(\mathfrak{L})$, there exists an asymptotic series $$\lambda(\h)\sim \h^\frac 12\left( 
	\lambda_n +\sum_{j>0}\lambda_{n,j} \h^{\frac j2}\right)$$ such that if we denote by
	$$
	\lambda_{n,N}(\h):=\lambda_n +\sum_{1\leq l\leq N}\lambda_{n,j} \h^{\frac j2},\quad\forall N\in\mathbb{N}\setminus\{0\}
	$$
	then $\forall N\in\mathbb{N}\setminus\{0\}$ there exists $C_N>0$ and $\h_N\in(0,\h_{N-1}]$ such that for any $\h\in(0,\h_N]$ we have that
	\begin{align*}
	\sigma \big(\Op_1(\mathcal{k}_{I,\delta,\h})\big) &\cap  \Big (\h^{1/2} (\lambda_n - C_0 \h^{1/2}) , \h^{1/2} (\lambda_n + C_0 \h^{1/2})\Big )
	= \\
	&= \sigma\big(\Op_1(\mathcal{k}_{I,\delta,\h})\big) \cap  \Big (\lambda_{n,N}(\h) - C_N \h^{N/2}\, , \, \lambda_{n,N}(\h) + C_N \h^{N/2}\Big )\,.
	\end{align*}
	\end{enumerate}
\end{proposition}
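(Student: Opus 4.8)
\textbf{Proof plan for Proposition \ref{P-sp-location}.}

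The plan is to combine the spectral analysis of the ``one crossing point'' operator from Subsection \ref{SS-1-p-scls} with the decoupling mechanism of \S2 in \cite{HS1}. First I would set up the geometric picture: by the decoupling result the symbol $\mathcal{k}_{I,\delta}$ differs from each translated ``one crossing point'' symbol $P_{\alpha^*}=\tau_{\alpha^*}\mathcal{k}^\circ_{I,\delta}$ only away from the ball $B_{d_1}(\theta_0+\alpha^*)$, and on that ball both symbols coincide and are given near the crossing by the linearisation $\mathfrak{l}$ plus higher order terms (Proposition \ref{P-Tdev-tildek}). Since all the $\Op_1(P_{\alpha^*})$ are unitarily equivalent (they are translates of one another in phase space, implemented by metaplectic operators), it suffices to understand the spectrum of $\Op_1(\mathcal{k}^\circ_{I,\delta,\h})$ near $0$, which I already did: by Proposition \ref{P-sp-kIdelta} and the subsequent corollaries, $\h^{-1/2}\sigma\big(\Op_1(\mathcal{k}^\circ_{I,\delta,\h})\big)$ is, inside a fixed compact $K$, at Hausdorff distance $\mathcal{O}(\h^{1/2})$ from $\sigma(\mathfrak{L})\cap K$, and near each simple eigenvalue $\lambda_n$ of $\mathfrak{L}$ there is exactly one eigenvalue admitting the asymptotic expansion $\lambda_n+\sum_{j>0}\lambda_{n,j}\h^{j/2}$ constructed via the quasi-modes of Subsection \ref{SS-qmodes00}.

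Next I would invoke the harmonic approximation / multi-well machinery of \S2 of \cite{HS1}: formula (2.25) and the accompanying results there state precisely that the spectrum of the ``full'' operator $\Op_1(\widetilde{\mathcal{k}}_{I,\delta,\h})$ in a neighbourhood of $0$ is contained in an $\mathcal{O}(\h^\infty)$-neighbourhood of the spectrum of the single-well operator $\Op_1(\widetilde{\mathcal{k}}^\circ_{I,\delta,\h})$, and conversely. The point is that the interaction between distinct microlocal wells $B_{d_1}(\theta_0+\alpha^*)$ is exponentially (hence $\mathcal{O}(\h^\infty)$) small, because away from these balls the symbol is uniformly elliptic with a spectral gap of fixed width $d_0/C$ around $0$, so an Agmon-type tunnelling estimate applies. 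Combining this with the $\mathcal{O}(\h)$ estimate for the single-well problem (which dominates the $\mathcal{O}(\h^\infty)$ decoupling error) gives the Hausdorff bound of part (1): $\text{\tt d}_H\big(\sigma(\Op_1(\mathcal{k}_{I,\delta,\h}))\cap(-L\h^{1/2},L\h^{1/2}),\,\h^{1/2}\sigma(\mathfrak{L})\cap(-L\h^{1/2},L\h^{1/2})\big)\leq C_0\h$. For part (2), I would transport the quasi-mode expansion through the decoupling: the eigenvalue of $\Op_1(\mathcal{k}_{I,\delta,\h})$ in the window $\h^{1/2}(\lambda_n-C_0\h^{1/2},\lambda_n+C_0\h^{1/2})$ differs from the single-well eigenvalue by $\mathcal{O}(\h^\infty)$, hence inherits the same asymptotic series $\h^{1/2}(\lambda_n+\sum_{j>0}\lambda_{n,j}\h^{j/2})$ term by term, and the localisation statement $\sigma\cap(\text{large window})=\sigma\cap(\text{window of size }C_N\h^{N/2})$ follows because simple eigenvalues of $\mathfrak{L}$ are separated by a fixed distance, so for $\h$ small each window $\h^{1/2}(\lambda_n-C_0\h^{1/2},\lambda_n+C_0\h^{1/2})$ isolates exactly one eigenvalue and the remainder controls its position to order $\h^{N/2}$.

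The main obstacle I anticipate is a bookkeeping one rather than a conceptual one: one must check carefully that the hypotheses under which \S2 of \cite{HS1} is proved are genuinely met in our ``touching bands'' setting, since, as noted in the text, the case of a conical crossing is only treated there in a particular situation. Concretely, the delicate points are (i) that the uniform ellipticity of $\mathcal{k}_{I,\delta}$ outside $\bigcup_{\gamma^*}B_R(\theta_0+\gamma^*)$ with a gap of fixed width around $0$ (Proposition \ref{P-kI}(2) and the decoupling proposition) really does supply the Agmon-distance lower bound needed for the $\mathcal{O}(\h^\infty)$ interaction estimate, and (ii) that the single-well analysis of Subsection \ref{SS-1-p-scls}, carried out for a $2\times 2$-matrix-valued symbol via a global Beals-type calculus, plugs into the scalar framework of \cite{HS1} without loss — this is where the matrix structure could in principle create difficulties, but the globally elliptic structure of $\mathfrak{l}$ (from \eqref{F-est-det-k}) and Theorem \ref{T-HR} show the resolvent behaves exactly as in the scalar case. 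Once these two checks are in place, assembling parts (1) and (2) is routine.
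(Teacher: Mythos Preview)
Your proposal is correct and follows essentially the same approach as the paper: the paper itself does not give a detailed proof but simply invokes the multi-well machinery of \S2 (formula (2.25)) in \cite{HS1} to reduce the periodic problem to the single-well one up to $\mathcal{O}(\h^\infty)$, then reads off parts (1) and (2) from the single-well analysis of Subsection \ref{SS-1-p-scls}. Your plan fleshes out exactly this strategy, and the two obstacles you flag (verifying the ellipticity/gap hypotheses of \cite{HS1} in the matrix-valued touching-bands setting) are precisely the points the paper acknowledges as requiring attention beyond the scalar Harper case.
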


\section{End of the proof of Theorem \ref{T-Main-2}.}\label{S-final}

As already stated in the Remark \ref{P-CP1}, due to Theorem 3.1 in \cite{CP-1}, Theorem \ref{T-Main-2} follows once we have proved the case $\kappa=0$ of the constant field $\epsilon B^\circ$. The main idea of the proof is to compare the spectrum of $H^\epsilon_\Gamma$ in the neighborhood of $0\in\mathbb{R}$ with the spectrum of the operator $\h^{1/2}\mathcal{L}$ in the same neighborhood

First let us fix some neighborhood $B_R(\theta_0)\subset\Sigma_I$ of $\theta_0$ needed for the construction in \eqref{DF-mItheta} and \eqref{DF-Phi-pm} and let us redefine our spectral window as  $I_R:=\big(-\Lambda_R\,,\,\Lambda_R\big)\cap \mathring{I}$ with $\Lambda_R>0$ from Proposition \ref{P-restr-sp-window}.

We have to take into account that there exists some $\epsilon_F>0$, as required by Definition \ref{D-F-epsilon} such that the projection $P^\epsilon_I$ exists for any $\epsilon\in[0,\epsilon_F]$. Then let us fix some compact interval $I_\circ\subset I_R$ containing $0$ in its interior, denote by $\epsilon_1>0$ the upper bound for the magnetic field intensity $\epsilon\geq0$ associated to $I_\circ\subset \mathring{I}$ by Remark \ref{R-control-2}.

Let us fix some $N\in\mathbb{N}\setminus\{0\}$ and having in view Theorem \ref{T-HR}, let us consider the first $N+1$ strictly positive eigenvalues $\{\lambda_1,\ldots\lambda_{N+1}\}$ of $\mathcal{L}$. They have multiplicity 1 and we may define $\mu_N:=\lambda_N+(\lambda_{N+1}-\lambda_N)/2>\lambda_N>0$. Let us choose some $\widetilde{\epsilon}_0\leq\min\big\{\epsilon_F,\epsilon_1\big\}>0$ such that 
$$
\big(-\mu_N\sqrt{\widetilde{\epsilon}_0B^\circ}\,,\,\mu_N\sqrt{\widetilde{\epsilon}_0B^\circ}\big)\subset\mathring{I}_\circ.
$$
In the following, for any $a>0$ we shall denote by 
$$
J^\epsilon_N:=\big(-\mu_N\sqrt{\epsilon B^\circ}\,,\,\mu_N\sqrt{\epsilon B^\circ}\big)\subset\mathbb{R}.
$$
Using point (1) of Proposition \ref{P-sp-location} with $L:=\mu_N$ in conjunction with Proposition \ref{C-equi-scl}, we conclude that there exists some $\epsilon_0:=(B^\circ)^{-1}\h_0\in(0,\widetilde{\epsilon}_0)$ and some $C_0>0$ such that 
\begin{equation}\label{C-1}
	\text{\tt d}_H\Big(\sigma\big(\Op^\epsilon(\mathcal{k}_{I,\delta})\big)\cap J^{\epsilon}_{N}, \sqrt{\epsilon B^\circ} \sigma (\mathfrak{L})\cap J^{\epsilon}_{N}\Big)\leq C_0B^\circ \epsilon \,,\, \forall \epsilon \in (0,\epsilon_0]\,.
	\end{equation}

Considering now the Propositions \ref{P-magn-matrix-ke} and \ref{P-kepsilon-k}, the above conclusion may be rephrased as: there exists some $C_1>0$ such that  
\begin{equation}\label{C-2}
	\text{\tt d}_H\Big(\sigma\big(\widetilde{H}^\epsilon_{I,\delta}\big)\cap J^{\epsilon}_{N}, \sqrt{\epsilon B^\circ} \sigma (\mathfrak{L})\cap J^{\epsilon}_{N}\Big)\leq C_1B^\circ \epsilon \,,\, \forall \epsilon \in (0,\epsilon_0]\,.
	\end{equation}

Now, for any $\epsilon\in(0,\epsilon_0]$ let us use Proposition \ref{P-magn-FS-est} with $I_\circ=J^\epsilon_N$ and obtain the following Corollary.

\begin{corollary}\label{C-magn-FS-est}
With $N\in\mathbb{N}\setminus\{0\}$ and $\epsilon_0>0$ fixed as above, for any $\epsilon\in[0,\epsilon_0]$ we have the following estimation:
$$\max\left \{ \sup_{e\in\sigma(H^{\epsilon}_\Gamma)\cap J^\epsilon_N}\dist\Big(e,\sigma(\widetilde{H}^{\epsilon}_{I,\delta})\Big),\; \sup_{e\in \sigma(\widetilde{H}^{\epsilon}_{I,\delta})\cap J^\epsilon_N}\dist\Big(e,\sigma(H^{\epsilon}_\Gamma)\Big)\right \}\leq C(N)\; \epsilon.
$$
\end{corollary}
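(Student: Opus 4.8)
The plan is to read the Corollary off Proposition \ref{P-magn-FS-est}, fed with the \emph{shrinking} window $\overline{J^\epsilon_N}:=[-\mu_N\sqrt{\epsilon B^\circ},\mu_N\sqrt{\epsilon B^\circ}]$ rather than a fixed interval; the entire content is that the right-hand side there scales like $\ell_{I_\circ}^2$, and here $\ell_{\overline{J^\epsilon_N}}=\sup\{|t|:t\in\overline{J^\epsilon_N}\}=\mu_N\sqrt{\epsilon B^\circ}$, so $\ell_{\overline{J^\epsilon_N}}^2=\mu_N^2B^\circ\,\epsilon$ supplies exactly the announced factor $\epsilon$, provided everything else is controlled uniformly in $\epsilon\in(0,\epsilon_0]$.

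First I would record the uniformity ingredients. Since $\epsilon_0<\widetilde\epsilon_0\le\min\{\epsilon_F,\epsilon_1\}$ and $\overline{J^\epsilon_N}\subset\overline{J^{\epsilon_0}_N}\subset J^{\widetilde\epsilon_0}_N\subset\mathring I_\circ\subset\mathring I$ for every $\epsilon\in(0,\epsilon_0]$, Proposition \ref{P-Hyp-H-magn} together with Remark \ref{R-control-2} show that the triple $\big(H^\epsilon_\Gamma,P^\epsilon_{I,\delta},\overline{J^\epsilon_N}\big)$ is admissible in the sense of Definition \ref{H-band-red} for all these $\epsilon$: the admissibility threshold and the invertibility bounds entering \eqref{F-RIepsilon} are governed by $\dist\big(\overline{J^\epsilon_N},\R\setminus\mathring I\big)$, which only increases as $\epsilon$ decreases and is bounded below by $d_0:=\dist\big(\overline{J^{\epsilon_0}_N},\R\setminus\mathring I\big)>0$. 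Also $\Lambda_I=\min\{\Lambda_-,\Lambda_+\}$ is a fixed constant, and, writing $H^\epsilon_\Gamma P^\epsilon_{I,\delta}=\Op^\epsilon\big(h\,\sharp^\epsilon\,p^\epsilon_{I,\delta}\big)$ and invoking Proposition \ref{F-est-dif-bdproj} (so that $p^\epsilon_{I,\delta}$ stays in a bounded subset of $S^{-\infty}(\X\times\X^*)_{2\times2}$) together with the continuity of $\sharp^\epsilon$ and of $\Op^\epsilon$ on $S^0_1(\X\times\X^*)$ uniformly in $\epsilon$, one gets $M:=\sup_{\epsilon\in[0,\epsilon_0]}\|H^\epsilon_\Gamma P^\epsilon_{I,\delta}\|_{\mathbb B(L^2(\X))}<\infty$.

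Then, for each fixed $\epsilon\in(0,\epsilon_0]$, I would apply Proposition \ref{P-magn-FS-est} with $I_\circ:=\overline{J^\epsilon_N}$. Its right-hand side is bounded by
\[
\|H^\epsilon_\Gamma P^\epsilon_I\|^2\Big(\tfrac{\ell_{\overline{J^\epsilon_N}}}{\Lambda_I}\Big)^2\tfrac{1}{d_{\overline{J^\epsilon_N}}}\ \le\ M^2\,\frac{\mu_N^2 B^\circ}{\Lambda_I^2\,d_0}\,\epsilon\ =:\ C(N)\,\epsilon .
\]
Since $\sigma\big(H^\epsilon_\Gamma\big)\cap J^\epsilon_N\subset\sigma\big(H^\epsilon_\Gamma\big)\cap\overline{J^\epsilon_N}$ and likewise for $\widetilde H^\epsilon_{I,\delta}$, the supremum over the open interval $J^\epsilon_N$ is dominated by the one over $\overline{J^\epsilon_N}$, so the same bound $C(N)\epsilon$ holds with $J^\epsilon_N$ in place of $\overline{J^\epsilon_N}$; the case $\epsilon=0$ being trivial ($J^0_N=\emptyset$), this completes the proof.

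There is no deep obstacle here — the statement is essentially bookkeeping. The one point that needs genuine care is precisely the one just indicated: one must plug the $\epsilon$-dependent interval $\overline{J^\epsilon_N}$, and not a fixed one, into Proposition \ref{P-magn-FS-est}, and then verify that the admissibility threshold, the resolvent bounds in \eqref{F-RIepsilon}, the distance $d_{\overline{J^\epsilon_N}}$ and the norm $\|H^\epsilon_\Gamma P^\epsilon_{I,\delta}\|$ are all controlled uniformly in $\epsilon\in(0,\epsilon_0]$, so that $\ell_{I_\circ}^2=O(\epsilon)$ remains the only source of $\epsilon$-dependence in the estimate.
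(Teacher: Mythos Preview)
Your proof is correct and follows exactly the approach the paper intends: apply Proposition~\ref{P-magn-FS-est} with the $\epsilon$-dependent window $I_\circ=\overline{J^\epsilon_N}$ and observe that $\ell_{I_\circ}^2=\mu_N^2B^\circ\,\epsilon$ while $d_{I_\circ}$, $\Lambda_I$, and $\|H^\epsilon_\Gamma P^\epsilon_{I,\delta}\|$ stay bounded uniformly. The paper states only the one-line instruction ``use Proposition~\ref{P-magn-FS-est} with $I_\circ=J^\epsilon_N$''; your write-up is more careful in making the uniformity in $\epsilon$ explicit (via Remark~\ref{R-control-2} and Proposition~\ref{F-est-dif-bdproj}) and in passing to the closed interval, but the substance is identical.
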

Let us recall that the arguments leading to Corollary \ref{C-magn-FS-est} depend on the parameter $\delta\in(0,\delta_0)$ and the same thing is true for the constant $C_N>0$ and for $\epsilon_0>0$.
Putting together Corollary \ref{C-magn-FS-est} and the above conclusion \eqref{C-2} we obtain the Theorem \ref{T-Main-2} with $\kappa=0$.

\appendix

\section*{Appendices.}

\section{Construction of a global smooth section}\label{A-BBdl}
In order to simplify notation, we will work with $\mathbb{Z}^2$ instead of $(2\pi \mathbb{Z})^2$ periodicity in $\theta$. 
\begin{lemma}\label{L-A1}
	Let us consider a smooth family of orthogonal projections $P(\theta)$ living in some separable complex Hilbert space. Assume that the family is $\mathbb{Z}^2$-periodic in $\theta\in \R^2$ and the rank of $P(\theta)$ is $d\geq 2$. Let $B:=B_r(0)$ with $r<1/2$ be an open disk such that $\overline{B}\subset (-1/2,1/2)^2$. Let us assume that 
$ \{u_j(\theta)\}_{j=1}^d$ form a smooth and orthonormal basis of ${\rm Ran\,}P(\theta)$ when $\theta\in B$. Then  one can construct a continuous and $\mathbb{Z}^2$-periodic unit vector $\psi(\theta)\in {\rm Ran\,}P(\theta)$  such that $\psi(\theta)=u_1(\theta)$ on $B$. 
\end{lemma}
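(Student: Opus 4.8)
The idea is to extend the single section $u_1$ from the disk $B$ to a continuous, $\mathbb{Z}^2$-periodic section of unit vectors in $\mathrm{Ran}\,P(\theta)$. Since the fundamental group of the $2$-torus is abelian with two generators and the unitary group $\mathbb{U}(d)$ for $d\geq 2$ is path-connected with $\pi_1(\mathbb{U}(d))\cong\mathbb{Z}$, the obstruction to such an extension lives in the cohomology of the torus with the relevant bundle-of-projections; but because $d\geq 2$ there is enough room in each fiber to absorb this obstruction into a unit vector. Concretely, first I would use the contractibility of the closed square $[-1/2,1/2]^2$ (or just of $\overline{B}$ together with a retraction of a slightly larger square) to fix a smooth orthonormal frame of $\mathrm{Ran}\,P(\theta)$ on a neighborhood of $\overline B$, agreeing with $\{u_j(\theta)\}_{j=1}^d$ on $B$: this is the Sz.-Nagy intertwining trick already used in the paper (see \eqref{F-Nagy-unit}), applied against the value $P(\theta_*)$ at a basepoint.

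Next I would handle the rest of the torus. Cover $\mathbb{T}^2\setminus B$ by finitely many contractible open sets $\{O_k\}$, each small enough that $P(\theta)$ is, via the Sz.-Nagy unitary, constantly equal to a fixed rank-$d$ projection on $O_k$; on each $O_k$ this gives a smooth orthonormal frame $\{v^{(k)}_j(\theta)\}_{j=1}^d$. On overlaps $O_k\cap O_\ell$ (and on the overlap with $B$) the frames differ by smooth transition functions $g_{k\ell}(\theta)\in\mathbb{U}(d)$. I then want to produce a single continuous unit section $\psi$; equivalently, a continuous choice $\psi(\theta)=\sum_j c_j^{(k)}(\theta)v^{(k)}_j(\theta)$ with $\sum_j|c_j^{(k)}|^2=1$, compatible on overlaps, and equal to $v^{(B)}_1=u_1$ on $B$. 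The compatibility condition is that the vector of coefficients transforms by the unitary cocycle, so what is needed is a continuous nowhere-vanishing section of the associated rank-$d$ \emph{vector} bundle $E\to\mathbb{T}^2$ that restricts to the constant section $(1,0,\dots,0)$ over $B$. Since $B$ is contractible, restricting to $\mathbb{T}^2$ is the same as working with $E$ over the torus; and a rank-$d$ complex vector bundle over a $2$-complex with $d\geq 2$ always admits a nowhere-vanishing section — indeed it splits off a trivial line bundle — by the standard obstruction-theory argument: the only potential obstruction lives in $H^2(\mathbb{T}^2;\pi_1(S^{2d-1}))$, and $\pi_1(S^{2d-1})=0$ for $d\geq 2$. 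Partition-of-unity gluing then produces such a section, which I can normalize (dividing by its norm) to a unit section $\psi$; a further small homotopy pins it to equal $u_1$ exactly on a slightly smaller disk, and shrinking $r$ at the outset makes this harmless.

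Finally I would assemble the pieces: the local unit-vector data on the $O_k$'s and on $B$ patch to a globally continuous $\psi(\theta)\in\mathrm{Ran}\,P(\theta)$, $\mathbb{Z}^2$-periodic because the construction respects the periodicity (the projections, frames on $O_k$, and cocycles are all defined on the torus), and $\psi\equiv u_1$ on $B$ by construction. If smoothness rather than mere continuity is wanted, one smooths $\psi$ by convolution with a mollifier supported away from $B$, projects back by $P(\theta)$, and renormalizes; the smoothing changes $\psi$ by an arbitrarily small amount so it stays nonzero and can be arranged to keep the value $u_1$ on $B$.

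\textbf{Main obstacle.} The crux is the obstruction-theoretic step: showing that the rank-$d$ bundle of ranges $\mathrm{Ran}\,P(\theta)$ over $\mathbb{T}^2$ admits a nowhere-zero section extending the prescribed one on $B$. This is exactly where the hypothesis $d\geq 2$ is essential — for $d=1$ the analogous statement fails (a line bundle with nonzero first Chern class has no nonvanishing section), which is why the paper needs $k_0\geq 2$. Everything else (Sz.-Nagy intertwiners, partitions of unity, mollification) is routine; the care required is only in making the patching respect both the $\mathbb{Z}^2$-periodicity and the boundary condition on $B$ simultaneously.
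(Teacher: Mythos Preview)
Your approach is correct and rests on the same topological fact as the paper's proof --- namely that $\pi_1(S^{2d-1})=0$ for $d\geq 2$, so a unit section prescribed on a disk extends over the remaining $2$-cell of the torus --- but the route is genuinely different. The paper gives an explicit, hands-on construction: it parallel-transports a frame along the boundary of $[-1/2,1/2]^2$, straightens it at the corners, then parallel-transports inward along rays toward the origin; on $\partial B$ this outer frame $\{v_j\}$ and the inner frame $\{u_j\}$ differ by a loop of unitaries $W(\varphi)$, whose first column is a loop in $S^{2d-1}$ that is contracted explicitly (Lemma~\ref{PropB}) and used to rotate $u_1$ into $v_1$ across a thin annulus. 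Your argument instead packages everything as a relative section-extension problem for the rank-$d$ bundle $E\to\mathbb{T}^2$ and invokes obstruction theory. The paper's version is more self-contained and yields the section by formula; yours is shorter and makes the role of $d\geq 2$ transparent, but imports more machinery.

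Two small points of care. First, ``partition-of-unity gluing'' is not what produces the nowhere-zero section: a convex combination of local unit sections can vanish. What obstruction theory actually gives you is a cell-by-cell extension (over $0$-cells trivially, over $1$-cells since $S^{2d-1}$ is connected, over the single $2$-cell since $\pi_1(S^{2d-1})=0$); this is the step that replaces the paper's explicit annular interpolation. Second, to get $\psi=u_1$ on $B$ exactly (not just after a further homotopy), you should frame the problem as \emph{relative} obstruction theory for the pair $(\mathbb{T}^2,\overline{B})$ from the start: the relevant obstruction group is $H^2(\mathbb{T}^2,\overline{B};\pi_1(S^{2d-1}))=0$, so the extension exists already matching $u_1$ on $\overline{B}$, and no post-hoc pinning is needed.
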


\begin{proof}
Our proof will be done for a finite rank $d$ but all estimates are uniform in $d$. Most of the ideas are borrowed from \cite{CHN, CM} and \cite{FMP}. We start off by choosing any orthonormal basis $\{v_j\}_{j=1}^d\in {\rm Ran\,}P\big ((-1/2,-1/2)\big )$. By parallel transport we can smoothly extend it to the segment $\{(-1/2,t)\in \R^2:\; -1/2\leq t\leq 1/2\}$. The transported basis $\{v_j(-1/2,t)\}_{j=1}^d$ might differ at $t=1/2$ from the one we started with at $t=-1/2$, but they are related through a unitary because they span the same projection since $P\big ((-1/2,-1/2)\big )=P\big ((-1/2,1/2)\big )$. This unitary can be \enquote{straightened} out \cite{CHN, FMP} and we obtain the same basis at the corner $(-1/2,1/2)$. Repeating this construction for the other three boundary segments of $[-1/2,1/2]^2$ we may assume that we have a continuous basis $\{v_j(\theta)\}_{j=1}^d\in {\rm Ran\,}P(\theta)$ (when $\theta$ is restricted to the boundary of $[-1/2,1/2]^2$) which is the same on the opposite sides and can be \enquote{periodized} to the boundary of any unit square with center at some point in $\mathbb{Z}^2$. The remaining problem is to \enquote{continuously propagate} this basis inside the unit square and to match it with whatever comes out of $B$. 

We extend by parallel transport $\{v_j(\theta)\}_{j=1}^d$ from the boundary inside $(-1/2,1/2)^2\setminus \overline{B}$ along rays linking points of the boundary with the origin. In this way, on the boundary of $B$ we will have two bases: one from outside (the $v_j$'s) and one from inside (the $u_j$'s). Let us use polar coordinates $\theta:=(\rho,\varphi)$ around the boundary of $B$, which corresponds to $\rho=r$. 

There exists a continuous family of $d\times d$ unitary matrices $W(\varphi)$ which is $2\pi$-periodic such that 
\begin{align}\label{may9}
v_j(r,\varphi)=\sum_{k=1}^d W_{kj}(\varphi) u_k(r,\varphi),\quad W_{kj}(\varphi)=\langle v_j(r,\varphi),u_k(r,\varphi)\rangle.
\end{align}
If the first column of this matrix would equal $[1,0,...,0]^t$ for all $\varphi$, then $v_1$ and $u_1$ would coincide on the boundary of $B$ and $v_1$ would be our continuous extension of $u_1$ to the whole square. Now assume that the first column of $W$  is the $d$ dimensional complex unit vector $c_1(\varphi)$. This vector can be seen as a map from the unit circle to the real sphere $\mathbb{S}^{2d-1}$. Since $2d-1\geq 2$, using Lemma \ref{PropB} we obtain some $c_1(\varphi;s)$ with $\varphi\in \mathbb{S}^1$ and $0\leq s\leq 1$ which is continuous in both variables and 
$$c_1(\varphi;0) =c_1(\varphi),\quad c_1(\varphi;1)=[1,0,0,...,0]^t.$$
The orthogonal projection $p(\varphi;s):=|c_1(\varphi;s)\rangle \langle c_1(\varphi;s)|$ living in $\mathbb{C}^d$ is continuous in $s$ and by a repeated Nagy unitary construction we may find a continuous intertwining unitary $U(\varphi;s)$ such that 
$$p(\varphi;s)U(\varphi;s) =U(\varphi;s)p(\varphi;0),\quad 0\leq s\leq 1.$$
This implies that the vectors $c_1(\varphi;s)$ and $U(\varphi;s)c_j(\theta)$ where $2\leq j\leq d$ form an orthonormal basis in $\mathbb{C}^d$. We put them together as the columns of a unitary $W(\varphi;s)$. This unitary is continuous in both variables and continuously interpolates between $W(\varphi)$ and the matrix $W(\varphi;1)$ whose first column equals $[1,0,...,0]^t$ for all $\varphi$. 

Let us get back to our \enquote{inner} basis  $u_j(\rho,\varphi)$, $\rho\leq r$. It can be further extended by parallel transport up to $\rho\leq r+\epsilon$ when $\epsilon$ is small enough. Let us consider the following \enquote{rotated} vector:
\begin{align}\label{may11}
\tilde{u}_1(\rho,\varphi):= \sum_{k=1}^d \big [W(\varphi)W^{-1}\big (\varphi; \frac{\rho-r}{\epsilon}\big )\big ]_{k1} u_k(\rho,\varphi),\quad r\leq \rho\leq r+\epsilon.
\end{align}

Denote by 
\begin{align}\label{may12}
A(\varphi):=W(\varphi)W^{-1} (\varphi; 1),\quad \sum_{k\geq 1}|A_{k1}|^2=1.
\end{align}

Using \eqref{may11} and \eqref{may12} we have:
\begin{align}\label{may13}
\langle v_1(r+\epsilon,\varphi),\tilde{u}_1(r+\epsilon,\varphi)\rangle =\sum_{k=1}^d A_{k1}(\varphi)\; \langle v_1(r+\epsilon,\varphi),{u}_k(r+\epsilon,\varphi)\rangle. 
\end{align}

We have
\begin{align*}
\langle v_1(r+\epsilon,\varphi),{u}_k(r+\epsilon,\varphi)\rangle &=
\langle v_1(r+\epsilon,\varphi)-v_1(r,\varphi),{u}_k(r+\epsilon,\varphi)\rangle\\
&+\langle v_1(r,\varphi),{u}_k(r+\epsilon,\varphi)-u_k(r,\varphi)\rangle+
\langle v_1(r,\varphi),{u}_k(r,\varphi)\rangle.
\end{align*}
Using \eqref{may9} we have
\begin{align*}
 \Big (\langle v_1(r+\epsilon,\varphi),{u}_k(r+\epsilon,\varphi)\rangle    &-W_{k1}(\varphi)\Big )A_{k1}(\varphi)\\
 &=\langle v_1(r+\epsilon,\varphi)-v_1(r,\varphi),{u}_k(r+\epsilon,\varphi)\rangle\; A_{k1}(\varphi)\\
 &+\langle v_1(r,\varphi),{u}_k(r+\epsilon,\varphi)-u_k(r,\varphi)\rangle\; A_{k1}(\varphi).
\end{align*}
Using the Bessel inequality for the coefficients of an orthogonal system, the fact that $\sum_{k\geq 1}|A_{k1}|^2=1$,  the continuity of $v_1$ and $u_k$, together with the Cauchy-Schwartz inequality lead to
$$\lim_{\epsilon\to 0}\sum_{k\geq 1}\left |\Big (\langle v_1(r+\epsilon,\varphi),{u}_k(r+\epsilon,\varphi)\rangle    -W_{k1}(\varphi)\Big )A_{k1}(\varphi)\right |=0.$$

Inserting this in \eqref{may13} leads to:
$$\lim_{\epsilon\to 0}\langle v_1(r+\epsilon,\varphi),\tilde{u}_1(r+\epsilon,\varphi)\rangle= W_{11}(\varphi;1)=1.$$
This implies that $\tilde{u}_1(r+\epsilon,\varphi)$ is almost equal to the vector $v_1(r+\epsilon,\varphi)$ if $\epsilon$ is small enough. They will still be almost equal if we further extend  $\tilde{u}_1(r+\epsilon,\varphi)$ by parallel transport to the region $r+\epsilon\leq \rho\leq r+2\epsilon$. On this narrow region we may take a convex combination of the type $$\frac{r+2\epsilon-\rho }{\epsilon}\tilde{u}_1(\rho,\varphi)+\frac{\rho -r-\epsilon }{\epsilon}v_1(\rho,\varphi),\quad r+\epsilon\leq \rho\leq r+2\epsilon$$ which cannot vanish and can be normalized. In this way we switch $\tilde{u}_1$ with $v_1$. 

To conclude, our continuous and periodic section $\psi(\theta)$ coincides with the vector which equals $u_1(\rho,\varphi)$ inside $B$, then equals $\tilde{u}_1(\rho,\varphi)$ on $r\leq \rho\leq r+\epsilon$, then equals a  convex combination between $v_1(\rho,\varphi)$ and the extended (by parallel transport) $\tilde{u}_1(r+\epsilon,\varphi)$   to the interval $r+\epsilon\leq \rho \leq r+2\epsilon$, and finally equals $v_1(\rho,\varphi)$ on the rest of $[-1/2,1/2]^2$. 
\end{proof}

Now we show how to make the section globally smooth. 

 \begin{lemma}\label{L-A2} 
 	Let $P(\theta)$ be a smooth and $\mathbb{Z}^2$-periodic  family of orthogonal projections living in a complex Hilbert space. Let $u(\theta)\in {\rm Ran\,}P(\theta)$ be a continuous and periodic section of norm one which is smooth when restricted to an open set $B$ such that $\overline{B}\subset (-1/2,1/2)^2$. Then given any compact set $K\subset B$ there exists a globally smooth and periodic section $v(\theta)$ which equals $u$ on $K$. 
 \end{lemma}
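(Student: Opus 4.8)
\textbf{Proof proposal for Lemma \ref{L-A2}.}

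The plan is to start from the continuous periodic section $u$ and smooth it out by a mollification that is inert on $K$. First I would fix a smooth periodic cut-off $\chi\in C^\infty(\mathbb{R}^2;[0,1])$, which is $\mathbb{Z}^2$-periodic, equal to $1$ on a neighbourhood of $K$, and supported (modulo $\mathbb{Z}^2$) inside $B$. The point of $\chi$ is that it will let me change $u$ only away from $K$. Since $u$ is continuous, periodic, and valued in the closed unit sphere, and since $P(\theta)$ depends smoothly on $\theta$, I can approximate $u$ by a smooth periodic vector field $w_\varepsilon$ (e.g. convolution on the torus with a smooth approximate identity, applied componentwise in any fixed Hilbert basis, or a finite Fourier truncation) so that $\sup_\theta\|w_\varepsilon(\theta)-u(\theta)\|<\varepsilon$. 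The vector $w_\varepsilon$ need not lie in $\mathrm{Ran}\,P(\theta)$, so I would project: set $\widetilde w_\varepsilon(\theta):=P(\theta)w_\varepsilon(\theta)$, which is smooth, periodic, lies in $\mathrm{Ran}\,P(\theta)$, and still satisfies $\|\widetilde w_\varepsilon(\theta)-u(\theta)\|=\|P(\theta)(w_\varepsilon(\theta)-u(\theta))\|<\varepsilon$ because $u(\theta)=P(\theta)u(\theta)$.

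Next I would interpolate between $u$ and $\widetilde w_\varepsilon$ using $\chi$, defining
$$
s_\varepsilon(\theta):=\big(1-\chi(\theta)\big)\,\widetilde w_\varepsilon(\theta)\,+\,\chi(\theta)\,u(\theta)\ \in\ \mathrm{Ran}\,P(\theta).
$$
On a neighbourhood of $K$ we have $\chi\equiv 1$, hence $s_\varepsilon=u$ there, which is smooth by hypothesis; on the complement of $\mathrm{supp}\,\chi$ we have $s_\varepsilon=\widetilde w_\varepsilon$, which is smooth; and on the overlap region $s_\varepsilon$ is a linear combination (with smooth coefficients) of the smooth section $\widetilde w_\varepsilon$ and the section $u$, but on that overlap region $u$ is itself smooth because the overlap is contained in $B$. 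So $s_\varepsilon$ is globally smooth and periodic, lies in $\mathrm{Ran}\,P(\theta)$ pointwise, and equals $u$ on $K$. The only defect is that $s_\varepsilon$ need not have norm one. However, for every $\theta$,
$$
\|s_\varepsilon(\theta)-u(\theta)\|\le\big(1-\chi(\theta)\big)\,\|\widetilde w_\varepsilon(\theta)-u(\theta)\|<\varepsilon,
$$
so $\|s_\varepsilon(\theta)\|>1-\varepsilon>0$ for $\varepsilon<1$; thus $s_\varepsilon$ never vanishes, and
$$
v(\theta):=\frac{s_\varepsilon(\theta)}{\|s_\varepsilon(\theta)\|}
$$
is well-defined, smooth (the norm is a smooth nonvanishing function), periodic, of norm one, still valued in $\mathrm{Ran}\,P(\theta)$, and equal to $u$ on $K$ (where $\|s_\varepsilon\|=\|u\|=1$). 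This $v$ is the desired section.

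I do not expect a serious obstacle here; the argument is a standard partition-of-unity plus mollify-and-renormalize scheme, and the role of the earlier Lemma \ref{L-A1} is precisely to have produced the continuous periodic $u$ to which this smoothing is applied. The only mild technical point is making the mollification of a Hilbert-space valued continuous periodic function rigorous and smooth — which is handled by convolving on the torus $\mathbb{R}^2/\mathbb{Z}^2$ with a $C^\infty$ bump against the translations $\theta\mapsto u(\theta+\cdot)$ and using dominated convergence for continuity of the approximation — and keeping in mind that $u$ need not be smooth outside $B$, which is exactly why $\chi$ is supported inside $B$ so that $s_\varepsilon$ only ever ``mixes'' $u$ with $\widetilde w_\varepsilon$ on a region where $u$ is already smooth.
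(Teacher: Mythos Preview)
Your proof is correct and follows essentially the same route as the paper's own argument: mollify $u$ to obtain a smooth periodic approximant, project fibrewise by $P(\theta)$, interpolate against $u$ using a cut-off supported in $B$ and equal to $1$ on $K$, and finally normalize. The only cosmetic difference is that the paper normalizes the mollified section before interpolating (and then normalizes again at the end), whereas you postpone all normalization to the final step; both versions work for the same reason, namely that the interpolant stays uniformly close to the unit vector $u$ and hence never vanishes.
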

 
 \begin{proof}Assume that $0\leq g\leq 1$ is smooth and compactly supported with $\int_{\R^2} g(t)dt=1$. Let $\epsilon>0$ and define $g_\epsilon(t)=\epsilon^{-2} g(t/\epsilon)$. Consider 
$\tilde{u}_\epsilon(\theta)=\int_{\R^2}g_\epsilon(\theta- t)u(t)dt $. Since $u$ is uniformly continuous there exists $\epsilon_0>0$ such that for all $\epsilon<\epsilon_0$ we have $\Vert P(\theta)\tilde{u}_\epsilon(\theta)\Vert\geq 1/2$ and the map $$u_\epsilon (\theta):=\frac{P(\theta)\tilde{u}_\epsilon(\theta)}{\Vert P(\theta)\tilde{u}_\epsilon(\theta)\Vert}\in {\rm Ran\,}P(\theta) $$
is smooth and $\mathbb{Z}^2$-periodic. Moreover, 
$$\lim_{\epsilon\to 0}\sup_{\theta\in\R^2 }\Vert u(\theta)-u_\epsilon(\theta)\Vert =0.$$
Now assume that $u(\theta)$ is smooth in a certain open set $B\subset (-1/2,1/2)^2$. Let $K$ be any compact subset of $B$ and let $0\leq f\leq 1$ be equal to $1$ on $K$ and zero outside $B$. If $\epsilon$ is small enough then $$\tilde{v}_\epsilon(\theta):=f(\theta) u(\theta)+(1-f(\theta))u_\epsilon(\theta)\in {\rm Ran\,}P(\theta)$$
is smooth and never zero, hence $v(\theta):=\frac{\tilde{v}_\epsilon(\theta)}{\Vert \tilde{v}_\epsilon(\theta)\Vert}$ is a smooth and periodic map which coincides with $u(\theta)$ on $K$.
\end{proof}

The following lemma is a textbook result in Algebraic Geometry (see for example Proposition 1.14 in \cite{Hat}) but for completeness we give  a short and intuitive proof based on Sard's Lemma. 
\begin{lemma}\label{PropB}
	Let $2\leq d\leq \infty$. Then any continuous loop from $\mathbb{S}^1$ to $\mathbb{S}^{d}$ can be continuously contracted to a constant map whose value is the "north pole". 
\end{lemma}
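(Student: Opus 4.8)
The statement is that a continuous loop $f:\mathbb{S}^1\to\mathbb{S}^d$ with $d\geq 2$ is null-homotopic, and in fact homotopic to the constant map at the north pole $N$. The plan is to first reduce to the case of a \emph{smooth} loop, then use Sard's lemma to produce a point missed by the loop, and finally use the fact that the complement of a point in $\mathbb{S}^d$ is contractible.

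\textbf{Step 1: smoothing.} First I would approximate $f$ by a smooth map. Embedding $\mathbb{S}^d\subset\mathbb{R}^{d+1}$, mollify the components of $f$ (e.g.\ convolve with a smooth bump on $\mathbb{S}^1$, viewing $f$ as $(2\pi)$-periodic) to obtain $\tilde f:\mathbb{S}^1\to\mathbb{R}^{d+1}$ with $\|\tilde f-f\|_\infty$ as small as we like; since $|f|=1$ everywhere, for the approximation close enough $\tilde f$ is nowhere zero, so $g:=\tilde f/|\tilde f|:\mathbb{S}^1\to\mathbb{S}^d$ is smooth and the straight-line homotopy $t\mapsto \frac{(1-t)f+t\tilde f}{|(1-t)f+t\tilde f|}$ (again nonzero for the approximation close enough) connects $f$ to $g$. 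So it suffices to contract a smooth loop $g$.

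\textbf{Step 2: Sard and the missing point.} The image $g(\mathbb{S}^1)$ is the image of a smooth map from a $1$-manifold into the $d$-manifold $\mathbb{S}^d$ with $d\geq 2>1$, so by Sard's lemma the set of critical values has measure zero; since every point of $\mathbb{S}^1$ is a critical point of $g$ for dimension reasons, the entire image $g(\mathbb{S}^1)$ has measure zero, hence is not all of $\mathbb{S}^d$. Pick a point $p\in\mathbb{S}^d\setminus g(\mathbb{S}^1)$, which we may take to be the south pole after composing with a rotation. (For $d=\infty$, interpret $\mathbb{S}^\infty$ as the unit sphere of a separable Hilbert space; here one argues directly that the image of a compact set has empty interior, or simply notes $\mathbb{S}^\infty$ is contractible; the finite-dimensional case is the one actually needed in Lemma~\ref{L-A1}.)

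\textbf{Step 3: contraction.} Stereographic projection from $p$ gives a homeomorphism $\mathbb{S}^d\setminus\{p\}\cong\mathbb{R}^d$ under which $g$ becomes a loop in $\mathbb{R}^d$; the straight-line homotopy $H(\varphi,s)=(1-s)\,g(\varphi)$ in $\mathbb{R}^d$ contracts it to $0$, and transporting back, $\mathbb{R}^d$ corresponds to $\mathbb{S}^d\setminus\{p\}$ so this is a homotopy within $\mathbb{S}^d$ from $g$ to the constant map at the image of $0$, which is the north pole $N$ (the antipode of $p$). Composing the homotopies from Steps 1 and 3 gives a contraction of the original $f$ to the constant map $N$, and one can finally reparametrize so the homotopy starts exactly at $f$.

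\textbf{Main obstacle.} The only genuinely delicate point is the smoothing step: one must check that the mollified map stays bounded away from $0\in\mathbb{R}^{d+1}$ and that the normalizing homotopy is well defined, uniformly; this is where the hypothesis that $f$ lands in the \emph{unit} sphere is used. Everything else (Sard, stereographic projection, straight-line homotopy) is standard. I expect the write-up to be short, since all three ingredients are textbook.
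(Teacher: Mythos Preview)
Your proof is correct and follows essentially the same approach as the paper: mollify to a smooth loop, invoke Sard's lemma to find a point not in the image, then contract in the complement of that point. The only cosmetic difference is that the paper contracts via the normalized straight-line homotopy $\frac{(1-s)g(\varphi)+sN}{|(1-s)g(\varphi)+sN|}$ toward the antipode of the missing point rather than passing through stereographic projection, and for $d=\infty$ it explicitly projects onto a large finite-dimensional subspace before applying the finite-$d$ argument.
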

\begin{proof}
Assume first that  $2\leq d<\infty$ and let $u:\mathbb{S}^1\rightarrow\mathbb{S}^d$ be continuous. We can see $u$ as defined on $\R$ and being $\mathbb{Z}$-periodic. Assume that $0\leq g\leq 1$ is compactly supported with $\int_{\R} g(t)dt=1$. Let $\epsilon>0$ and define $g_\epsilon(t)=\epsilon^{-1} g(t/\epsilon)$. Let 
$\tilde{u}_\epsilon(\varphi)=\int_{\R}g_\epsilon(\varphi- t)u(t)dt $. Due to the uniform continuity of $u$, there exists $\epsilon_0>0$ such that for all $\epsilon<\epsilon_0$ we have $\vert\tilde{u}_\epsilon(\varphi)\vert_{\R^d}\geq 1/2$ and the map $$u_\epsilon (\varphi):=\frac{\tilde{u}_\epsilon(\varphi)}{\vert \tilde{u}_\epsilon(\varphi)\vert _{\R^d}}\in\mathbb{S}^d  $$
is smooth and $\mathbb{Z}$-periodic. Moreover, 
$$\lim_{\epsilon\to 0}\sup_{\varphi\in\R }\vert u(\varphi)-u_\epsilon(\varphi)\vert_{\R^d} =0.$$
Thus if $\epsilon$ is small enough, the formula 
$$\frac{s\; u(\varphi)+(1-s) \; u_\epsilon(\varphi)}{\vert s\;  u(\varphi)+(1-s)\; u_\epsilon(\varphi) \vert_{\R^d}},\quad 0\leq s\leq 1$$
is a homotopy which links $u$ and $u_\epsilon$. Now since the range of $u_\epsilon$ defines a one dimensional smooth curve, it cannot cover $\mathbb{S}^d$ due to Sard's Lemma and there must exist some $N\in \mathbb{S}^d$ such that $-N$ is not in the range of $u_\epsilon$. Thus we may contract $u_\epsilon$ to the constant map $v(\varphi)=N$ by
$$F(\varphi;s)=\frac{(1-s)u_\epsilon(\varphi)+sN}{\vert (1-s)u_\epsilon(\varphi)+sN\vert_{\R^d}},\quad 0\leq s\leq 1.$$
After a third homotopy we may connect $N$ to the "north pole" and we are done. 

The case $d=\infty$ corresponds to a situation in which $u(\varphi)$ takes values on the unit sphere of an infinitely dimensional separable Hilbert space. We need to first reduce the problem to a finite $d$. This can be done using the uniform continuity of $u(\varphi)$ and choosing a high dimensional subspace where \enquote{most of $u$ belongs}. In other words, we may project $u$ on a large but finite dimensional unit sphere, the difference between $u$ and its projection being smaller than some $\epsilon$. Then the previous argument applies. 
\end{proof}

\section{A Feshbach-Schur type argument.}\label{A-FS-arg}
Once that we have defined a \enquote{quasi-band} as a subspace associated to the spectral window we are interested in, a second step consists in defining a \enquote{quasi-band} Hamiltonian by projecting the initial Hamiltonian on the quasi-band space. In \cite{CHP-2} we have elaborated a procedure to compare some spectral regions of these two Hamiltonians in situations when the quasi-band is \enquote{close} to a spectral projection of the initial Hamiltonian only in some small interval. Our procedure was based on a modification of the usual Schur-Feschbach method and was suitable for the study of spectral regions near the bottom of the spectrum. In this appendix we adjust the arguments given in Section 4 of \cite{CHP-2} in order to apply them when we replace  Hypothesis~4.1 in that paper with the following one which is suitable for our new context.
\begin{definition}\label{H-band-red}
	Given a Hilbert space $\mathcal{H}$ and a triple $(H,\Pi,\mathring{I}$) where $H$ is a  self-adjoint operator, $\Pi$ is an orthogonal projection and $\mathring{I}\subset\mathbb{R}$ is an open interval, 
	 we call it admissible if  $H\Pi$ (and thus also $\Pi H$) are bounded,  and $\mathring{I}$ is contained in the resolvent set of $\Pi^\bot H\Pi^\bot$ considered as a self-adjoint operator in $\Pi^\bot\mathcal{H}$.
\end{definition}

This hypothesis implies that $\Pi^\bot(H-e\bb1)\Pi^\bot$ is invertible in $\Pi^\bot\mathcal{H}$ for any $e\in \mathring{I}$ and we shall denote by $R_\bot(e)$ its inverse 
(either considered as bounded operator in $\Pi^\bot\mathcal{H}$ or as bounded operator in $\mathcal{H}$ extended by $0$ on $\Pi\mathcal{H}$). We define:
$$
Y\ :=\ \Pi\,+\,\Pi H\Pi^\bot R_\bot(0)^2\Pi^\bot H\Pi,
$$
and notice that $Y\geq\Pi$ so that it is invertible on $\Pi\mathcal{H}$. Thus we may define the following \textit{\enquote{dressed projected Hamiltonian}} acting in $\Pi\mathcal{H}$:
\beq\label{F-Htilde}
\widetilde{H}\ :=\ Y^{-1/2}\Big (\Pi H\Pi\,-\,\Pi H\Pi^\bot R_\bot(0)\Pi^\bot H\Pi\Big )Y^{-1/2}\ \in\ \mathbb{B}\big(\Pi\mathcal{H}\big).
\eeq
\begin{remark}
	Applying the Feshbach-Schur reduction (see \cite{GS}) we conclude that for any $e\in \mathring{I}$ the operator $H-e\bb1$ is invertible in $\mathcal{H}$ if and only if the operator 
	\beq\label{F-Se}
	S(e)\ :=\ \Pi(H-e\bb1)\Pi\,-\,\Pi H\Pi^\bot R_\bot(e)\Pi^\bot H\Pi
	\eeq
	is invertible in $\Pi\mathcal{H}$ and in this case we have that
	\beq\label{F-inv-Se}
	\Pi(H-e\bb1)^{-1}\Pi\ =\ S(e)^{-1}.
	\eeq
\end{remark}
In the usual Feshbach-Schur reduction one assumes that the commutator $[\Pi,H]$ is small and thus the product $\Pi H\Pi^\bot$ appearing in $S(e)$ is small and one can make a perturbative argument for the invertibility. In our situation this is not the case, but we are interested in the spectral analysis in a narrow window around a specific value ($\lambda_-(\theta_0)=\lambda_+(\theta_0)=0$) and thus we will use as small parameter  \textit{the width of this spectral window}, so that we can make an expansion of the resolvent with respect to this small parameter around $0$. In this sense we notice that $0\in \mathring{I}\subset\rho(\Pi^\bot H\Pi^\bot)$ and for $e\in\mathring{I}$ we use the resolvent equation in order to write
$$
R_\bot(e)\,=\,R_\bot(0)\,+\,eR_\bot(0)^2\,+\,e^2R_\bot(0)^2R_\bot(e).
$$
We insert this expansion in \eqref{F-Se} and get:
\begin{equation*}
S(e)=\,\Pi\Big(H\,-\,\Pi H\Pi^\bot R_\bot(0)\Pi^\bot H\Pi\Big)\Pi\,-\,eY\,-\,e^2\Pi H\Pi^\bot R_\bot(0)^2R_\bot(e)\Pi^\bot H\Pi\,.
\end{equation*} 
Thus $S(e)$ is invertible in $\Pi\mathcal{H}$ if and only if the following operator is invertible in $\Pi\mathcal{H}$:
$$
Y^{-1/2}S(e)Y^{-1/2}\ =\ \big(\widetilde{H}-e\Pi\big)\,-\,e^2Y^{-1/2}\Pi H\Pi^\bot R_\bot(0)^2R_\bot(e)\Pi^\bot H\Pi Y^{-1/2}.
$$
Let us introduce for later use $$X(e):=e^2R_\bot(0)^2R_\bot(e)\,.$$
In order to obtain some uniform estimates for our type of Hamiltonians with a spectral gap (like in Definition \ref{H-band-red}), we shall consider a smaller compact interval $I^\prime\subset \mathring{I}$ containing $0$ in its interior. The important parameter to consider is then $d_{I'}:=\dist(\R\setminus \mathring{I},I^\prime)=\underset{x\in\mathbb{R}\setminus\mathring{I},y\in I^\prime}{\inf}|x-y|>0$. We also have
$$
\mathring{I}:=(\Lambda_-,\Lambda_+),\quad \|R_\bot(0)\|\,\leq\,\Big(\min\{\Lambda_-,\Lambda_+\}\Big)^{-1}\,:= \,\Lambda_I^{-1}.
$$
Moreover we shall use the following parameter
$$
\ell_{I^\prime}:=\sup\{|t|\,,\,t\in  I^\prime\}.
$$

\begin{proposition}\label{P-FS-arg}
	 Given any admissible triple $(H,\Pi,\mathring{I})$ (see Definition \ref{H-band-red})   and with $\widetilde{H}$ defined by \eqref{F-Htilde}, for any  compact interval $I^\prime\subset \mathring{I}$ containing $0$ in its interior,  we have the estimate:
	$$
	\max\Big \{\sup_{e\in \sigma(H)\cap I'}\dist \Big(e\,,\,\sigma(\widetilde{H})\Big)\,,\sup_{e\in \sigma(\widetilde{H})\cap I'}\dist \Big(e\,,\,\sigma(H)\Big) \Big\}\leq\,\|H\Pi\|^2\left(\frac{\ell_{I^\prime}}{\Lambda_I}\right)^2\frac{1}{d_{I'}}.
	$$
\end{proposition}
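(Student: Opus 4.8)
The plan is to prove the estimate via the Feshbach–Schur reduction that has already been set up in the preceding paragraphs, exploiting the explicit resolvent expansion around $e=0$. The key identity to use is that for $e\in\mathring{I}$, the operator $H-e\bb1$ is invertible in $\mathcal{H}$ if and only if $Y^{-1/2}S(e)Y^{-1/2}$ is invertible in $\Pi\mathcal{H}$, where (from the expansion already carried out)
$$
Y^{-1/2}S(e)Y^{-1/2}\ =\ \big(\widetilde{H}-e\Pi\big)\,-\,eY^{-1/2}\Pi H\Pi^\bot X(e)\Pi^\bot H\Pi Y^{-1/2}
$$
with $X(e)=e^2R_\bot(0)^2R_\bot(e)$. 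Call the second term $\mathcal{E}(e)$. The strategy is: if $e\in I'$ is at distance $>\|\mathcal{E}(e)\|$ from $\sigma(\widetilde{H})$, then $\widetilde{H}-e\Pi$ is invertible with $\|(\widetilde{H}-e\Pi)^{-1}\|\leq 1/\dist(e,\sigma(\widetilde H))$, hence $Y^{-1/2}S(e)Y^{-1/2}$ is invertible by a Neumann series, hence $e\notin\sigma(H)$; and symmetrically, writing $\widetilde H - e\Pi = Y^{-1/2}S(e)Y^{-1/2} + \mathcal E(e)$, if $e\in I'$ is at distance $>\|\mathcal E(e)\|$ from $\sigma(H)$ then $e\notin\sigma(\widetilde H)$. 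This is precisely the content of the claimed Hausdorff-type bound once we show $\|\mathcal{E}(e)\|\leq\|H\Pi\|^2(\ell_{I'}/\Lambda_I)^2 d_{I'}^{-1}$ for all $e\in I'$.

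First I would bound the ingredients of $\mathcal{E}(e)$. Since $Y\geq\Pi$, we have $\|Y^{-1/2}\|_{\mathbb{B}(\Pi\mathcal H)}\leq 1$, so these factors are harmless. Next, $\|\Pi H\Pi^\bot\|\leq\|H\Pi\|$ (taking adjoints, $\|\Pi^\bot H\Pi\|=\|(\Pi H\Pi^\bot)^*\|$, and $\|\Pi H\Pi\|,\|\Pi^\bot H\Pi\|\leq\|H\Pi\|$). For the resolvent factors: $\|R_\bot(0)\|\leq\Lambda_I^{-1}$ by the definition of $\Lambda_I=\min\{\Lambda_-,\Lambda_+\}$ together with $0\in\mathring I$ and the spectral theorem applied to the self-adjoint operator $\Pi^\bot H\Pi^\bot$ on $\Pi^\bot\mathcal H$; and for $e\in I'$, $\|R_\bot(e)\|\leq d_{I'}^{-1}$ since $\dist(e,\sigma(\Pi^\bot H\Pi^\bot))\geq\dist(I',\mathbb R\setminus\mathring I)=d_{I'}$ (the spectrum of $\Pi^\bot H\Pi^\bot$ lies in $\mathbb{R}\setminus\mathring I$ by admissibility). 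Assembling:
$$
\|\mathcal{E}(e)\|\ \leq\ |e|\cdot\|\Pi H\Pi^\bot\|\cdot|e|^2\|R_\bot(0)\|^2\|R_\bot(e)\|\cdot\|\Pi^\bot H\Pi\|\ \leq\ |e|^3\,\|H\Pi\|^2\,\Lambda_I^{-2}\,d_{I'}^{-1}.
$$
Since $|e|\leq\ell_{I'}$ for $e\in I'$, and since in fact one only needs two powers of $\ell_{I'}$ for the stated bound, I would be slightly more careful: factor $|e|^3 = |e|\cdot|e|^2$ and note $|e|\le \ell_{I'}$ while $|e|^2\le \ell_{I'}^2$ is what appears; but the displayed target has exactly $\ell_{I'}^2$, so I would instead absorb one factor of $|e|$ using $|e|\|R_\bot(e)\|\le 1 + |e|\|R_\bot(0)\|\|R_\bot(e)\|$-type bookkeeping, or more simply observe $|e|^3\Lambda_I^{-2}d_{I'}^{-1}\le \ell_{I'}^2\Lambda_I^{-2}\cdot(|e|/d_{I'})$ and note $|e|\le d_{I'}$ need not hold — so the honest route is: rewrite $X(e)=R_\bot(e)-R_\bot(0)-eR_\bot(0)^2$ from the resolvent identity, giving $\|X(e)\|\le d_{I'}^{-1}+\Lambda_I^{-1}+\ell_{I'}\Lambda_I^{-2}$, which is $O(d_{I'}^{-1})$; then $\|\mathcal E(e)\| \le |e|\,\|H\Pi\|^2 \|X(e)\|$, and combined with the fact that for the Hausdorff comparison one compares with $\dist(e,\cdot)$, the factor $\ell_{I'}^2/\Lambda_I^2$ emerges from retaining only the genuinely-second-order-in-$e$ piece $e^2 R_\bot(0)^2$ of $X(e)$ while the lower-order pieces get reabsorbed. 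I will follow the bookkeeping of Section 4 of \cite{CHP-2} verbatim here, since the operator-theoretic content is identical.

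The main obstacle I anticipate is not analytic but organizational: matching the constant to the exact form $\|H\Pi\|^2(\ell_{I'}/\Lambda_I)^2 d_{I'}^{-1}$ rather than merely $O(d_{I'}^{-1})$. This requires keeping track of which factors of $e$ survive as $\ell_{I'}$ versus which get traded against $\Lambda_I$ or $d_{I'}$, exactly as in the cited argument; the clean way is to expand $R_\bot(e)=R_\bot(0)+eR_\bot(0)^2+e^2R_\bot(0)^2R_\bot(e)$ once more inside $X(e)$ so that $X(e)=e^2R_\bot(0)^2R_\bot(e)$ already has its two powers of $e$ made explicit, bound the remaining $R_\bot(0)^2R_\bot(e)$ by $\Lambda_I^{-2}d_{I'}^{-1}$, and then the coefficient of $\mathcal E(e)$ carries $|e|^2/(\Lambda_I^2 d_{I'})\le\ell_{I'}^2/(\Lambda_I^2 d_{I'})$ times $\|H\Pi\|^2$ times $|e|\|R_\bot(e)\|$ — and this last factor is $\le 1$ because it never exceeds $\dist(e,\sigma(\Pi^\bot H\Pi^\bot))^{-1}\cdot|e|$ which one bounds using $|e|\le d_{I'}$ would be false in general, so instead one uses that the term $-e^2Y^{-1/2}\Pi H\Pi^\bot R_\bot(0)^2R_\bot(e)\Pi^\bot H\Pi Y^{-1/2}$ need only be shown to have norm $\le\|H\Pi\|^2(\ell_{I'}/\Lambda_I)^2 d_{I'}^{-1}$, and here $|e^2|\,\|R_\bot(0)^2\|\,\|R_\bot(e)\| \le \ell_{I'}^2\,\Lambda_I^{-2}\,d_{I'}^{-1}$ directly, with no leftover factor — this is the correct accounting. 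Once this bound is in hand, the Neumann-series comparison argument sketched in the first paragraph yields both inequalities in the $\max$, completing the proof.
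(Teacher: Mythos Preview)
Your approach is the same as the paper's, and you eventually land on the correct bound, but two points need cleaning up.

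First, a bookkeeping slip that is the source of all your mid-proof confusion: the perturbation term in $Y^{-1/2}S(e)Y^{-1/2}$ does \emph{not} carry an extra prefactor $e$. With $X(e)=e^2R_\bot(0)^2R_\bot(e)$ one has
\[
Y^{-1/2}S(e)Y^{-1/2}=(\widetilde H-e\Pi)-Y^{-1/2}\Pi H\Pi^\bot X(e)\Pi^\bot H\Pi Y^{-1/2},
\]
so $\mathcal{E}(e)=Y^{-1/2}\Pi H\Pi^\bot X(e)\Pi^\bot H\Pi Y^{-1/2}$ and the bound $\|\mathcal{E}(e)\|\le\|H\Pi\|^2\,\ell_{I'}^2\,\Lambda_I^{-2}\,d_{I'}^{-1}$ is immediate from $\|Y^{-1/2}\|\le1$, $\|R_\bot(0)\|\le\Lambda_I^{-1}$, $\|R_\bot(e)\|\le d_{I'}^{-1}$ and $|e|\le\ell_{I'}$. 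No cubic term, no detours.

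Second, the ``symmetric'' direction is not literally symmetric. If you try the Neumann series for $\widetilde H-e\Pi=Y^{-1/2}S(e)Y^{-1/2}+\mathcal{E}(e)$ directly, the inverse $(Y^{-1/2}S(e)Y^{-1/2})^{-1}=Y^{1/2}S(e)^{-1}Y^{1/2}$ brings in a factor $\|Y^{1/2}\|$, which is not controlled by $1$. The paper handles this by conjugating back: since $Y^{1/2}(\widetilde H-e\Pi)Y^{1/2}=S(e)+\Pi H\Pi^\bot X(e)\Pi^\bot H\Pi$, invertibility of $\widetilde H-e\Pi$ is equivalent to invertibility of $S(e)+\Pi H\Pi^\bot X(e)\Pi^\bot H\Pi$, and now the Neumann series uses $S(e)^{-1}=\Pi(H-e\bb1)^{-1}\Pi$ with $\|S(e)^{-1}\|\le\dist(e,\sigma(H))^{-1}$, giving the same threshold $\|H\Pi\|^2(\ell_{I'}/\Lambda_I)^2 d_{I'}^{-1}$ with no $Y$-contamination. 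With these two fixes your argument is complete and coincides with the paper's.
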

\begin{proof}
	Let us first estimate the operator norm of $R_\bot(e)$ for some $e\in I^\prime$ using standard calculus with self-adjoint operators.
	$$
	\big\|R_\bot(e)\big\|\,\leq\,\Big(\underset{t\in I^\prime}{\min}\,\dist\big(t,\sigma(\Pi^\bot H\Pi^\bot\big)\Big)^{-1}\,\leq\,d_{I'}^{-1}\,.
	$$
	Thus:
	$$
	\big\|X(e)\big\|\,\leq\,\frac{e^2}{\Lambda_I^2d_{I^\prime}}\,\leq\,\left(\frac{\ell_{I^\prime}}{\Lambda_I}\right)^2\frac{1}{d_{I'}}
	$$
	
	Let us suppose that some $e\in I^\prime$ is in the resolvent set of $\widetilde{H}$, as self-adjoint operator in $\Pi\mathcal{H}$. From \eqref{F-Se} we obtain that
	$$
	Y^{-1/2}S(e)Y^{-1/2}\,=\,\Big(\Pi\,-\,\Pi H\Pi^\bot X(e)\Pi^\bot H\Pi(\widetilde{H}-e\Pi)^{-1}\Big)(\widetilde{H}-e\Pi).
	$$
	From the above results and the usual calculus with self-adjoint operators it is easy to obtain that
	$$
	\left\|\Pi H\Pi^\bot X(e)\Pi^\bot H\Pi(\widetilde{H}-e\Pi)^{-1}\right\|\leq\|H\Pi\|^2\big\|X(e)\big\|\left\|(\widetilde{H}-e\Pi)^{-1}\right\|\,\leq\,\|H\Pi\|^2\left(\frac{\ell_{I^\prime}}{\Lambda_I}\right)^2\frac{1}{d_{I'}}\frac{1}{\dist(e,\sigma(\widetilde{H}))}.
	$$
	We conclude that for 
	$$
	\dist\big(e,\sigma(\widetilde{H})\big)\,>\,\|H\Pi\|^2\left(\frac{\ell_{I^\prime}}{\Lambda_I}\right)^2\frac{1}{d_{I'}}
	$$
	the operator $Y^{-1/2}S(e)Y^{-1/2}$ and thus also $S(e)$ are invertible in $\Pi\mathcal{H}$. The Feshbach-Schur argument implies then that $H-e\bb1$ is also invertible in $\mathcal{H}$. This means that 
	$$
	e\in\sigma(H)\cap I^\prime\ \Longrightarrow\ \dist\big(e,\sigma(\widetilde{H})\big)\,\leq\,\|H\Pi\|^2\left(\frac{\ell_{I^\prime}}{\Lambda_I}\right)^2\frac{1}{d_{I'}}.
	$$
	
	Let us now suppose that some $e\in I^\prime$ is in the resolvent set of $H$, as self-adjoint operator in $\mathcal{H}$. From the Feshbach-Schur argument discussed before this proposition we know then that $S(e)$ is invertible in $\Pi\mathcal{H}$ and \eqref{F-inv-Se} is true.
	The same equality \eqref{F-Se} tells us that $\widetilde{H}-e\Pi$ is invertible in $\Pi\mathcal{H}$ if and only if the following operator is invertible in $\Pi\mathcal{H}$
	$$
	S(e)+\Pi H\Pi^\bot X(e)\Pi^\bot H\Pi=\Big(\Pi+\Pi H\Pi^\bot X(e)\Pi^\bot H\Pi (H-e\bb1)^{-1}\Pi\Big)S(e).
	$$
	Thus let us estimate
	$$
	\left\|\Pi H\Pi^\bot X(e)\Pi^\bot H\Pi (H-e\bb1)^{-1}\Pi\right\|\,\leq\,\|H\Pi\|^2\big\|X(e)\big\|\left\|(H-e\bb1)^{-1}\right\|\leq\|H\Pi\|^2\left(\frac{\ell_{I^\prime}}{\Lambda_I}\right)^2\frac{1}{d_I}\frac{1}{\dist(e,\sigma(H))}.
	$$
	We conclude that for 
	\begin{align}\label{mai5}
	\dist\big(e,\sigma(H)\big)\,>\,\|H\Pi\|^2\left(\frac{\ell_{I^\prime}}{\Lambda_I}\right)^2\frac{1}{d_{I'}}\,,
	\end{align}
	the operator $\widetilde{H}-e\Pi$ is invertible in $\Pi\mathcal{H}$.  This means that 
	$$
	e\in\sigma(\widetilde{H})\cap I^\prime\ \Longrightarrow\ \dist\big(e,\sigma(H)\big)\,\leq\,\|H\Pi\|^2\left(\frac{\ell_{I^\prime}}{\Lambda_I}\right)^2\frac{1}{d_{I'}}.
	$$
	This concludes the proof.
\end{proof}

\bibliographystyle{amsplain}

\end{document}